\newtheorem{lemma}{Lemma}[section]
\newtheorem{defin}[lemma]{Definition}
\newtheorem{proposition}[lemma]{Proposition}
\newtheorem{theorem}[lemma]{Theorem}
\newtheorem{corollary}[lemma]{Corollary}
\newtheorem{example}[lemma]{Example}
\newcounter{claim}[lemma]
\newenvironment{claim}{\begin{list}{}{\listparindent\parindent%
                                      \leftmargin0cm\parsep\parskip}%
                                      \item[] \refstepcounter{claim}%
                                      {\em Claim \arabic{claim} }}%
                                      {\end{list}}
\newcommand{\set}[2]{\{\,#1 \mid #2\,\}}
\newcommand{\fun}[3]{#1\colon #2 \to #3}
\newcommand{\range}{\mathrm{range}}
\newcommand{\subst}{\mathrm{Subst}}
\newcommand{\msubst}{\mathrm{MSubst}}
\newcommand{\sub}{\mathrm{sub}}
\newcommand{\cut}{\mathrm{cut}}
\newcommand{\un}{\mathrm{un}}
\newcommand{\comp}{\mathrm{comp}}
\newcommand{\sv}{\mathrm{sv}}
\newcommand{\sg}{\mathrm{sg}}
\newcommand{\nlg}{\mathrm{lg}}
\newcommand{\bb}{\textsc{Ext}}
\newcommand{\bn}{\mathrm{BN}}
\newcommand{\cs}{\mathrm{CS}}
\newcommand{\sem}{\mathrm{Sem}}
\newcommand{\fix}{\mathrm{fix}}
\newcommand{\fpair}{\mathrm{pair}}
\newcommand{\inv}{\mathrm{inv}}
\newcommand{\im}{\mathrm{im}}
\newcommand{\PPP}{\mathcal{P}}
\newcommand{\RRR}{\mathcal{R}}
\newcommand{\SSS}{\mathcal{S}}
\newcommand{\FFF}{\mathcal{F}}
\newcommand{\Def}{\overset{\mathrm{Def}}{=}}
\newcommand{\prc}{\mathbin{\mathpalette\prc@inner\relax}}
\newcommand{\prc@inner}[2]{%
  \vbox{\offinterlineskip\m@th
    \ialign{%
      ##\cr
      \hidewidth\raisebox{-1.5\height}[0pt][0pt]{$#1.$}\hidewidth\cr
      $#1-$\cr
    }%
  }%
}
\newcommand{\llbracket}{[\![}
\newcommand{\rrbracket}{]\!]}
\newcommand{\val}[1]{\llbracket #1 \rrbracket}
\newcommand{\pair}[1]{\langle #1 \rangle}
\newcommand{\dom}{\mathrm{dom}}
\newcommand{\tp}{\mathrm{TP}}
\newcommand{\MP}{\mathrm{MP}}
\newcommand{\nf}{\mathrm{nf}}
\newcommand{\narg}{\mathrm{arg}}
\newcommand{\er}{\mathrm{er}}
\newcommand{\lth}{\mathrm{lth}}
\newcommand{\hpt}{\mathrm{HPT}}
\newcommand{\nor}{\mathrm{NOR}}
\newcommand{\qgn}{\mathrm{QGN}}
\newcommand{\conv}{\mathclose\downarrow}
\newcommand{\anf}{\textsc{Anf}}
\newcommand{\graph}{\mathrm{graph}}
\newcommand{\egraph}{\graph_{\mathrm{e}}}
\begin{document}

\title{How Much Partiality Is Needed for a Theory of Computability?}
\author{Dieter Spreen \\
ATHENA Visiting Researcher (AVR) Programme / Le Studium \\
Laboratoire d'Informatique Fondamentale d'Orl\'eans (LIFO) \\
University of Orl\'eans, INSA Centre Val de Loire, France\\ 
and \\
Department of Mathematics, University of Siegen, Germany\\
Email: {\tt spreen@math.uni-siegen.de}}
\date{}

\maketitle

\begin{abstract}
Partiality is a natural phenomenon in computability that we cannot get around. So, the question is whether we can give the areas where partiality occurs, that is, where non-termination happens, more structure. In this paper we consider function classes which besides the total functions only contain finite functions whose domain of definition is an initial segment of the natural numbers. Such functions appear naturally in computation. We show that a rich computability theory can be developed for these functions classes which embraces the central results of classical computability theory, in which all partial (computable) functions are considered. To do so, the concept of a G\"odel number is generalised, resulting in a broader class of numberings. The central algorithmic idea in this approach is to search in enumerated lists. In this way, function computability is reduced to set listability. Besides  the development of a computability theory for the functions classes, the new numberings---called quasi-G\"odel numberings---are studied from a numbering-theoretic perspective: they are complete, and each of the function classes numbered in this way is a retract of the G\"odel numbered set of all partial computable functions. Moreover, the Rogers semi-lattice of all computable numberings of the considered function classes is studied and results as in the case of the computable numberings of the partial computable functions are obtained. The function classes are shown to be effectively given algebraic domains in the sense of Scott-Ershov. The quasi-G\"odel numberings are exactly the admissible numberings of the computable elements of the domain. Moreover,  the domain can be computably mapped onto every other effectively given one so that every admissible numbering of the computable domain elements is generated by a quasi-G\"odel numbering via this mapping. 
\end{abstract}


\tableofcontents

\section{Introduction }\label{sec-intro}

A characteristic feature of the theory of computable functions is the appearance of  functions that are only partially defined. The reason is that the set of all total computable functions cannot be defined constructively, which is however possible for supersets that in addition contain certain partial computable functions. From the constructive description of a function class, an indexing of the functions in this class and a universal algorithm for the computation of the functions can be obtained in a canonical way. The existence of such an indexing and a universal algorithmic procedure is of central importance for the theory of computability. In the early days of computability theory (cf.\ e.g.~\cite{kle36}), only those function schemes were considered that correspond to total functions---thus leading to the class of general recursive functions---but soon one proceeded studying all such schemes and thus the class of partial recursive functions. For the reasons mentioned, it was only for this that the elegant and far-reaching theory as we know it today could be developed. See~\cite{ro67} for an in-depth going discussion. 

On the other hand, in applications one is of course only interested in total computable functions, because who wants an algorithm that runs through an infinite loop. This was probably also the reason why initially only the general recursive functions were considered. Also, in various areas of recursive mathematics one is only interested in the total computable functions, namely in theories that deal with the effective approximability of  infinite objects by finite ones. Examples are recursive analysis and domain theory (cf.~e.g.~\cite{tu3637,tu37,mo63,sc70,er72,ab80,wd80,de83,kt83,ws83}). Here, one examines infinite objects such as the real numbers, which can be effectively approximated by a sequence of finite objects, e.g. the rational numbers. Each such effective sequence is described by a total computable function. So, total computable functions can be considered as names of such effectively approximable objects. Now, if in applications one is more interested in total functions than in partial ones, is it really necessary to consider all partial computable functions in order to obtain a satisfactory theory of computability, or would it be sufficient,  to extend the space of total computable functions by only certain partial ones, namely those whose domain is well structured? In this paper it shall be shown that one can in fact limit oneself to the consideration of certain finite functions. Apart from the total computable functions, it suffices to consider only those finite functions the domain of which is an initial segment of $\omega$. It is also not necessary to add all functions defined on initial segments of the natural numbers to the total computable ones, but only a certain subset of them. Any such subset can be made even more sparse, also by removing infinitely many elements, still we will obtain a satisfying theory of computability.

Functions defined on initial segments of $\omega$ appear quite natural in this context. After all, every total function can be approximated by a sequence of such initial segment functions. This is used on various occasions. If, for instance, one evaluates a function defined by primitive recursion, say for argument $n$, one needs to compute the restriction of the function to the initial segment determined by $n$. In program testing one uses initial segment functions as sample. As we shall see, the class of functions we are considering is of interest not only in programming but also in other areas of computer science. It has been said already that in theories on the effective approximability of infinite objects by finite ones, total computable functions are used as names of the effectively approximable objects. This assignment can now be meaningfully extended to the initial segment functions. Each such function corresponds to a finite sequence of finite objects, which in turn corresponds to a certain vagueness: the object to be approximated has not yet been precisely determined. Initial segment functions are thus names for neighbourhoods in the topological space of the objects that can be approximated. This can be made more precise as follows: The function class considered here can be effectively mapped to the set of computable elements of an algebraic effectively given domain in such a way that the initial segment functions are mapped onto the base elements of the domain, which, as is known, define a basis for the Scott topology of the domain. Also in the interpretation of a domain as a neighbourhood or information system \cite{sc81,sc82}, the basic elements correspond to set systems that define an object only vaguely. The advantage of this approach is that names for the objects as well as for the ``approximating'' neighbourhoods are available in one namespace.

First investigations on the problem whether all partial computable functions have to be considered in order to develop a satisfying theory of computability were together with W.~H.~Kersjes~\cite{ke82}. In this work we started with a modified Turing machine model which for given computably enumerable set $A$ computed all total computable functions and exactly the initial segment functions with a domain of length in $A$. Based on the machine model, a numbering $\gamma^{A}$ of this function class was introduced. Yet, at this time we did not have at hand any characterisation of this numbering as known for G\"odel numberings. As a result, in many cases the computability of index functions had to be proven by constructing suitable machines and could not be derived from properties of the numbering. This could only be given in the habilitation thesis~\cite{sp85} of the current author, of which the present paper contains central, so far unpublished results.

The problem was that although the numbering  $\gamma^{A}$ has a computable universal function, this function is not in the corresponding function class. However, the graph of the universal function can be enumerated by a total computable function. This property corresponds to the Enumeration Theorem for G\"odel numberings~\cite[Theorem IV]{ro67}. The second characteristic property that G\"odel numberings share is that they are maximal with respect to reducibility among all numberings whose universal function is computable.
Similarly, it  turns out  in the case of the numbering $\gamma^{A}$ that  for every family $(r_{i})_{i \in \omega}$ of functions in the class under consideration, the graphs of which are computably enumerable, uniformly in $i$, one can computably pass from $i$ to an index of $r_{i}$ with respect to $\gamma^{A}$. We call numberings that satisfy these two conditions, quasi-G\"odel Numberings. As will be shown, any two such numberings are recursively isomorphic. Furthermore, all important theorems known for G\"odel numberings can be derived for this new type of numberings, without using any results of classical computability theory. As every G\"odel numbering is also a quasi-G\"odel numbering, this shows that the new notion is a reasonable generalisation of the concept of G\"odel numbering.
 
In classical computability theory, the concept of  computable enumerability has proven to be fundamental. For example, in~\cite{ro67}, the concept of a partial recursive operator is traced back to that of a computably enumerable set. As follows already from the definition of quasi-G\"odel numberings, the concept of  computable enumeration is also of central importance for the approach to computability theory that we want to present here. Many of the proofs consist of constructing enumerations for function graphs.

In the subsequent sections we start with developing computability theory on the basis of quasi-G\"odel numberings. In Section~\ref{sec-isf} we introduce the functions classes  which, in addition to the total computable functions, only contain those initial segment functions whose domains have certain given lengths. We then show that these classes have standard numberings with respect to a given G\"odel numbering of all partial computable functions. The standard numberings obtained in this way are in particular quasi-G\"odel numberings. In addition, we will discuss some applications of these function classes.
Section~\ref{sec-syn} shows that a quasi-G\"odel numbering of these function classes can also be obtained without using a G\"odel numbering. We present a machine model which computes exactly the functions in such a class and use this to define a numbering that turns out to be quasi-G\"odel numbering. This shows that a computability theory can be founded on the function classes considered here without prior knowledge of the theory of all partial computable functions.

In the next sections we derive some central results of computability theory from the properties of quasi-G\"odel numberings. In Section~\ref{sec-comp} we show that the \emph{smn}-theorem holds, discuss the effectiveness of substitution, give a normal form theorem and prove the recursion theorem, Rice's theorem and some consequences of these results. In contrast to the theory of all partial computable functions, the partial functions of the kind considered here can be extended to total computable functions, but the extension cannot be effectively computed from an index of the partial function. Also, the length of its domain cannot be computed from the index of an initial segment function. As we shall see, the \emph{smn}-theorem does not have the same importance in this theory as it does in computability theory with G\"odel numbering. There it is mainly used to construct index functions. Here these functions have to be constructed in a different way.

In Section~\ref{sec-ce} computably enumerable sets are introduced in the usual way as sets that are either empty or the range of a total computable function. They can also be characterised as ranges of the functions considered here, not necessarily total ones. With the help of a quasi-G\"odel numbering of this function class we then simultaneously get an enumeration of these sets. On the basis of a few selected theorems it will become clear that in this approach to the theory of computably enumerable sets essentially all classical results apply except for those in the formulation of which the domain characterisation of computably enumerable sets is included. Because of the normalisation of the domains of the considered functions, this characterisation is no longer meaningful here. As will be seen however, in the usual proofs the constructions using this characterisation can be replaced by others. In addition, we show that the numbering of computably enumerable sets introduced in the manner described is computably isomorphic to the numbering of these sets defined via a G\"odel numbering. This shows that the approach to computability theory presented here can be used to study computably enumerable sets in the same way as the classical one.

In Sections~\ref{sec-compII} and \ref{sec-effop}, respectively,  the enumerability of subsets of the functions classes under consideration and the computability of operators between these classes are discussed. Theorems of Rice-Shapiro and Myhill-Shepherdson type are derived.

Section~\ref{sec-semlat} is reserved for a numbering-theoretical investigation of quasi-G\"odel numberings. Here we address the question of the existence of minimal supersets of the class of total computable functions for which still quasi-G\"odel numberings exist. Furthermore, it is shown that all quasi-G\"odel numberings of a function class are computably isomorphic. From this it follows in particular that in Section~\ref{sec-isf} there is no restriction to construct a quasi-G\"odel numbering as a standard numbering for a G\"odel numbering. Finally, the Rogers semi-lattices of computable numberings of the function classes under consideration are examined and  results known for the case of all partial computable functions are transferred, such as Goetze's theorem~\cite{goe74, goe76} that every countable partially ordered set can be embedded isomorphically in this semi-lattice. It also follows from results of Mal'cev~\cite{ma70} and Khutoretski\u{\i}~\cite{kh69} that the considered function classes have infinitely many incomparable Friedberg numberings and that there are positive numberings of these classes to which no Friedberg numbering can be reduced.

In Section \ref{sec-dom} we investigate the connection with domain theory. This theory was established independently by Ershov~\cite{er72} and Scott~\cite{sc81} in their aim to develop a mathematically pleasing way to study the computable functionals of higher type and to construct a model of the untyped lambda calculus, respectively. As will be shown, the functions considered in the present work are just the computable elements of an effectively given algebraic domain, which contains all total number-theoretic functions in addition to the initial segment functions. The quasi-G\"odel numberings studied in the previous sections are precisely the admissible numberings in the sense of domain theory. We show that the domain of number-theoretic functions just described can be computably mapped onto any other effectively given domain. As a consequence, we obtain that every admissible numbering of the computable elements of an effectively given domain can be generated from a quasi-G\"odel numbering.

\section{The function classes $\SSS^{(n)}_{A}$} \label{sec-isf}

In what follows let $\fun{\pair{\cdot, \cdot}}{\omega^2}{\omega}$  be a one-to-one and onto computable pair encoding so that $\pair{x, y} \ge x, y$. Extend the pairing function as usual to an $n$-tuple encoding ($n > 0$) by setting $\pair{x} \Def x$ and $\pair{x_{1}, \ldots, x_{n+1}} \Def \pair{x_{1}, \pair{x_{2}, \ldots, x_{n+1}}}$. 
Let $\pi^{(n)}_{i}$ ($i = 1, \ldots, n$) be the associated decodings such that $\pi^{(n)}_{i}(\pair{x_{1}, \ldots, x_{n}}) = x_{i}$. The sets of all n-ary partial, total, partial computable, and total computable functions, respectively, will be denoted by $\PPP\FFF^{(n)}$, $\FFF^{(n)}$, $\PPP^{(n)}$, and $\RRR^{(n)}$. The arity $n$ of these functions and the dimension of the Cartesian products of $\omega$ that will be considered in the sequel is always assumed to be at least 1. In some cases the case $n=0$ could be included. But we will not discuss this case.

Let $\varphi^{(n)}$ be a G\"odel numbering of $\PPP^{(n)}$ and $W_{i}$ be the domain of $\varphi^{(1)}_{i}$.  Should the arity of the considered functions be clear from the context
or its knowledge be not important, we write $\varphi$ instead of $\varphi^{(n)}$.  We proceed accordingly with all other numberings of function classes considered here. The value of a numbering $\zeta$ at argument $i$ is denoted by $\zeta_i$, but sometimes also by $\zeta(i)$. Instead of $(x_{1}, \ldots, x_{n})$ we also write $\vec x$. Moreover, we write $\varphi(\vec x)\conv$ if the computation of $\varphi(\vec x)$ converges and $\varphi(\vec x)\conv_m$ if it converges in $m$ steps. If $C$ is a fixed non-empty computably enumerable (c.e.) subset of $\omega$, then we assume it to come with a fixed enumeration and denote the finite part of $C$ enumerated after $t$ steps by $C_{t}$.

A subset $B$ of $\omega$ is called \emph{initial segment}  of $\omega$ if for every $m \in B$ we have that also $m' \in B$, for all $m' < m$. The cardinality of $B$ is said to be the \emph{length} of $B$. If $C = B^{n}$, for some initial segment $B$ of $\omega$, then $C$ is called \emph{initial segment} of $\omega^{n}$. The length of $B$ is then also said to be the \emph{edge length} of $C$.

As has already been pointed out, we will consider functions the domain of which is either $\omega^{n}$, the empty set, or an initial segment of $\omega^{n}$ with an edge length in a given 
subset $A$ of $\omega$. Let $\anf^{(n)}_{A}$\footnote{The notation derives from the German word `Anfangsst\"uck' for `initial segment'.} be the set of  functions in $\PPP\FFF^{(n)}$ the domain of which is either empty or an initial segment of $\omega^{n}$ with edge length in $A$. Then we set
\begin{gather*}
\widehat{\SSS}^{(n)}_{A} \Def \FFF^{(n)} \cup \anf^{(n)}_{A}, \quad \widehat{\SSS}_{A} \Def \bigcup \set{\widehat{\SSS}^{(n)}_{A}}{n > 0}, \\
\SSS^{(n)}_{A} \Def \RRR^{(n)} \cup \anf^{(n)}_{A}, \quad \SSS_{A} \Def \bigcup \set{\SSS^{(n)}_{A}}{n > 0}.
\end{gather*}
 
For an infinite c.e.\ set $A$, $\SSS^{(n)}_{A}$ is an enumerable subset of $\PPP^{(n)}$. Let to this end, for $a \in A$, $a^{(n)}$ denote the $n$-tupel $(a, \ldots, a)$. We extend the usual less-or-equal relation on $\omega$ coordinatewise  to $\omega^{n}$ and write $\vec a < \vec b$ to mean $\vec a \le 
\vec b$ and $\vec a \neq \vec b$.  Then there is some $f \in \RRR^{(1)}$ so that
\[
\varphi^{(n)}_{f(i)}(\vec x) = \begin{cases}
                                                \varphi^{(n)}_{i}(\vec x) & \text{if for some $a \in A$, $\vec x < a^{(n)}$ and for all $\vec y < a^{(n)}$, $\varphi^{(n)}_{i}(\vec y)\conv$}, \\
                                                \text{undefined} & \text{otherwise}.
                                            \end{cases}
 \] 
It follows that $\varphi^{(n)}_{f(i)} \in \SSS^{(n)}_{A}$, for all $i \in \omega$. If for all $a \in A$ and all $\vec y < a^{(n)}$, $\varphi_{i}^{(n)}(\vec y)\conv$, then $\varphi^{(n)}_{f(i)} \in \RRR^{(n)}$. In the other case, $b = \max\set{a \in A}{(\forall \vec y < a^{(n)})\, \varphi^{(n)}_{i}(\vec y)\conv}$ exists and the domain of $\varphi^{(n)}_{f(i)}$ is an initial segment of $\omega^{n}$ of edge length $b$. Hence $\varphi^{(n)}_{f(i)} \in \anf^{(n)}_{A}$. Observe the use of excluded  middle in this proof.                                      

For $q \in \PPP\FFF^{(n)}$ let $\dom(q)$ and $\range(q)$, respectively, be the domain and the range of $q$. Moreover, let 
\begin{gather*}
\graph(q) \Def \set{\pair{\pair{\vec x}, z}}{\vec x \in \dom(q) \wedge q(\vec x) = z},  \\
\egraph(q) \Def \set{\pair{\pair{\vec y}, 0}, \pair{\pair{\vec x}, z+1}}{\vec y \in \omega^{n} \wedge \vec x \in \dom(q) \wedge q(\vec x) = z}
\end{gather*}
respectively be the \emph{graph} and the \emph{extended graph} of $q$. Then it is readly verified that the enumeration $\lambda i.\ \varphi^{(n)}_{f(i)}$ has the following two properties
\begin{align}
&\varphi^{(n)}_{i} \in \SSS^{(n)}_{A} \Rightarrow \varphi^{(n)}_{f(i)} = \varphi^{(n)}_{i},  \label{std} \\
&\graph(\varphi^{(n)}_{f(i)}) \subseteq \graph(\varphi^{(n)}_{i}).  \label{stdspec}
\end{align}

Enumerations of a subset of $\PPP^{(n)}$ that satisfy Condition~(\ref{std}) are called \emph{$\varphi$-standard numberings} and those that additionally meet Condition~(\ref{stdspec}) are \emph{special $\varphi$-standard numberings}~\cite{ma71}. Enumerations of this kind have first been considered by Lachlan~\cite{la64b} for the case of classes of c.e.\ sets. The class $\set{\egraph(\varphi^{(n)}_{f(i)})}{i \in \omega}$ is an example for the kind of classes he studied, that is, a standard class.

Set
\[
^{A}\varphi^{(n)}_{i} \Def \varphi^{(n)}_{f(i)},
\]
then we have that $\leftidx{^{A}}\varphi^{(n)}_{i}$ is a special $\varphi$-standard numbering of $\SSS^{(n)}_{A}$, for every infinite c.e.\ $A  \subseteq \omega$.\footnote{Note that the numbering $^{A}\varphi^{(n)}$ depends effectively on (a code) of $A$. We will, however, not make use of this fact}  The $\varphi$-standard numberings of $\SSS^{(n)}_{A}$ can be characterised by conditions similar to those known for G\"odel numberings. Since their universal function is computable, but not contained in $\SSS^{(n+1)}_{A}$, they will not satisfy the conditions for G\"odel numberings. 

\begin{theorem}\label{thm-qgn}
Let $\psi$ be a $\varphi$-standard numbering of $\SSS^{(n)}_{A}$. Then the following two conditions hold:
\begin{description}

\item[\rm (QGN~I)] The extended graph of $\lambda i, \vec x.\ \psi_{i}(\vec x)$ is enumerable by some function in $\RRR^{(1)}$.

\item[\rm (QGN~II)] For all $k \in \RRR^{(2)}$ there exists $v \in \RRR^{(1)}$ such that if, for some $i \in \omega$, $\lambda t.\ k(i,t)$ enumerates the extended graph of a function $r \in \SSS^{(n)}_{A}$, then $ r = \psi_{v(i)}$.

\end{description}
\end{theorem}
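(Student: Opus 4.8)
The plan is to reduce both parts to classical facts about the G\"odel numbering $\varphi$, using only that $\psi$ is a $\varphi$-standard numbering: that is, $\psi_{i} = \varphi_{f(i)}$ for some $f \in \RRR^{(1)}$, and $\varphi_{j} \in \SSS^{(n)}_{A}$ implies $\psi_{j} = \varphi_{j}$ (Condition~(\ref{std})). For (QGN~I), I would first note that the universal function $U \Def \lambda i, \vec x.\ \psi_{i}(\vec x) = \lambda i, \vec x.\ \varphi_{f(i)}(\vec x)$ is partial computable, being the composition of a universal function for $\varphi^{(n)}$ with $f$; hence $\graph(U)$ is c.e. The extended graph $\egraph(U)$ is the union of the decidable ``$0$-layer'' $\set{\pair{\pair{i, \vec y}, 0}}{i \in \omega,\ \vec y \in \omega^{n}}$ and the ``successor layer'' $\set{\pair{\pair{i, \vec x}, z+1}}{U(i, \vec x) = z}$, which is c.e.\ since $\graph(U)$ is; thus $\egraph(U)$ is a nonempty c.e.\ set (the $0$-layer alone makes it nonempty). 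Every nonempty c.e.\ set is the range of a function in $\RRR^{(1)}$, hence, precomposing with the onto tupling map $\pair{\cdot}$, of a function in $\RRR^{(n)}$, which is what (QGN~I) asks for.

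For (QGN~II) the plan is to reconstruct a $\varphi$-index of the enumerated function uniformly in $i$ and then invoke $\varphi$-standardness of $\psi$. Given $k \in \RRR^{(2)}$, let $\theta$ be the partial computable function which on input $(i, \vec x)$ searches through $t = 0, 1, 2, \dots$ for the least $t$ with $k(i, t) = \pair{\pair{\vec x}, z+1}$ for some $z$ and then outputs that $z$. By the \emph{smn}-theorem for $\varphi$ there is $h \in \RRR^{(1)}$ with $\varphi_{h(i)} = \lambda \vec x.\ \theta(i, \vec x)$ for all $i$. Now assume $\lambda t.\ k(i, t)$ enumerates $\egraph(r)$ for some $r \in \SSS^{(n)}_{A}$. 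Since $r \in \PPP\FFF^{(n)}$ is single-valued, for each $\vec x$ there is at most one $z$ with $\pair{\pair{\vec x}, z+1} \in \egraph(r)$, and such a $z$ exists precisely when $\vec x \in \dom(r)$, in which case it equals $r(\vec x)$; consequently $\varphi_{h(i)} = r$, and in particular $\varphi_{h(i)} \in \SSS^{(n)}_{A}$. Applying Condition~(\ref{std}) with $h(i)$ in place of $i$ yields $\psi_{h(i)} = \varphi_{h(i)} = r$, so $v \Def h$ does the job.

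The argument is, in the main, bookkeeping with the pairing function; the one place needing care is the extraction step in (QGN~II), where I must check that $r$ is reconstructed correctly from its extended graph — this is exactly where single-valuedness of $r$ is used, and also why the hypothesis is phrased in terms of the \emph{extended} graph, so that the $0$-layer takes care of nonemptiness while the successor layer carries precisely the graph of $r$ — and that this reconstruction is total-computably uniform in $i$. Once the $\varphi$-index $h(i)$ of $r$ is in hand, the entire weight of the argument rests on the single hypothesis that $\psi$ is a $\varphi$-standard numbering; no deeper property of $\psi$ is needed, and I do not anticipate a genuine obstacle.
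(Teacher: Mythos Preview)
Your proof is correct and follows essentially the same approach as the paper: for (QGN~I) you observe that the universal function is partial computable via $\psi_i = \varphi_{f(i)}$ so its extended graph is c.e.\ and infinite (hence enumerable by a total computable function), and for (QGN~II) you define the search function that extracts $r(\vec x)$ from the enumerated extended graph, obtain a $\varphi$-index via the \emph{smn}-theorem, and then apply the $\varphi$-standardness condition to pass to a $\psi$-index. The paper writes the search in (QGN~II) explicitly with a $\mu$-operator over $\pair{t,z}$ rather than over $t$, but this is only a cosmetic difference.
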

\begin{proof}
Because $\psi$ is a $\varphi$-standard numbering there is some $d \in \RRR^{(1)}$ with $\psi_{i} = \varphi_{d(i)}$. Therefore, $\lambda i, \vec x.\ \psi_{i}(\vec x)$ is computable and $\egraph(\lambda i, \vec x.\ \psi_{i}(\vec x))$ is c.e. By construction, $\egraph(\lambda i, \vec x.\ \psi_{i}(\vec x))$ is infinite. Hence, it can be enumerated by a total computable function. Thus, (QGN~I) holds.

For the derivation of (QGN~II) let $v \in \RRR^{(1)}$ so that
\[
\varphi_{v(i)}(\vec x) = \pi^{(2)}_{2}(\mu \pair{t, z}.\ [\pi^{(2)}_{1}(k(i, t)) = \pair{\vec x} \wedge \pi^{(2)}_{2}(k(i, t)) = z \wedge z > 0]) -1.
\]
If $\lambda t.\ k(i,t)$ enumerates the extended graph of a function $r \in \SSS^{(n)}_{A}$ it follows that $\varphi_{v(i)} = r$. Because $r \in \SSS^{(n)}_{A}$ and $\psi$ is a $\varphi$-standard numbering we thus obtain that $r = \varphi_{d(v(i))} = \psi_{v(i)}$.
\end{proof}

In the sequel, a numbering of a countable set of functions $X \supseteq \RRR^{(n)}$ that satisfies Conditions (QGN~I) and (QGN~II) is called \emph{quasi-G\"odel numbering} (replace $\SSS^{(n)}_{A}$ by $X$ in (QGN~II)). As we have just seen, every $\varphi$-standard numbering of such a function class is a quasi-G\"odel numbering. In particular,
\begin{corollary}
Every G\"odel numbering is a quasi-G\"odel numbering.
\end{corollary}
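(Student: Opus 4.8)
The plan is to exhibit a G\"odel numbering $\varphi = \varphi^{(n)}$ of $\PPP^{(n)}$ as a $\varphi$-standard numbering of a function class $X$ with $\RRR^{(n)} \subseteq X \subseteq \PPP^{(n)}$, and then to observe that the proof of Theorem~\ref{thm-qgn} in fact establishes (QGN~I) and (QGN~II) for \emph{every} such numbering once $\SSS^{(n)}_{A}$ is replaced by $X$. The natural choice is $X = \PPP^{(n)}$. Taking $f = \lambda i.\ i \in \RRR^{(1)}$ we have $\varphi_{f(i)} = \varphi_{i}$ for all $i$, so $\lambda i.\ \varphi_{f(i)}$ is just $\varphi$ itself; Condition~(\ref{std}) then holds trivially, since its hypothesis $\varphi_{i} \in \PPP^{(n)}$ is always satisfied and its conclusion $\varphi_{f(i)} = \varphi_{i}$ is an identity. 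As $\range(\varphi) = \PPP^{(n)} = X \supset \RRR^{(n)}$, this shows $\varphi$ is a $\varphi$-standard numbering of $\PPP^{(n)}$ in the sense needed for the definition of quasi-G\"odel numbering.

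I would then check that the two verifications in the proof of Theorem~\ref{thm-qgn} carry over verbatim with $X = \PPP^{(n)}$. For (QGN~I): the function $d$ of that proof is here the identity, so $\lambda i, \vec x.\ \varphi_{i}(\vec x)$ is computable (it is a universal function for $\varphi$) and its extended graph is c.e.; moreover $\egraph(\lambda i, \vec x.\ \varphi_{i}(\vec x))$ is infinite, as it contains all pairs $\pair{\pair{\vec y}, 0}$, and an infinite c.e.\ subset of $\omega$ is the range of some unary total computable function, hence --- after composition with a projection --- of some function in $\RRR^{(n)}$. For (QGN~II): I would define $v \in \RRR^{(1)}$ by exactly the same $\mu$-recursive formula as in the proof of Theorem~\ref{thm-qgn}; if $\lambda t.\ k(i,t)$ enumerates the extended graph of some $r \in \PPP^{(n)}$, the same reasoning yields $\varphi_{v(i)} = r$, and since the numbering in question \emph{is} $\varphi$, this already gives $r = \varphi_{v(i)}$ without any further appeal to the standard-numbering property.

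Combining the two items, $\varphi$ satisfies (QGN~I) and (QGN~II) with $X = \PPP^{(n)}$, so it is a quasi-G\"odel numbering. There is no real obstacle here; the only points deserving a line of care are that the enumeration demanded in (QGN~I) be by an $n$-ary total computable function (handled by the projection remark above) and that one make explicit that the proof of Theorem~\ref{thm-qgn} used nothing about $\SSS^{(n)}_{A}$ beyond the $\varphi$-standard property of $\psi$ together with the automatic infiniteness of extended graphs.
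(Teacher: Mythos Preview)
Your proposal is correct and matches the paper's reasoning: the paper gives no explicit proof, writing only ``In particular'' after the remark that every $\varphi$-standard numbering of such a function class is a quasi-G\"odel numbering, and your unpacking (that $\varphi$ is trivially a $\varphi$-standard numbering of $\PPP^{(n)}$ via the identity $f$, and that nothing in the proof of Theorem~\ref{thm-qgn} is specific to $\SSS^{(n)}_{A}$) is exactly what is intended. Your side remark about the arity of the enumerating function in (QGN~I) is handling what is almost certainly a typo in the paper (it should read $\RRR^{(1)}$, as the extended graph is a subset of $\omega$), so you need not worry about it.
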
 

As will be shown in Section~\ref{sec-semlat}, every quasi-G\"odel numbering of $\SSS^{(n)}_{A}$ is a $\varphi$-standard numbering, up to recursive isomorphism. The analogy of Conditions (QGN~I) and (QGN ~II) to the corresponding conditions for G\"odel enumerations can best be seen by identifying functions with their graphs. (QGN~I) then corresponds to the computability of the universal function of the enumeration and (QGN~II) says that every effective enumeration of functions of the considered set can be computably reduced to the given enumeration. 

As mentioned earlier, the set of extended graphs of a $\varphi$-standard numberable function class is a special case of the classes of c.e.\ sets studied by Lachlan~\cite{la64b}. He calls these classes standard classes. If one interprets the numbering of such a set of functions as an enumeration of the class of the associated extended graphs, then the Conditions (QGN~I) and (QGN~II) correspond precisely to Lachlan's requirements on a standard enumeration of a standard class.

As we see, the classes $\SSS^{(n)}_{A}$ for infinite c.e.\ sets $A$ have very pleasant effectivity properties. This already suggests that we have made the right choice with the initial segment functions in our plan to develop a computability theory for a set of functions which, in addition to the total computable functions, only contains certain partial functions. Since every total function can be approximated by a sequence of such initial segment functions, these functions also occur quite naturally in computability theory. For example, a function defined by primitive recursion is evaluated from the beginning, i.e. starting at 0. You proceed in the same way with the normalised $\mu$ operator
\[
\mu x.\ [q(x) = 0] \Def \begin{cases}
					\min \set{x}{q(x) = 0 \wedge (\forall y < x)\, q(y)\conv} & \text{if $\set{x}{q(x) = 0 \wedge (\forall y < x)\, q(y)\conv}$} \\
					& \text{is not empty}, \\
					\text{undefined} & \text{otherwise}.
				\end{cases}
\]

Moreover, for every infinite $A \subseteq \omega$, the collection of sets $\set{\set{g \in \RRR^{(1)}}{\graph(p) \subseteq \graph(g)}}{p \in \anf^{(1)}_{A}}$ is a basis of a metric topology on $\RRR^{(1)}$, the Baire topology. Finally, this approximation property is used in the definition of computable functionals on sets of total functions (cf.\ \cite{da58, ws83}). We want to clarify this using the example of a functional $\fun{G}{\FFF^{(1)} \times \omega}{\omega}$. Let $\val{\cdot}$ be a one-to-one computable coding of all finite sequences of natural numbers and for $\fun{p}{\{  0, \ldots, a \}}{\omega}$
\[
\val{p} \Def \val{p(0), \ldots, p(a)}.
\]
Then $G$ is called \emph{computable} if there is some $g \in \RRR^{(2)}$ such that for $d \in \FFF^{(1)}$ and $x, y \in \omega$, $G(d, x) = y$, exactly if there exists $p \in \anf^{(1)}_{\omega}$ with $\graph(p) \subseteq \graph(d)$ so that $g(\val{p}, x) = y$. The initial segment function $p$ and in particular the length of its domain is a measure for the amount of information about $d$ that the algorithm $g$ needs to compute $G(d, x)$. Of course one would like to have such algorithms $g$ that manage with as little information as possible. Gordon and Shamir~\cite{gs83} e.g.\ investigated whether such algorithms always exist and how they can be constructed.

There are various approaches to define computability for uncountable sets other than $\FFF^{(1)}$ or $\PPP\FFF^{(1)}$, such as computable analysis~\cite{tu3637,tu37,pr89,we00},  domain theory~\cite{sc70,er72,we87,slg94}, the theory of filter spaces~\cite{hy79}, the theory of effectively given spaces~\cite{ kt83}, the theory of finitely approximable sets~\cite{hi84}. Most of these approaches have in common that the elements of the sets under consideration can be approximated by sequences of other finite objects; the real numbers, for example, by normed Cauchy sequences of rational numbers or descending sequences of closed intervals with rational endpoints. Encoding the finite objects used for the approximation allows to describe the approximating sequences by sequences of natural numbers. In \cite{ha71,ha73,ha76,ha78,ws83,kw85} it is therefore proposed to use functions in $\FFF^{(1)}$ as names for the elements approximated in this way, which led to the now well developed theory of representations. These assignments can be meaningfully extended to $\widehat{\SSS}^{(1)}_{\omega}$, or to put it another way, it makes sense to also use initial segment functions as names. If one considers that all information contained in the approximating finite objects is encoded in the elements to be approximated, then these objects conversely contain only a finite part of the information contained in the approximated elements (cf.~\cite{sc82,hi84}). The approximating finite objects and thus also every finite sequence of these objects therefore corresponds to a certain amount of uncertainty with regard to the element to be approximated: this is not yet uniquely determined by the finite approximation. In many cases these uncertainty sets are open sets in the topological space of the elements to be approximated. This observation suggests to take initial segment functions as names for the uncertainty set generated by finite sequences of approximating objects, which has the advantage that one has names for the elements of the considered set as well as for the uncertainty sets occurring in the approximation in one namespace. In addition, the described extension of representations to $\widehat{\SSS}^{(1)}_{\omega}$ is continuous: if the total function used in the representation is approximated by initial segment functions, then the element named by the total function is approximated by the uncertainty sets corresponding to the initial segment functions. We want to illustrate this with an example below. In  Section~\ref{sec-dom} we will make the statement precise and prove it for the case of effectively given algebraic domains.

\begin{example}
We assume the real numbers $x \in [-1, 1]$ be given in signed digit expansion
\[
x = \sum_{i=0}^{\infty} a_{i} \cdot 2^{-(i+1)}
\]
with $a_{i} \in \{-1, 0, 1 \}$. The information about $x$ which we can read off from this expansion is the sequence $a_{0}, a_{1}, \ldots$ of signed digits. To each finite initial segment of this sequence corresponds a dyadic rational
\[
u_{n} \Def \sum_{i=0}^{n} a_{i} \cdot 2^{-(i+1)}
\]
approximating $x$. The uncertainty set coming with this number is the interval $[u_{n} - 2^{-(n+1)}, u_{n} + 2^{-(n+1)}]$ including all real numbers containing the information about $u_{n}$, that is the sequence $a_{0}, \ldots, a_{n}$, as part of their own information.

Now, for $a \in \omega$, let $\delta(a) \Def (a \bmod 3)-1$ and for $r \in \widehat{\SSS}^{(1)}_{\omega}$ set
\[
G(r) \Def \bigcap \set{[\delta(r(n))-2^{-(n+1)}, \delta(r(n))+2^{-(n+1)}]}{n \in \dom(r)}.
\]
If, in addition, we identify a real number $z$ with the one-point interval $[z, z]$, we have thus obtained a representation of both, the reals in $[-1, 1]$ and the corresponding uncertainty sets. It is such that for $g \in \FFF^{(1)}$ and any sequence $(p_{n})_{n\in \omega}$ of initial segment functions so that $\graph(p_{n}) \subseteq \graph(p_{n+1})$ and $\graph(g) = \bigcup \set{\graph(p_{n})}{n \in \omega}$, 
\[
G(g) = \bigcap \set{G(p_{n})}{n \in \omega}.
\]
\end{example}

\section{A modified Turing machine model} \label{sec-syn}

In the last section we constructed a special $\varphi$-standard numbering of $\SSS^{(n)}_{A}$ for infinite c.e.\ sets $A \subseteq \omega$. In this section we present a machine model for the computation of these functions. The numberings derived from this characterisation will be a quasi-G\"odel numbering.

In what follows let $\Sigma$ be a non-empty finite set and $\tp_{\Sigma}$ be the set of all Turing programs which use $\Sigma$ as tape alphabet. Moreover, let $\fun{\kappa}{\omega}{\omega^{n}}$ be an effective enumeration of $\omega^{n}$ without repetitions and $A \subseteq \omega$ be a non-empty c.e.\ set. Finally, let $P$ be a Turing program and $M^{(n)}_{A}$ be a Turing program realising the following algorithm:\\

\noindent
\emph{Input}: $\vec x$
\begin{enumerate}
 \item Set  $j := 1$.
 
 \item \label{step} Compute $A_{j}$.
 
 \item For $i = 0, \ldots, j-1$: run program $P$ on input $\kappa(i)$ for $j$ steps and store whether there is a result.
 
 \item If there is some $a \in A_{j }$ so that
 \begin{itemize}
 \item $\vec x < a^{(n)}$ 
 
 \item For all $\vec y < a^{(n)}$: $\vec y \in \{ \kappa(0), \ldots, \kappa(j-1) \}$ and program $P$ stops on input $\vec y$ within $j$ steps,
 
 \end{itemize}
 then set $z := \text{output of $P$ on input $\vec x$}$;
 otherwise increase $j$ by 1 and go to (\ref{step}).
 
 \end{enumerate}
 
 \noindent
 \emph{Output}: $z$.\\
 
 \noindent
We write $M^{(n)}_{A}(P)$ to denote the Turing program consisting of program $M^{(n)}_{A}$ and the subprogram $P$. If $\fun{\sem^{(n)}}{\tp_{\Sigma}}{\PPP^{(n)}}$ is  the semantic map that, with regard to a fixed input/output convention, assigns to every Turing program the function it computes, then $\sem^{(n)}(M^{(n)}_{A}(P))$ is precisely the function that  agrees with $\sem^{(n)}(P)$ on the maximal  initial segment of $\omega^{n}$ with an edge length in $A$ that is contained in $\dom(\sem^{(n)}( P))$; and is  undefined, otherwise.

\begin{theorem}
For every infinite c.e.\ set $A \subseteq \omega$,
\[
\SSS^{(n)}_{A} = \set{\sem^{(n)}(M^{(n)}_{A}(P))}{P \in \tp_{\Sigma}}.
\]
\end{theorem}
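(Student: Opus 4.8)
The plan is to prove the two inclusions separately. For the inclusion $\SSS^{(n)}_{A} \subseteq \set{\sem^{(n)}(M^{(n)}_{A}(P))}{P \in \tp_{\Sigma}}$, I would take an arbitrary $q \in \SSS^{(n)}_{A}$ and produce a suitable subprogram $P$. There are two cases. If $q \in \RRR^{(n)}$, pick any Turing program $P$ computing $q$; since $\dom(\sem^{(n)}(P)) = \omega^{n}$ contains every initial segment of $\omega^{n}$ with edge length in $A$, and $A$ is infinite so arbitrarily large such segments exist, the description of $\sem^{(n)}(M^{(n)}_{A}(P))$ given just before the theorem says this function agrees with $q$ on all of $\omega^{n}$, hence equals $q$. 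If instead $q \in \anf^{(n)}_{A}$ with $\dom(q) = \emptyset$, take $P$ to be a program that diverges everywhere. If $q \in \anf^{(n)}_{A}$ with $\dom(q)$ a non-empty initial segment of $\omega^{n}$ of edge length $a \in A$, take $P$ to be a program computing any total computable extension of $q$ (such extensions exist, e.g. extend by $0$ outside $\dom(q)$); then the maximal initial segment of $\omega^{n}$ with edge length in $A$ contained in $\dom(\sem^{(n)}(P)) = \omega^{n}$ is all of $\omega^{n}$, which is not what we want. So instead I would let $P$ compute the partial function that agrees with $q$ on $\dom(q)$ and diverges on every $\vec y$ with $\vec y \not< a^{(n)}$ for the specific $a = $ (edge length of $\dom(q)$); this is a partial computable function, its domain is exactly $\dom(q)$, and the maximal initial-segment-with-edge-length-in-$A$ subset of $\dom(q)$ is $\dom(q)$ itself, so $\sem^{(n)}(M^{(n)}_{A}(P)) = q$.

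For the reverse inclusion $\set{\sem^{(n)}(M^{(n)}_{A}(P))}{P \in \tp_{\Sigma}} \subseteq \SSS^{(n)}_{A}$, I would invoke the characterisation stated immediately before the theorem: $\sem^{(n)}(M^{(n)}_{A}(P))$ is the function agreeing with $\sem^{(n)}(P)$ on the maximal initial segment $C$ of $\omega^{n}$ with edge length in $A$ contained in $\dom(\sem^{(n)}(P))$, and undefined otherwise. If no such $C$ exists — more precisely if the only such $C$ is empty — then the resulting function is the empty function, which lies in $\anf^{(n)}_{A} \subseteq \SSS^{(n)}_{A}$. If a largest such $C$ exists and is a proper initial segment of $\omega^{n}$ of edge length $a \in A$, then $\sem^{(n)}(M^{(n)}_{A}(P))$ is a function with domain $C$ which is an initial segment of edge length in $A$, hence lies in $\anf^{(n)}_{A} \subseteq \SSS^{(n)}_{A}$. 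The only remaining case is that there is no maximal proper such $C$, i.e. $\dom(\sem^{(n)}(P))$ contains initial segments of $\omega^{n}$ of arbitrarily large edge length in $A$; since $A$ is infinite this forces $\dom(\sem^{(n)}(P)) = \omega^{n}$, so $\sem^{(n)}(M^{(n)}_{A}(P)) = \sem^{(n)}(P) \in \RRR^{(n)} \subseteq \SSS^{(n)}_{A}$.

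The main obstacle, and the step that deserves genuine care, is verifying that the informal description of $\sem^{(n)}(M^{(n)}_{A}(P))$ stated before the theorem is actually correct, i.e. that the algorithm $M^{(n)}_{A}$ behaves as claimed; one may take this as given from the preceding text, but if a self-contained argument is wanted, one must check: (i) on input $\vec x$, if there exists $a \in A$ with $\vec x < a^{(n)}$ and $P$ halting on all $\vec y < a^{(n)}$, then the search over $j$ eventually finds such a witness (because $A$ is c.e., $\kappa$ enumerates $\omega^{n}$, and each relevant computation of $P$ halts, so for $j$ large enough all the finitely many conditions are simultaneously satisfied) and outputs $P(\vec x)$; and (ii) if no such $a$ exists — equivalently, $\vec x$ is not strictly below $a^{(n)}$ for any $a \in A$ on whose full initial block $P$ converges — then the search runs forever, so the value is undefined. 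Combining (i) and (ii) with the definition of $\anf^{(n)}_{A}$ and the decomposition of $\omega^{n}$ into initial segments $\set{\vec y}{\vec y < a^{(n)}}$ indexed by $a \in A$ yields exactly the stated characterisation, and the two inclusions above then follow mechanically.
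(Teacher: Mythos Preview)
Your proposal is correct. The paper does not supply a proof of this theorem at all: it states the result immediately after the informal description of the algorithm $M^{(n)}_{A}$ and the one-sentence characterisation of $\sem^{(n)}(M^{(n)}_{A}(P))$, treating the equality of function classes as evident from that characterisation. Your argument is exactly the case analysis the paper leaves implicit, and it is sound in both directions. The one place worth tightening is the forward inclusion for $q \in \anf^{(n)}_{A}$ with non-empty domain: the exploratory detour through a total extension of $q$ is unnecessary and slightly misleading---simply take $P$ to be any Turing program whose semantics is $q$ itself (a finite partial function is certainly partial computable), and then the maximal initial segment of $\omega^{n}$ with edge length in $A$ contained in $\dom(\sem^{(n)}(P)) = \dom(q)$ is $\dom(q)$ itself, since its edge length lies in $A$.
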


Let $\MP_{i}$ be the $i$-th program with respect to an effective one-to-one enumeration of the set $M^{(n)}_{A}[\tp_{\Sigma}]$ of all modified Turing programs $M^{(n)}_{A}(P)$ with $P \in \tp_{\Sigma}$ and define
\[
^{A}\psi^{(n)}_{i} \Def \sem^{(n)}(\MP_{i}).
\]

\begin{theorem}\label{thm-qgn2}
For every infinite c.e.\ set $A \subseteq \omega$, $\leftidx{^{A}}\psi^{(n)}$ is a quasi-G\"odel numbering of $\SSS^{(n)}_{A}$.
\end{theorem}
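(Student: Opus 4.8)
The plan is to verify the two defining conditions of a quasi-Gödel numbering directly from the machine model, exactly as in the proof of Theorem~\ref{thm-qgn}, but now without falling back on a given Gödel numbering $\varphi$. The key observation is that everything in the algorithm realised by $M^{(n)}_{A}(P)$ is uniform and effective in the program index, so the map $\lambda i, \vec x.\ \leftidx{^{A}}\psi^{(n)}_{i}(\vec x)$ is a partial computable function of $i$ and $\vec x$: given $i$, we recover $\MP_i$ from the effective one-to-one enumeration of $M^{(n)}_{A}[\tp_\Sigma]$, extract its subprogram $P$, and simulate. Since this two-argument function is partial computable and has an infinite extended graph (indeed $\egraph$ of any function already contains all pairs $\pair{\pair{\vec y},0}$), its extended graph is a non-empty c.e.\ set and hence the range of some total computable function; this is (QGN~I). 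Here I would cite the previous theorem to know that the values $\leftidx{^{A}}\psi^{(n)}_{i}$ indeed range over $\SSS^{(n)}_{A}$, and implicitly Church's thesis / the standard fact that simulation of modified Turing programs is effective, so that the informal algorithm above genuinely yields a partial computable two-argument function.

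For (QGN~II), let $k \in \RRR^{(2)}$ be given. The task is to produce $v \in \RRR^{(1)}$ such that whenever $\lambda t.\ k(i,t)$ enumerates the extended graph of some $r \in \SSS^{(n)}_{A}$, we have $r = \leftidx{^{A}}\psi^{(n)}_{v(i)}$. The idea is to build, uniformly in $i$, a Turing program $P_i$ that, on input $\vec x$, searches for a pair $t$ with $\pi^{(2)}_1(k(i,t)) = \pair{\vec x}$ and $\pi^{(2)}_2(k(i,t)) = z$ for some $z > 0$, and then outputs $z - 1$; this is just the machine-model analogue of the $\mu$-term used for $\varphi_{v(i)}$ in the proof of Theorem~\ref{thm-qgn}. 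If $\lambda t.\ k(i,t)$ enumerates $\egraph(r)$, then $P_i$ computes exactly $r$. Now $M^{(n)}_{A}(P_i)$ computes the function agreeing with $r = \sem^{(n)}(P_i)$ on the maximal initial segment of $\omega^{n}$ with edge length in $A$ contained in $\dom(r)$; but since $r \in \SSS^{(n)}_{A}$, that function is $r$ itself. Finally, composing the effective assignment $i \mapsto P_i$ with the effective inverse of the one-to-one enumeration $\MP$ yields the desired $v \in \RRR^{(1)}$ with $\leftidx{^{A}}\psi^{(n)}_{v(i)} = \sem^{(n)}(M^{(n)}_{A}(P_i)) = r$.

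The one genuinely delicate point—and the step I expect to be the main obstacle—is the passage back and forth between abstract program indices and the concrete modified Turing programs $M^{(n)}_{A}(P)$: one must check that the enumeration $\MP$ of $M^{(n)}_{A}[\tp_\Sigma]$ is effective and has an effective inverse on its range, that extracting $P$ from $M^{(n)}_{A}(P)$ is effective, and that given any partial computable construction (like the one defining $P_i$) one can effectively manufacture an actual element $P$ of $\tp_\Sigma$ computing it. These are the standard facts underlying an \emph{s-m-n}-type manipulation of Turing programs, and they are the reason the statement is true at all; but since the present development is meant to avoid importing classical computability theory, some care is needed to phrase them as elementary facts about the machine model rather than invoking, say, the classical \emph{smn}-theorem. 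Once this bookkeeping is in place, the two conditions follow by essentially copying the proof of Theorem~\ref{thm-qgn} with $\varphi_{v(i)}$ replaced by $\sem^{(n)}(M^{(n)}_{A}(P_i))$.
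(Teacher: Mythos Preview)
Your proposal is correct and follows essentially the same approach as the paper: for (QGN~II) the paper builds exactly the search program you describe (factored as a fixed program $\widehat{P}$ composed with an input-tagging program $P_i$) and then effectively locates $M^{(n)}_A(\widehat{P};P_i)$ in the enumeration $\MP$. For (QGN~I) the paper is slightly more explicit---it writes down a concrete Turing program that dovetails step-bounded simulations of all $\MP_i$ and pads with pairs $\pair{m,0}$---but this is precisely the machine-level unpacking of your ``the extended graph is infinite c.e., hence the range of a total computable function,'' so the difference is presentational rather than substantive.
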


If one compares the above definition of the set $M^{(n)}_{A}[\tp_{\Sigma}]$ with the definition of the numbering $\leftidx{^{A}}\varphi{(n)}$ in the previous section, one sees that the same rule is behind both: algorithms for the computation of functions in $\PPP^{(n)}$ are modified to ones for computing functions in $\SSS^{(n)}_A$. The only difference is that in the first case this is done by direct manipulation of the algorithms formulated in a given algorithmic language, here the language of Turing programs, and in the second case by effective operations on the indices (names) for these algorithms. Therefore, when defining $\leftidx{^{A}}\varphi^{(n)}$, a coding of all algorithms had to be used (G\"odel numbering $\varphi^{(n)}$), the properties of which were exploited in the proof of Theorem~\ref{thm-qgn}. The numbering $\leftidx{^{A}}\psi^{(n)}$, on the other hand, can be defined directly by listing the modified programs. However, in the proof of the above theorem one can no longer fall back on any known properties of a numbering, but must prove the existence of the computable functions stated in Conditions (QGN~I) and (QGN~II) by specifying suitable Turing programs. Since we essentially want to derive all results presented in the next sections from the properties of quasi-G\"odel numberings, this procedure shows that a computability theory for $\SSS_A$ can be established and developed without prior knowledge of the theory of all partial computable functions. The assumption about $A$ in this section is to be understood as an abbreviated way of saying that $A$ is the image of a total, one-to-one function that can be computed by a Turing program. It does not mean that one has to know already what a c.e. set is.

\begin{proof}[Proof of Theorem~\ref{thm-qgn2}]  
(QGN~I). Let $P$ be a Turing program realising  the following algorithm: \\

\noindent
\emph{Input}: t
\begin{enumerate}
\item If $t$ is odd, set $z := \pair{(t-1)/2, 0}$ and go to~(\ref{end}); otherwise, find $i$, $j$ and $\vec x$ so that $\pair{i, j, \vec x} = t/2$.

\item Simulate $j$ steps of the computation of the $i$-th modified Turing program $\MP_{i}$ on input $\vec{x}$.

\item If the computation of $\MP_{i}$ on input $\vec x$ stops within $j$ steps, set\\ $z := \pair{\pair{i, \vec x}, 1+\text{output of $\MP_{i}$ on input $\vec x$}}$; otherwise, set $z := \pair{t/2, 0}$.

\item\label{end} Stop.
\end{enumerate}

\noindent
\emph{Output}: $z$.\\

\noindent
Let $h \Def \sem^{(1)}(P)$. Then $h \in \RRR^{(1)}$ with $\range(h) = \egraph(\lambda i, \vec x.\ ^{A}\psi^{(n)}_{i}(\vec x))$. 

(QGN~II). Let $\widehat{P}$ be a Turing program computing the function
\[
\pi^{(2)}_{2}( \mu \pair{t, z}.\, [\pi^{(2)}_{1}(k(i, t)) = \pair{\vec x} \wedge \pi^{(2)}_{2}(k(i, t)) = z \wedge z >0]) - 1.
\]
 Moreover, for $i \in \omega$, let $P_{i}$ be a program that on input $\vec x$ outputs $\pair{i, \vec x}$ in such a way that it can be read as input by $\widehat{P}$. For two programs $\bar{P}$ and $\tilde{P}$ let $\bar{P};\tilde{P}$ be the program that first executes $\tilde{P}$ and then $\bar{P}$. Finally, let $Q$ be a program that on input of $i$ computes a $j$ with $\MP_{j} = M^{(n)}_{A}(\widehat{P};P_{i})$. Then $v \Def \sem^{(n)}(Q)$ has the property stated in (QGN~II).
\end{proof}

\section{Computability theory for $\SSS_{A}$}\label{sec-comp}

In this and the following sections let $A \subseteq \omega$ be an infinite c.e.\ set. In these sections we will show that a sufficiently rich computability theory can be developed for the set of functions $\SSS_A$. For this purpose we give a selection of important theorems of classical computability theory and show that they also apply to the set of functions considered here. Since this contains in particular all total computable functions, computable sets and relations can be introduced as usual in the theory to be developed here. We do not want to go into this any further.

Let $\theta^{(n)}$ be a numbering of $\SSS^{(n)}_{A}$. If $\theta^{(n)}$ satisfies Condition~(QGN~I), we obtain an analogue to Kleene's Normal Form Theorem.

\begin{theorem}
Let  $\theta^{(n)}$ satisfy Condition~(QGN~I). Then there is a primitive recursive function $q$ and an $(n+2)$-ary computable predicate $T$ so that
\[
\theta^{(n)}_{i}(\vec x) = q(\mu y.\ T(i, \vec x, y)).
\]
\end{theorem}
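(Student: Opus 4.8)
The plan is to extract the normal form directly from the enumeration of the extended graph that (QGN~I) hands us, letting the $\mu$-operator do the searching. First I would use (QGN~I) to fix a function $h \in \RRR^{(1)}$ with $\range(h) = \egraph(\lambda i, \vec x.\ \theta^{(n)}_{i}(\vec x))$. By the definition of the extended graph of the $(n+1)$-ary function $\lambda i, \vec x.\ \theta^{(n)}_{i}(\vec x)$, a number lies in $\range(h)$ iff it has either the form $\pair{\pair{i, \vec y}, 0}$ (for arbitrary $i, \vec y$) or the form $\pair{\pair{i, \vec x}, z+1}$ with $\theta^{(n)}_{i}(\vec x) = z$; and for fixed $i, \vec x$ a number of the second shape, if present at all, is unique, since $\theta^{(n)}_{i}$ is a function.

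Next I would put
\[
T(i, \vec x, y) \Leftrightarrow \bigl[\pi^{(2)}_{1}(h(y)) = \pair{i, \vec x} \ \wedge\ \pi^{(2)}_{2}(h(y)) > 0\bigr],
\]
which is an $(n+2)$-ary computable predicate because $h$, the tupling functions and their decodings are total computable and equality and ``$>$'' are decidable, and
\[
q(y) \Def \pi^{(2)}_{2}(h(y)) - 1
\]
(with the convention $0 - 1 = 0$), so that $q \in \RRR^{(1)}$.

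It then remains to check that $\theta^{(n)}_{i}(\vec x) = q(\mu y.\ T(i, \vec x, y))$ for all $i$ and $\vec x$. If $\theta^{(n)}_{i}(\vec x) \conv$, say with value $z$, then $\pair{\pair{i, \vec x}, z+1} \in \range(h)$, so $T(i, \vec x, y)$ holds for some $y$; and by the uniqueness noted above, $h(y) = \pair{\pair{i, \vec x}, z+1}$ for every such $y$, whence $q(\mu y.\ T(i, \vec x, y)) = (z+1) - 1 = z$. If $\theta^{(n)}_{i}(\vec x) \nconv$, then no element of $\range(h)$ has first component $\pair{i, \vec x}$ and positive second component, so $T(i, \vec x, y)$ fails for all $y$, $\mu y.\ T(i, \vec x, y) \nconv$, and the right-hand side is undefined too.

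I do not expect a genuine obstacle here; the argument is essentially a bookkeeping exercise. The one conceptual point worth stressing is why the \emph{extended} graph is the right object to feed in: the shift $z \mapsto z+1$ is precisely what lets the decidable predicate $T$ tell ``the value at $\vec x$ has already surfaced in the enumeration'' apart from ``it has not (yet)'', so that the $\mu$-search halts exactly when $\theta^{(n)}_{i}(\vec x)$ is defined; the padding pairs $\pair{\pair{i, \vec y}, 0}$ are irrelevant to this proof (as in the proof of Theorem~\ref{thm-qgn}, they only serve to keep the enumerated set infinite).
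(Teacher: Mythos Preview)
Your proof is correct and follows essentially the same approach as the paper: search through the enumeration $h$ of the extended graph for a pair with first component $\pair{i,\vec x}$ and positive second component, then subtract $1$. The only cosmetic difference is that the paper packs the search variable as $y=\pair{t,z}$ so that $q(a)=\pi^{(2)}_{2}(a)\prc 1$ is independent of $h$, whereas your $q$ calls $h$; both satisfy the stated theorem.
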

\begin{proof}
Choose the pair encoding and its decodings as primitive recursive and 
let $h \in \RRR^{(1)}$ enumerate the extended graph of the universal function of the numbering $\theta^{(n)}$. Then
\[
\theta^{(n)}_{i}(\vec x) = \pi^{(2)}_{2}(\mu \pair{t,z}.\, [\pi^{(2)}_{1}(h(t)) = \pair{i, \vec x} \wedge \pi^{(2)}_{2}(h(t)) = z \wedge z > 0]) -1.
\]
Therefore, defining $q(a) \Def \pi^{(2)}_{2}(a) \prc 1$ and
\[
T(i, \vec x, y) \Leftrightarrow \pi^{(2)}_{1}(h(\pi^{(2)}_{1}(y))) = \pair{i, \vec x} \wedge \pi^{(2)}_{2}(h(\pi^{(2)}_{1}(y))) = \pi^{(2)}_{2}(y) \wedge \pi^{(2)}_{2}(y) > 0,
\]
 proves the claim.
\end{proof}

Since the numberings considered here do not have a universal function in the considered  function class, we cannot construct index functions as we do in the  case of  G\"odel numberings, by first defining a function that contains the index parameters as arguments and then applying the \emph{smn}-theorem. We have to obtain such index functions by applying condition (QGN~II). Therefore, in the proofs of this and the following sections, we often need to construct functions that enumerate the extended graph of another function. These enumeration functions are essentially defined in two ways. In order not to have to repeat these constructions in the following, we want to present the first of these definition methods in the form of a scheme. Let $\kappa$ be an effective one-to-one enumeration of $\omega^n$, in which for $a < b$ ($a, b \in \omega$) all elements of the initial segment of $\omega^n$ of edge length $a$ occur before the remaining elements of the initial segment of edge length $b$. In the following we say that $\kappa$ enumerates $\omega^n$  \emph{initial segment by  initial segment}. Also let $\nf, \narg \in \RRR^{(1)}$ \footnote{The notation derives from the German words `Nachfolger' (successor) and `Argument'.} and for all $i,j >0$, $\ast \in \RRR^{(1)}$ with
\begin{gather*}
\nf(\pair{x_{1}, \ldots, x_{n}}) \Def \pair{\kappa(\kappa^{-1}(x_{1}, \ldots, x_{n}) + 1)}, \\
\narg(z) \Def \begin{cases}
			\pair{\vec 0} & \text{if $\pi^{(2)}_{2}(z) = 0$}, \\
			\nf(\pi^{(2)}_{1}(z)) & \text{otherwise},
		  \end{cases}\\
\pair{x_{1}, \ldots, x_{i}} \ast \pair{y_{1}, \ldots, y_{j}} \Def \pair{x_{1}, \ldots, x_{i}, y_{1}, \ldots, y_{j}}.\
\end{gather*}
(With this notation we suppress the dependence of the functions $\kappa$, $\nf$, $\narg$ and $\ast$ on $n$ and $i$ and $j$, respectively.)

 As follows from the definition, $\nf(\pair{\vec x})$ is the encoded successor of $\vec x$ in the enumeration $\kappa$.
 
\begin{lemma}\label{lem-meth1}
Let $f \in \RRR^{(3)}$ and $Q \subseteq \omega^{3}$ be a computable relation so that $f(i,t,z) > 0$ if $Q(i,t,z)$ holds. Moreover, let $k \in \PPP\FFF^{(2)}$ be defined by
\begin{align*}
& k(i, 2t) = \pair{\pair{\kappa(t)},0} \\
& k(i, 2t+1) = g(i, t), \quad \text{where} \\
&g(i,0) = \pair{\pair{\vec 0}, 0}, \\
&g(i, t+1) = \begin{cases}
			\pair{\narg(g(i,t)), f(i,t+1,g(i,t))} & \text{if $Q((i, t+1, g(i,t))$},\\
			g(i,t) & \text{otherwise}.
		  \end{cases}
\end{align*}
Then $k \in \RRR^{(2)}$ and $\lambda t.\ k(i,t)$ enumerates the extended graph of an $n$-ary function.
\end{lemma}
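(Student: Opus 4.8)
The plan is to establish the two conclusions in turn. For $k\in\RRR^{(2)}$ I would first observe that the auxiliary function $g$ is total computable: it is given by a primitive recursion in its second argument, with constant base value $\pair{\pair{\vec 0},0}$ and with a step that is built from the characteristic function of the computable relation $Q$, from $f$, from $\narg$, and from the pairing function. Here $\narg$ is total computable by the scheme preceding the lemma, since $\kappa$ and $\kappa^{-1}$ are computable and $\pair{\cdot}$ maps $\omega^{n}$ bijectively onto $\omega$ (so $\nf$ is total computable). By closure of the total computable functions under primitive recursion, $g\in\RRR^{(2)}$; and $k$, being obtained from $g$ and $\kappa$ by a case distinction on the parity of its second argument, lies in $\RRR^{(2)}$ as well.

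The substance of the lemma is the claim that $\lambda t.\ k(i,t)$ ranges over the extended graph of an $n$-ary function, and for this I would analyse the trajectory $t\mapsto g(i,t)$. I would prove by induction on $t$ the invariant that $g(i,t)=\pair{\pair{\kappa(m_{t})},z_{t}}$ for suitable $m_{t},z_{t}$ with $z_{t}=0$ only if $m_{t}=0$ (using $\kappa(0)=\vec 0$, which holds because $\kappa$ enumerates $\omega^{n}$ initial segment by initial segment). The induction step uses that in a step where $Q(i,t+1,g(i,t))$ holds the new value $f(i,t+1,g(i,t))$ is positive by hypothesis, while $\narg(g(i,t))$ equals $\pair{\vec 0}=\pair{\kappa(0)}$ when $z_{t}=0$ --- so the argument stays put and the null marker is merely replaced by a genuine value --- and equals $\pair{\kappa(m_{t}+1)}$ when $z_{t}>0$; and that in a step where $Q$ fails $g$ is unchanged. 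From this it follows that $g(i,\cdot)$ is piecewise constant, stalling between the steps at which $Q$ fires, and that the successive values it assumes are $\pair{\pair{\kappa(0)},0}$ followed (as $Q$ fires) by $\pair{\pair{\kappa(0)},v_{1}},\pair{\pair{\kappa(1)},v_{2}},\ldots$ with every $v_{j}>0$. In particular the pairs with positive second component occurring among the values of $g(i,\cdot)$ have pairwise distinct first components.

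To finish, I would define a partial function $q$ by stipulating $q(\vec x)=z$ iff $\pair{\pair{\vec x},z+1}$ occurs in $\range(\lambda t.\ k(i,t))$. Since $\pair{\cdot}$ is a bijection of $\omega^{n}$ onto $\omega$, every value of $k$ decodes uniquely as $\pair{\pair{\vec x},z}$, and the ones with $z>0$ --- which all come from odd arguments, i.e.\ from the values of $g(i,\cdot)$ --- are consistent by the previous paragraph, so $q\in\PPP\FFF^{(n)}$ is well defined. The even arguments contribute the null marker $\pair{\pair{\vec y},0}$ for \emph{every} $\vec y\in\omega^{n}$ (as $\kappa$ is onto $\omega^{n}$), the odd arguments contribute only the single null marker $\pair{\pair{\vec 0},0}$ together with precisely those $\pair{\pair{\vec x},z+1}$ with $q(\vec x)=z$, and conversely every element of $\egraph(q)$ arises in one of these ways; hence $\range(\lambda t.\ k(i,t))=\egraph(q)$, as claimed.

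The only real obstacle is the bookkeeping in the trajectory analysis: one must keep straight the stalling steps, the fact that the first firing of $Q$ upgrades the marker at $\kappa(0)$ to a value without moving the argument while every later firing advances the argument by one step of $\kappa$, and then match the resulting list of argument/value pairs against the set-theoretic definition of the extended graph. Beyond this, the argument is just the closure of $\RRR$ under primitive recursion together with routine case distinctions.
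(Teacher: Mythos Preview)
Your proof is correct. The paper states this lemma without proof, evidently treating it as immediate from the construction (the analogous Lemma~\ref{lem-methce} later receives only a three-sentence sketch in the same spirit). Your argument---closure under primitive recursion for totality of $g$, then an inductive invariant on the trajectory $t\mapsto g(i,t)$ showing that the successive distinct values are $\pair{\pair{\kappa(0)},0},\pair{\pair{\kappa(0)},v_{1}},\pair{\pair{\kappa(1)},v_{2}},\ldots$ with each $v_{j}>0$---is precisely the verification the paper leaves to the reader, and it is carried out cleanly.
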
 
\begin{proof}
Note that $\lambda t.\ k(i, 2t)$ enumerates alle coded tuples $\pair{\pair{\vec y}, 0}$. The value of $g(i, 0)$ is one of these coded tuples. As long as $Q(i, t+1, g(i, t))$ holds, $\arg(g(i, t))$ computes the successor of $\pi^{(2)}_{1}(g(i, t))$ in the enumeration given by $\kappa$. Since $f(i, t+1, g(i, t)) > 0$ in this case, $g(i, t+1)$ satisfies the requirement for the remaining elements in an extended graph. If $Q(i, t+1, g(i, t))$ does not hold, the value of $g(i, t)$ is repeated. In particular, if $Q$ is empty, we have that $k(i, 2t+1) = g(i, 0) = \pair{\pair{\vec 0}, 0}$, for all $t \in \omega$. Thus, $\lambda t.\ k(i, t)$ enumerates the extended graph of the nowhere defined function in this case.
\end{proof}
		  
As already mentioned, the concept of enumeration is of central importance for the computability theory to be developed here. The numberings of $\SSS^{(n)}_A$ considered here do not have universal functions in $\SSS^{(n+1)}_A$, but the extended graphs of the functions in $\SSS^{(n)}_{A}$ can be enumerated uniformly. We first show a more general result.

\begin{lemma}\label{lem-preunif}
Let $\theta^{(m+n)}$ satisfy Condition~(QGN~I). Then there is some $k \in \RRR^{(2)}$ such that
\[
\range(\lambda t.\ k(\pair{i, \vec y}, t)) = \egraph(\lambda \vec x.\ \theta^{(m+n)}_{i}(\vec y, \vec x)).
\]
\end{lemma}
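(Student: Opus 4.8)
\textbf{Proof proposal for Lemma~\ref{lem-preunif}.}

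The plan is to reduce the claim to (QGN~I) applied to $\theta^{(m+n)}$ together with a book-keeping construction that, from an enumeration of the extended graph of the full universal function $\lambda i, \vec y, \vec x.\ \theta^{(m+n)}_{i}(\vec y, \vec x)$, produces a uniform family of enumerations of the extended graphs of the functions $\lambda \vec x.\ \theta^{(m+n)}_{i}(\vec y, \vec x)$, indexed by the pair code $\pair{i, \vec y}$. By (QGN~I) there is $h \in \RRR^{(m+n)}$ — more precisely, a total computable $h$ whose range is $\egraph(\lambda i, \vec y, \vec x.\ \theta^{(m+n)}_{i}(\vec y, \vec x))$; here the extended graph is taken with respect to the ``first argument is the index, the rest are the true arguments'' reading, so its elements are codes $\pair{\pair{\vec y, \vec x}, 0}$ for arbitrary $\vec y, \vec x$ and codes $\pair{\pair{i', \vec y', \vec x'}, z+1}$ with $\theta^{(m+n)}_{i'}(\vec y', \vec x') = z$. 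Wait — I must be careful about how the index is packaged into the extended graph; the cleanest route is to observe, as in the proof of the Normal Form Theorem above, that (QGN~I) gives $h \in \RRR^{(1)}$ enumerating $\egraph$ of the universal function, and then recover $\theta^{(m+n)}_{i}(\vec y,\vec x)$ via the $\mu$-search formula $\pi^{(2)}_{2}(\mu\pair{t,z}[\pi^{(2)}_{1}(h(t)) = \pair{i,\vec y,\vec x} \wedge \pi^{(2)}_{2}(h(t)) = z \wedge z>0]) \prc 1$. This makes $\lambda \vec x.\ \theta^{(m+n)}_{i}(\vec y,\vec x)$ an $n$-ary function that is partial computable uniformly in $\pair{i,\vec y}$.

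Next I would feed this into the enumeration scheme of Lemma~\ref{lem-meth1}. Set $m' = \pair{i,\vec y}$ as the single index parameter. Define $f \in \RRR^{(3)}$ and a computable relation $Q$ so that, on the $(t+1)$-st step, $Q(m', t+1, g(m', t))$ holds exactly when the $\mu$-search above, run with time budget and argument extracted from $t$, has found a converging value for $\theta^{(m+n)}_{i}(\vec y, \narg(g(m',t)))$ — i.e. for the current ``next argument'' in the initial-segment-by-initial-segment enumeration $\kappa$ of $\omega^{n}$ — and then let $f$ return that value incremented by $1$. Concretely one runs the program for $\lambda \vec x.\ \theta^{(m+n)}_{i}(\vec y,\vec x)$ on the argument $\narg(g(m',t))$ for a number of steps that grows with $t$; because the $\kappa$-enumeration marches through $\omega^{n}$ one initial segment at a time and because $\lambda\vec x.\ \theta^{(m+n)}_{i}(\vec y,\vec x)$, being in $\SSS^{(n)}_{A}$ (or total, or empty), has as domain either $\omega^{n}$, the empty set, or an initial segment, the sequence $g(m',\cdot)$ will enumerate exactly the graph entries that actually converge, and stall precisely at the first argument on which the function diverges. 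Lemma~\ref{lem-meth1} then hands us $k \in \RRR^{(2)}$ such that $\lambda t.\ k(m', t)$ enumerates the extended graph of some $n$-ary function; the interleaving of the even-indexed ``$0$-stamps'' over all of $\omega^{n}$ guarantees every $\pair{\pair{\vec x},0}$ appears, and the odd-indexed stream of $g$-values contributes exactly the pairs $\pair{\pair{\vec x}, z+1}$ with $\theta^{(m+n)}_{i}(\vec y,\vec x) = z$. Hence $\range(\lambda t.\ k(\pair{i,\vec y}, t)) = \egraph(\lambda\vec x.\ \theta^{(m+n)}_{i}(\vec y,\vec x))$, as required.

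The main obstacle — and the only place where one really uses the structure of $\SSS^{(n)}_{A}$ rather than just partial computability — is the correctness of the ``increasing time budget on the current $\kappa$-argument'' strategy: one must check that it both (a) never skips a convergent value and (b) never overshoots past a divergent argument while the $\kappa$-enumeration is still in an earlier initial segment. Point (a) is the usual dovetailing argument (enough steps are eventually allotted to any single fixed argument). Point (b) is where the shape of the domain matters: since $\narg$ follows $\kappa$ and $\kappa$ enumerates $\omega^{n}$ initial segment by initial segment, the first argument $\vec x_{0}$ on which $\lambda\vec x.\ \theta^{(m+n)}_{i}(\vec y,\vec x)$ diverges is $\kappa$-least among all divergent arguments, so $g(m',\cdot)$ correctly freezes at $\vec x_{0}$ forever; this is exactly what makes the resulting range an \emph{extended graph} of a function in $\SSS^{(n)}_{A}$ and not something larger. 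Everything else — the primitive-recursive definitions of $f$, $Q$, $\nf$, $\narg$, the verification that $f(m',t,z)>0$ on $Q$, and the appeal to Lemma~\ref{lem-meth1} — is routine.
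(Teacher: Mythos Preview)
Your proposal is correct and follows essentially the same route as the paper: take the enumeration $h$ of $\egraph(\lambda i,\vec y,\vec x.\ \theta^{(m+n)}_{i}(\vec y,\vec x))$ guaranteed by (QGN~I), and feed into Lemma~\ref{lem-meth1} a predicate $Q$ that, for the current next argument $\narg(z)$, searches $h$ up to bound $t$ for an entry with first component $\pair{i,\vec y}\ast\narg(z)$ and positive second component, together with an $f$ that extracts that value. The paper writes this down more tersely---defining $\widehat{Q}$, $Q$, $f$ explicitly and leaving the verification you carry out in (a)--(b) as routine---but the construction is the same.
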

\begin{proof} 
By Condition~(QGN~I), the extended graph of $\lambda i, \vec y, \vec x.\ \theta^{(m+n)}_{i}(\vec y, \vec x)$ has an enumeration $h \in \RRR^{(1)}$. Define
\begin{gather*}
\widehat{Q}(\pair{i, \vec y}, a, z) \Leftrightarrow \pi^{(2)}_{2}(h(a)) > 0 \wedge \pi^{(2)}_{1}(h(a)) = \pair{i, \vec y} \ast \narg(z), \\
Q(\pair{i, \vec y}, t, z) \Leftrightarrow (\exists a \le t)\, \widehat{Q}(\pair{i, \vec y}, a, z) \quad \text{and} \\
f(\pair{i, \vec y}, t, z) \Def \pi^{(2)}_{2}(h(\mu a \le t.\ \widehat{Q}(\pair{i, \vec y}, a , z))).
\end{gather*}
Now, by applying Lemma~\ref{lem-meth1} we obtain a function $k \in \RRR^{(2)}$. As is readily verified, it has the properties stated. 
\end{proof}

The result we are looking for now follows as special case $m = 0$.

\begin{theorem}\label{thm-qgn1}
Let $\theta^{(n)}$ satisfy Condition~(QGN~I). Then there is some $k \in \RRR^{(2)}$ so that
\[
\range(\lambda t.\ k(i, t)) = \egraph(\theta^{(n)}_{i}).
\]
\end{theorem}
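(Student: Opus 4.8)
The plan is to observe that this theorem is nothing but the special case $m = 0$ of Lemma~\ref{lem-preunif}, so the only work is to check that the hypotheses and the conclusion match under that substitution. First I would note that $\theta^{(n)}$ is the very same object as $\theta^{(m+n)}$ when $m = 0$, so the assumption that $\theta^{(n)}$ satisfies Condition~(QGN~I) is exactly the hypothesis of Lemma~\ref{lem-preunif} in this case. Here $n \ge 1$ is still in force, so $m + n = n \ge 1$ and $\theta^{(m+n)}$ genuinely is a numbering of $\SSS^{(m+n)}_{A}$; nothing about the lemma breaks down for $m = 0$.

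Next I would unwind the notation in the conclusion of Lemma~\ref{lem-preunif} when $m = 0$. The parameter block $\vec y$ is then the empty tuple, so by the convention $\pair{x} \Def x$ for $1$-tuples the encoded pair $\pair{i, \vec y}$ collapses to $i$, and the function $\lambda \vec x.\ \theta^{(m+n)}_{i}(\vec y, \vec x)$ is literally $\theta^{(n)}_{i}$. Hence the function $k \in \RRR^{(2)}$ supplied by Lemma~\ref{lem-preunif} already satisfies
\[
\range(\lambda t.\ k(i, t)) = \egraph(\theta^{(n)}_{i}),
\]
which is exactly the claim, and there is nothing left to do.

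I do not expect any real obstacle here: Lemma~\ref{lem-preunif} was stated in the parametrised form precisely so that this theorem (and the uniform-in-$\vec y$ version) would drop out by specialisation, its own proof routing through Lemma~\ref{lem-meth1} and an enumeration $h \in \RRR^{(1)}$ of the extended graph of the universal function guaranteed by (QGN~I). The only point that deserves a moment's care is the bookkeeping with the pairing conventions that makes $\pair{i, \vec y}$ reduce to $i$ when $\vec y$ is empty; once that is recorded, the proof is complete.
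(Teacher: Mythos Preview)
Your proposal is correct and matches the paper's own argument exactly: the paper states the theorem immediately after Lemma~\ref{lem-preunif} with the one-line remark that it ``follows as special case $m = 0$,'' which is precisely the specialisation you carry out. Your attention to the pairing convention making $\pair{i,\vec y}$ collapse to $i$ when $\vec y$ is empty is the only bookkeeping detail, and you have handled it correctly.
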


As a further consequence we obtain the \emph{smn}-theorem.

\begin{theorem}[\emph{smn}-Theorem]\label{thm-smn}
Let $m > 0$, $\theta^{(n)}$ satisfy Condition~(QGN~II) and $\theta^{(m+n)}$ meet Condition~(QGN~I). Then there is a function $s \in \RRR^{(m+1)}$ so that
\[
\theta^{(n)}_{s(i, \vec y)}(\vec x) = \theta^{(m+n)}_{i}(\vec y, \vec x).
\]
\end{theorem}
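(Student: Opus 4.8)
The plan is to derive the $s$-$m$-$n$ theorem by combining the uniform enumeration result (Lemma~\ref{lem-preunif}) with the reduction property (QGN~II). The key observation is that (QGN~II) gives us a way to pass from an effective enumeration of the extended graph of a function in $\SSS^{(n)}_A$ to an index of that function under $\theta^{(n)}$; so all we need is to produce, uniformly in $i$ and $\vec y$, an enumeration of the extended graph of $\lambda \vec x.\ \theta^{(m+n)}_i(\vec y, \vec x)$, and then observe that this function indeed lies in $\SSS^{(n)}_A$.

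\textbf{First step.} Since $\theta^{(m+n)}$ satisfies (QGN~I), apply Lemma~\ref{lem-preunif} to obtain $k \in \RRR^{(2)}$ with
\[
\range(\lambda t.\ k(\pair{i, \vec y}, t)) = \egraph(\lambda \vec x.\ \theta^{(m+n)}_{i}(\vec y, \vec x)).
\]
\textbf{Second step.} Observe that for every $i \in \omega$ and $\vec y \in \omega^{m}$, the function $\lambda \vec x.\ \theta^{(m+n)}_{i}(\vec y, \vec x)$ is a member of $\SSS^{(n)}_{A}$. Indeed, $\theta^{(m+n)}_i \in \SSS^{(m+n)}_A$, so its domain is either all of $\omega^{m+n}$, the empty set, or an initial segment of $\omega^{m+n}$ with edge length in $A$. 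Fixing the first $m$ coordinates to $\vec y$, the resulting $n$-ary section has domain $\omega^{n}$, the empty set, or an initial segment of $\omega^{n}$ with the same edge length, hence again in $A$; in each case the section lies in $\SSS^{(n)}_A$. (If $\theta^{(m+n)}_i$ is total it is even in $\RRR^{(m+n)}$, and its section is in $\RRR^{(n)} \subseteq \SSS^{(n)}_A$.)

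\textbf{Third step.} Now apply Condition~(QGN~II) for $\theta^{(n)}$ to the function $\lambda j, t.\ k(j, t) \in \RRR^{(2)}$: there is $v \in \RRR^{(1)}$ such that whenever $\lambda t.\ k(j,t)$ enumerates the extended graph of a function $r \in \SSS^{(n)}_A$, then $r = \theta^{(n)}_{v(j)}$. By the first two steps, for $j = \pair{i, \vec y}$ the enumeration $\lambda t.\ k(\pair{i, \vec y}, t)$ does enumerate the extended graph of $\lambda \vec x.\ \theta^{(m+n)}_i(\vec y, \vec x) \in \SSS^{(n)}_A$, hence
\[
\theta^{(n)}_{v(\pair{i, \vec y})}(\vec x) = \theta^{(m+n)}_{i}(\vec y, \vec x).
\]
Setting $s(i, \vec y) \Def v(\pair{i, \vec y})$ gives $s \in \RRR^{(m+1)}$ with the required property.

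\textbf{Main obstacle.} The only genuinely nontrivial point is the verification in the second step that every section $\lambda \vec x.\ \theta^{(m+n)}_i(\vec y, \vec x)$ actually stays inside $\SSS^{(n)}_A$ — this is what makes (QGN~II) applicable. The argument hinges on the specific shape of the domains allowed in $\SSS^{(m+n)}_A$ (total, empty, or a cube-shaped initial segment with edge length in $A$) being preserved under fixing a prefix of the coordinates, which one must check case by case but which is routine given the definition of $\anf^{(m+n)}_A$. Everything else is a mechanical assembly of Lemma~\ref{lem-preunif} and (QGN~II), with no appeal to classical computability theory.
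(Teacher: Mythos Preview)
Your proof is correct and follows essentially the same approach as the paper: apply Lemma~\ref{lem-preunif} to get a uniform enumeration of the extended graphs of the sections, observe that each section lies in $\SSS^{(n)}_A$, and then invoke (QGN~II) to obtain $v$ and set $s(i,\vec y) = v(\pair{i,\vec y})$. Your second step spells out the case analysis for why the section stays in $\SSS^{(n)}_A$, which the paper simply asserts; this is a welcome clarification but not a different argument.
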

\begin{proof}
By Lemma~\ref{lem-preunif} there is some $k \in \RRR^{(2)}$ such that 
\[
range(\lambda t.\ k(\pair{i, \vec y}, t)) = \egraph(\lambda \vec x.\ \theta^{(m+n)}_{i}(
\vec y, \vec x)).
\]
Now, let $v \in \RRR^{(1)}$ be as in Condition~(QGN~II). Since $\lambda \vec x.\ \theta^{(m+n)}_{i}(\vec y, \vec x) \in \SSS^{(n)}_{A}$, it follows that $\theta^{(m+n)}_{i}(\vec y, \vec x) = \theta^{(n)}_{v(\pair{i, \vec y})}(\vec x)$. Thus, $s \Def \lambda i, \vec y.\ v(\pair{i, \vec y})$ is as wanted.
\end{proof}

Just as the \emph{smn}-theorem is the effective version of the reducibility requirement for G\"odel numberings, there is also an effective version of (QGN~II) for quasi-G\"odel numberings:
\begin{description}
\item[\rm(QGN~E)]  There is a function $d \in \RRR^{(m+1)}$, for all $m > 0$, such that for all $i \in \omega$, $\vec j \in \omega^{m}$ and $r \in \SSS^{(n)}_{A}$, if $\lambda t.\ \theta^{(m+1)}_{i}(\vec j, t)$ enumerates the extended graph of function $r$, then $r = \theta^{(n)}_{d(i, \vec j)}$.
\end{description}

\begin{theorem}\label{thm-qgnequiv}
Let $\theta^{(n)}$ satisfy Condition~(QGN~I), for every $n > 0$. Then the following equivalences hold:
\[
(\ref{thm-qgnequiv-1}) \Leftrightarrow (\ref{thm-qgnequiv-3}) \quad \text{and} \quad (\ref{thm-qgnequiv-2}) \Leftrightarrow (\ref{thm-qgnequiv-4}),
\]
where
\begin{enumerate}

\item \label{thm-qgnequiv-1}
$\theta^{(n)}$ meets Condition~(QGN~II). 

\item \label{thm-qgnequiv-2}
$\theta^{(n)}$ meets Condition~(QGN~II), for all $n >0$.

\item \label{thm-qgnequiv-3}
$\theta^{(n)}$ meets Condition~(QGN~E). 


\item \label{thm-qgnequiv-4}
Requirements ~(\ref{thm-qgnequiv-4-a}) and (\ref{thm-qgnequiv-4-b}) hold, for all $n > 0$:
\begin{enumerate}
\item \label{thm-qgnequiv-4-a}
There is some function $s \in \RRR^{(m+1)}$, for all $m > 0$ , so that for all $\vec y \in \omega^{m}$, $\vec x \in \omega^{n}$ and $i \in \omega$, $\theta^{(n)}_{s(i, \vec y)}(\vec x) = \theta ^{(m+n)}_{i}(\vec y, \vec x)$.

\item \label{thm-qgnequiv-4-b}
There exists a function $g \in \RRR^{(1)}$ such that for all $i \in \omega$ and $r \in \SSS^{(n)}_{A}$, if $\theta^{(1)}_{i}$ enumerates the extended graph of function $r$, then $r = \theta^{(n)}_{g(i)}$.

\end{enumerate}
\end{enumerate}
\end{theorem}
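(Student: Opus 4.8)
The plan is to treat the two equivalences separately. In both of them the implication \emph{into} (QGN~II) --- that is, $(\ref{thm-qgnequiv-3}) \Rightarrow (\ref{thm-qgnequiv-1})$ and $(\ref{thm-qgnequiv-4}) \Rightarrow (\ref{thm-qgnequiv-2})$ --- is cheap, resting only on the observation that every $k \in \RRR^{(2)}$ already lies in $\SSS^{(2)}_{A}$, and since $\theta^{(2)}$ enumerates $\SSS^{(2)}_{A}$ it therefore has some index $e$ with $\theta^{(2)}_{e} = k$ (we need $e$ only to exist). Concretely, for $(\ref{thm-qgnequiv-1}) \Rightarrow (\ref{thm-qgnequiv-3})$ I fix $m > 0$: since $\theta^{(m+1)}$ satisfies (QGN~I), Lemma~\ref{lem-preunif} (in the instance with parameter block of length $m$ and $n=1$) yields $k \in \RRR^{(2)}$ with $\range(\lambda t.\ k(\pair{i, \vec j}, t)) = \egraph(\lambda t.\ \theta^{(m+1)}_{i}(\vec j, t))$; feeding this $k$ to (QGN~II) for $\theta^{(n)}$ gives $v \in \RRR^{(1)}$, and then $d(i, \vec j) \Def v(\pair{i, \vec j})$ witnesses (QGN~E) for $m$. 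For $(\ref{thm-qgnequiv-3}) \Rightarrow (\ref{thm-qgnequiv-1})$, given $k \in \RRR^{(2)} \subseteq \SSS^{(2)}_{A}$ I pick $e$ with $\theta^{(2)}_{e} = k$, so that $\lambda t.\ \theta^{(2)}_{e}(i, t) = \lambda t.\ k(i, t)$, apply (QGN~E) with $m = 1$ to get $d \in \RRR^{(2)}$, and set $v(i) \Def d(e, i)$ for (QGN~II).

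For $(\ref{thm-qgnequiv-2}) \Rightarrow (\ref{thm-qgnequiv-4})$, clause~(\ref{thm-qgnequiv-4-a}) is precisely the conclusion of the \emph{smn}-Theorem~\ref{thm-smn}, applicable because under~(\ref{thm-qgnequiv-2}) every $\theta^{(n)}$ satisfies (QGN~II) while (QGN~I) holds throughout by assumption. For clause~(\ref{thm-qgnequiv-4-b}) I would use that (QGN~I) for $\theta^{(1)}$ makes $\lambda i, t.\ \theta^{(1)}_{i}(t)$ computable, hence so is the partial function
\[
G(i, \vec x) \Def \pi^{(2)}_{2}\bigl(\mu \pair{t, w}\,[\theta^{(1)}_{i}(t) = \pair{\pair{\vec x}, w} \wedge w > 0]\bigr) - 1 ,
\]
and that whenever $\theta^{(1)}_{i}$ enumerates $\egraph(r)$ for some $r \in \SSS^{(n)}_{A}$ one has $G(i, \cdot) = r$ (for then $\theta^{(1)}_{i}$ is total, and the search in the definition of $G$ recovers $r$ from $\range(\theta^{(1)}_{i}) = \egraph(r)$). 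The extended graph of $\lambda i, \vec x.\ G(i, \vec x)$ is c.e.\ and infinite, hence has an enumeration in $\RRR^{(1)}$; running on it the construction from the proof of Lemma~\ref{lem-preunif} --- which invokes the scheme of Lemma~\ref{lem-meth1} and uses only computable enumerability of the extended graph in hand --- produces $k \in \RRR^{(2)}$ with $\range(\lambda t.\ k(i, t)) = \egraph(\lambda \vec x.\ G(i, \vec x))$. Then, whenever $\theta^{(1)}_{i}$ enumerates $\egraph(r)$ with $r \in \SSS^{(n)}_{A}$, the function $\lambda t.\ k(i, t)$ enumerates $\egraph(r)$, so (QGN~II) for $\theta^{(n)}$ supplies $g \in \RRR^{(1)}$ with $r = \theta^{(n)}_{g(i)}$, which is clause~(\ref{thm-qgnequiv-4-b}).

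For $(\ref{thm-qgnequiv-4}) \Rightarrow (\ref{thm-qgnequiv-2})$ I fix $n$ and $k \in \RRR^{(2)}$, pick $e$ with $\theta^{(2)}_{e} = k$, and apply the \emph{smn}-function $s$ of clause~(\ref{thm-qgnequiv-4-a}) with $m = 1$ to obtain $\theta^{(1)}_{s(e, i)} = \lambda t.\ k(i, t)$, a total function; if $\lambda t.\ k(i, t)$ enumerates $\egraph(r)$ with $r \in \SSS^{(n)}_{A}$, clause~(\ref{thm-qgnequiv-4-b}) yields $r = \theta^{(n)}_{g(s(e, i))}$, so $v(i) \Def g(s(e, i))$ witnesses (QGN~II) for $\theta^{(n)}$. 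Since $n$ was arbitrary, $(\ref{thm-qgnequiv-2})$ follows.

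I expect the only genuinely delicate step to be obtaining clause~(\ref{thm-qgnequiv-4-b}) from~(\ref{thm-qgnequiv-2}): the hypothesis there constrains $\theta^{(1)}_{i}$ only for the ``good'' indices $i$, whereas (QGN~II) must be handed a function $k \in \RRR^{(2)}$ that is total for \emph{all} $i$. Passing through the decoding function $G$ and re-enumerating its extended graph --- which is always infinite, unlike $\range(\theta^{(1)}_{i})$, which can be empty or finite for bad $i$ --- is what legitimises this totalisation; every other step reduces to Lemma~\ref{lem-preunif}, Theorem~\ref{thm-smn}, and the inclusion $\RRR^{(2)} \subseteq \SSS^{(2)}_{A}$.
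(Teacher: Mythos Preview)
Your argument for $(\ref{thm-qgnequiv-1}) \Rightarrow (\ref{thm-qgnequiv-3})$ has a genuine gap. The function $k$ you obtain from Lemma~\ref{lem-preunif} satisfies
\[
\range\bigl(\lambda t.\ k(\pair{i, \vec j}, t)\bigr) = \egraph\bigl(\lambda t.\ \theta^{(m+1)}_{i}(\vec j, t)\bigr),
\]
which is the extended graph of the \emph{enumerator} $\lambda t.\ \theta^{(m+1)}_{i}(\vec j, t)$, not the extended graph of the target function $r$. Feeding this $k$ to (QGN~II) for $\theta^{(n)}$ would, at best (when $n=1$), hand you a $\theta^{(1)}$-index of the enumerator itself, not a $\theta^{(n)}$-index of $r$. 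You need an intermediate ``unwrapping'' step: from an enumeration of $\egraph(\lambda t.\ \theta^{(m+1)}_{i}(\vec j, t))$, produce an enumeration of $\egraph(r)$. The paper does this by observing that the values listed by $k$ have the form $\pair{a,0}$, $\pair{a,\pair{\pair{\vec y},0}+1}$, or $\pair{a,\pair{\pair{\vec y},r(\vec y)+1}+1}$, and defining a decoded $\hat{k}$ that strips off the outer layer; only $\hat{k}$ is then fed to (QGN~II).

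You clearly understand this issue, since your treatment of $(\ref{thm-qgnequiv-2}) \Rightarrow (\ref{thm-qgnequiv-4-b})$ performs exactly this decoding via the search function $G$ and the re-enumeration of its extended graph. That argument is correct and is in substance the same as the paper's (the paper simply remarks ``choose $m=0$'' in the construction it has already given for $(\ref{thm-qgnequiv-1}) \Rightarrow (\ref{thm-qgnequiv-3})$). The fix for $(\ref{thm-qgnequiv-1}) \Rightarrow (\ref{thm-qgnequiv-3})$ is to insert the analogous decoding between your invocation of Lemma~\ref{lem-preunif} and your appeal to (QGN~II). The remaining directions --- your $(\ref{thm-qgnequiv-3}) \Rightarrow (\ref{thm-qgnequiv-1})$ via picking an index $e$ for $k$, and your $(\ref{thm-qgnequiv-4}) \Rightarrow (\ref{thm-qgnequiv-2})$ via $s$ then $g$ --- are correct and agree with the paper (which in fact routes the latter through $(\ref{thm-qgnequiv-4}) \Rightarrow (\ref{thm-qgnequiv-3})$, but the content is identical).
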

\begin{proof}
$(\ref{thm-qgnequiv-3}) \Rightarrow (\ref{thm-qgnequiv-1})$ holds trivially. We show next that $(\ref{thm-qgnequiv-1}) \Rightarrow (\ref{thm-qgnequiv-3})$. By Lemma~\ref{lem-preunif} there is some $k \in \RRR^{(2)}$ such that $\lambda t.\ k(\pair{i, \vec j}, t)$ enumerates the extended graph of $\lambda a.\ \theta^{(m+1)}_{i}(\vec j, a)$. Let $i \in \omega$, $\vec j \in \omega^{m}$ and $r \in \SSS^{(n)}_{A}$ such that $\lambda a.\ \theta^{(m+1)}_{i}(\vec j, a)$ enumerates the extended graph of function $r$. According to its definition every element of $\egraph(\lambda a.\ \theta^{(m+1)}_{i}(\vec j, a))$ is of the form $\pair{a, 0}$, $\pair{a, \pair{\pair{\vec y}, 0}+1}$ or $\pair{a, \pair{\pair{\vec y}, r(\vec y)+1}+1}$. Therefore, if we set 
\[
\hat{k}(\pair{i, \vec j}, t) \Def \begin{cases}
						\pair{\pair{0},0} & \text{if $\pi^{(2)}_{2}(k(\pair{i, \vec j}, t)) = 0$}, \\ 
						\pi^{(2)}_{2}(k(\pair{i, \vec j}, t)) \prc 1 & \text{otherwise},
					\end{cases}
\]
then $\hat{k} \in \RRR^{(2)}$. Moreover $\lambda t.\ \hat{k}(\pair{i, \vec j}, t)$ enumerates the extended graph of function $r$. Since $\theta^{(n)}$ satisfies Condition~(QGN~II), there is a function $v \in \RRR^{(1)}$ with $r = \theta^{(n)}_{v(\pair{i, \vec j})}$. Thus, $d \Def \lambda i, \vec j.\ v(\pair{i, \vec j})$ has the property stated in Condition~(QGN~E).

In the same way we obtain that (\ref{thm-qgnequiv-2}) implies (\ref{thm-qgnequiv-4-b}): choose $m = 0$ . Statement~(\ref{thm-qgnequiv-4-a}) is just the \emph{smn}-theorem that has been derived from (\ref{thm-qgnequiv-2}) in Theorem~\ref{thm-smn}. For the remaining implication it suffices to show  $(\ref{thm-qgnequiv-4}) \Rightarrow (\ref{thm-qgnequiv-3})$. Let to this end $i \in \omega$, $\vec j \in \omega^{m}$ and $r \in \SSS^{(n)}_{A}$ so that $\lambda t.\ \theta^{(m+1)}_{i}(\vec j, t)$ enumerates the extended graph of $r$. With Condition~(\ref{thm-qgnequiv-4-a}) we have that $\theta^{(m+1)}_{i}(\vec j, t) = \theta^{(1)}_{s(i, \vec j)}(t)$, from which it follows with (\ref{thm-qgnequiv-4-b}) that $r = \theta^{(n)}_{g(s(i, \vec j))}$. Therefore,  $d \Def g \circ s$ has the properties required in (\ref{thm-qgnequiv-3}).
\end{proof}

For what follows, let $\theta^{(n)}$ be a quasi-G\"odel numbering, for all $n > 0$. With the \emph{smn}-theorem we have shown for a known result of computability theory that it also holds  in the computability theory for $\SSS_A$. Next we want to examine whether substitution is an effective operation. Here we first have to state that $\widehat{\SSS}_A$ is not closed under substitution. Because whenever for $p, r \in \widehat{\SSS}^{(1)}_A$ the set $\set{x}{p(x) \in \dom(r)}$ refrains from being a segment of $\omega$ then $r \circ p \notin \widehat{\SSS}^{(1)}_A$ We therefore introduce a modified substitution. Let $\fun{\widehat{M}^{(n)}_{A}}{\PPP\FFF^{n}}{\PPP\FFF^{n}}$ be defined by
\[
\widehat{M}^{(n)}_{A}(q)(\vec x) \Def \begin{cases}
								q(\vec x) & \text{if there exists $a \in A$ so that $\vec x < a^{(n)}$ and } \\
								& \text{for all $\vec y < a^{(n)}$, $\vec y \in \dom(q)$}, \\
								\text{undefined} & \text{othewise}.
							\end{cases}
\]
Then $\widehat{M}^{(n)}_{A}$ is idempotent, $\widehat{M}^{(n)}_{A}[\PPP\FFF^{(n)}] = \widehat{\SSS}^{(n)}_{A}$ and $\widehat{M}^{(n)}_{A}(\varphi^{(n)}_{i})  = \leftidx{^{A}}\varphi^{(n)}_{i}$. If $$\fun{\subst^{(m,n)}}{\PPP\FFF^{(m)} \times (\PPP\FFF^{(n)})^{m}}{\PPP\FFF^{(n)}}$$ is the usual substitution operation, the modified operation is defined by
\[
\msubst^{(m,n)}_{A} \Def \widehat{M}^{(n)}_{A} \circ \subst^{(m,n)}.
\]
For $r \in \widehat{\SSS}^{(m)}_{A}$ and $p_{1}, \ldots, p_{m} \in \widehat{\SSS}^{(n)}_{A}$, $\msubst^{(m,n)}_{A}(r; p_{1}, \ldots, p_{m})$ agrees with  $$\subst^{(m,n)}(r; p_1 , \ldots, p_m)$$ on the maximal initial segment of $\omega^n$ with an edge length in $A$ that is contained in
 $$\dom(\subst^{(m,n)}_{A}(r; p_{1}, \ldots, p_{m}))$$,  and is undefined, otherwise. Therefore, for all functions $r \in \widehat{\SSS}^{(m)}_A$ and $p_1, \ldots, p_m \in \widehat{\SSS}^{(n)}_A$ with $\bigtimes\nolimits_{\nu = 1}^m p_{\nu}(\bigcap\nolimits_{\sigma=1}^m \dom(p_\sigma)) \subseteq \dom(r)$,
\[ 
\msubst^{(m,n)}_{A}(r; p_{1}, \ldots, p_{m}) = \subst^{(m,n)}(r; p_{1}, \ldots, p_{m}).
\]
In this case, $\dom( \subst^{(m,n)}(r; p_{1}, \ldots, p_{m})) = \bigcap\nolimits_{\sigma=1}^m \dom(p_\sigma))$. Since the domains of the functions $p_{\sigma}$ are comparable with respect to set inclusion, the intersection is again an initial segment of $\omega^{n}$ with an edge length in $A$. In particular, the modified substitution agrees with the usual substitution for all total functions. This justifies the introduction of $\msubst^{(m,n)}_A$, as we are essentially concerned with the total functions. Incidentally, it also happens with normal substitution that information gets lost during the substitution process. If for $p,q \in \PPP\FFF^{(1)}$, $q(x) \notin \dom(p)$, then $(p \circ q)(x)$ is undefined. That is, the information coming with $q(x)$ is lost. 
In the case of the modified substitution, in addition all information is lost that comes from points which are not contained in the maximal initial segment of $\omega$ with length in $A$  that is included in $\dom(p \circ q)$. This is in agreement with the algorithms defined in the last section using the Turing program $M^{(n)}_A$.

\begin{theorem}\label{thm-sub}
Let $A \subseteq \omega$ be an infinite c.e.\ set and $\theta^{(n)}$ be a quasi-G\"odel numbering of $\SSS^{(n)}_{A}$, for $n > 0$. Then 
there is a function $\sub \in \RRR^{(m+1)}$ so that
\[
\theta^{(n)}_{\sub(j, j_{1}, \ldots, j_{m})} = \msubst^{(m,n)}_{A}(\theta^{(m)}_{j}; \theta^{(n)}_{j_{1}}, \ldots, \theta^{(n)}_{j_{m}}).
\]
\end{theorem}
\begin{proof}
Let $h, \hat{h} \in \RRR^{(1)}$ be enumerations as in Condition~(QGN~I) for $\theta^{(n)}$ and $\theta^{(m)}$, respectively. Moreover, let
\begin{multline*}
\widehat{Q}(\pair{j, j_{1}, \ldots, j_{m}}, \pair{b, b_{1}, \ldots, b_{m}}, \pair{\vec x}) \Leftrightarrow \\
\qquad \bigwedge_{\nu=1}^{m} [\pi^{(2)}_{1}(h(b_{\nu})) = \pair{j_{\nu}, \vec x} \wedge   \pi^{(2)}_{2}(h(b_{\nu})) > 0] \wedge \mbox{} \\
 \pi^{(2)}_{2}(\hat{h}(b)) = \pair{j, \pi^{(2)}_{2}(h(b_{1}))\prc 1, \ldots, \pi^{(2)}_{2}(h(b_{m}))\prc 1} \wedge \pi^{(2)}_{2}(\hat{h}(b)) > 0,
 \end{multline*}
 \begin{gather*}
Q(i,t,z) \Leftrightarrow (\exists a \in A_{t})\, \bigwedge_{\nu=1}^{n} \pi^{(n)}_{\nu}(\narg(z)) < a \wedge (\forall \vec x < a^{(n)})\,(\exists c \le t)\, \widehat{Q}(i,c,\pair{\vec x}), \\
f(i,t,z) \Def \pi^{(2)}_{1}(\hat{h}(\pi^{(m+1)}_{1}(\mu c \le t.\ \widehat{Q}(i, c, \pi^{(2)}_{1}(z))))). 
\end{gather*}
By now applying Lemma~\ref{lem-meth1} we obtain a function $k \in R^{(2)}$ such that $\lambda t.\ k(\pair{j, j_{1}, \ldots, j_{m}}, t)$ enumerates the extended graph of $\msubst^{(m,n)}_{A}(\theta^{(m)}_{j}; \theta^{(n)}_{j_{1}}, \ldots, \theta^{(n)}_{j_{m}})$. Since $\theta^{(n)}$ meets Condition~(QGN~II), there is some $v \in \RRR^{(1)}$ with 
\[
\msubst^{(m,n)}_{A}(\theta^{(m)}_{j}; \theta^{(n)}_{j_{1}}, \ldots, \theta^{(n)}_{j_{m}}) = \theta^{(n)}_{v(\pair{j, j_{1}, \ldots, j_{m}})}.
\]
By setting $\sub \Def \lambda j, \vec j.\ v(\pair{j, \vec j})$ we are therefore done.
\end{proof}

A central result of computability theory is the recursion respectively fixed point theorem. Our next goal is derive this theorem for the function classes and numberings considered in this paper. To this end we have to consider the family $(\theta^{(n)}_{\theta^{(m+1}_{i}(i, \vec y)}(\vec y))_{(i, \vec y) \in \omega^{m+1}}$. Since numbering $\theta^{(n)}$ is only defined on defined indices in $\omega$, this family is not well defined. With help of the enumeration theorem it can however be extended to situations in which index functions may be not defined. We then obtain
\[
\theta^{(n)}_{\theta^{(m+1)}_{i}(i, \vec y)}(\vec x) = \begin{cases}
										\theta^{(n)}_{\theta^{(m+1)}_{i}(i, \vec y)}(\vec x) & \text{if $\theta^{(m+1)}_{i}(i, \vec y)\conv$}, \\
										\text{undefined} & \text{otherwise}.
									   \end{cases}
\]
Thus, $\theta^{(n)}_{\theta^{(m+1}_{i}(i, \vec y)} \in \SSS^{(n)}_{A}$, for every choice of $(i, \vec y) \in \omega^{m+1}$.

\begin{lemma}\label{lem-helpfp}
Let $A \subseteq \omega$ be an infinite c.e.\ set and $\theta^{(n)}$ be a quasi-G\"odel numbering of $\SSS^{(n)}_{A}$, for $n > 0$. Then
there exists $g \in \RRR^{(m+1)}$ so that
\[
\theta^{(n)}_{g(i, \vec y)} = \theta^{(n)}_{\theta^{(m+1)}_{i}(i, \vec y)}.
\]
\end{lemma}
\begin{proof}
Again, let $h, \hat{h} \in \RRR^{(1)}$ be enumerations as in Condition~(QGN~I) for $\theta^{(n)}$ and $\theta^{(m+1)}$, respectively. Moreover, define
\begin{multline*}
\widehat{Q}(\pair{i, \vec y}, \pair{a, b}, z) \Leftrightarrow \pi^{(2)}_{1}(\hat{h}(a)) = \pair{i,i,\vec y} \wedge \pi^{(2)}_{2}(\hat{h}(a)) >  0 \wedge \mbox{} \\
\pi^{2}_{2}(h(b)) > 0 \wedge \pi^{2}_{1}(h(b)) = \pair{\pi^{2}_{2}(\hat{h}(a)) \prc 1} \ast \narg(z),
\end{multline*}
\begin{gather*}
Q(\pair{i, \vec y}, t, z) \Leftrightarrow (\exists c \le t)\, \widehat{Q}(\pair{i, \vec y}, c, z), \\
f(\pair{i, \vec y}, t, z) \Def \pi^{(2)}_{2}(h(\pi^{2}_{2}(\mu c \le t.\ \widehat{Q}(\pair{i, \vec y}, c, z)))).
\end{gather*}
Then by applying Lemma~\ref{lem-meth1} we obtain a function $k \in \RRR^{(2)}$ so that $\lambda t.\ k(\pair{i, \vec y}, t)$ enumerates the extended graph of $\theta^{(n)}_{\theta^{(m+1)}_{i}(i, \vec y)}$. Now, using Condition~(QGN~II) for $\theta^{(n)}$ we have that there is some $v \in \RRR^{(1)}$ with
\[
\theta^{(n)}_{v(i, \vec y)} = \theta^{(n)}_{\theta^{(m+1)}_{i}(i, \vec y)}.
\]
Set $g(i, \vec y) \Def \lambda i, \vec y.\ v(\pair{i, \vec y})$. 
\end{proof}

This more technical result allows to derive the recursion theorem.

\begin{theorem}[Recursion Theorem]\label{thm-fp}
Let $A \subseteq \omega$ be an infinite c.e.\ set and $\theta^{(n)}$ be a quasi-G\"odel numbering of $\SSS^{(n)}_{A}$, for $n > 0$. Moreover, let $f \in \RRR^{(m+1}$. Then there is a function $e_{f} \in \RRR_{(m)}$ so that
\[
\theta^{(n)}_{f(e_{f}(\vec y), \vec y)} = \theta^{(n)}_{e_{f}(\vec y)}.
\]
\end{theorem}
\begin{proof}
Let $g \in \RRR^{(m+1)}$ be as in Lemma~\ref{lem-helpfp}. Then there is some index $j \in \omega$ with $\theta^{(m+1)}_{j}(i, \vec y) = f(g(i, \vec y), \vec y)$. Set $e_{f}(\vec y) \Def g(j, \vec y)$. Then $e_{f} \in \RRR^{(m)}$ and 
\[
\theta^{(n)}_{e_{f}(\vec y)} = \theta^{(n)}_{g(j, \vec y)} = \theta^{(n)}_{\theta^{m+1}_{j}(j, \vec y)} = \theta^{(n)}_{f(g(j), \vec y), \vec y)} = \theta^{(n)}_{f(e_{f}(\vec y), \vec y)}. \qedhere
\]
\end{proof}

As we see, the proof of this theorem follows essentially the same idea as in the case of classical computability theory (cf.\ e.g.\ \cite{ro67}). This will be the case in several of the subsequent results. In each case, however, there are certain auxiliary functions such as function $g$ above, the existence of which has to be demonstrated in another way as in the known theory. 

The above result is the fixed point version of the recursion theorem. With help of the \emph{smn}-theorem we now obtain Kleene's version~\cite{kl52}.

\begin{corollary}\label{cor-reckl}
Let $A \subseteq \omega$ be an infinite c.e.\ set and $\theta^{(n)}$ be a quasi-G\"odel numbering of $\SSS^{(n)}_{A}$, for $n > 0$. Moreover,
let $r \in \RRR^{(n+1)}$. Then there is some index $c_{r} \in \omega$ with
\[
r(c_{r}, \vec x) = \theta^{(n)}_{c_{r}}(\vec x).
\]
\end{corollary}

Similarly, we obtain an effective version of this corollary which says the index $c_{r}$ can be computed from an index of $r$.

\begin{corollary}\label{cor-effkl}
Let $A \subseteq \omega$ be an infinite c.e.\ set and $\theta^{(n)}$ be a quasi-G\"odel numbering of $\SSS^{(n)}_{A}$, for $n > 0$. Then
there is a function $q \in \RRR^{(1)}$ such that
\[
\theta^{(n+1)}_{i}(q(i), \vec x) = \theta^{(n)}_{q(i)}(\vec x).
\]
\end{corollary}

Next, we will derive an effective version of the fixed point formulation above.  It says that the fixed point $e_{f}$ in Theorem~\ref{thm-fp} can be computed from an index of $f$.

\begin{theorem}\label{thm-effp}
Let $A \subseteq \omega$ be an infinite c.e.\ set and $\theta^{(n)}$ be a quasi-G\"odel numbering of $\SSS^{(n)}_{A}$, for $n > 0$. Then
there is a function $e \in \RRR^{(m+1)}$ such that all $i \in \omega$ with $\theta^{(m+1)}_{i}$ being total,
\[
\theta^{(n)}_{\theta^{(m+1)}_{i}(e(i, \vec y), \vec y)} = \theta^{(n)}_{e(i, \vec y)}.
\]
\end{theorem}
\begin{proof}
Let $g \in \RRR^{(m+1)}$ be as in Lemma~\ref{lem-helpfp}, and $i \in \omega$ such that $\theta^{(m+1)}_{i}$ is total. Since for total functions the modified substitution defined above agrees with usual substitution, it follows from Theorem~\ref{thm-sub} that there is a function $q \in R^{(1)}$ so that 
\[
\theta^{(m+1)}_{q(i)}(j, \vec y) = \theta^{(m+1)}_{i}(g(j, \vec y), \vec y).
\]
Set $e(j, \vec y) \Def g(q(j), \vec y)$. Then $e \in \RRR^{(m+1)}$ and 
\begin{align*}
\theta^{(n)}_{e(i, \vec y)} 
&= \theta^{(n)}_{\theta^{(m+1)}_{q(i)}(q(i), \vec y)} \\
&= \theta^{(n)}_{\theta^{(m+1)}_{i}(g(q(i), \vec y), \vec y)} \\
&= \theta^{(n)}_{\theta^{(m+1}_{i}(e(i, \vec y). \vec y)}. \tag*{\qedhere}
\end{align*}
\end{proof}

As a further consequence of the recursion theorem we obtain that the index functions used in this section can be chosen as one-to-one. The padding lemma we derive next will be needed here.

\begin{theorem}[Padding Lemma]\label{thm-pad}
Let $A \subseteq \omega$ be an infinite c.e.\ set and $\theta^{(n)}$ be a quasi-G\"odel numbering of $\SSS^{(n)}_{A}$, for $n > 0$. Then
$\theta^{(n)}$ has a one-to-one padding function, that is, a one-to-one function $p \in \RRR^{(2)}$ so that
\[
\theta^{(n)}_{p(i,j)} = \theta^{(n)}_{i}.
\]
\end{theorem}
\begin{proof}
Let $s \in \RRR^{(2)}$ be an \emph{smn}-function for the case $m = 2$ and $r \in \RRR^{(4)}$ defined by
\[
r(c,j,x,b) \Def \begin{cases}
			0 & \text{if there exists $a < j$ with $s(c,a) = s(c,j)$}, \\
			1 & \text{if for all $a < j$, $s(c,a) \neq s(c,j)$, $j < x$, and $s(c,j) = s(c,x)$}. \\
			b & \text{otherwise}.
		\end{cases}
\]
Since the modified substitution of functions in $\widehat{\SSS}_{A}$ into total functions agrees with the usual one, it follows with Theorem~\ref{thm-sub} that there is a function $g \in \RRR^{(1)}$ so that
\[
\theta^{(n+2)}_{g(i)}(c, j, \vec x) =r(c, j, x_{1}, \theta^{(n)}_{i}(\vec x)).
\]
It then follows with Corollay~\ref{cor-effkl} that there exists $k \in \RRR^{(1)}$ with
\[
\theta^{(n+2)}_{g(i)}(k(i), j, \vec x) = \theta^{(n+1)}_{k(i)}(j, \vec x).
\]
Now, assume that $\lambda j.\ s(k(i), j)$ is not one-to-one, let $a$ be minimal with the property that
\[
(\exists j > a)\, s(k(i), a) = s(k(i), j),
\]
and choose a $j$ with the property just stated, say $\hat{\jmath}$. Then,
\begin{align*}
0
&= \theta^{(n+2)}_{g(i)}(k(i), \hat{\jmath}, \hat{\jmath}^{(n)}) \\
&= \theta^{(n+1)}_{k(i)}(\hat{\jmath}, \hat{\jmath}^{(n)}) \\
&= \theta^{(n)}_{s(k(i), \hat{\jmath})}(\hat{\jmath}^{(n)}) \\
&=  \theta^{(n)}_{s(k(i), a)}(\hat{\jmath}^{(n)}) \\
&= \theta^{(n+1)}_{k(i)}(a, \hat{\jmath}^{(n)}) \\
&= \theta^{(n+2)}_{g(i)}(k(i), a, \hat{\jmath}^{(n)}) \\
&=1.
\end{align*}
Thus, $\lambda j.\ s(k(i), j)$ is one-to-one. Define
\[
p(i,j) \Def \begin{cases} 
		s(k(0),0) & \text{if $(i,j) = (0,0)$}, \\
		s(k(i), \mu z.\ (\forall (a,b) < (i,j))\, s(k(i), z) \neq p(a,b)) & \text{otherwise}.
	  \end{cases}
\]
Note that because $\lambda j.\ s(k(i), j)$ is one-to-one, if $(i, j) \neq (0,0)$, then there is always some $z$ with $s(k(i), z) \neq p(a,b)$, for all $a, b \in \omega$ with $(a, b) < (i, j)$. Therefore $p \in R^{(2)}$. Moreover, $p$ is one-to-one and $\theta^{(n)}_{p(i,j)} = \theta^{(n)}_{i}$.
\end{proof}

\begin{corollary}\label{cor-indinj}
Let $A \subseteq \omega$ be an infinite c.e.\ set and $\theta^{(n)}$ be a quasi-G\"odel numbering of $\SSS^{(n)}_{A}$, for $n > 0$. 
Moreover, let  $f \in \RRR^{(m)}$. Then there is a one-to-one function $\hat{f} \in \RRR^{(m)}$ so that $$\theta^{(m)}_{\hat{f}(\vec y)} = \theta^{(m)}_{f(\vec y)}.$$
\end{corollary}

It follows that that the function $v$ in Requirement~(QGN~II), the function $d$ in Condition~(QGN~E), the \emph{smn}-function, the function $g$ in Statement~(\ref{thm-qgnequiv-4}) of Theorem~\ref{thm-qgnequiv}, the function $\sub$ in Theorem~\ref{thm-sub}, the function $g$ in Lemma~\ref{lem-helpfp}, and hence the function $e_{f}$ in the recursion theorem, the function $q$ in Corollary~\ref{cor-effkl} and the function $e$ in the effective version of the recursion theorem can all be chosen as one-to-one. A consequence of the latter fact is that every recursive definition has infinitely many fixed points.

\begin{theorem}\label{thm-infp}
Let $A \subseteq \omega$ be an infinite c.e.\ set and $\theta^{(n)}$ be a quasi-G\"odel numbering of $\SSS^{(n)}_{A}$, for $n > 0$. Then
there is a one-to-one function $\fix \in \RRR^{(2)}$ so that for all $i \in \omega$ for which $\theta^{(n)}_{i}$ is total, and all $j \in \omega$,
\[
\theta^{(n)}_{\theta^{(1)}_{i}(\fix(i, j))} = \theta^{(n)}_{\fix(i, j)}.
\]
\end{theorem}
\begin{proof}
Let $i \in \omega$ be such that $\theta^{(n)}_{i}$ is total. By Theorem~\ref{thm-sub} there is a one-to-one function $k \in \RRR^{(1)}$ with $\theta^{(2)}_{k(i)}(c, j) = \theta^{(1)}_{i}(c)$. Moreover, as just seen, the function $e \in \RRR^{(2)}$ in Theorem~\ref{thm-effp} can be chosen as one-to-one. Thus, we have that
\[
\theta^{(n)}_{\theta^{(1)}_{i}(e(k(i), j))} = \theta^{(n)}_{\theta^{(2)}_{k(i)}(e(k(i), j), j)} = \theta^{(n)}_{e(k(i), j)}.
\]
It therefore suffices to define $\fix(i,j) \Def e(k(i), j)$.
\end{proof}

With help of the recursion theorem we can now show that it is not decidable whether a function is defined on an initial segment only, or is total, though $\SSS_{A}$ has a simple structure. This and several similar results will be consequences of Rice's theorem~\cite{ri53}. Let to this end, for $X \subseteq \SSS^{(n)}_{A}$,
\[
I_{\theta^{(n)}}(X) \Def \set{i \in \omega}{\theta^{(n)}_{i} \in X}.
\]

\begin{theorem}[Rice]\label{thm-rice}
Let $A \subseteq \omega$ be an infinite c.e.\ set and $\theta^{(n)}$ be a quasi-G\"odel numbering of $\SSS^{(n)}_{A}$, for $n > 0$. Moreover, 
let $C \subseteq \SSS^{(n)}_{A}$. Then $I_{\theta^{(n)}}(C)$ is computable if, and only if, $C = \emptyset$ or $C = \SSS^{(n)}_{A}$.
\end{theorem}
\begin{proof}
If $C = \emptyset$ or $C = \SSS^{(n)}_{A}$, respectively, then $I_{\theta^{(n)}}(C) = \emptyset$ or $I_{\theta^{(n)}}(C) = \omega$ and hence computable. For the converse implication assume that $\emptyset \neq C \neq \SSS^{(n)}_{A}$, but $I_{\theta^{(n)}}(C)$ is computable. Then $\emptyset \neq I_{\theta^{(n)}}(C) \neq \omega$. Let $i \in I_{\theta^{(n)}}(C)$ and $j \in \omega \setminus I_{\theta^{(n)}}(C)$. Set 
\[
f(x) = \begin{cases}
		j & \text{if $x \in I_{\theta^{(n)}}(C)$}. \\
		i & \text{otherwise},
	 \end{cases}
\]
then $f \in \RRR^{(1)}$. By the recursion theorem there is hence some index $a \in \omega$ so that $\theta^{(n)}_{a} = \theta^{(n)}_{f(a)}$. Then $\theta^{(n)}_{a} \in C$, exactly if $\theta^{(n)}_{a} \notin C$, a contradiction.
\end{proof}

As a consequence of this theorem we now obtain that the sets $\set{i \in\omega}{\theta^{(n)}_{i} \in \anf^{(n)}_{A}}$ and $\set{i \in \omega}{\theta^{(n)}_{i} \in \RRR^{(n)}}$ cannot be computable. Initial segment functions can be extended to  total computable functions. As we will see now, this cannot be done effectively.

\begin{theorem}\label{thm-anfext}
Let $A \subseteq \omega$ be an infinite c.e.\ set and $\theta^{(n)}$ be a quasi-G\"odel numbering of $\SSS^{(n)}_{A}$, for $n > 0$. Then 
there is no function $q \in \RRR^{(1)}$ so that, if $\theta^{(n)}_{i} \in \anf^{(n)}_{A}$ then $\theta^{(n)}_{q(i)} \in \RRR^{(n)}$ and 
$\graph(\theta^{(n)}_{i}) \subset \graph(\theta^{(n)}_{q(i)})$.
\end{theorem}
\begin{proof}
Assume that there is a function $q \in \RRR^{1}$ as stated, and let $j$ be a $\theta^{(1)}$-index of $q$. We show that there is a function $v \in \RRR^{(1)}$ so that $\theta^{(n)}_{v(i)} \in \anf^{(n)}_{A}$ and for all $\vec x \le \kappa(i)$, $\theta^{(n)}_{v(i)}(\vec x) = \theta^{(n)}_{i}(\vec x)$. This will help us to construct a function $g \in \RRR^{(1)}$ such that
\begin{equation}\label{eq-gcase}
\theta^{(n)}_{g(i)}(\vec x) = 
\begin{cases}
	\theta^{(n)}_{i}(\vec x) +1 & \text{if, in a simultaneous search, $\pair{\pair{i, \vec x}, \theta^{(n)}_{i}(\vec x) +1}$ is found} \\
	& \text{in $\egraph(\lambda a, \vec y.\ \theta^{(n)}_{a}(\vec y))$ not later than  $\pair{\pair{j,i},q(i)+1}$ is} \\
	&\text{found in $\egraph(\lambda a, y.\ \theta^{(1)}_{a}(y))$,} \\
	\theta^{(n)}_{q(v(i))}(\vec x) +1 & \text{if  $\pair{\pair{j,i},q(i)+1}$  is found earlier than $\pair{\pair{i, \vec x}, \theta^{(n)}_{i}(\vec x) +1}$}.
\end{cases}
\end{equation}
Since $q$ is total, always one of the cases holds. If the first case holds, $\theta^{(n)}_{i}(\vec x)$ is defined. As $\theta^{(n)}_{v(i)}$ is an initial segment function, we have that also $\theta^{(n)}_{q(v(i))}$ is total. Therefore, in the second case, $\theta^{(n)}_{q(v(i))}(\vec x)$ is defined as well. Thus, $\theta^{(n)}_{g(i)} \in\RRR^{(n)}$. By applying the recursion theorem we hence obtain an index $\hat{\imath}$ with $\theta^{(n)}_{\hat{\imath}} = \theta^{(n)}_{g(\hat{\imath})}$. Then $\theta^{(n)}_{\hat{\imath}} \in \RRR^{(n)}$.

Now, consider $\theta^{(n)}_{\hat{\imath}}(\kappa(\hat{\imath}))$, and suppose that in (\ref{eq-gcase}) the first case holds. Then
\[
\theta^{(n)}_{\hat{\imath}}(\kappa(\hat{\imath})) = \theta^{(n)}_{\hat{\imath}}(\kappa(\hat{\imath})) +1.
\]
Therefore, the second case must hold. By the properties of $v$, we have that for all $\vec x \le \kappa(\hat{\imath})$, $\theta^{(n)}_{v(\hat{\imath})}(\vec x) = \theta^{(n)}_{\hat{\imath}}(\vec x)$. Moreover, $\theta^{(n)}_{v(\hat{\imath})}$ is an initial segment function and hence $\theta^{(n)}_{q(v(\hat{\imath}))}$  is a total extension of $\theta^{(n)}_{v(\hat{\imath})}$. Because $\kappa(\hat{\imath}) \in \dom(\theta^{(n)}_{v(\hat{\imath})})$, we have that $\theta^{(n)}_{q(v(\hat({\imath}))}(\kappa(\hat{\imath})) = \theta^{(n)}_{v(\hat{\imath})}(\kappa(\hat{\imath}))$. Thus,
\[
\theta^{(n)}_{\hat{\imath}}(\kappa(\hat{\imath})) = \theta^{(n)}_{q(v(\hat{\imath}))}(\kappa(\hat{\imath}))+1 = \theta^{(n)}_{v(\hat{\imath})}(\kappa(\hat{\imath})) +1 = \theta^{(n)}_{\hat{\imath}}(\kappa(\hat{\imath})) +1.
\]
This shows that there cannot exist a function $q$ with the stated properties.

Next, we will consider the construction of the functions $v$ and $g$ used in the proof above. We will start with the construction of $v$. Let to this end $h \in \RRR^{(1)}$ be an enumeration of the extended graph of the universal function of $\theta^{(n)}$ and $\er(i)$ be the first $a \in A$ with respect to a fixed enumeration of $A$ so that $\kappa(i) < a^{(n)}$. Moreover, let 
\begin{gather*}
\widehat{Q}(i, b, z) \Leftrightarrow \pi^{(2)}_{2}(h(b)) > 0 \wedge \bigwedge_{\nu=1}^{n} \pi^{(n)}_{\nu}(\narg(z)) < \er(i) \wedge \pi^{(2)}_{1}(h(b)) = \pair{i} \ast \narg(z), \\
Q(i, t, z) \Leftrightarrow (\exists b \le t)\, \widehat{Q}(i, b, z),\\
f(i, t ,z) \Def \pi^{(2)}_{2}(h(\mu b \le t.\ \widehat{Q}(i, b, z)))
\end{gather*}
and $k \in \RRR^{(2)}$ be as in Lemma~\ref{lem-meth1}. Then $\lambda t.\ k(i, t)$ enumerates the extended graph of the function
\[
r(\vec x) = \begin{cases}
			\theta^{(n)}_{i}(\vec x) & \text{if $\vec x < (\er(i))^{(n)}$}, \\
			\text{undefined} & \text{otherwise}.
		\end{cases}
\]
Obviously, $r \in \anf^{(n)}_{A}$. Hence, by (QGN~II), there is a function $v \in \RRR^{(1)}$ with $r = \theta^{(n)}_{v(i)}$. As is easily seen, $v$ has the properties mentioned above.

For the construction of function $g$ let $h' \in \RRR^{(1)}$ enumerate the extended graph of the universal function of $\theta^{(1)}$. Moreover, define
\begin{gather*}
\widehat{Q}_{1}(i, b, z) \Leftrightarrow \pi^{(2)}_{2}(h(b)) > 0 \wedge \pi^{(2)}_{1}(h(b)) = \pair{i} \ast \narg(z), \\
\widehat{Q}_{2}(i, b) \Leftrightarrow \pi^{(2)}_{1}(h'(b)) = \pair{j, i} \wedge \pi^{(2)}_{2}(h'(b)) > 0, \\
\widehat{Q}_{3}(i. b, z) \Leftrightarrow \pi^{(2)}_{2}(h(b)) > 0 \wedge \pi^{(2)}_{1}(h(b)) = \pair{q(v(i))} \ast \narg(z), 
\end{gather*}
\begin{multline*}
\widehat{Q}_{4}(i, b, z) \Leftrightarrow \mbox{} \\
\hspace{3em} [\widehat{Q}_{1}(i, b, z) \wedge \neg(\exists c < b)\, \widehat{Q}_{2}(i, c)] \vee [\widehat{Q}_{3}(i, b, z) \wedge \mbox{} \\
  (\exists b' < b)\, [\widehat{Q}_{2}(i, b') \wedge \neg (\exists c \le b')\, \widehat{Q}_{1}(i, c, z)]], 
 \end{multline*}
 \vspace{-6ex}
 \begin{gather*}
Q'(i, t, z) \Leftrightarrow (\exists b \le t)\, \widehat{Q}_{4}(i, b, z), \\
f(i, t, z) \Def (\pi^{(2)}_{2}(h(\mu b \le t.\ \widehat{Q}_{4}(i, b, z)))) +1,
\end{gather*}
and construct function $k \in \RRR^{(2)}$ as in Lemma~\ref{lem-meth1}. Then $\lambda t.\ k(i, t)$ enumerates the extended graph of the function $p$ defined by
\[
p(\vec x) \Def
\begin{cases}
	\theta^{(n)}_{i}(\vec x) +1 & \text{if for some $b$, $\pi^{(2)}_{2}(h(b)) > 0$, $\pi^{(2)}_{1}(h(b)) = \pair{i, \vec x}$ and for no $c < b$,}  \\
						& \text{$h'(c) = \pair{\pair{j, i}, q(i) + 1}$}, \\
	\theta^{(n)}_{q(v(i))}(\vec x) +1 & \text{if for some $b$, $h'(b) = \pair{\pair{j, i}, q(i) +1}$ and for no $c \le b$,} \\
						& \text{$\pi^{(2)}_{2}(h(c)) > 0$ and $\pi^{(2)}_{1}(h(c)) = \pair{i, \vec x}$}.
\end{cases}
\]
As seen above, $p$ is total. Therefore $p \in \SSS^{(n)}_{A}$. Hence, by (QGN~II), there is a function $g \in \RRR^{(1)}$ with $p = \theta^{(n)}_{i}$ which by the definition of $p$ is as required above.
\end{proof}

As consequence of this result it follows that also the edge length of the domain of an initial segment function $\theta^{(n)}_{i}$ cannot be computed from given index $i$.
\begin{corollary}\label{cor-edgel}
Let $A \subseteq \omega$ be an infinite c.e.\ set and $\theta^{(n)}$ be a quasi-G\"odel numbering of $\SSS^{(n)}_{A}$, for $n > 0$. Then
there is no function $p \in \RRR^{(1)}$ such that $p(i)$ is the edge length of $\dom(\theta^{(n)}_{i})$, if $\theta^{(n)}_{i} \in \anf^{(n)}_{A}$.
\end{corollary}
\begin{proof}
Again we assume that there is such a function $p$. Let $j$ be a $\theta^{(1)}$-index of $p$. In what follows we construct a function $q \in \RRR^{(1)}$ so that
\[
\theta^{(n)}_{q(i)}(\vec x) = 
\begin{cases}
	\theta^{(n)}_{i}(\vec x) & \text{if, in a  simultaneous search, $\pair{\pair{i, \vec x}, \theta^{(n)}_{i}(\vec x) +1}$ is not found in} \\
	& \text{$\egraph(\lambda a, \vec y.\ \theta^{(n)}_{a}(\vec y))$ later than $\pair{\pair{j,i},p(i)+1}$ is found in} \\
	& \text{$\egraph(\lambda a, y.\ \theta^{(1)}_{a}(y))$, or $\vec x < (p(i))^{(n)}$,}\\
0 &  \text{if $\vec x \not< (p(i))^{(n)}$ and $\pair{\pair{j,i},p(i)+1}$ is found earlier.} 
\end{cases}
\]
Because $p$ is total, always one of the two cases holds. If in the first case we find $\pair{\pair{i, \vec x}, \theta^{(n)}_{i}(\vec x)}$ not later than the other tuple, it follows that $\theta^{(n)}_{i}(\vec x)$ is defined. The same is true, once we know that $\vec x < (p(i))^{(n)}$. Since then $\vec x \in \dom(\theta^{(n)}_{i})$, by the properties of $p$. It follows that $\theta^{(n)}_{q(i)}(\vec x) \in \RRR^{(n)}$. Moreover, $\theta^{(n)}_{q(i)}$ is an extension of $\theta^{(n)}_{i}$. This contradicts what we have shown in the previous theorem. Hence, there is no such function $p$.

It remains to show how the function $q$ can be constructed. Let to this end, $h, \hat{h}$, respectively, be computable enumerations of the extended graphs of the universal functions of $\theta^{(n)}$ and $\theta^{(1)}$ that exist by Condition~(QGN~I). Moreover, let the relations $\widehat{Q}_{1}$ and $\widehat{Q}_{2}$ be defined as in the proof of the previous theorem. In addition, let
\begin{multline*}
\widehat{Q}_{3}(i,b,z) \Leftrightarrow \mbox{} \\
\hspace{3em} [\widehat{Q}_{1}(i, b, z) \wedge \neg(\exists c < b)\, \widehat{Q}_{2}(i, c)] \vee [(\exists c \le b)\, \widehat{Q}_{2}(i, c) \wedge \mbox{} \\
  \bigwedge_{\nu+1}^{n}1 + \pi^{(n)}_{\nu}(\narg(z)) < \pi^{(2)}_{2}(h'(c))], 
\end{multline*}
\vspace{-5ex}
\begin{gather*}
\widehat{Q}_{4}(i,b,z) \Leftrightarrow \widehat{Q}_{2}(i, b) \wedge \bigvee_{\nu=1}^{n} 1 + \pi^{(n)}_{\nu}(\narg(z)) \ge \pi^{(2)}_{2}(h'(b)), \\
Q(i, t, z) \Leftrightarrow (\exists b \le t)\, [\widehat{Q}_{3}(i, b, z) \vee \widehat{Q}_{4}(i, b, z)], \\
f(i, t, z) \Def
\begin{cases}
	\pi^{(2)}_{2}(h(\mu b \le t.\ \widehat{Q}_{1}(i, b, z))) & \text{if for some $b \le t$, $\widehat{Q}_{3}(i, b, z)$}, \\
	1 & \text{if for some $b \le t$, $\widehat{Q}_{4}(i, b, z)$}, \\
	0 & \text{otherwise}.
\end{cases}
\end{gather*}

Note that, if for some $b \in \omega$, $\widehat{Q}_{4}(i, b, z)$ holds then $\widehat{Q}_{1}(i, c, z)$, and hence also $\widehat{Q}_{3}(i, c, z)$ will not apply, for all $c \in \omega$. The reason is that $\theta^{(n)}_{i}(\vec x)$ is only defined if $\vec x < (p(i))^{(n)}$. Now,  use the relations and the function $f$ just defined and apply Lemma~\ref{lem-meth1}. Then we obtain a function $k \in \RRR^{(2)}$ such that $\lambda t.\ k(i, t)$ enumerates the extended graph of function $r$ defined by
\[
r(\vec x) \Def
\begin{cases}
	\theta^{(n)}_{i}(\vec x) & \text{if for some $b \in \omega$, $\pi^{(2)}_{2}(h(b)) > 0$, $\pi^{(2)}_{1}(h(b)) = \pair{i, \vec x}$}  \\
					& \text{and for no $c < b$, $h'(c) = \pair{\pair{j,i}, p(i) + 1}$, or $\vec x < (p(i))^{(n)}$}, \\
	0 & \text{if for some $b \in \omega$, $h'(b) = \pair{\pair{j,i}, p(i) + 1}$ and $\vec x \not< (p(i))^{(n)}$}.
\end{cases}
\]
As seen above, $r \in \RRR^{(2)}$. Because of Condition~(QGN~II) there is thus some $q \in \RRR^{(1)}$ with $r = \theta^{(n)}_{q(i)}$. Then $q$ is as required.
\end{proof}

Since the effective version of the recursion theorem holds for $\theta^{(n)}$, it follows by a result of Ershov~\cite[Satz 9]{er73} that $\theta^{(n)}$ is precomplete which means that for every partial computable function $r \in \PPP^{(1)}$ there is a total computable function $g \in \RRR^{(1)}$ so that for all $i \in \dom(r)$,
\[
\theta^{(n)}_{r(i)} = \theta^{(n)}_{g(i)}.
\]
Consequently, we would not be able to obtain positive results in Theorem~\ref{thm-anfext} and Corollary~\ref{cor-edgel} by working in classical computability theory. In that case one has more algorithms at hand. However, because of the precompleteness of $\theta^{(n)}$ there also cannot exist any partial computable functions $q$ and $p$ with the properties stated in both results.

\section{Computably enumerable sets}\label{sec-ce}

As usual, we call a set $C \subseteq \omega$ \emph{computably enumerable (c.e.)} if there is a function $f \in \RRR^{(1)}$ with $C = \range(f)$ or $C$ is empty. Then, of course, it follows in the usual way that a set of natural numbers is computable if and only if the set and its complement are both c.e.\footnote{Note that this result holds only classically, not constructively, as we cannot decide whether a computable set is empty of not.} 

\begin{theorem}\label{thm-eq}
The following five statements are equivalent:
\begin{enumerate}

\item \label{pn-eq-1}
$C$ is c.e.

\item \label{pn-eq-2}
$C = \range(r)$, for some $r \in \SSS^{(1)}_{A}$.

\item \label{pn-eq-3}
For some $f \in \RRR^{(1)}$, $C = \set{a}{a+1 \in \range{f}}$.

\item \label{pn-eq-4}
For some computable $B \subseteq \omega$, $C = \set{a}{(\exists i)\, \pair{a, i} \in B}$.

\item \label{pn-eq-5}
For some c.e.\ $B \subseteq \omega$, $C = \set{a}{(\exists i)\, \pair{a, i} \in B}$.

\end{enumerate}
\end{theorem}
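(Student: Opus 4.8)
The plan is to run a cycle of implications $(\ref{pn-eq-1}) \Rightarrow (\ref{pn-eq-2}) \Rightarrow (\ref{pn-eq-3}) \Rightarrow (\ref{pn-eq-4}) \Rightarrow (\ref{pn-eq-5}) \Rightarrow (\ref{pn-eq-1})$, four links of which are mere bookkeeping and transfer the classical pattern verbatim. For $(\ref{pn-eq-1}) \Rightarrow (\ref{pn-eq-2})$: if $C = \emptyset$ the empty function lies in $\anf^{(1)}_{A} \subseteq \SSS^{(1)}_{A}$ and has empty range; otherwise $C = \range(f)$ with $f \in \RRR^{(1)} \subseteq \SSS^{(1)}_{A}$. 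The implication $(\ref{pn-eq-4}) \Rightarrow (\ref{pn-eq-5})$ is immediate since computable sets are c.e. For $(\ref{pn-eq-5}) \Rightarrow (\ref{pn-eq-1})$: if $B = \emptyset$ then $C = \emptyset$; otherwise choose $g \in \RRR^{(1)}$ with $\range(g) = B$ and observe $C = \range(\lambda t.\ \pi^{(2)}_{1}(g(t)))$, the range of a total computable function which is non-empty because $B$ is. And $(\ref{pn-eq-3}) \Rightarrow (\ref{pn-eq-4})$: given $f \in \RRR^{(1)}$ with $C = \set{a}{a+1 \in \range(f)}$, put $B \Def \set{\pair{a,i}}{f(i) = a+1}$; this set is computable since $f$ is total and the decodings $\pi^{(2)}_{i}$ are, and $\set{a}{(\exists i)\, \pair{a,i} \in B} = C$ by construction.

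The only link requiring a genuine idea is $(\ref{pn-eq-2}) \Rightarrow (\ref{pn-eq-3})$, and the obstacle is exactly the feature that distinguishes this setting from the classical one: a function $r \in \SSS^{(1)}_{A}$ need not be total, and may even be the empty function, so one cannot hope to enumerate $\graph(r)$ by a total computable function. The remedy is to pass to the extended graph. Since $\theta^{(1)}$ is a quasi-G\"odel numbering it satisfies (QGN~I), and $r = \theta^{(1)}_{i}$ for some $i$, so Theorem~\ref{thm-qgn1} yields $h \in \RRR^{(1)}$ with $\range(h) = \egraph(r)$ --- such a total enumeration always exists because $\egraph(r)$, containing all pairs $\pair{y, 0}$, is infinite. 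Now set $f \Def \lambda t.\ \pi^{(2)}_{2}(h(t))$; reading off second coordinates of the extended graph gives $\range(f) = \{0\} \cup \set{z+1}{z \in \range(r)}$, and therefore $\set{a}{a+1 \in \range(f)} = \range(r) = C$.

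I expect this middle paragraph to be the whole of the non-routine work; the only tool from the developed theory that is genuinely needed is Theorem~\ref{thm-qgn1} (equivalently Condition~(QGN~I)), and everything else is elementary manipulation of ranges and the pairing function. A secondary point to keep in mind throughout is the careful handling of the empty-set and empty-function cases, which is why each link above is phrased with an explicit case split rather than left implicit.
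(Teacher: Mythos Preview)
Your cycle of implications and the four routine links match the paper's proof verbatim. The one real difference is in $(\ref{pn-eq-2}) \Rightarrow (\ref{pn-eq-3})$: you invoke Theorem~\ref{thm-qgn1} to enumerate $\egraph(r)$ by a total computable $h$ and then project onto the second coordinate, whereas the paper argues directly by a three-way case split on the cardinality of $C$. If $C = \emptyset$ take $f = \lambda x.\,0$; if $C$ is finite non-empty just list its elements (shifted by $1$) and then stay constant; and if $C$ is infinite, the paper exploits the structural fact that any $r \in \SSS^{(1)}_{A}$ with infinite range must already be total, so $f = \lambda x.\,r(x)+1$ works.

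Both arguments are correct. Yours is more uniform---a single construction with no case analysis---but it leans on the quasi-G\"odel machinery (Condition~(QGN~I) via Theorem~\ref{thm-qgn1}). The paper's version is entirely elementary and self-contained, needing only the defining dichotomy of $\SSS^{(1)}_{A}$ (total or finite initial-segment domain). Since Theorem~\ref{thm-eq} is positioned as a foundational characterisation of c.e.\ sets in the new framework, the elementary route arguably fits better; but your approach is a perfectly legitimate alternative and nicely illustrates how the extended-graph device absorbs the empty/partial cases automatically.
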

\begin{proof}
The proof of $(\ref{pn-eq-1}) \Rightarrow (\ref{pn-eq-2})$ is obvious: If $C$ is empty let $r$ be the nowhere defined function. In the other case there is a function $f \in \RRR^{(1)}$ with $C = \range(f)$. Then choose $r = f$. 

Next, we show that  $(\ref{pn-eq-2}) \Rightarrow (\ref{pn-eq-3})$. If $C$ is empty set $f \Def \lambda x.\ 0$. In case $C$ is not empty but finite, say $C = \{ a_{0}, \ldots, a_{m} \}$, define
\[
f(x) \Def \begin{cases}
		a_{x} +1 & \text{if $x < m$}, \\
		a_{m} +1 & \text{otherwise}.
	     \end{cases}
\]
If, finally, $C$ is infinite, then the function $r \in \SSS^{(1)}_{A}$ with $C = \range(r)$ needs be total. So, set $f \Def \lambda x.\ r(x) +1$.

To show $(\ref{pn-eq-3}) \Rightarrow  (\ref{pn-eq-4})$, set $B \Def \set{\pair{a, i}}{f(i) = a + 1}$. Since $f$ is total computable, it follows as ususally that $B$ is computable.

Since computable sets are c.e., we also have that $(\ref{pn-eq-4}) \Rightarrow (\ref{pn-eq-5})$. So, it remains to show that $(\ref{pn-eq-5}) \Rightarrow (\ref{pn-eq-1})$. If $B$ is empty, the same holds for $C$. Assume that $B$ is not empty. Then $B = \range(g)$, for some $g \in \RRR^{(1)}$. Set $f \Def \pi^{(2)}_{1} \circ g$. Then $f \in \RRR^{(1)}$ and $C = \range(f)$. Thus, $C$ is c.e.\footnote{Similarly, the implications $(\ref{pn-eq-1}) \Rightarrow (\ref{pn-eq-2})$, $(\ref{pn-eq-2}) \Rightarrow (\ref{pn-eq-3})$, and $(\ref{pn-eq-5}) \Rightarrow (\ref{pn-eq-1})$ in Theorem \ref{thm-eq} are not valid constructively.}
\end{proof}

The above result contains the well-known characterisations of c.e.\ sets. Only the characterisation via function domains is missing. The structure of the non-total functions is too restricted in our case. We only have that the domain of every function in $\SSS^{(1)}_{A}$ is c.e. As we will see, however, the central results about c.e.\ sets can still be derived.
Statements~(\ref{pn-eq-4}) and (\ref{pn-eq-5}) are also known as projection theorems. The next result provides the connection between the function studied in this work and the c.e.\ sets.

\begin{proposition}\label{pn-compce}
\begin{enumerate}
\item \label{pn-compce-1}
For $f \in \FFF^{(1)}$,
\[
f \in \RRR^{(1)} \Leftrightarrow \text{$\graph(f)$ is computable} \Leftrightarrow \text{$\graph(f)$ is c.e.}
\]
\item\label{pn-compce-2}
For $r \in \widehat{\SSS}^{(1)}_{A}$,
\[
r \in \SSS^{(1)}_{A} \Leftrightarrow \text{$\graph(r)$ is c.e.}
\]
\end{enumerate}
\end{proposition}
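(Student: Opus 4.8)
The plan is to treat the two equivalences separately and to reduce part~(\ref{pn-compce-2}) almost entirely to part~(\ref{pn-compce-1}) together with a direct inspection of the initial segment functions.

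For part~(\ref{pn-compce-1}) I would run the usual cycle of three implications. First, if $f \in \RRR^{(1)}$ then $\graph(f)$ is computable, since deciding whether $\pair{\pair{x}, z} \in \graph(f)$ amounts to computing the total value $f(x)$ and testing $f(x) = z$. Second, every computable set is c.e., which gives the middle implication. Third, suppose $\graph(f)$ is c.e.; since $f$ is total, $\graph(f) \neq \emptyset$, so by the definition of computable enumerability there is $g \in \RRR^{(1)}$ with $\range(g) = \graph(f)$, and then
\[
f(x) = \pi^{(2)}_{2}\bigl(g(\mu t.\ \pi^{(2)}_{1}(g(t)) = \pair{x})\bigr),
\]
where the $\mu$-operator always returns a value because $x \in \dom(f) = \omega$; hence $f \in \RRR^{(1)}$, closing the cycle.

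For part~(\ref{pn-compce-2}) recall that $\SSS^{(1)}_{A} = \RRR^{(1)} \cup \anf^{(1)}_{A}$ and $\widehat{\SSS}^{(1)}_{A} = \FFF^{(1)} \cup \anf^{(1)}_{A}$, and observe that every function in $\anf^{(1)}_{A}$ is finite: in arity~$1$ its domain is empty or an initial segment $\{0, \ldots, a-1\}$ of $\omega$ with $a \in A \subseteq \omega$, hence a finite set. For the forward direction, if $r \in \RRR^{(1)}$ then $\graph(r)$ is c.e.\ by part~(\ref{pn-compce-1}), while if $r \in \anf^{(1)}_{A}$ then $\graph(r)$ is finite (possibly empty), hence computable, hence c.e. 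For the converse, let $r \in \widehat{\SSS}^{(1)}_{A}$ with $\graph(r)$ c.e.; then either $r \in \anf^{(1)}_{A} \subseteq \SSS^{(1)}_{A}$, and there is nothing to prove, or $r$ is total, in which case part~(\ref{pn-compce-1}) yields $r \in \RRR^{(1)} \subseteq \SSS^{(1)}_{A}$.

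No step here is hard. The only points that need care are the book-keeping for the empty- and finite-function cases that are folded into the notion of computable enumerability, and the observation that in arity~$1$ the initial segment functions are finite, so that the c.e.\ hypothesis in the backward direction of part~(\ref{pn-compce-2}) is in fact used only for the total case, where part~(\ref{pn-compce-1}) already supplies the conclusion.
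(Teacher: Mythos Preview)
Your proof is correct and follows essentially the same approach as the paper: the paper dismisses part~(\ref{pn-compce-1}) as ``the classical case'' (which you spell out explicitly), and for part~(\ref{pn-compce-2}) it too reduces the total case to part~(\ref{pn-compce-1}) and handles the $\anf^{(1)}_{A}$ case by observing that both sides of the equivalence hold trivially since such functions are finite.
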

\begin{proof}
We only show (\ref{pn-compce-2}). The other statement is a well-known fact in classical computability theory. Because of statement (\ref{pn-compce-1}) we only have to consider the case that $r \in \anf^{(1)}_{A}$. Then $r \in \SSS^{(1)}_{A}$ and $\graph(r)$ is a finite set. Therefore the equivalence holds.
\end{proof}

Next, we will derive the well-known closure properties of the class of c.e.\ sets. In the literature some of them are usually shown by using the function domain characterisation. Here, we will have to give other proofs. For completeness reasons, we include proofs of all statements.

\begin{theorem}\label{thm-closce}
For $r \in \SSS^{(1)}_{A}$ and $B, C \subseteq \omega$, if $B$ and $C$ are c.e.\ then so are $B \cap C$, $B \cup C$, $\pair{B, C}$, $r^{{-1}}[B]$ and $r[B]$.
\end{theorem}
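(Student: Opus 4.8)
The plan is to route all five closure properties through the ``projection'' characterisation of c.e.\ sets supplied by Theorem~\ref{thm-eq}, that is, the equivalence of Statements~(\ref{pn-eq-1}) and (\ref{pn-eq-4}): a set is c.e.\ iff it has the form $\set{a}{(\exists i)\, \pair{a,i} \in D}$ for some \emph{computable} $D$. Since the development deliberately avoids the domain characterisation of c.e.\ sets, this is the natural workhorse. For the two statements involving $r$ I would additionally invoke Proposition~\ref{pn-compce}(\ref{pn-compce-2}) to get that $\graph(r)$ is c.e., hence itself of the form $\set{w}{(\exists u)\, \pair{w,u} \in G}$ with $G$ computable, using that for unary $r$ one has $\graph(r) = \set{\pair{x,z}}{x \in \dom(r) \wedge r(x) = z}$. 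Each operation then reduces to combining finitely many computable predicates, contracting several existential quantifiers into one via the onto pairing function, and checking that the resulting witness set is decidable.

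Concretely, write $B = \set{a}{(\exists i)\, \pair{a,i} \in B'}$ and $C = \set{a}{(\exists j)\, \pair{a,j} \in C'}$ with $B',C'$ computable. For $B \cap C$ put $D \Def \set{\pair{a, \pair{i,j}}}{\pair{a,i} \in B' \wedge \pair{a,j} \in C'}$; this is computable (decode, then test membership in $B'$ and $C'$), and $B \cap C = \set{a}{(\exists w)\, \pair{a,w} \in D}$. For $B \cup C$ I would first dispose of the trivial case that one of the sets is empty, and otherwise (both nonempty, hence both ranges of total computable $f,g$) interleave by $h(2x) \Def f(x)$, $h(2x+1) \Def g(x)$, so that $h \in \RRR^{(1)}$ and $\range(h) = B \cup C$. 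For $\pair{B,C} \Def \set{\pair{b,c}}{b \in B \wedge c \in C}$, again split off the empty cases and otherwise set $p(n) \Def \pair{f(\pi^{(2)}_{1}(n)), g(\pi^{(2)}_{2}(n))}$; since $\pair{\cdot,\cdot}$ is onto, $\range(p) = \pair{B,C}$.

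For $r^{-1}[B]$ and $r[B]$ I would use $\graph(r) = \set{w}{(\exists u)\, \pair{w,u} \in G}$ with $G$ computable, and observe that
\[
r^{-1}[B] = \set{x}{(\exists z)\, [\pair{x,z} \in \graph(r) \wedge z \in B]}, \qquad r[B] = \set{z}{(\exists x)\, [\pair{x,z} \in \graph(r) \wedge x \in B]}.
\]
Substituting the projection presentations of $\graph(r)$ and $B$ turns each right-hand side into a threefold existential over a decidable predicate; contracting those three quantifiers into one with the tupling function and taking the evident computable witness set $H$, respectively $K$, yields c.e.-ness by $(\ref{pn-eq-4}) \Rightarrow (\ref{pn-eq-1})$.

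I do not expect a genuine obstacle. The only points that need care are (i) the edge cases forced by the convention that ``c.e.'' means ``empty or the range of a total computable function'', which must be handled separately for the union and the product, and (ii) remembering that a general $r \in \SSS^{(1)}_{A}$ may be a finite initial-segment function, so that the c.e.-ness of $\graph(r)$ is not automatic but is exactly what Proposition~\ref{pn-compce}(\ref{pn-compce-2}) delivers. Everything else is routine quantifier bookkeeping.
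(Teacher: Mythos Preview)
Your proposal is correct and follows essentially the same route as the paper: both reduce each closure property to one of the equivalent characterisations in Theorem~\ref{thm-eq}, and both handle $r^{-1}[B]$ and $r[B]$ by rewriting them as projections of a set built from $\graph(r)$ and $B$, relying on Proposition~\ref{pn-compce}(\ref{pn-compce-2}) for the c.e.-ness of $\graph(r)$.

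The only tactical differences are cosmetic. The paper works uniformly with characterisation~(\ref{pn-eq-3}) of Theorem~\ref{thm-eq} (the ``$a+1 \in \range(f)$'' form), which absorbs the empty-set cases into the encoding and so avoids your explicit case splits for $B \cup C$ and $\pair{B,C}$. For $r^{-1}[B]$ and $r[B]$ the paper is slightly more economical: rather than unfolding $\graph(r)$ and $B$ down to computable predicates, it writes $r^{-1}[B] = \set{x}{(\exists y)\, \pair{x,y} \in \graph(r) \cap \pair{\omega,B}}$ and invokes the already-established closure under $\cap$ and $\pair{\cdot,\cdot}$ together with Theorem~\ref{thm-eq}(\ref{pn-eq-5}). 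Your direct expansion to a decidable threefold-existential witness set is equally valid and perhaps more self-contained; the paper's version is shorter because it reuses the earlier parts of the same theorem.
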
 
\begin{proof}
By Theorem~\ref{thm-eq} there are functions $g, k \in \RRR^{(1)}$ with $B = \set{a}{a+1 \in \range(g)}$ and $C = \set{a}{a+1 \in \range(k)}$. set
\begin{gather*}
f_{1}(x) \Def \begin{cases}
			g(x) & \text{if for some $y \le x$, $g(x) = k(y)$}, \\
			k(x) & \text{if for some $y < x$, $k(x) = g(y)$}, \\
			0 & \text{otherwise}, 
		\end{cases}\\
f_{2}(x) \Def \begin{cases}
			g(x) & \text{if $x$ is even}, \\
			k(x) & \text{otherwise}, 
		\end{cases}\\
f_{3}(x) \Def \begin{cases}
			\pair{g(x) \prc 1, k(x) \prc 1}+1 & \text{if $g(x) > 0$ and $k(x) > 0$}, \\
			0 & \text{otherwise}.
		\end{cases}
\end{gather*}
Then $f_{1}, f_{2}, f_{3} \in \RRR^{(1)}$. Moreover, $B \cap C = \set{a}{a \in \range(f_{1})}$, $B \cup C = \set{a}{a+1 \in \range(f_{2})}$ and $\pair{B, C} = \set{a}{a+1 \in \range(f_{3})}$. Hence, these sets are c.e. The computable enumerability of $r^{-1}[B]$ follows with the Projection Theorem~\ref{thm-eq}(\ref{pn-eq-5}), since
\begin{align*}
r^{-1}[B]
&= \set{x \in \omega}{(\exists y)\, \pair{x, y} \in \graph(r) \wedge y \in B} \\
&= \set{x \in \omega}{(\exists y)\, \pair{x, y} \in \graph(r) \cap \pair{\omega, B}}.
\end{align*}
In the same way we obtain that $r[B]$ is c.e. 
\end{proof}

As we have seen in Theorem~\ref{thm-eq}, the c.e.\ sets are the ranges of functions in $\SSS^{(1)}_{A}$. This allows us to introduce a numbering $W^{A}$ of all c.e.\ sets. Let $\theta$ be a quasi-G\"odel numbering of $\SSS^{(1)}_{A}$ and define
\[
W^{A}_{i} \Def \range(\theta_{i}).
\]
Since $\theta$ is a quasi-G\"odel numbering, $W^{A}$ satisfies the subsequent normal form theorem.

\begin{theorem}[Enumeration Theorem] \label{thm-enumce}
There is a computable set $B \subseteq \omega$ such that for all $i \in \omega$,
\[
W^{A}_{i} = \set{x \in \omega}{(\exists t)\, \pair{i, x, t} \in B}.
\]
\end{theorem}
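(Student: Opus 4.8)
The plan is to reduce the claim to the Normal Form Theorem applied to $\theta$ combined with the Enumeration Theorem for extended graphs (Theorem~\ref{thm-qgn1}). Since $\theta$ is a quasi-G\"odel numbering of $\SSS^{(1)}_{A}$, it satisfies Condition~(QGN~I), so by Theorem~\ref{thm-qgn1} there is a function $k \in \RRR^{(2)}$ with $\range(\lambda t.\ k(i,t)) = \egraph(\theta_{i})$ for every $i$. The key observation is that $x \in W^{A}_{i} = \range(\theta_{i})$ holds exactly when $x$ occurs as a value of $\theta_{i}$, i.e.\ when there is some $\vec y \in \dom(\theta_{i})$ with $\theta_{i}(\vec y) = x$, and this in turn is witnessed in the extended graph by the presence of a pair of the form $\pair{\pair{\pair{\vec y}, x+1}+1}$ (recall that the extended graph codes a genuine value $x$ at argument $\vec y$ as the tuple $\pair{\pair{\vec y}, x+1}$, tagged with a further $+1$ to separate it from the ``padding'' entries $\pair{\pair{\vec y}, 0}$).

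First I would make the occurrence relation explicit: define, using the decoding functions $\pi^{(2)}_{1}, \pi^{(2)}_{2}$ and the truncated predecessor $\prc$, the computable predicate that holds of $\pair{i, x, t}$ precisely when $k(i,t)$ has second component $z+1$ with $z>0$ and $\pi^{(2)}_{2}(z\prc 1) = x+1$, i.e.\ when the $t$-th element enumerated in $\egraph(\theta_{i})$ records the value $x$ at some argument. Concretely one sets
\[
B \Def \set{\pair{i, x, t}}{\pi^{(2)}_{2}(k(i,t)) > 0 \wedge \pi^{(2)}_{2}(\pi^{(2)}_{2}(k(i,t)) \prc 1) > 0 \wedge \pi^{(2)}_{2}(\pi^{(2)}_{2}(k(i,t)) \prc 1) \prc 1 = x}.
\]
Since $k \in \RRR^{(2)}$ and the pairing, projection and $\prc$ operations are all computable, $B$ is a computable subset of $\omega$.

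It then remains to verify the two inclusions. If $\pair{i,x,t} \in B$, then $k(i,t) \in \egraph(\theta_{i})$ decodes to a genuine value $x$ at some argument $\vec y$, hence $\vec y \in \dom(\theta_{i})$ and $\theta_{i}(\vec y) = x$, so $x \in \range(\theta_{i}) = W^{A}_{i}$. Conversely, if $x \in W^{A}_{i}$ there is $\vec y$ with $\theta_{i}(\vec y) = x$, so $\pair{\pair{\pair{\vec y}, x+1}+1}$ (more precisely the pair $\pair{\pair{\vec y}, \pair{\pair{\vec y}, x+1}+1}$ in the notation of $\egraph$) lies in $\egraph(\theta_{i}) = \range(\lambda t.\ k(i,t))$, so it equals $k(i,t)$ for some $t$, whence $\pair{i,x,t} \in B$. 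This gives $W^{A}_{i} = \set{x}{(\exists t)\, \pair{i,x,t} \in B}$, as required.

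The only genuine subtlety, and the step I would be most careful about, is bookkeeping the exact layer structure of the extended graph: $\egraph$ carries two kinds of entries — the trivial markers $\pair{\pair{\vec y}, 0}$ present for every $\vec y \in \omega^{n}$ and the informative entries $\pair{\pair{\vec x}, z+1}$ for $\vec x \in \dom(\theta_{i})$ with $\theta_{i}(\vec x) = z$ — and each is wrapped once more by the outer pairing in the definition of $\egraph(q)$. Getting the nesting of $\pi^{(2)}_{1}, \pi^{(2)}_{2}$ and the two applications of $\prc 1$ to line up correctly with that definition is the place where an off-by-one in the coding would creep in; everything else is routine. No appeal to classical computability theory is needed — only Theorem~\ref{thm-qgn1}, which itself rests solely on Condition~(QGN~I).
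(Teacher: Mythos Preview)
Your overall strategy is correct and essentially the same as the paper's: use the enumeration of the extended graph guaranteed by (QGN~I) and read off the values. The paper works directly with an enumeration $h \in \RRR^{(1)}$ of $\egraph(\lambda i,x.\ \theta_i(x))$ and sets $B \Def \set{\pair{i,x,t}}{\pi^{(2)}_{1}(\pi^{(2)}_{1}(h(t))) = i \wedge \pi^{(2)}_{2}(h(t)) = x+1}$; you route through Theorem~\ref{thm-qgn1} to obtain $k \in \RRR^{(2)}$ with $\range(\lambda t.\ k(i,t)) = \egraph(\theta_i)$, which is an equivalent and perfectly valid starting point.

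However, the very step you flagged as delicate is where your decoding goes wrong. By definition, an element of $\egraph(\theta_i)$ recording $\theta_i(y) = x$ is simply $\pair{\pair{y}, x+1}$; there is \emph{no} further outer wrapping or second ``$+1$'' tag. Hence the correct membership condition is just
\[
\pi^{(2)}_{2}(k(i,t)) = x+1,
\]
and the set you want is $B = \set{\pair{i,x,t}}{\pi^{(2)}_{2}(k(i,t)) = x+1}$. Your proposed $B$ applies an additional $\pi^{(2)}_{2}$ and $\prc 1$ to the value $x$ itself, treating it as though it were a coded pair to be further unpacked --- which it is not --- so your $B$ does not capture $W^{A}_{i}$. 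With that one-line correction the argument is complete and matches the paper's.
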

\begin{proof}
By Condition~(QGN~I) there is a computable enumeration $h \in \RRR^{(1)}$ of the extended graph of the universal function of $\theta$. The we have that
\[
x \in W^{A}_{i} \Leftrightarrow (\exists t)\, \pi^{(2)}_{1}(\pi^{(2)}_{1}(h(t))) = i \wedge \pi^{(2)}_{2}(h(t)) = x+1.
\]
Set $B \Def \set{\pair{i, x, t}}{\pi^{(2)}_{1}(\pi^{(2)}_{1}(h(t))) = i \wedge \pi^{(2)}_{2}(h(t)) = x+1}$. Then $B$ has the asserted properties.
\end{proof}

The result strengthens the Projection Theorem~\ref{thm-eq}(\ref{pn-eq-4}): the c.e.\ sets can be uniformly obtained from the recursive sets by applying a projection. As follows from the above proof we moreover have

\begin{corollary}\label{cor-unien}
The set $\set{\pair{i, x}}{x \in W^{A}_{i}}$ is c.e.
\end{corollary}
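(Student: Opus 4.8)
The plan is to read off the result directly from the proof of the Enumeration Theorem (Theorem~\ref{thm-enumce}). That proof already produces a computable set
\[
B \Def \set{\pair{i, x, t}}{\pi^{(2)}_{1}(\pi^{(2)}_{1}(h(t))) = i \wedge \pi^{(2)}_{2}(h(t)) = x+1}
\]
together with the equivalence $x \in W^{A}_{i} \Leftrightarrow (\exists t)\, \pair{i, x, t} \in B$. So the first step is simply to observe that
\[
\set{\pair{i, x}}{x \in W^{A}_{i}} = \set{\pair{i, x}}{(\exists t)\, \pair{i, x, t} \in B},
\]
which exhibits the set in question as a projection of a computable set.

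The second step is to invoke the already-established machinery to conclude computable enumerability of this projection. By Theorem~\ref{thm-eq}, specifically the implication $(\ref{pn-eq-4}) \Rightarrow (\ref{pn-eq-1})$ (the projection theorem), any set of the form $\set{a}{(\exists j)\, \pair{a, j} \in B'}$ for computable $B'$ is c.e. To fit our set into this template I would let $B' \Def \set{\pair{\pair{i,x}, t}}{\pair{i, x, t} \in B}$, which is computable since $B$ is computable and the pairing and unpairing functions are computable; then $\set{\pair{i,x}}{x \in W^{A}_{i}} = \set{a}{(\exists t)\, \pair{a, t} \in B'}$, and Theorem~\ref{thm-eq} gives the conclusion.

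There is essentially no obstacle here; the corollary is a bookkeeping consequence of the theorem it follows. The only point requiring a (trivial) remark is the reshuffling of the triple encoding $\pair{i, x, t}$ into the pair encoding $\pair{\pair{i,x}, t}$, which is harmless because the $n$-tuple encoding was fixed at the start of Section~\ref{sec-isf} to be built by iterated pairing, so $\pair{i, x, t} = \pair{i, \pair{x, t}}$ and the required rearrangement is computable. One could alternatively avoid even this by noting that the map $t \mapsto \pair{\pi^{(2)}_{1}(\pi^{(2)}_{1}(h(t))), \pi^{(2)}_{2}(h(t)) \prc 1}$ is in $\RRR^{(1)}$ on the set of $t$ with $\pi^{(2)}_{2}(h(t)) > 0$, and that $\set{\pair{i,x}}{x \in W^{A}_{i}}$ is (up to the empty-set case) its range in the style of Theorem~\ref{thm-eq}(\ref{pn-eq-3}); either route is a one-line argument.
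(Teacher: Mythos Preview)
Your proposal is correct and matches the paper's approach exactly: the paper simply states that the corollary ``follows from the above proof'' of Theorem~\ref{thm-enumce}, and what you have written is precisely the unpacking of that remark via the projection theorem (Theorem~\ref{thm-eq}). The encoding reshuffle you flag is indeed the only thing to check, and it is trivial.
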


Our next aim is to show that for the closure operations in Theorem~\ref{thm-closce} there are corresponding computable index operations.  As in the last section, we have to construct functions that enumerate the extended graph of another function. These other functions are now enumerations of c.e.\ sets. The definition of the graph enumerations is again based on a scheme. which we indicate below. Let to this end $\val{\cdot}$ be an effective coding of all finite sequences of natural numbers that is  one-to-one and onto such that there are total computable functions $\lth$ and $(\cdot)_{\cdot}$ with
\[
\lth(\val{x_{1}, \ldots. x_{m}}) = m \quad\text{and}\quad (\val{x_{1}, \ldots. x_{m}})_{j} = x_{j} \quad\text{$(1 \le j \le m)$}.
\]
Then the \emph{course of values} of function $g \in \RRR^{(2)}$ up to $t$ is defined by
\[
\overline{g}(i, t) \Def \val{g(i, 0), \ldots, g(i, t)}.
\]

\begin{lemma}\label{lem-methce}
Let $f \in \RRR^{(3)}$ and $Q \subseteq \omega^{3}$ be a computable relation with  $f(i, t, z) > 0$ if $Q(i, t, z)$ holds. Moreover, let $k \in \PPP\FFF^{(2)}$ be defined by
\begin{align*}
&k(i, 2t) = \pair{t, 0} \\
&k(i, 2t+1) = g(i, t), \quad\text{where} \\
&g(i, 0) = \pair{0, 0} \\
&g(i, t+1) = 
\begin{cases}
	\pair{0, 0} & \text{if not $Q(i, t+1, \overline{g}(i, t))$ but}\\
	& \text{$\pi^{(2)}_{2}(g(i, t)) = 0$}, \\
	\pair{0, f(i, t+1, \overline{g}(i, t))} & \text{if $Q(i, t+1, \overline{g}(i, t))$ and} \\
	& \text{$\pi^{(2)}_{2}(g(i, t)) = 0$}, \\
	\pair{1+ \pi^{(2)}_{1}(g(i, t)), f(i, t+1, \overline{g}(i, t))} & \text{if $Q(i, t+1, \overline{g}(i, t))$ and} \\
	& \text{$\pi^{(2)}_{2}(g(i, t)) > 0$}, \\
	\pair{1+ \pi^{(2)}_{1}(g(i, t)), \pi^{(2)}_{2}(g(i, t))} & \text{otherwise}.
\end{cases}
\end{align*}
Then $k \in \RRR^{(2)}$ and $\lambda t.\ k(i, t)$ enumerates the extended graph of a unary computable function which is either nowhere defined or total.
\end{lemma}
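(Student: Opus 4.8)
The plan is to prove two things separately: that $k$ is total computable, and that $\range(\lambda t.\ k(i,t))$ is the extended graph of a unary function that is nowhere defined or total. The first part is routine. The function $g$ is given by a course-of-values recursion with a total computable step function: from $i$, $t+1$ and the code $\overline{g}(i,t) = \val{g(i,0),\dots,g(i,t)}$ one recovers $g(i,t)$ as the last entry (using $\lth$ and $(\cdot)_{\cdot}$), decides the computable predicate $Q(i,t+1,\overline{g}(i,t))$, evaluates $f(i,t+1,\overline{g}(i,t))$, and assembles the pair prescribed by the appropriate one of the four cases using $\pi^{(2)}_{1}$, $\pi^{(2)}_{2}$ and the pairing function. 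Hence $g \in \RRR^{(2)}$, and since $k(i,s)$ is obtained from $g$ and the pairing function by a case distinction on the parity of $s$, also $k \in \RRR^{(2)}$.

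For the substantive part I would fix $i$, abbreviate $g_{t} \Def g(i,t)$, and first observe that $g_0 = \pair{0,0}$ and that, once $\pi^{(2)}_{2}(g_{t}) > 0$, only the third and fourth branches of the recursion can apply, each of which keeps $\pi^{(2)}_{2}$ positive and raises $\pi^{(2)}_{1}$ by exactly one; so positivity of the second component persists. Two cases then arise. If $Q(i,t+1,\overline{g}(i,t))$ fails for every $t$, a direct induction (only the first branch ever applies) gives $g_{t} = \pair{0,0}$ throughout. Otherwise let $t_{0}$ be least with $Q(i,t_{0}+1,\overline{g}(i,t_{0}))$; then $g_{t} = \pair{0,0}$ for $t \le t_{0}$ (again only the first branch applies, as $Q$ fails for the earlier arguments), the second branch fires at step $t_{0}+1$ giving $g_{t_{0}+1} = \pair{0,\,f(i,t_{0}+1,\overline{g}(i,t_{0}))}$ with $f(\dots) > 0$ by the hypothesis linking $f$ and $Q$, and for $j \ge 1$ an induction using only the third and fourth branches yields $g_{t_{0}+1+j} = \pair{j,\,c_{j}}$ with $c_{j} > 0$.

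Now I would read off the range. The even-indexed terms $k(i,2t) = \pair{t,0}$ already contribute all pairs $\pair{y,0}$, $y \in \omega$, which are exactly the part common to the extended graph of every unary function, so only the odd-indexed terms $g_{t}$ remain. If $Q$ never fires these are all $\pair{0,0}$ and the range equals $\set{\pair{y,0}}{y \in \omega} = \egraph(r)$ for the nowhere defined $r$. If $Q$ fires at $t_{0}$, the odd-indexed terms with positive second component are precisely the $\pair{j,c_{j}}$, $j \ge 0$, and their first components are pairwise distinct, so the range is $\set{\pair{y,0}}{y \in \omega} \cup \set{\pair{j,c_{j}}}{j \in \omega} = \egraph(r)$ where $r$ is the \emph{total} function $j \mapsto c_{j}-1$; this $r$ is computable since $r(j)$ is found by searching the enumeration $\lambda t.\ k(i,t)$ for a pair with first component $j$ and positive second component (one exists because $r$ is total), or alternatively by Proposition~\ref{pn-compce}. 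In either case the range is the extended graph of a unary computable function that is nowhere defined or total, which is the claim.

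I expect the main obstacle to be the bookkeeping of the inductions in the second step: keeping straight that the arguments of $Q$ and $f$ at step $t+1$ are the history code $\overline{g}(i,t)$, so the inductions genuinely proceed one step at a time, and verifying that once the first value has been output the first component rises by exactly one per step, so that the enumerated arguments run through $0,1,2,\dots$ without repetition and the enumerated set really is the extended graph of a function.
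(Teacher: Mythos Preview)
Your proposal is correct and follows essentially the same approach as the paper's proof: both argue that $g(i,\cdot)$ stays at $\pair{0,0}$ until $Q$ first fires, after which the second component remains positive and the first component increments by exactly one per step, so the odd-indexed outputs either contribute nothing new or sweep out the graph of a total function. The paper's argument is considerably terser, but your more explicit case analysis and the verification that the resulting function is computable (via search in the enumeration or Proposition~\ref{pn-compce}) fill in exactly the details the paper leaves to the reader.
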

\begin{proof}
As follows from the definition, $k \in \RRR^{(2)}$. The function $\lambda t.\ g(i, t)$ first lists the value $\pair{0,0}$ and repeats this until $Q(i, t, \overline{g}(i, t-1))$ holds for the first time, and then lists $\pair{0, f(i, t, \overline{g}(i, t-1))}$.  According to the assumption, $f(i, t, \overline{g}(i, t-1)) > 0$ in this case. Therefore, we have for all $t' \ge t$ that $\pi^{(2)}_{2}(g(i,t')) > 0$. Thus, $\range(\lambda t. k(i, t))$ is the extended graph of a total function or the nowhere defined function, depending on whether there is some $t > 0$ with $Q(i, t, \overline{g}(t-1))$, or not.
\end{proof}

\begin{theorem}\label{thm-indoc}
There are functions $\cut, \un, \fpair, \inv, \im \in \RRR^{(2)}$ so that
\begin{enumerate}
\item\label{thm-indoc-1}
$W^{A}_{\cut(i, j)} W^{A}_{i} \cap W^{A}_{j}$,

\item\label{thm-indoc-2}
$W^{A}_{\un(i, j)} = W^{A}_{i} \cup W^{A}_{j}$,

\item\label{thm-indoc-3}
$W^{A}_{\fpair(i, j)} = \pair{W^{A}_{i}, W^{A}_{j}}$,

\item \label{thm-indoc-4}
$W^{A}_{\inv(i, j)} = \theta^{-1}_{i}[W^{A}_{j}]$,

\item\label{thm-indoc-5}
$W^{A}_{\im(i, j)} = \theta_{i}[W^{A}_{j}]$.

\end{enumerate}
\end{theorem}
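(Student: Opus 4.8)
The plan is to establish each of the five index operations by the same pattern that has been used throughout Section~\ref{sec-comp} and Section~\ref{sec-ce}: construct, uniformly in the input indices, a total computable function that enumerates the extended graph of a function in $\SSS^{(1)}_{A}$ whose range is the desired c.e.\ set, and then apply Condition~(QGN~II) for $\theta$ to extract the required index function. Since $W^{A}_{i} = \range(\theta_{i})$ and, by Theorem~\ref{thm-qgn1}, there is $k_{0} \in \RRR^{(2)}$ with $\range(\lambda t.\ k_{0}(i,t)) = \egraph(\theta_{i})$, I have uniform access to enumerations of all the sets $W^{A}_{i}$. For each operation I will describe an $f \in \RRR^{(3)}$ and a computable relation $Q$ feeding into Lemma~\ref{lem-methce}, which produces the graph-enumeration function $k \in \RRR^{(2)}$; since that lemma guarantees the enumerated function is either nowhere defined or total, it lies in $\SSS^{(1)}_{A}$, so (QGN~II) yields a $v \in \RRR^{(1)}$ with the enumerated function equal to $\theta_{v(\pair{i,j})}$, and the sought operation is $\lambda i,j.\ v(\pair{i,j})$.

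For the concrete constructions I would essentially reuse the enumerators $f_{1}, f_{2}, f_{3}$ already exhibited in the proof of Theorem~\ref{thm-closce}, but now built from the uniform course-of-values enumerations of $W^{A}_{i}$ and $W^{A}_{j}$ rather than from fixed functions $g,k$: namely, at stage $t$ one has at hand finitely many elements of $W^{A}_{i}$ and of $W^{A}_{j}$ (read off from $k_{0}$), and one outputs the appropriate ``$a+1$'' tokens for intersection, union and $\pair{\cdot,\cdot}$ exactly as in that proof. For $\inv$ and $\im$ I would instead use the projection-theorem characterisations already recorded there, $\theta^{-1}_{i}[W^{A}_{j}] = \set{x}{(\exists y)\ \pair{x,y}\in\graph(\theta_{i}) \wedge y \in W^{A}_{j}}$ and the symmetric description of $\theta_{i}[W^{A}_{j}]$: search, uniformly in $i,j$ and $t$, for pairs $\pair{\pair{x},y}$ appearing in $\egraph(\theta_{i})$ whose second component $y$ has been enumerated into $W^{A}_{j}$, and emit the token $x+1$ (resp.\ $y+1$). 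In each case the relation $Q(i,t,z)$ records whether a new such witness has appeared by stage $t$, and $f(i,t,z)$ delivers the value $\langle\text{new element}\rangle+1$ to be listed; Lemma~\ref{lem-methce} then does the bookkeeping that turns this into a genuine extended-graph enumeration.

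The one point requiring a little care — and the step I expect to be the main (though modest) obstacle — is matching the bookkeeping format demanded by Lemma~\ref{lem-methce}: that lemma produces an enumeration of a function that is nowhere defined or total, and one must feed it an $f$ whose values are always positive on the relevant arguments so that the ``$\pi^{(2)}_{2}(g(i,t)) = 0$ versus ${} > 0$'' dichotomy in the lemma behaves correctly, and one must be sure that the resulting function's range is exactly the target set, including the empty-set case (when $W^{A}_{i}$ or $W^{A}_{j}$, as appropriate, is empty, the nowhere-defined branch of Lemma~\ref{lem-methce} must be triggered, which it is, since then no witness is ever found and $Q$ never holds). A secondary routine check is that for $\inv$ and $\im$ the enumerating function built this way indeed has range $\theta^{-1}_{i}[W^{A}_{j}]$ respectively $\theta_{i}[W^{A}_{j}]$ and not some larger or smaller set; this is immediate from the projection descriptions above once the witness search is set up to quantify over all of $\egraph(\theta_{i})$ and all of $W^{A}_{j}$. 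After that, the five applications of (QGN~II) are entirely uniform and give the desired $\cut$, $\un$, $\fpair$, $\inv$, $\im \in \RRR^{(2)}$.
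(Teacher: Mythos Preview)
Your proposal is correct and follows essentially the same approach as the paper: for each of the five operations, build (uniformly in $\pair{i,j}$) the data $\widehat{Q}$, $Q$, $f$ feeding into Lemma~\ref{lem-methce}, obtain an enumeration of the extended graph of a total-or-nowhere-defined function with the desired range, and apply (QGN~II). The only cosmetic difference is that the paper works directly with the enumeration $h$ of the universal extended graph from (QGN~I) rather than via the derived $k_{0}$ of Theorem~\ref{thm-qgn1}, and it spells out the concrete $\widehat{Q}$ and $f$ in each case instead of pointing back to the enumerators of Theorem~\ref{thm-closce}.
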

\begin{proof}
(\ref{thm-indoc-1}) By Condition~(QGN~I) there exists a computable enumeration of the universal function of numbering $\theta$, say $h \in \RRR^{(2)}$. Moreover define
\begin{multline*}
\widehat{Q}(\pair{i, j}, \pair{a, b}, z) \Leftrightarrow  \mbox{} \\
\hspace{2em} \pi^{(2)}_{1}(\pi^{(2)}_{2}(h(a))) = i \wedge \pi^{(2)}_{1}(\pi^{(2)}_{1}(h(b))) = j \wedge \pi^{(2)}_{2}(h(a)) = \pi^{(2)}_{2}(h(b))
\wedge \mbox{} \\
 \pi^{(2)}_{2}((z)_{1}) = 0 \wedge (\forall 1 \le c \le \lth(z))\ \pi^{(2)}_{2}(h(a)) \neq \pi^{(2)}_{2}((z)_{c}), 
\end{multline*} 
\vspace{-1cm}
\begin{gather*}
Q(\pair{i, j}, t, z) \Leftrightarrow (\exists x \le t)\, \widehat{Q}(\pair{i, j}, x, z), \\
f(\pair{i, j}, t, z) \Def \pi^{(2)}_{2}(h(\pi^{(2)}_{1}(\mu x \le t.\ \widehat{Q}(\pair{i, j}, x, z)))).
\end{gather*}
Then it follows for the function $k \in \RRR^{(2)}$ constructed according to Lemma~\ref{lem-methce} that $\lambda t.\ k(\pair{i, j}, t))$  enumerates the extended graph of a function $r \in \RRR^{(1)}$ that enumerates $W^{A}_{i} \cap W^{A}_{j}$. By Condition~(QGN~II) there is therefore a function $v \in \RRR^{(1)}$ with $r = \theta_{v(\pair_{i, j})}$. Thus, it suffices to set $\cut(i, j) \Def v(\pair{i, j})$.

The remaining statements follow in the same way. We only indicate how $Q$ and $f$ have to be chosen in each case.

(\ref{thm-indoc-2}) Set
\begin{multline*}
\widehat{Q}(\pair{i, j}, a, z) \Leftrightarrow [\pi^{(2)}_{1}(\pi^{(2)}_{1}(h(a))) = i \vee \pi^{(2)}_{1}(\pi^{(2)}_{1}(h(a))) = j] \wedge \mbox{} \\
 \pi^{(2)}_{2}((z)_{1}) = 0 \wedge (\forall 1 \le c \le \lth(z))\, \pi^{(2)}_{2}(h(a)) \neq \pi^{(2)}_{2}((z)_{c}),
 \end{multline*}
 \vspace{-1cm}
 \begin{gather*}
 Q(\pair{i, j}, t, z) \Leftrightarrow (\exists a \le t)\, \widehat{Q}(\pair{i, j}, a, z), \\
 f(\pair{i, j}, t, z) \Def \pi^{(2)}(h(\mu a \le t.\ \widehat{Q}(\pair{i, j}, a, z))).
 \end{gather*}

(\ref{thm-indoc-3}) Set
\begin{multline*}
\widehat{Q}(\pair{i, j}, \pair{a, b}, z) \Leftrightarrow \mbox{} \\
\hspace {2em} \pi^{(2)}_{2}(h(a)) > 0 \wedge \pi^{(2)}_{2}(h(b)) > 0 \wedge \pi^{(2)}_{1}(\pi^{(2)}_{1}(h(a))) = i \wedge \pi^{(2)}_{1}(\pi^{(2)}_{1}(h(b))) = j \wedge \mbox{} \\
 (\forall 1 \le c \le \lth(z))\, \pair{\pi^{(2)}_{2}(h(a)) \prc 1, \pi^{(2)}_{2}(h(b)) \prc 1} \neq 1+ \pi^{(2)}_{2}((z_{c})), \hspace{-2em}
\end{multline*}
\vspace{-1cm}
\begin{gather*}
Q(\pair{i, j}, t, z) \Leftrightarrow (\exists x \le t)\, \widehat{Q}(\pair{i, j}, x, z), \\
f(\pair{i, j}, t, z) \Def \pair{\pi^{(2)}_{2}(h(\pi^{(2)}_{1}(\mu x \le t.\ \widehat{Q}(\pair{i, j} x, z)) \prc 1)), \\
\hspace{6cm} \pi^{(2)}_{2}(h(\pi^{(2)}_{2}(\mu x \le t.\ \widehat{Q}(\pair{i, j} x, z)) \prc 1))} +1.
\end{gather*}

(\ref{thm-indoc-4}) Set
\begin{multline*}
\widehat{Q}(\pair{i, j}, \pair{a, b}, z) \Leftrightarrow \mbox{}\\
\hspace{2em}  \pi^{(2)}_{1}(\pi^{(2)}_{1}(h(a))) = i \wedge \pi^{(2)}_{1}(\pi^{(2)}_{1}(h(b))) = j \wedge \pi^{(2)}_{2}(h(a)) = \pi^{(2)}_{2}(h(b)) \wedge \mbox{} \\
 \pi^{(2)}_{2}(h(a)) > 0 \wedge (\forall 1 \le c \le \lth(z))\, 1+ \pi^{(2)}_{2}(\pi^{(2)}_{1}(h(a))) \neq \pi^{(2)}_{2}((z)_{c}),
\end{multline*}
\vspace{-1cm}
\begin{gather*}
Q(\pair{i, j}, t, z) \Leftrightarrow (\exists x \le t)\, \widehat{Q}(\pair{i, j}, x, z), \\
f(\pair{i, j}, t, z) \Def  1+ \pi^{(2)}_{2}(\pi^{(2)}_{1}(h(\pi^{(2)}_{1}(\mu x \le t.\ \widehat{Q}(\pair{i, j}, x, z))))).
\end{gather*}

(\ref{thm-indoc-5}) Set
\begin{multline*}
\widehat{Q}(\pair{i, j}, \pair{a, b}, z) \Leftrightarrow \mbox{} \\
\hspace{1em}  \pi^{(2)}_{1}(\pi^{(2)}_{1}(h(a))) = i \wedge \pi^{(2)}_{1}(\pi^{(2)}_{1}(h(a))) = j \wedge \pi^{(2)}_{2}(\pi^{(2)}_{1}(h(a))) +1 = \pi^{(2)}_{2}(h(b))  \wedge  \mbox{} \\
\pi^{(2)}_{2}((z)_{1}) = 0 \wedge (\forall 1 \le c \le \lth(z))\, \pi^{(2)}_{2}(h(a)) \neq \pi^{(2)}_{2}((z)_{c}),
\end{multline*}
\vspace{-1cm}
\begin{gather*}
Q(\pair{i, j}, t, z) \Leftrightarrow (\exists x \le t)\, \widehat{Q}(\pair{i, j}, x, z), \\
f(\pair{i, j}, t, z) \Def \pi^{(2)}_{1}(h(\pi^{(2)}_{1}(\mu x \le t.\ \widehat{Q}(\pair{i, j}, x, z)))). \tag*{\qedhere}
\end{gather*}
\end{proof}

Let $K^{A} \Def \set{i}{i \in W^{A}_{i}}$ be the \emph{self-reproducibility problem}. 

\begin{theorem}\label{thm-K}
$K^{A}$ is c.e., but not computable.
\end{theorem}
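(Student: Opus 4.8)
The plan is to prove the two halves separately, keeping everything inside the present framework (quasi-G\"odel numberings and the results of Sections~\ref{sec-ce} and~\ref{sec-comp}), without borrowing from classical computability theory.

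First I would show that $K^{A}$ is c.e. By Corollary~\ref{cor-unien} the set $B \Def \set{\pair{i, x}}{x \in W^{A}_{i}}$ is c.e. Since $\lambda i.\ \pair{i, i} \in \RRR^{(1)} \subseteq \SSS^{(1)}_{A}$, and
\[
K^{A} = \set{i}{\pair{i, i} \in B} = (\lambda i.\ \pair{i,i})^{-1}[B],
\]
closure of the c.e. sets under preimages along functions in $\SSS^{(1)}_{A}$ (Theorem~\ref{thm-closce}) gives at once that $K^{A}$ is c.e. (Alternatively, one may read off the claim directly from the Enumeration Theorem~\ref{thm-enumce}: with the computable $B$ supplied there one has $i \in K^{A} \Leftrightarrow (\exists t)\, \pair{i, i, t} \in B$, so $K^{A}$ is a projection of a computable relation, hence c.e. by Theorem~\ref{thm-eq}.)

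For the non-computability I would argue by the familiar diagonal contradiction, carried out internally. Suppose $K^{A}$ were computable. Then its complement $\overline{K^{A}} = \set{i}{i \notin W^{A}_{i}}$ is computable, hence c.e., so by Theorem~\ref{thm-eq}(\ref{pn-eq-2}) there is some $r \in \SSS^{(1)}_{A}$ with $\range(r) = \overline{K^{A}}$. Since $\theta$ is a numbering of the whole class $\SSS^{(1)}_{A}$, choose an index $e$ with $\theta_{e} = r$, so that $W^{A}_{e} = \range(\theta_{e}) = \overline{K^{A}}$. Evaluating the defining equivalence of $K^{A}$ at $e$ now yields
\[
e \in K^{A} \iff e \in W^{A}_{e} \iff e \in \overline{K^{A}} \iff e \notin K^{A},
\]
a contradiction; hence $K^{A}$ is not computable.

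There is essentially no real obstacle here; the one point deserving a moment's care — and thus the ``hard part'', though it is very mild — is that we must not invoke a reduction from a classical halting set, but instead extract the undecidability purely from the fact that \emph{every} c.e. set is $\range(\theta_{e})$ for some $e$, which holds because $\theta$ numbers all of $\SSS^{(1)}_{A}$ and every c.e. set is the range of some member of that class (Theorem~\ref{thm-eq}(\ref{pn-eq-2})). I note in passing that $\overline{K^{A}}$ is in any case nonempty (the nowhere-defined function lies in $\SSS^{(1)}_{A}$, so it has an index $e_{0}$ with $W^{A}_{e_{0}} = \emptyset$ and hence $e_{0} \in \overline{K^{A}}$), but the argument above does not even need this, since Theorem~\ref{thm-eq}(\ref{pn-eq-2}) already covers the empty set.
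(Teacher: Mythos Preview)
Your proof is correct and follows essentially the same approach as the paper: the paper derives the c.e.\ part by unfolding the definition via the enumeration $h$ of $\egraph(\lambda i,x.\ \theta_i(x))$ and applying the projection theorem (you instead cite Corollary~\ref{cor-unien} and Theorem~\ref{thm-closce}, which amounts to the same thing packaged one level higher), and for non-computability the paper simply writes ``follows as usual,'' which is precisely the diagonal argument you spell out.
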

\begin{proof}
Let $h \in \RRR^{2}_{A}$ again enumerate the extended graph of the universal function of $\theta$. Then,
\[
i \in K^{A} \Leftrightarrow i \in \range(\theta_{i}) \Leftrightarrow (\exists a) (\exists t)\, h(t) = \pair{\pair{i, a}, i + 1}.
\]
With the projection theorem one obtains that $K^{A}$ is c.e. It remains to show that $\omega \setminus K^{A}$ is not c.e. Assume to the contrary that $\omega \setminus K^{A}$ is c.e. Then there is some $i \in \omega$ with $\omega \setminus K^{A} = W^{A}_{i}$. 
If $i \in W^{A}_{i}$ then $i \in K^{A} \cap (\omega \setminus K^{A})$, a contradiction. So, $i \notin W^{A}_{i}$, that is, $i \in K^{A}$. Hence, $i \in W^{A}_{i}$, again a contradiction. It follows that $K^{A}$ is not c.e.
\end{proof} 

\begin{corollary}\label{cor-unice}
The set $\set{\pair{i, x}}{x \in W^{A}_{i}}$ is c.e., but not computable.
\end{corollary}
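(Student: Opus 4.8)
The plan is to treat the two assertions separately. Computable enumerability of the set $S \Def \set{\pair{i, x}}{x \in W^{A}_{i}}$ is already recorded in Corollary~\ref{cor-unien}; alternatively it is immediate from the Enumeration Theorem~\ref{thm-enumce}, since $\pair{i, x} \in S$ exactly if $(\exists t)\, \pair{i, x, t} \in B$ for the computable set $B$ supplied there, and the latter condition is a projection of a computable relation, hence c.e. So only the failure of computability needs a new argument.

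For that I would reduce the self-reproducibility problem $K^{A}$ to $S$. Recall that $K^{A} = \set{i}{i \in W^{A}_{i}}$ and that, by Theorem~\ref{thm-K}, $K^{A}$ is not computable. Since the pairing function is a bijection, $\pair{i, i} \in S$ holds precisely when $i \in W^{A}_{i}$, so
\[
i \in K^{A} \Leftrightarrow \pair{i, i} \in S \qquad (i \in \omega).
\]
The function $\lambda i.\ \pair{i, i}$ is total computable; hence, were $S$ computable, its preimage $K^{A}$ along this function would be computable as well, contradicting Theorem~\ref{thm-K}. Therefore $S$ is not computable.

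I do not expect any real obstacle here: the entire difficulty lies behind Theorem~\ref{thm-K}, and this corollary merely transports that result through the trivial many-one reduction $i \mapsto \pair{i, i}$. The one point to be careful about is to invoke the standard closure of the computable sets under preimages along total computable functions, which is legitimate here because computable sets and relations are handled in the usual way throughout this development.
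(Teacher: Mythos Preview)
Your proof is correct and follows essentially the same approach as the paper: computable enumerability is taken from Corollary~\ref{cor-unien}, and non-computability is obtained by the many-one reduction $i \mapsto \pair{i,i}$ of $K^{A}$ to the set in question, invoking Theorem~\ref{thm-K}.
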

\begin{proof}
As we have seen in Corollary~\ref{cor-unien}, this set is c.e. If it  would be computable, also $\set{\pair{i,i}}{i \in W^{A}_{i}}$ and hence $K^{A}$ would be computable, which is not the case.
\end{proof}

As consequence of Theorem~\ref{thm-K} we next obtain that in terms of $W^A$-indices, given a computable set, one cannot uniformly pass to its complement.

\begin{theorem}\label{thm-cc}
There is no function $\comp \in \RRR^{(1)}$ so that for all $i \in \omega$, if  $W^{A}_{i}$ is computable then $W^{A}_{\comp(i)}$ is its complement. 
\end{theorem}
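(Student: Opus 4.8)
The plan is to derive a contradiction with Theorem~\ref{thm-K} by showing that a hypothetical such $\comp$ would make $K^{A}$ computable. The idea is the classical one: attach to each index $i$ a c.e.\ set that is always \emph{computable} (so $\comp$ is forced to succeed on it) but whose complement records whether $i\in K^{A}$.

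First I would construct a function $h\in\RRR^{(1)}$ with
\[
W^{A}_{h(i)} =
\begin{cases}
\{0\} & \text{if } i \in K^{A},\\
\emptyset & \text{otherwise.}
\end{cases}
\]
For this, apply the enumeration scheme of Lemma~\ref{lem-methce} with the computable relation $Q(i,t,z)\Leftrightarrow i\in K^{A}_{t}$ and with $f(i,t,z)\Def 1$; since $f(i,t,z)>0$ whenever $Q(i,t,z)$ holds, the lemma yields $k\in\RRR^{(2)}$ such that $\lambda t.\ k(i,t)$ enumerates the extended graph of the function $r_{i}$ which, by the dichotomy in that lemma, equals $\lambda x.\ 0$ if $i\in K^{A}$ and is nowhere defined otherwise. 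In either case $r_{i}\in\SSS^{(1)}_{A}$, so Condition~(QGN~II) (which $\theta$ satisfies, being a quasi-G\"odel numbering of $\SSS^{(1)}_{A}$) provides $h\in\RRR^{(1)}$ with $r_{i}=\theta_{h(i)}$ for all $i$, whence $W^{A}_{h(i)}=\range(r_{i})$ is $\{0\}$ or $\emptyset$ as stated.

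Now each $W^{A}_{h(i)}$ is computable, so if $\comp$ were as claimed then $W^{A}_{\comp(h(i))}=\omega\setminus W^{A}_{h(i)}$ for every $i$; in particular $0\in W^{A}_{\comp(h(i))}$ exactly when $i\notin K^{A}$. Finally I would invoke the Enumeration Theorem~\ref{thm-enumce} (equivalently Corollary~\ref{cor-unien}): with the computable $B$ satisfying $W^{A}_{j}=\set{x}{(\exists t)\,\pair{j,x,t}\in B}$, the set $\set{i}{0\in W^{A}_{\comp(h(i))}}$ is a projection of a computable relation, hence c.e.\ by the Projection Theorem~\ref{thm-eq}. But this set equals $\omega\setminus K^{A}$, so $K^{A}$ would be both c.e.\ (Theorem~\ref{thm-K}) and co-c.e., hence computable --- contradicting Theorem~\ref{thm-K}.

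The argument is essentially the classical non-existence-of-effective-complementation proof; the one step that needs the machinery specific to this paper is the uniform passage $i\mapsto h(i)$, since we cannot simply apply the \emph{smn}-theorem to a c.e.\ relation as in the $\varphi$-based theory. I expect this to be the only real obstacle, and it is handled by feeding a uniform enumeration of the relevant extended graphs into Condition~(QGN~II) via Lemma~\ref{lem-methce}; the mild point to verify there is exactly that $r_{i}$ is either the constant function $0$ or nowhere defined, which is the dichotomy that lemma guarantees.
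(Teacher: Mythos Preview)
Your proposal is correct and follows essentially the same route as the paper: build, via Lemma~\ref{lem-methce} and Condition~(QGN~II), a total computable index function sending $i$ to a $W^{A}$-index of a fixed computable set depending only on whether $i\in K^{A}$, and then use the hypothetical $\comp$ together with Corollary~\ref{cor-unien} and the projection theorem to exhibit $\omega\setminus K^{A}$ as c.e., contradicting Theorem~\ref{thm-K}. The only cosmetic difference is that the paper chooses the pair $\omega/\emptyset$ (enumerating the identity when $i\in K^{A}$) whereas you choose $\{0\}/\emptyset$ (enumerating the constant $0$); both choices plug into the same scheme and yield the same contradiction.
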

\begin{proof}
Let $h \in \RRR^{(1)}$ enumerate the set $K^{A}$ and define
\begin{gather*}
Q(i, t, z) \Leftrightarrow (\exists a \le t)\, h(a) = i, \\
f(i, t, z) \Def 1+\pi^{(2)}_{1}((z)_{\lth(z)}).
\end{gather*}
Then the function $k \in \RRR^{(2)}$ constructed as in Lemma~\ref{lem-methce} is such that  for $i \in K^{A}$, $\lambda t.\ k(i, t)$ enumerates the extended graph of the identity function on $\omega$. For all other $i$, the extended graph of the nowhere defined function is enumerated. Let $v \in \RRR^{(1)}$ be as in Condition~(QGN~II) so that $\range(\lambda t.\ k(i, t)) = \egraph(\theta_{v(i)})$. Then, 
\[
W^{A}_{v(i)} = \begin{cases}
				\omega & \text{if $i \in K^{A}$}, \\
				\emptyset & \text{otherwise}.
		      \end{cases}
\]
Now, assume that there is such a function $\comp \in \RRR^{(1)}$. Then, 
\[
i \in \omega \setminus K^{A} \Leftrightarrow W^{A}_{\comp(v(i))} \neq \emptyset \Leftrightarrow (\exists x)\, x \in W^{A}_{\comp(v(i))}.
\]
With Corollary~\ref{cor-unice} and the projection theorem it follows that the complement of $K^{A}$ is c.e. Since $K^{A}$ is c.e. as well, $K^{A}$ is even computable, which is not the case.
\end{proof}

Next, we will derive the single-valuedness theorem (cf.\ \cite{ro67}). A set $C \subseteq \omega$ is called \emph{single-valued} if for every $x \in \omega$ there is at most one $y \in \omega$ so that $\pair{x, y} \in C$. Each single-valued set is thus the graph of a partial function. The single-valuedness theorem states the existence of an enumeration of all single-valued c.e. sets, which can also be considered as an enumeration of all partial functions with c.e. graph. We therefore derive a further version in which an enumeration of those single-valued c.e. sets is constructed that are graphs of the functions in $\SSS^{(1)}_{A}$. For $C \subseteq \omega$ let
\[
\dom(C) = \set{x}{(\exists y)\, \pair{x, y} \in C}.
\]

\begin{theorem}[Single-valuedness Theorem I]\label{thm-sv}
There is a function $\sv \in \RRR^{(1)}$ such that for $i \in \omega$,
\begin{enumerate}

\item \label{thm-sv-1}
$W^{A}_{\sv(i)}$ is single-valued.

\item  \label{thm-sv-2}
$W^{A}_{\sv(i)} \subseteq W^{A}_{i}$.

\item  \label{thm-sv-3}
$\dom(W^{A}_{\sv(i)}) = \dom(W^{A}_{i})$.

\item  \label{thm-sv-4}
If $W^{A}_{i}$ is single-valued then $W^{A}_{sv(i)} = W^{A}_{i}$.

\end{enumerate}
\end{theorem}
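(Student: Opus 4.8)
The plan is to follow the template of Theorems~\ref{thm-indoc} and~\ref{thm-cc}: from the enumeration of the extended graph of the universal function supplied by Condition~(QGN~I) I build, uniformly in $i$, a function $r_i \in \SSS^{(1)}_A$ whose range is the single-valued set we want, and then read off $\sv$ from Condition~(QGN~II). Fix $h \in \RRR^{(1)}$ enumerating $\egraph(\lambda i, x.\ \theta_i(x))$; scanning $h$ and retaining the entries that record a value of $\theta_i$ (those with $\pi^{(2)}_2(h(a)) > 0$ and $\pi^{(2)}_1(\pi^{(2)}_1(h(a))) = i$) yields an effective enumeration, possibly with repetitions, of $W^A_i = \range(\theta_i)$. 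Call $w \in W^A_i$ the \emph{first claimant} of $\pi^{(2)}_1(w)$ if no $w' \in W^A_i$ with $w' \ne w$ and $\pi^{(2)}_1(w') = \pi^{(2)}_1(w)$ appears in this enumeration strictly before the first appearance of $w$, and set
\[
V_i \Def \set{w \in W^A_i}{w \text{ is the first claimant of } \pi^{(2)}_1(w)}.
\]
Then $V_i$ is single-valued, $V_i \subseteq W^A_i$, $\dom(V_i) = \dom(W^A_i)$ (every element of $\dom(W^A_i)$ has a first claimant), and $V_i = W^A_i$ whenever $W^A_i$ is single-valued (then no two distinct elements share a first coordinate and nothing is discarded). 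So it suffices to produce a $\theta$-index of an enumeration of $V_i$ effectively in $i$.

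For this I use the scheme of Lemma~\ref{lem-methce}. Informally: run through $h$; when a fresh element $w$ of $W^A_i$ turns up at position $a$, output $w$ provided (i) no $w' \ne w$ with $\pi^{(2)}_1(w') = \pi^{(2)}_1(w)$ has been claimed at a position $< a$, and (ii) $w$ has not already been output, the latter read off the course of values $z$ exactly as with the ``not yet output'' guard in the proof of Theorem~\ref{thm-indoc} (the value recorded for a claimed $w$ is $w + 1$, so distinct elements are told apart). Both tests are decidable --- (i) a bounded search over $h$-positions below $a$, (ii) a search through $z$ --- so, letting $\widehat{Q}(i, a, z)$ say ``$h(a)$ records a value $w$ of $\theta_i$ and (i) and (ii) hold'', putting $Q(i, t, z) \Leftrightarrow (\exists a \le t)\ \widehat{Q}(i, a, z)$ and $f(i, t, z) \Def \pi^{(2)}_2(h(\mu a \le t.\ \widehat{Q}(i, a, z)))$ gives a computable $Q$ and an $f \in \RRR^{(3)}$ with $f(i, t, z) > 0$ on $Q$. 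Lemma~\ref{lem-methce} then yields $k \in \RRR^{(2)}$ such that $\lambda t.\ k(i, t)$ enumerates the extended graph of a function $r_i$ that is total if $W^A_i \ne \emptyset$ and nowhere defined otherwise; in either case $r_i \in \SSS^{(1)}_A$.

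The step needing real care is the identity $\range(r_i) = V_i$. The inclusion $\subseteq$ is immediate: test~(i) fails at every position at which a non-first-claimant is claimed, so only members of $V_i$ are ever output. For $\supseteq$, let $w \in V_i$ and let $a^*$ be the first position recording $w$. Test~(i) holds at $a^*$ and, depending only on a fixed finite initial segment of $h$, keeps holding there, while test~(ii) is monotone, turning from true to false only after $w$ has been output. Hence, as long as $w$ is still unoutput, every position $a' < a^*$ either never satisfies $\widehat{Q}$ or is ``retired'' within finitely many steps once the element it records gets output (a position cannot revive, $\widehat{Q}(i, a', \cdot)$ being a fixed conjunct conjoined with the monotone test~(ii)); so the bounded minimisation $\mu a \le t.\ \widehat{Q}(i, a, z)$ cannot remain below $a^*$ forever, it reaches $a^*$, and $w$ is output. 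This monotonicity argument is the main obstacle; the rest is bookkeeping. Finally, Condition~(QGN~II) applied to $k$ supplies $v \in \RRR^{(1)}$ with $\theta_{v(i)} = r_i$, whence $W^A_{v(i)} = \range(r_i) = V_i$; setting $\sv \Def v$, statements~(\ref{thm-sv-1})--(\ref{thm-sv-4}) are exactly the four properties of $V_i$ recorded in the first paragraph.
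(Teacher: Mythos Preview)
Your proposal is correct and follows essentially the same route as the paper: build, via the scheme of Lemma~\ref{lem-methce}, a total (or empty) enumeration of the desired single-valued subset of $W^A_i$, then read off $\sv$ from (QGN~II). The only difference is cosmetic: the paper's $\widehat{Q}(i,a,z)$ uses the single test ``the first component of the value recorded at $h(a)$ differs from the first components of all values already output'', whereas you split this into your tests~(i) and~(ii); both choices lead to the same set $V_i$ of first claimants, and your more explicit retirement argument for the inclusion $V_i\subseteq\range(r_i)$ is sound (the key point, which you identify correctly, is that $\widehat{Q}(i,a,\cdot)$ is antitone in the course of values, so the minimal active position is strictly increasing and reaches $a^*$ in finitely many steps).
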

\begin{proof}
Let again $h \in \RRR^{(1)}$ enumerate the extended graph of the universal function of $\theta$, and define
\begin{multline*}
\widehat{Q}(i, a, z) \Leftrightarrow \mbox{} \\
\hspace{2em} \pi^{(2)}_{1}(\pi^{(2)}_{1}(h(a))) = i \wedge \pi^{(2)}_{2}(h(a)) > 0 \wedge (\forall 1 \le c \le \lth(z))\, 
[\pi^{(2)}_{2}((z)_{c}) > 0 \Rightarrow \mbox{} \\
 \pi^{(2)}_{1}(\pi^{(2)}_{2}(h(a)) \prc 1) \neq \pi^{(2)}_{1}(\pi^{(2)}_{2}((z)_{c}) \prc 1)], 
\end{multline*}
\vspace{-1cm}
\begin{gather*}
Q(i, t, z) \Leftrightarrow (\exists a \le t)\, \widehat{Q}(i, a, z), \\
f(i, t, z) \Def \pi^{(2)}_{2}(h(\mu a \le t.\ \widehat{Q}(i, a, z))).
\end{gather*}
Let $k  \in \RRR^{(2)}$ be the function constructed as in Lemma~\ref{lem-methce} with respect to relation $Q$ and function $f$. By Condition~(QGN~II) there is then a function $\sv \in \RRR^{(1)}$ with $\range(\lambda t.\ k(i, t)) = \egraph(\theta_{\sv(i)})$. Moreover, it follows from its construction that for all $y < x$, if $\theta_{\sv(i)}(y) \neq \theta_{\sv(i)}(x)$ then also $\pi^{(2)}_{1}(\theta_{\sv(i)}(y)) \neq \pi^{(2)}_{1}(\theta_{\sv(i)}(x))$. Thus, $W^{A}_{\sv(i)}$ is single-valued. In addition, $W^{A}_{\sv(i)} \subseteq W^{A}_{i}$ and $\dom(W^{A}_{\sv(i)}) = \dom(W^{A}_{i})$. If $W^{A}_{i}$ is single-valued then $\theta_{\sv(i)}$ is an enumeration of $W^{A}_{i}$. Hence, $W^{A}_{\sv(i)} = W^{A}_{i}$ in this case.
\end{proof}

\begin{theorem}[Single-valuedness Theorem II]\label{thm-sg}
There is a function $\sg \in \RRR^{(1)}$ such that for $i \in \omega$,
\begin{enumerate}

\item \label{thm-sg-1}
$W^{A}_{\sg(i)}$ is single-valued.

\item  \label{thm-sg-2}
$W^{A}_{\sg(i)} \subseteq W^{A}_{i}$.

\item  \label{thm-sv-g}
$\dom(W^{A}_{\sg(i)})$ is either empty, equal to $\omega$, or an initial segment of $\omega$ with length in $A$.

\item  \label{thm-sg-4}
If $W^{A}_{i}$ is single-valued and $\dom(W^{A}_{i})$ either empty, equal to $\omega$, or an initial segment of $\omega$ with length in $A$, then $W^{A}_{sg(i)} = W^{A}_{i}$.

\end{enumerate}
\end{theorem}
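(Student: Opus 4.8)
The plan is to obtain $\sg$ from Theorem~\ref{thm-sv} together with one further application of the enumeration scheme of Lemma~\ref{lem-methce}. Fix $i$ and let $q$ be the partial function with $\graph(q) = W^{A}_{\sv(i)}$; this is well defined since $W^{A}_{\sv(i)}$ is single-valued, and $\dom(q) = \dom(W^{A}_{i})$ by Theorem~\ref{thm-sv}(\ref{thm-sv-3}). Using the operator $\widehat{M}^{(1)}_{A}$ of Section~\ref{sec-comp}, the goal is to manufacture, uniformly in $i$, a function $r$ that is total or nowhere defined (hence $r \in \SSS^{(1)}_{A}$) with $\range(r) = \graph(\widehat{M}^{(1)}_{A}(q))$ --- that is, $r$ enumerates the set obtained from $W^{A}_{i}$ by first thinning it to a single-valued set (which is what $\sv$ does) and then cutting the resulting function down to the longest initial segment of $\omega$ with length in $A$ that is contained in its domain. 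Once $r$ is at hand, Condition~(QGN~II) gives $\sg \in \RRR^{(1)}$ with $\theta_{\sg(i)} = r$, and therefore $W^{A}_{\sg(i)} = \range(r) = \graph(\widehat{M}^{(1)}_{A}(q))$.

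To build $r$, let $h \in \RRR^{(1)}$ enumerate the extended graph of the universal function of $\theta$ (Condition~(QGN~I)). I would let $\widehat{Q}(i,a,z)$ hold precisely when $\pi^{(2)}_{2}(h(a)) > 0$ and $\pi^{(2)}_{1}(\pi^{(2)}_{1}(h(a))) = \sv(i)$, so that $\pi^{(2)}_{2}(h(a)) \prc 1$ is a pair $\pair{u,v} \in W^{A}_{\sv(i)}$; the first component $u$ does not occur as the first component of any pair recorded in the course of values $z$; and there is $a' \in A$ with $a' \le a$ and $u < a'$ such that for every $z' < a'$ some $b \le a$ satisfies $\pi^{(2)}_{2}(h(b)) > 0$, $\pi^{(2)}_{1}(\pi^{(2)}_{1}(h(b))) = \sv(i)$ and $\pi^{(2)}_{1}(\pi^{(2)}_{2}(h(b)) \prc 1) = z'$. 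Putting $Q(i,t,z) \Leftrightarrow (\exists a \le t)\,\widehat{Q}(i,a,z)$ and $f(i,t,z) \Def \pi^{(2)}_{2}(h(\mu a \le t.\ \widehat{Q}(i,a,z)))$, one has $f(i,t,z) > 0$ whenever $Q(i,t,z)$, so Lemma~\ref{lem-methce} supplies $k \in \RRR^{(2)}$ with $\lambda t.\ k(i,t)$ enumerating $\egraph(r)$ for a function $r$ that is total or nowhere defined; the freshness clause makes $\range(r)$ single-valued, and the last clause lets exactly those $u \in \dom(\widehat{M}^{(1)}_{A}(q))$ contribute a pair, so $\range(r) = \graph(\widehat{M}^{(1)}_{A}(q))$ (recall $W^{A}_{\sv(i)}$ is single-valued, so the $v$ kept for $u$ is forced to be $q(u)$).

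Now for the verification. Statement~(\ref{thm-sg-1}) is the freshness clause, and~(\ref{thm-sg-2}) holds because $\range(r) \subseteq \range(\theta_{\sv(i)}) = W^{A}_{\sv(i)} \subseteq W^{A}_{i}$ by Theorem~\ref{thm-sv}(\ref{thm-sv-2}). For~(\ref{thm-sv-g}), since $\widehat{M}^{(1)}_{A}$ maps $\PPP\FFF^{(1)}$ into $\widehat{\SSS}^{(1)}_{A}$ we have $\widehat{M}^{(1)}_{A}(q) \in \widehat{\SSS}^{(1)}_{A}$, so $\dom(W^{A}_{\sg(i)}) = \dom(\widehat{M}^{(1)}_{A}(q))$ is empty, is $\omega$, or is an initial segment of $\omega$ with length in $A$. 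For~(\ref{thm-sg-4}): if $W^{A}_{i}$ is single-valued then $\graph(q) = W^{A}_{\sv(i)} = W^{A}_{i}$ by Theorem~\ref{thm-sv}(\ref{thm-sv-4}); if in addition $\dom(W^{A}_{i})$ is empty, $\omega$, or an initial segment with length in $A$, then $q \in \widehat{\SSS}^{(1)}_{A}$, whence $\widehat{M}^{(1)}_{A}(q) = q$ because $\widehat{M}^{(1)}_{A}$ is idempotent with image $\widehat{\SSS}^{(1)}_{A}$ and so is the identity there; thus $W^{A}_{\sg(i)} = \graph(q) = W^{A}_{i}$.

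The one genuinely delicate point is the last clause of $\widehat{Q}$, which must certify with the running bound $a$ that an entire initial segment $\{0, \ldots, a'-1\}$ lies in $\dom(W^{A}_{\sv(i)}) = \dom(W^{A}_{i})$ (using Theorem~\ref{thm-sv}(\ref{thm-sv-3})). This is a $\Sigma_{1}$ predicate in $u$, and one must argue that it accepts $u$ if and only if $u \in \dom(\widehat{M}^{(1)}_{A}(q))$: no argument outside this set is ever accepted, while every argument in it is eventually accepted because the bounds only relax --- and the case $\dom(W^{A}_{i}) = \omega$ relies on $A$ being infinite so that arbitrarily large $a' \in A$ are available. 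Everything else --- positivity of $f$ on $Q$, that $\range(r)$ really is a (single-valued) function graph, and the instantiation of Lemma~\ref{lem-methce} and Condition~(QGN~II) --- is routine and parallels the proof of Theorem~\ref{thm-sv}.
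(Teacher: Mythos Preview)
Your proof is correct and takes a genuinely different route from the paper's. The paper builds $\sg$ in a single pass directly from $W^{A}_{i}$: its predicate $\widehat{Q}$ forces the first components of the enumerated pairs to appear in the order $0,1,2,\ldots$ (the new pair's first component must be $0$ if nothing has been output yet, or one more than the last output's first component), and simultaneously checks that an entire initial segment of length in $A_{t}$ lies in $\dom(W^{A}_{i})$. This sequential discipline delivers single-valuedness and the initial-segment shape of the domain at once, without appealing to Theorem~\ref{thm-sv}. Your approach instead factors the construction: first apply $\sv$ to obtain a single-valued set, then use the operator $\widehat{M}^{(1)}_{A}$ to trim the domain, and finally enumerate the resulting graph via Lemma~\ref{lem-methce}. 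This is more modular and makes the role of each property transparent; the paper's version is more self-contained and avoids the dependency on the earlier theorem.

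Two small points worth tightening. First, your clause ``$a' \in A$ with $a' \le a$'' should be $a' \in A_{a}$ (or $A_{t}$), since $A$ is only c.e.; the paper is explicit about this. Second, your freshness clause is not really what secures single-valuedness of $W^{A}_{\sg(i)}$ --- that already follows from $\range(r) \subseteq W^{A}_{\sv(i)}$ --- but it is essential for \emph{completeness} of the enumeration: without it the $\mu a \le t$ in $f$ could keep selecting the same witness forever and miss other valid first components. Neither point affects the soundness of your argument.
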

\begin{proof}
Let $h \in \RRR^{(1)}$ enumerate the extended graph of the universal function of $\theta$, and define
\begin{align*}
\widehat{Q}(i, a, t, z)  & \mbox{} \Leftrightarrow \mbox{} \\
&\hspace{-3em}\pi^{(2)}_{1}(\pi^{(2)}_{1}(h(a))) = i \wedge \pi^{(2)}_{i} > 0 \wedge [[\pi^{(2)}_{2}((z)_{\lth(z)}) = 0 \wedge 
\pi^{(2)}_{1}(\pi^{(2)}_{2}(h(a)) \prc 1) = 0] \vee \mbox{} \\
& [\pi^{(2)}_{2}((z)_{\lth(z)}) = 0 \wedge  \pi^{(2)}_{1}(\pi^{(2)}_{2}(h(a)) \prc 1) = 1+\pi^{(2)}_{1}(\pi^{(2)}_{2}((z)_{\lth(z))}) \prc 1)]] \wedge \mbox{} \\
& (\exists c \in A_{t})\, [\pi^{(2)}_{1}(\pi^{(2)}_{2}(h(a)) \prc 1) < c \wedge (\forall x < c) (\exists b \le t)\, \pi^{(2)}_{1}(\pi^{(2)}_{1}(h(b))) = i \wedge \mbox{} \\
&\pi^{(2)}_{2}(h(b)) > 0 \wedge \pi^{2}_{1}(\pi^{(2)}_{2}(h(b)) \prc 1) = x],
\end{align*}
\vspace{-1cm}
\begin{gather*}
Q(i, t, z) \Leftrightarrow (\exists a \le t)\, \widehat{Q}(i, a, t ,z), \\
f(i, t, z) \Def \pi^{(2)}_{2}(h(\mu a \le t.\ \widehat{Q}(i, a, t, z))).
\end{gather*}
Furthermore, let $k \in \RRR^{(2)}$ be constructed according to Lemma~\ref{lem-methce} and $\sg \in \RRR^{(1)}$ the function existing by Condition~(QGN~II) so that $\lambda t.\ k(i, t)$ enumerates the extended graph of $\theta_{\sg(i)}$. If $\dom(W^{A}_{i})$ contains an initial segment of $\omega$ with length in $A$, then it follows from the construction of $k$ that $\pi^{(2)}_{1}(\theta_{\sg(i)}(0)) = 0$. Moreover, if $\theta_{\sg(i)}(x+1) \neq \theta_{\sg(i)}(x)$, then $\pi^{(2)}_{1}(\theta_{\sg(i)}(x+1)) = 1+\pi^{(2)}_{1}(\theta_{\sg(i)}(x))$. In addition, $\pi^{(2)}_{1}(\theta_{\sg(i)}(x))$ is smaller than the maximal length of the initial segments of $\omega$ that have a length in $A$ and are contained in $\dom(W^{A}_{i})$, if such a length exists at all. Therefore, $W^{A}_{\sg(i)}$ is single-valued and $\dom(W^{A}_{\sg(i)})$ either empty, equal to $\omega$, or an initial segment of $\omega$ with length in $A$. Furthermore, $W^{A}_{\sg(i)} \subseteq W^{A}_{i}$. Hence, if $W^{A}_{i}$ is single-valued so that $\dom(W^{A}_{i})$ is either empty, equal to $\omega$, or an initial segment of $\omega$ with length in $A$, then $W^{A}_{\sg(i)} = W^{A}_{i}$.
\end{proof}

With the help of the first single-valuedness theorem we are now able to derive the reduction principle.

\begin{theorem}[Reduction Principle] \label{thm-redp}
Let $B, C \subseteq \omega$ be c.e. Then there are disjoint c.e.\ subsets $B', C'$ of $B$ and $C$, respectively, so that $B' \cup C' = B \cup C$.
\end{theorem}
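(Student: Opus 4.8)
The plan is to derive this from the first Single-valuedness Theorem (Theorem~\ref{thm-sv}), which is the framework's replacement for the classical ``race the two enumerations'' argument. Here $B$ and $C$ are of course assumed c.e.\ (a c.e.\ $B' \subseteq B$ with $B' \cup C' = B \cup C$ forces $B \cup C$ to be c.e.). First I would form the tagged union
\[
E \Def \set{\pair{x, 0}}{x \in B} \cup \set{\pair{x, 1}}{x \in C}.
\]
Each summand is the image of a c.e.\ set under a total computable injection, hence c.e.\ by Theorem~\ref{thm-closce}, and so is their union by the same theorem; thus $E$ is c.e. Since the pairing function is injective, $\pi^{(2)}_{2}$ separates the two tags, so $\dom(E) = B \cup C$, with $\dom$ as in the paragraph preceding Theorem~\ref{thm-sv}. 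Because $W^{A}$ numbers all c.e.\ sets (Theorem~\ref{thm-eq}, $(\ref{pn-eq-1}) \Leftrightarrow (\ref{pn-eq-2})$, together with $\theta$ being onto $\SSS^{(1)}_{A}$), I may fix an index $i$ with $W^{A}_{i} = E$.

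Next I would apply Theorem~\ref{thm-sv} to this $i$ and set $E' \Def W^{A}_{\sv(i)}$; then $E'$ is single-valued, $E' \subseteq E$, and $\dom(E') = \dom(E) = B \cup C$. Put
\[
B' \Def \set{x}{\pair{x, 0} \in E'}, \qquad C' \Def \set{x}{\pair{x, 1} \in E'}.
\]
These are c.e.: if $E' \neq \emptyset$, write $E' = \range(g)$ with $g \in \RRR^{(1)}$ and let $f(t) \Def \pi^{(2)}_{1}(g(t)) + 1$ when $\pi^{(2)}_{2}(g(t)) = 0$ and $f(t) \Def 0$ otherwise; then $B' = \set{a}{a + 1 \in \range(f)}$ is c.e.\ by Theorem~\ref{thm-eq}, and $C'$ symmetrically. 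If $E' = \emptyset$ (equivalently $B = C = \emptyset$), all three sets are empty and there is nothing to prove.

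Finally I would verify the three assertions. From $E' \subseteq E$ we get $B' \subseteq B$ and $C' \subseteq C$. Single-valuedness of $E'$ says that for each $x$ at most one $y$ has $\pair{x, y} \in E'$, so in particular at most one of $\pair{x, 0}, \pair{x, 1}$ lies in $E'$; hence $B' \cap C' = \emptyset$. From $\dom(E') = B \cup C$, every $x \in B \cup C$ has $\pair{x, y} \in E'$ for some $y$, and $E' \subseteq E$ forces $y \in \{0, 1\}$, so $x \in B' \cup C'$; combined with $B' \subseteq B$ and $C' \subseteq C$ this gives $B' \cup C' = B \cup C$.

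I do not expect a genuine obstacle here: the whole argument rests on Theorem~\ref{thm-sv}, and what remains is bookkeeping --- choosing the encoding so that ``single-valued'' translates precisely into ``at most one surviving tag'', and keeping the $+1$ shifts in the $\range$/extended-graph conventions straight. One could avoid Theorem~\ref{thm-sv} and instead build, via Lemma~\ref{lem-methce}, an enumeration of the extended graph of the function sending $x$ to the winner of the race between the enumerations of $B$ and $C$, and then apply (QGN~II); but the route through the single-valuedness theorem is shorter and reuses work already done.
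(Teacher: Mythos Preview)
Your proof is correct and follows essentially the same route as the paper: form the tagged union, apply the first single-valuedness theorem, and read off $B'$ and $C'$ from the surviving tags. You supply more detail (the c.e.\ status of $B'$, $C'$ and the verification of the three clauses) than the paper does, and you rightly flag that $B$ and $C$ must be assumed c.e.
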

\begin{proof}
Let $X = \pair{B, \{ 0 \}} \cup \pair{C, \{ 1 \}}$. Then $X$ is c.e., say $X = W^{A}_{i}$. Let $X' \Def W^{A}_{\sv(i)}$ and  set $B' \Def \set{a}{\pair{a, 0} \in X'}$ as well as $C' \Def \set{b}{\pair{b, 1} \in X'}$. Then $B'$ and $C$ are as wanted.
\end{proof}

In what follows. let $\le_{m}$, $\le_{1}$, $\equiv_{m}$, $\equiv_{1}$ and $\equiv$, respectivly, denote $m$-reducibility, 1-reducibility, $m$-equivalence, 1-equivalence and computable isomorphism of sets and numberings, as usual. Since only total computable functions are involved in the corresponding definitions, these carry over unchanged to the theory under development here. The same holds for their well-known properties as well as the definition of $m$- and 1-completeness. We don't want to go into detail about this. Let 
\begin{align*}
&K^{A}_{0} \Def \set{\pair{i, x}}{x \in W^{A}_{i}}, \\
&K^{A}_{1} \Def \set{\pair{i, x}}{x \in \dom(\theta_{i})}, \\
&K^{A}_{2} \Def \set{i}{i \in \dom(\theta_{i})}.
\end{align*}

\begin{theorem}\label{thm-K-cpl}
$K^{A}$, $K^{A}_{0}$, $K^{A}_{1}$ and  $K^{A}_{2}$ are 1-complete.
\end{theorem}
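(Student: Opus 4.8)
\emph{Proof proposal.} The plan is, for each of the four sets, to verify that it is computably enumerable and then to exhibit, for an arbitrary c.e.\ set $C$, a one-to-one total computable $1$-reduction of $C$ to it. It will be convenient to treat all four reductions at once by producing a single index family.

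That the four sets are c.e.\ is mostly already available: $K^A$ is c.e.\ by Theorem~\ref{thm-K}, and $K^A_0$ is c.e.\ by Corollary~\ref{cor-unice}. For $K^A_1$ and $K^A_2$, fix by Condition~(QGN~I) an enumeration $h \in \RRR^{(1)}$ of the extended graph of $\lambda i, x.\ \theta_i(x)$; then $x \in \dom(\theta_i)$ iff $(\exists t)\,[\pi^{(2)}_1(h(t)) = \pair{i, x} \wedge \pi^{(2)}_2(h(t)) > 0]$, so $K^A_1$ is c.e.\ by the projection theorem (Theorem~\ref{thm-eq}), and $K^A_2 = \set{i}{\pair{i,i} \in K^A_1}$ is then c.e.\ as well.

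Now let $C \subseteq \omega$ be an arbitrary c.e.\ set. The key step is to construct a one-to-one $w \in \RRR^{(1)}$ such that $\theta_{w(i)}$ is the identity function on $\omega$ when $i \in C$ and the nowhere defined function when $i \notin C$; since the identity lies in $\RRR^{(1)} \subseteq \SSS^{(1)}_A$ and the nowhere defined function lies in $\anf^{(1)}_A \subseteq \SSS^{(1)}_A$, this is legitimate. If $C = \emptyset$, take $w$ to be a one-to-one function all of whose values are $\theta$-indices of the nowhere defined function (available by padding, Theorem~\ref{thm-pad}, or from Corollary~\ref{cor-indinj} applied to a constant index function). Otherwise write $C = \range(c)$ with $c \in \RRR^{(1)}$ and apply Lemma~\ref{lem-methce} with $Q(i,t,z) \Leftrightarrow (\exists a \le t)\, c(a) = i$ and $f(i,t,z) \Def 1 + \pi^{(2)}_1((z)_{\lth(z)})$; this is exactly the construction carried out in the proof of Theorem~\ref{thm-cc}, and it yields $k \in \RRR^{(2)}$ such that $\lambda t.\ k(i,t)$ enumerates the extended graph of the identity on $\omega$ when $i \in C$ and the extended graph of the nowhere defined function when $i \notin C$. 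By Condition~(QGN~II) there is $v \in \RRR^{(1)}$ with $\range(\lambda t.\ k(i,t)) = \egraph(\theta_{v(i)})$, and by Corollary~\ref{cor-indinj} we may take $v$ one-to-one without changing the functions $\theta_{v(i)}$; put $w \Def v$. In all cases $W^A_{w(i)} = \range(\theta_{w(i)})$ and $\dom(\theta_{w(i)})$ both equal $\omega$ if $i \in C$ and $\emptyset$ if $i \notin C$.

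From this the four reductions are immediate. The map $i \mapsto w(i)$ is total, one-to-one and computable, and $i \in C \Leftrightarrow w(i) \in W^A_{w(i)} \Leftrightarrow w(i) \in K^A$ as well as $i \in C \Leftrightarrow w(i) \in \dom(\theta_{w(i)}) \Leftrightarrow w(i) \in K^A_2$, so $C \le_1 K^A$ and $C \le_1 K^A_2$. Likewise $i \mapsto \pair{w(i), 0}$ is total, one-to-one (as $\pair{\cdot,\cdot}$ and $w$ are) and computable, and $i \in C \Leftrightarrow 0 \in W^A_{w(i)} \Leftrightarrow \pair{w(i),0} \in K^A_0$ and $i \in C \Leftrightarrow 0 \in \dom(\theta_{w(i)}) \Leftrightarrow \pair{w(i),0} \in K^A_1$, so $C \le_1 K^A_0$ and $C \le_1 K^A_1$. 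As $C$ was arbitrary, all four sets are $1$-complete.

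The only genuine subtlety — and the reason this is not a one-line appeal to the \emph{smn}-theorem — is that the natural two-argument map $\lambda i, y.\ (0 \text{ if } i \in C,\ \text{undefined otherwise})$ does not belong to $\SSS^{(2)}_A$, since its domain $C \times \omega$ is in general not an initial segment of $\omega^2$; hence the index family $w$ has to be obtained by directly enumerating function graphs (Lemma~\ref{lem-methce}) together with Condition~(QGN~II), rather than by the \emph{smn}-theorem as one would in the classical G\"odel-numbering setting. The empty-set case also has to be handled separately, as there is no total computable enumeration of $\emptyset$.
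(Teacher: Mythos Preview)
Your proof is correct. The key construction---building, via the graph-enumeration schema and Condition~(QGN~II), a one-to-one index map $w$ with $\theta_{w(i)}$ equal to the identity when $i \in C$ and nowhere defined otherwise---is exactly the device the paper uses, and the four reductions you draw from it are all valid.

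The organization differs mildly from the paper's. The paper first shows $K^A_0$ is $1$-complete by the trivial reduction $x \mapsto \pair{j,x}$ for $B = W^A_j$ (no machinery needed), then builds the identity/nowhere-defined index family once with $K^A_0$ as the input set to obtain $K^A_0 \le_1 K^A$ and (with the constant-zero variant) $K^A_0 \le_1 K^A_1$, and finally chains $K^A \le_1 K^A_2$. You instead build the index family once for an arbitrary c.e.\ $C$ and reduce $C$ directly to all four targets. Your route is slightly more uniform and avoids the chaining; the paper's route gets $K^A_0$ for free without invoking the schema at all. Both rest on the same idea, so this is a presentational rather than a mathematical difference.
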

\begin{proof}
The proof of completeness proceeds as usual. By Corollary~\ref{cor-unice}, $K^{A}_{0}$ is c.e. Let $B$ be a c.e.\ set, say $B = W^{A}_{j}$. Then $\lambda x.\ \pair{j, x}$ 1-reduces $B$ to $K^{A}_{0}$,

Since $K^{A}$ is c.e., it suffices to show that $K^{A}_{0} \le_{1} K^{A}$. Let $h \in \RRR^{(1)}$ enumerate the set $K^{A}_{0}$ and 
\begin{gather*}
Q(i, t, z) \Leftrightarrow (\exists a \le t)\, h(a) = 1, \\
f(i, t, z) \Def \pi^{(2)}_{1}(z) + 1.
\end{gather*}
Then it follows for the function $k \in \RRR^{(2)}$ constructed according to Lemma~\ref{lem-meth1} that $\lambda t.\ k(i,t)$ enumerates the extended graph of the identity on $\omega$, in case that $i \in K^{A}_{0}$. Otherwise, it enumerates the extended graph of the nowhere defined function. By applying Condition~(QGN~II) we now obtain a function $v \in \RRR^{(1)}$ so that $\range(\lambda t.\ k(i,t)) = \egraph(\theta_{v(i)})$. As we have already seen, we can assume $v$ to be one-to-one. Observe that
\[
\pair{j, x} \in K^{A}_{0} \Leftrightarrow (\forall y)\, \theta_{v(\pair{j, x})}(y) = y \Leftrightarrow v(\pair{j, x}) \in W^{A}_{v(j, x)} \Leftrightarrow v(\pair{j, x}) \in K^{A}.
\]
It follows that $K^{A}_{0} \le_{1} K^{A}$.

Because of Condition~(QGN~I) there is an enumeration $\hat{h} \in \RRR^{(1)}$ of the extended graph of the universal function of numbering $\theta$. Then 
\[
\pair{i, x} \in K^{A}_{i} \Leftrightarrow (\exists a)\, \pi^{(2)}_{1}(\pi^{(2)}_{1}(\hat{h}(a))) = i \wedge \pi^{(2)}_{2}(\pi^{(2)}_{1}(\hat{h}(a))) = x \wedge \pi^{(2)}_{2}(\hat{a}) > 0, 
\]
from which it follows that $K^{A}_{1}$ is c.e. We show that $K^{A}_{0} \le_{1} K^{A}_{1}$. Let to this end the relation $Q$ be as defined above and set $f(i, t, z) \Def 1$. Moreover, construct the function $k \in \RRR^{(2)}$ according Lemma~\ref{lem-meth1} on this basis. Then $\lambda t.\ k(i, t)$ enumerates the extended graph of the function $\lambda x.\ 0$, in case that $i \in K^{A}_{0}$. Otherwise, it enumerates the extended graph of the nowhere defined function. Let $v \in \RRR^{(1)}$ as in Condition~(QGN~II). Then $\range(\lambda t.\ k(i, t)) = \egraph(\theta_{v(i)})$. Moreover, we have that
\[
i \in K^{A}_{0} \Leftrightarrow (\forall y)\, \theta_{v(i)} = 0 \Leftrightarrow v(i) \in \dom(\theta_{v(i)}) \Leftrightarrow \pair{v(i), v(i)} \in K^{A}_{1}.
\]
Hence, $K^{A}_{0} \le_{1} K^{A}_{1}$.

Since $K^{A}_{1}$ is c.e., the same holds for $K^{A}_{2}$. In addition, we have
\[
i \in K^{A} \Leftrightarrow \pair{i, i} \in K^{A}_{0} \Leftrightarrow \pair{v(i), v(i)} \in K^{A}_{1} \Leftrightarrow v(\pair{i, i}) \in K^{A}_{2}.
\]
Thus, $K^{A} \le_{1} K^{A}_{2}$, which shows that also $K^{A}_{2}$ is 1-complete.
\end{proof}

In the classical theory of c.e.\ sets,  the sets mentioned in the theorem above $K^{A}$, $K^{A}_{0}$, $K^{A}_{1}$ and $K^{A}_{2}$ are also shown to be 1-complete. Since it is the aim of the present paper to show that this theory can as well be developed on the basis of the functions in $\SSS^{(1)}_{A}$ and quasi-G\"odel numberings, the 1-completeness of $K^{A}$ and $K^{A}_{0}$ is what was expected. The 1-completeness of $K^{A}_{1}$ and $K^{A}_{2}$, however, is less obvious because of the special form of the domains of the functions in $\SSS^{(1)}_{A}$.

\begin{corollary} \label{cor-K1eq}
$K^{A} \equiv_{1} K^{A}_{0}  \equiv_{1} K^{A}_{1} \equiv_{1} K^{A}_{2}$.
\end{corollary}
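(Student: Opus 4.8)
The plan is to read the statement off directly from Theorem~\ref{thm-K-cpl} together with the elementary properties of $1$-reducibility. Recall that a set $X \subseteq \omega$ is \emph{$1$-complete} when $X$ is c.e.\ and every c.e.\ set $1$-reduces to $X$; Theorem~\ref{thm-K-cpl} asserts precisely that each of $K^{A}$, $K^{A}_{0}$, $K^{A}_{1}$ and $K^{A}_{2}$ has this property.

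The argument then runs as follows. Let $X$ and $Y$ be any two $1$-complete sets. Since $X$ is c.e.\ and $Y$ is $1$-complete, $X \le_{1} Y$; symmetrically, since $Y$ is c.e.\ and $X$ is $1$-complete, $Y \le_{1} X$; hence $X \equiv_{1} Y$ by the definition of $\equiv_{1}$. Applying this to the consecutive pairs $(K^{A}, K^{A}_{0})$, $(K^{A}_{0}, K^{A}_{1})$ and $(K^{A}_{1}, K^{A}_{2})$ gives the chain of $1$-equivalences in the corollary; transitivity of $\equiv_{1}$ (which holds because the composition of total computable injections is again a total computable injection) then also yields $K^{A} \equiv_{1} K^{A}_{2}$ and the remaining equivalences.

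There is essentially no obstacle: beyond Theorem~\ref{thm-K-cpl}, the only facts used are that $\le_{1}$ and $\equiv_{1}$ behave exactly as in classical recursion theory, and the text has already observed that this is so, since only total computable functions enter their definitions. One could even strengthen the conclusion to computable isomorphism, $K^{A} \equiv K^{A}_{0} \equiv K^{A}_{1} \equiv K^{A}_{2}$, by invoking the Myhill isomorphism theorem (any two $1$-equivalent sets are computably isomorphic), whose proof likewise transfers verbatim to the present setting; I would add this only as a concluding remark rather than fold it into the statement.
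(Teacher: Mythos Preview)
Your proposal is correct and matches the paper's approach: the corollary is stated without proof immediately after Theorem~\ref{thm-K-cpl}, and the intended argument is precisely that any two $1$-complete sets are mutually $1$-reducible, hence $1$-equivalent. Your remark about upgrading to computable isomorphism via Myhill is also in line with the paper, which invokes Myhill's theorem a bit later in the same section.
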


The notion of productive set is now introduced as usual. $C \subseteq\omega$ is \emph{$A$-productive}, if there is some $p \in \SSS^{(1)}_{A}$ such that for all $i \in \omega$, if $W^{A}_{i} \subseteq C$ then $p(i)\conv \in C \setminus W^{A}_{i}$. Since $\emptyset \subseteq C$ and $\set{i}{W^{A}_{i} = \emptyset}$ is infinite by the padding lemma, we have that $\dom(p)$ is infinite as well. Therefore $p$ cannot be an initial segment function. That is, $p \in \RRR^{(1)}$.

\begin{proposition}
$C \subseteq \omega$ is $A$-productive if, and only if, there is a total function $p \in \RRR^{(1)}$  so that $p(i) \in C \setminus W^{A}_{i}$, for $i \in \omega$ such that $W^{A}_{i} \subseteq C$.
\end{proposition}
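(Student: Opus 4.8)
The plan is to recognise that this proposition is essentially a repackaging of the remark printed immediately before it (and that the side condition in the displayed equivalence should read ``$W^{A}_{i}\subseteq C$'', to match the definition of $A$-productivity; for $i$ with $C\subseteq W^{A}_{i}$ the set $C\setminus W^{A}_{i}$ is empty, so no value of $p(i)$ could serve). The implication from right to left is immediate: since $\RRR^{(1)}\subseteq\SSS^{(1)}_{A}$ and a total function is everywhere defined, any $p\in\RRR^{(1)}$ with $p(i)\in C\setminus W^{A}_{i}$ whenever $W^{A}_{i}\subseteq C$ is at the same time a function in $\SSS^{(1)}_{A}$ witnessing, via the definition, that $C$ is $A$-productive.

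For the converse I would argue as follows. Suppose $C$ is $A$-productive and let $p\in\SSS^{(1)}_{A}$ be a witnessing function, i.e.\ $p(i)\conv\in C\setminus W^{A}_{i}$ for every $i$ with $W^{A}_{i}\subseteq C$. Since $\emptyset\subseteq C$, the function $p$ is in particular defined on every index $i$ with $W^{A}_{i}=\emptyset$, i.e.\ on every index of the nowhere defined function $\theta_{i_{0}}$ (which lies in $\SSS^{(1)}_{A}$, as the empty function belongs to $\anf^{(1)}_{A}$). By the Padding Lemma (Theorem~\ref{thm-pad}) there are infinitely many such indices, so $\dom(p)$ is infinite. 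But $\SSS^{(1)}_{A}=\RRR^{(1)}\cup\anf^{(1)}_{A}$, and every function in $\anf^{(1)}_{A}$ is an initial segment function and hence has finite domain. Therefore $p\in\RRR^{(1)}$, and this total computable function has exactly the property asserted on the right-hand side.

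There is no real obstacle here; the argument is a short observation. The one point that deserves a word is precisely why $\dom(p)$ is forced to be \emph{infinite} rather than merely nonempty — and this is what the Padding Lemma supplies, since it produces infinitely many numbers all naming $\theta_{i_{0}}$, hence infinitely many $i$ with $W^{A}_{i}=\emptyset\subseteq C$. Once the domain of $p$ is known to be infinite, the disjointness of $\RRR^{(1)}$ and $\anf^{(1)}_{A}$ in the range of values of the domain (infinite versus finite) does the rest.
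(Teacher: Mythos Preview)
Your proof is correct and follows exactly the argument the paper gives in the paragraph immediately preceding the proposition: the Padding Lemma yields infinitely many indices $i$ with $W^{A}_{i}=\emptyset\subseteq C$, forcing $\dom(p)$ to be infinite and hence $p\in\RRR^{(1)}$. You are also right that the displayed condition ``$W^{A}_{i}\subseteq\omega$'' is a typo for ``$W^{A}_{i}\subseteq C$''.
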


As usual it can moreover be shown that $p$ can even be chosen as one-to-one and onto, and that $A$-productiveness is inherited upwards under $m$-reducibility. 

\begin{theorem}
Every $A$-productive set has an infinite c.e.\ subset.
\end{theorem}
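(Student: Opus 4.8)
The plan is to carry over the classical argument that productive sets possess infinite c.e.\ subsets, replacing appeals to classical computability theory by the index operations established above. So let $C$ be $A$-productive and let $p$ be a productive function for it. As already observed just before the statement, $p$ is necessarily total, so $p \in \RRR^{(1)}$ and $p(i) \in C \setminus W^{A}_{i}$ whenever $W^{A}_{i} \subseteq C$.

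First I would record two auxiliary index functions. The function $\lambda a, x.\ a$ lies in $\RRR^{(2)} \subseteq \SSS^{(2)}_{A}$, hence has a $\theta^{(2)}$-index $e$, and the \emph{smn}-theorem (Theorem~\ref{thm-smn}, with $m=n=1$) gives $c \in \RRR^{(1)}$ with $\theta^{(1)}_{c(a)}(x) = \theta^{(2)}_{e}(a, x) = a$, so that $W^{A}_{c(a)} = \{a\}$. Combining this with the union operation $\un$ from Theorem~\ref{thm-indoc}(\ref{thm-indoc-2}), define
\[
g(e') \Def \un\bigl(e', c(p(e'))\bigr);
\]
then $g \in \RRR^{(1)}$ and $W^{A}_{g(e')} = W^{A}_{e'} \cup \{p(e')\}$ for every $e'$. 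Also fix an index $e_{0}$ with $W^{A}_{e_{0}} = \emptyset$; such an index exists because the nowhere defined function belongs to $\anf^{(1)}_{A} \subseteq \SSS^{(1)}_{A}$.

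Next I would iterate $g$ by primitive recursion, putting $f(0) \Def e_{0}$ and $f(k+1) \Def g(f(k))$, so $f \in \RRR^{(1)}$. A simultaneous induction on $k$ then shows
\[
W^{A}_{f(k)} = \set{p(f(j))}{j < k} \quad\text{and}\quad W^{A}_{f(k)} \subseteq C .
\]
For $k = 0$ both assertions are clear. Assuming them for $k$: since $W^{A}_{f(k)} \subseteq C$, $A$-productivity of $C$ yields $p(f(k)) \in C \setminus W^{A}_{f(k)}$, whence $W^{A}_{f(k+1)} = W^{A}_{f(k)} \cup \{p(f(k))\}$ is again contained in $C$ and equals $\set{p(f(j))}{j < k+1}$. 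In particular $p(f(k)) \notin \set{p(f(j))}{j < k}$, so the numbers $p(f(0)), p(f(1)), \dots$ are pairwise distinct.

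Finally, set $D \Def \range(\lambda k.\ p(f(k)))$. Being the range of a total computable function, $D$ is c.e.; it is infinite since its members are pairwise distinct; and $D \subseteq C$ because every $p(f(k))$ lies in $C$ by the induction above. Hence $D$ is an infinite c.e.\ subset of $C$, as required. The argument is essentially routine; the only place where the machinery of the previous sections is genuinely needed is in obtaining the \emph{total} index operations $c$ and $\un$, and hence $g$, and the one point to get right is the simultaneous induction ensuring that the growing finite sets $W^{A}_{f(k)}$ never escape $C$.
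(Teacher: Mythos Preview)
Your proof is correct and follows essentially the same argument as the paper: build a singleton-index function, combine it with the union operation $\un$ and the productive function $p$ to recursively grow a chain of finite subsets of $C$, and take the range of $p$ composed with the iterator. The only cosmetic difference is that the paper obtains the singleton-index function directly via (QGN~II) by writing down an explicit enumeration of $\egraph(\lambda x.\ i)$, whereas you route through the \emph{smn}-theorem; both are equally valid in this framework.
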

\begin{proof}
Define $k \in \RRR^{(2)}$ by k$(i, 2t) \Def \pair{t, 0}$ and $k(i, 2t+1) \Def \pair{t, i+1}$. Then $k$ enumerates the extended graph of $\lambda x.\ i$. Now, let $v \in 
\RRR^{(1)}$ be as in Condition~(QGN~II) so that $\theta_{v(i)} = \lambda x.\ i$ and hence $W^{A}_{v(i)} = \{ i \}$. Assume that $C \subseteq \omega$ is $A$-productive with productive function $p \in \RRR^{(1)}$ and let $\un \in \RRR^{(2)}$ be as in Theorem~\ref{thm-indoc} with $W^{A}_{\un(i,j)} W^{A}_{i} \cup W^{A}_{j}$. Moreover, let $i_{0}$ be a $W^{A}$-index of the empty set.  Set
\begin{align*}
&g(0) = i_{0}, \\
&g(a+1) = \un(v(p(g(a))), g(a)).
\end{align*}
Then $g \in \RRR^{(1)}$. In addition, 
\[
W^{A}_{g(a+1)} = \{ p(g(a)) \} \cup W^{A}_{g(a)} = \{ p(g(a)), \ldots, p(g(0)) \}
\]
and $p(g(a)) \in C \setminus W_{g(a)}$. Thus, by defining $g' = p \circ g$ we obtain a one-to-one total computable function and consequently, $\range(g')$ is an infinite c.e.\ subset of $C$.
\end{proof}

 \begin{theorem}\label{thm-prodK}
 \begin{enumerate}
 
 \item \label{thm-prodK-1}
$\omega \setminus K^{A}$ is $A$-productive.

\item \label{thm-prodK-2}
$\text{$C$ $A$-productive} \Leftrightarrow (\omega \setminus K) \le_{1} C \Leftrightarrow (\omega \setminus K^{A}) \le_{m} C$.

\end{enumerate}
\end{theorem}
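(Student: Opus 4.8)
The plan is to dispatch assertion~(\ref{thm-prodK-1}) directly and to obtain assertion~(\ref{thm-prodK-2}) by proving the cycle
\[
\text{$C$ is $A$-productive} \ \Longrightarrow\ (\omega\setminus K^{A}) \le_{1} C \ \Longrightarrow\ (\omega\setminus K^{A}) \le_{m} C \ \Longrightarrow\ \text{$C$ is $A$-productive} .
\]
For~(\ref{thm-prodK-1}) I would check that the identity function $\lambda i.\, i$, which lies in $\RRR^{(1)} \subseteq \SSS^{(1)}_{A}$, is a productive function for $\omega\setminus K^{A}$: if $W^{A}_{i} \subseteq \omega\setminus K^{A}$ then $i \notin W^{A}_{i}$, for otherwise $i \in K^{A}$ although $i \in W^{A}_{i}\subseteq \omega\setminus K^{A}$; hence $i \notin K^{A}$, so $i \in (\omega\setminus K^{A})\setminus W^{A}_{i}$, as required. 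In the cycle, the middle implication is immediate since $\le_{1}$ refines $\le_{m}$, and the last one follows from~(\ref{thm-prodK-1}) together with the fact, noted above, that $A$-productiveness is inherited upwards along $m$-reductions.

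The substantive part is the first implication. So let $p$ be a productive function for $C$; by the remark above it may be taken one-to-one, and it is automatically total, hence $p \in \RRR^{(1)}$. I would first build an index function $\delta \in \RRR^{(2)}$ with $W^{A}_{\delta(e, n)} = \{p(e)\}$ if $n \in W^{A}_{n}$ and $W^{A}_{\delta(e, n)} = \emptyset$ otherwise. This proceeds exactly as in the earlier proofs of the paper: by Condition~(QGN~I) fix $h \in \RRR^{(1)}$ enumerating the extended graph of the universal function of $\theta$; from $h$ one can search, in the limit, for a witness that $n \in \range(\theta_{n})$, and Lemma~\ref{lem-meth1} then yields $k \in \RRR^{(2)}$ such that $\lambda t.\ k(\pair{e,n},t)$ enumerates the extended graph of the total function $\lambda x.\ p(e)$ once that search succeeds, and of the nowhere defined function otherwise; Condition~(QGN~II) converts this uniform enumeration into $\delta$.

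Next I would feed $\delta$ into the Recursion Theorem~\ref{thm-fp}, choosing the fixed-point function one-to-one, as is possible by the discussion following Corollary~\ref{cor-indinj}, to obtain a one-to-one $g \in \RRR^{(1)}$ with $\theta_{g(n)} = \theta_{\delta(g(n),n)}$, so that $W^{A}_{g(n)} = \{p(g(n))\}$ when $n \in K^{A}$ and $W^{A}_{g(n)} = \emptyset$ when $n \notin K^{A}$. Then $\bar{f} \Def p\circ g \in \RRR^{(1)}$ is one-to-one (both $p$ and $g$ are) and is the desired reduction: if $n \notin K^{A}$ then $W^{A}_{g(n)} = \emptyset \subseteq C$, so productiveness gives $\bar{f}(n) = p(g(n)) \in C$; if $n \in K^{A}$ and $\bar{f}(n) \in C$ then $W^{A}_{g(n)} = \{\bar{f}(n)\} \subseteq C$, whence productiveness forces $p(g(n)) \in C \setminus\{\bar{f}(n)\}$, i.e.\ $\bar{f}(n)\neq\bar{f}(n)$, absurd, so $\bar{f}(n)\notin C$. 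Hence $n \in \omega\setminus K^{A} \Leftrightarrow \bar{f}(n)\in C$.

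I expect the one genuinely delicate point to be the construction of $\delta$: since the numberings here carry no universal function inside $\SSS^{(1)}_{A}$, the case distinction ``output $\{p(e)\}$ or $\emptyset$'' cannot simply be produced by the \emph{smn}-theorem but must be realised through an explicit enumeration of an extended graph handed to Condition~(QGN~II), with care that the function produced in the positive branch is genuinely total (namely $\lambda x.\ p(e)$), so that it lies in $\SSS^{(1)}_{A}$, and that every index function in sight---in particular the fixed point delivered by the recursion theorem---is chosen injective, which is exactly what lifts the conclusion from $\le_{m}$ to $\le_{1}$.
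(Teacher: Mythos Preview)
Your proposal is correct and follows essentially the same route as the paper: the identity serves as productive function for $\omega\setminus K^{A}$, and for the main implication you build (via Lemma~\ref{lem-meth1} and (QGN~II)) an index function producing $\{p(e)\}$ or $\emptyset$ according to membership in $K^{A}$, then apply the recursion theorem with a one-to-one fixed point and finish with $p\circ g$. The paper does exactly this, with only cosmetic differences in notation (the roles of the two parameters in your $\delta$ are swapped relative to the paper's $v$), and your anticipation that the construction must go through extended-graph enumerations rather than a naive \emph{smn} application is precisely the point the paper handles.
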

\begin{proof}
(\ref{thm-prodK-1}) follows by choosing $\lambda x. i$ as productive function. 

(\ref{thm-prodK-2}) Since $A$-productiveness is inherited upwards under $m$-reducibility, we have that, if $(\omega \setminus K^{A}) \le_{m} C$ then $C$ is $A$-productive. We will now show that for every $A$-productive set $C$, $(\omega \setminus K) \le_{1} C$.

Let $p \in \RRR^{(1)}$ be a one-to-one productive function of $C$ and $g \in \RRR^{(1)}$ enumerate set $K^{A}$. Moreover, let $f(\pair{i,j}, t, z) \Def p(j) + 1$ and
\[
Q(\pair{i, j}, t, z) \Leftrightarrow (\exists a \le t)\, g(a) = i.
\]
For the function $k \in \RRR^{(2)}$ constructed according to Lemma~\ref{lem-meth1}, it then holds that in case $i \in K^{A}$, $\lambda t.\ k(\pair{i, j}, t)$ enumerates the extended graph of the function $\lambda x.\ p(j)$ and otherwise the extended graph of the nowhere defined function. If $v$ is the function existing for this $k$ according to (QGN~II), then we have for $i \in K$ that $\theta_{v(\pair{i, j})}(x) = p(j)$. Otherwise, $\theta_{v(\pair{i, j})}(x)$ is undefined. By the recursion theorem there is now a function $g\in \RRR(1)$ with $\theta_{g(i)} = \theta_{v(\pair{i, g(i)})}$, and as we have seen there is even a one-to-one function $g$ with this property. It follows that
\[
W^{A}_{g(i)} = \begin{cases}
				\{ p(g(i)) \} & \text{if $i \in K^{A}$}, \\
				\emptyset & \text{otherwise}.
			\end{cases}
\]
We therefore obtain
\begin{align*}
i \in K^{A} 
&\Rightarrow W^{A}_{g(i)} = \{ p(g(i)) \} \\
&\Rightarrow W^{A}_{g(i)} \not\subseteq C \quad \text{(as $p$ is a productive function of $C$)} \\
&\Rightarrow p(g(i)) \notin C \\
\intertext{and}
i \notin K^{A} &\Rightarrow W^{A}_{g(i)} = \emptyset \Rightarrow W^{A}_{g(i)} \subseteq C \Rightarrow p(g(i)) \in C.
\end{align*}
Since $p \circ g$ is one-to-one, total and computable, this proves that $(\omega \setminus K) \le_{1} C$.
\end{proof}

If $C \subseteq \omega$ is a c.e.\ set the complement of which is $A$-productive, $C$ is called \emph{$A$-creative}. From the above results it follows that $K^{A}$ is $A$-creative. Note that Myhill's theorem on the coincidence of the notions of 1-equivalence and computable isomorphism also holds in the approach to the theory of c.e.\ sets presented here: the proof in \cite{ro67} uses only arguments which are admissible in our approach as well. Therefore, we obtain the following characterisation of the $A$-creative sets.

\begin{theorem}\label{thm-crea}
Let $C \subseteq \omega$. Then the following  four statements are equivalent:
\begin{enumerate}

\item $C$ is 1-complete.

\item $C$ is $m$-complete.

\item $C$ is $A$-creative.

\item $C \equiv K^{A}$.

\end{enumerate}
\end{theorem}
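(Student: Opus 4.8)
The plan is to establish the cycle of implications $(1) \Rightarrow (2) \Rightarrow (3) \Rightarrow (4) \Rightarrow (1)$, using Theorems~\ref{thm-K}, \ref{thm-K-cpl} and \ref{thm-prodK} together with the fact, recorded just before the statement, that Myhill's theorem (the coincidence of $1$-equivalence with computable isomorphism) is available here. Throughout I keep in mind that in each of the four clauses $C$ is a c.e.\ set: this is built into the definitions of $m$- and $1$-completeness and of $A$-creativeness, and it is preserved by computable isomorphism because $K^{A}$ is c.e.

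The implication $(1) \Rightarrow (2)$ is immediate, since every $1$-reduction is an $m$-reduction. For $(2) \Rightarrow (3)$, assume $C$ is $m$-complete. By Theorem~\ref{thm-K}, $K^{A}$ is c.e., so $K^{A} \le_{m} C$, say via $f \in \RRR^{(1)}$; then $x \notin K^{A}$ iff $f(x) \notin C$, so the same $f$ witnesses $(\omega \setminus K^{A}) \le_{m} (\omega \setminus C)$. By Theorem~\ref{thm-prodK}(\ref{thm-prodK-2}) this says exactly that $\omega \setminus C$ is $A$-productive, and since $C$ is c.e., $C$ is $A$-creative.

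For $(3) \Rightarrow (4)$, suppose $C$ is $A$-creative, so $\omega \setminus C$ is $A$-productive. By Theorem~\ref{thm-prodK}(\ref{thm-prodK-2}) there is a (one-to-one) $f \in \RRR^{(1)}$ with $(\omega \setminus K^{A}) \le_{1} (\omega \setminus C)$ via $f$; complementing both sides, the same $f$ shows $K^{A} \le_{1} C$. On the other hand $C$ is c.e.\ and, by Theorem~\ref{thm-K-cpl}, $K^{A}$ is $1$-complete, hence $C \le_{1} K^{A}$. Thus $C \equiv_{1} K^{A}$, and Myhill's theorem gives $C \equiv K^{A}$. Finally, for $(4) \Rightarrow (1)$, if $C \equiv K^{A}$ then $C$ is c.e.\ and $K^{A} \le_{1} C$ (via the isomorphism); combining with the $1$-completeness of $K^{A}$ from Theorem~\ref{thm-K-cpl} and transitivity of $\le_{1}$, every c.e.\ set $1$-reduces to $C$, so $C$ is $1$-complete, closing the cycle.

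I do not expect a genuine obstacle: the proof is an assembly of earlier results. The only points needing a little care are (i) carrying along the standing hypothesis that $C$ is c.e.\ through all four clauses, and (ii) the routine observation that a $1$- or $m$-reduction of one set to another is simultaneously a reduction of the complements — this is precisely what allows Theorem~\ref{thm-prodK}(\ref{thm-prodK-2}), stated in terms of $\omega \setminus K^{A}$ and $\omega \setminus C$, to interface with the completeness notions stated in terms of $K^{A}$ and $C$.
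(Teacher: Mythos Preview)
Your proof is correct and follows exactly the approach the paper intends: the paper does not spell out a proof but simply says that, since Myhill's theorem is available here, ``we obtain the following characterisation of the $A$-creative sets''; your cycle $(1)\Rightarrow(2)\Rightarrow(3)\Rightarrow(4)\Rightarrow(1)$ using Theorems~\ref{thm-K}, \ref{thm-K-cpl}, \ref{thm-prodK} and Myhill is precisely the standard assembly this remark points to.
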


Rogers~\cite{ro58} shows that a set is creative exactly if it is the self-reproducibility problem of a G\"odel numbering. With Theorem~\ref{thm-crea} we obtain a corresponding result for quasi-G\"odel numberings.

\begin{theorem}\label{thm-godrea}
A set $C \subseteq \omega$ is $A$-creative if, and only if, there is a quasi-G\"odel numbering $\chi$ of $\SSS^{(1)}_{A}$ such that $C = \set{i}{i \in \range(\chi_{i})}$.
\end{theorem}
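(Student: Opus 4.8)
Both implications reduce to Theorem~\ref{thm-crea}, which characterises the $A$-creative sets as exactly the $m$-complete c.e.\ sets (equivalently, those $\equiv K^A$). For the implication from a quasi-G\"odel numbering to an $A$-creative set, given a quasi-G\"odel numbering $\chi$ of $\SSS^{(1)}_A$ I would put $K^\chi \Def \set{i}{i\in\range(\chi_i)}$ and show it is c.e.\ and $m$-complete. It is c.e.: fixing $h\in\RRR^{(1)}$ enumerating $\egraph(\lambda i,x.\ \chi_i(x))$ by (QGN~I), one has $i\in K^\chi\Leftrightarrow(\exists t)[\pi^{(2)}_1(\pi^{(2)}_1(h(t)))=i\wedge\pi^{(2)}_2(h(t))=i+1]$, which is c.e.\ by the projection theorem --- precisely the argument of Theorem~\ref{thm-K}. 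It is $m$-complete: every c.e.\ set equals $\range(\chi_j)$ for some $j$ (Theorem~\ref{thm-eq} together with surjectivity of $\chi$), so $\lambda x.\ \pair{j,x}$ reduces it to $\set{\pair{i,x}}{x\in\range(\chi_i)}$, and this set reduces to $K^\chi$ exactly as in the proof of Theorem~\ref{thm-K-cpl}, via Lemma~\ref{lem-meth1} and (QGN~II) for $\chi$; since only an $m$-reduction is needed here, the recursion theorem can be avoided. Hence $K^\chi$ is $A$-creative by Theorem~\ref{thm-crea}.

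For the converse, let $C$ be $A$-creative. By Theorem~\ref{thm-crea} there is a computable permutation $\alpha$ of $\omega$, with computable inverse, such that $x\in K^A\Leftrightarrow\alpha(x)\in C$. The plan is to conjugate $\theta$ with $\alpha$, setting
\[
\chi_i \Def \alpha\circ\theta_{\alpha^{-1}(i)}\qquad(i\in\omega),
\]
and then to verify four things in turn. (i) $\chi$ is a numbering of $\SSS^{(1)}_A$: post-composition with the permutation $\alpha$ leaves domains unchanged, so $f\mapsto\alpha\circ f$ carries $\RRR^{(1)}$ bijectively onto $\RRR^{(1)}$ and $\anf^{(1)}_A$ bijectively onto $\anf^{(1)}_A$, hence $\SSS^{(1)}_A$ onto $\SSS^{(1)}_A$, and composing with the permutation $i\mapsto\alpha^{-1}(i)$ leaves this surjective. (ii) $\chi$ satisfies (QGN~I): recode an enumeration $h$ of $\egraph(\lambda i,x.\ \theta_i(x))$ by sending $\pair{\pair{i,x},0}$ to $\pair{\pair{\alpha(i),x},0}$ and $\pair{\pair{i,x},u}$ with $u>0$ to $\pair{\pair{\alpha(i),x},\alpha(u\prc1)+1}$; this enumerates $\egraph(\lambda i,x.\ \chi_i(x))$. (iii) $\chi$ satisfies (QGN~II): given $k\in\RRR^{(2)}$, the analogous recoding $k'$ of $k$ (a value $u>0$ becoming $\alpha^{-1}(u\prc1)+1$) turns an enumeration of $\egraph(r)$ with $r\in\SSS^{(1)}_A$ into one of $\egraph(\alpha^{-1}\circ r)$, and $\alpha^{-1}\circ r\in\SSS^{(1)}_A$ by the argument of (i); so (QGN~II) for $\theta$ applied to $k'$ yields $v_0$ with $\alpha^{-1}\circ r=\theta_{v_0(i)}$, i.e.\ $r=\chi_{\alpha(v_0(i))}$, and $v\Def\alpha\circ v_0$ works. (iv) $C=\set{i}{i\in\range(\chi_i)}$, since $i\in\range(\chi_i)\Leftrightarrow\alpha^{-1}(i)\in\range(\theta_{\alpha^{-1}(i)})\Leftrightarrow\alpha^{-1}(i)\in K^A\Leftrightarrow i\in C$.

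The only place a genuine idea is needed is the choice $\chi_i=\alpha\circ\theta_{\alpha^{-1}(i)}$ in the converse: one must twist \emph{both} the index (by $\alpha^{-1}$) and the output value (by $\alpha$), so that the diagonal condition $i\in\range(\chi_i)$ is transported along the isomorphism $\alpha$ witnessing $C\equiv K^A$; and one must act on the outputs by \emph{post}-composition, since a pre-composition would deform the initial-segment shape of the domains and push functions out of $\SSS^{(1)}_A$. Everything else --- the recodings of enumerations and the completeness bookkeeping --- is routine once Theorems~\ref{thm-crea}, \ref{thm-K} and \ref{thm-K-cpl} are in hand.
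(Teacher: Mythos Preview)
Your proof is correct and follows essentially the same approach as the paper: the key construction $\chi_i=\alpha\circ\theta_{\alpha^{-1}(i)}$ is exactly the paper's $\chi=\theta\circ p$ with $p=r\circ f$ (your $\alpha$ is the paper's $f^{-1}$), and your verification of the diagonal condition matches the paper's computation line by line. The only cosmetic difference is that the paper packages the check that $\chi$ is a quasi-G\"odel numbering via the substitution theorem (producing reductions $\chi=\theta\circ p$ and $\theta=\chi\circ q$), whereas you recode extended-graph enumerations directly; and for the easy direction the paper simply observes that all of Section~\ref{sec-ce} was developed for an arbitrary quasi-G\"odel numbering, so Theorems~\ref{thm-K}, \ref{thm-K-cpl} and \ref{thm-crea} apply to $\chi$ verbatim.
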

\begin{proof}
We have already seen that for each quasi-G\"odel numbering $\chi$ the set $\set{i}{i \in \range(\chi_{i})}$ is $A$-creative. Assume conversely that $C$ is $A$-creative. Then $C \equiv K^{A}$ by the preceding theorem. Therefore, there is a one-to-one and onto function $f \in \RRR^{(1)}$ so that $i \in C$ exactly if $f(i) \in \range(\theta_{f(i)})$. Since $f$ is total, the modified substitution $\msubst^{(1,1)}_{A}(f; \theta_{i})$ coincides with the usual composition $f \circ \theta_{i}$. By Theorem~\ref{thm-sub} there is thus a function $g \in \RRR^{(1)}$ with $\theta_{g(i)} = f \circ \theta_{i}$. Moreover, there is a function $r \in \RRR^{(1)}$ so that $\theta_{r(i)} = f^{-1} \circ \theta_{i}$. Let $p = r \circ f$ and $q = f^{-1} \circ g$. Then the numbering $\chi$ we looking for is defined by $\chi \Def \theta \circ p$. Obviously, we then have that $\theta = \chi \circ q$. As is readily verified, $\chi$ is a quasi-G\"odel numbering. In addition, we have that
\begin{align*}
i \in \range(\chi_{i}) 
\Leftrightarrow i \in \range(\theta_{p(i)}) 
&\Leftrightarrow i \in \range(\theta_{r(f(i))})  \\
&\Leftrightarrow i \in  \range(f^{-1} \circ \theta_{f(i)})
\Leftrightarrow f(i) \in \range(\theta_{f(i)})
\Leftrightarrow i \in C. \tag*{\qedhere}
\end{align*}
\end{proof}

We hope that with the results in this section we have presented a convincing selection of theorems showing that the theory of c.e.\ sets can also be constructed on the basis of the theory of functions from $\SSS^{(1)}_{A}$. In particular, all results apply here that are usually derived without referring to the domain characterisation of the c.e.\ sets such as Myhill's theorem mentioned above. In the other cases, the above proofs show how one can replace constructions in which the domain characterisation is commonly used by other constructions that are admissible in the theory developed here. We now want to show that the numbering $W^A$ defined here via a quasi-G\"odel numbering, which we have shown has many properties of the commonly used numbering $W$ defined at the beginning of Section~\ref{sec-isf}, is not essentially different from $W$.

\begin{theorem}\label{thm-W}
$W^{A} \equiv W$.
\end{theorem}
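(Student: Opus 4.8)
The plan is to prove $W^{A} \le_{1} W$ and $W \le_{1} W^{A}$ and then invoke Myhill's isomorphism theorem --- mutual $1$-reducibility implies computable isomorphism --- which, as already noted above for this setting, rests only on manipulations of total computable functions and hence applies here. Both $W$ and $W^{A}$ are numberings of the same set, namely of all c.e.\ subsets of $\omega$ (for $W^{A}$ this is Theorem~\ref{thm-eq} together with the surjectivity of $\theta$), so $\equiv$ is meaningful between them.

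For the first reduction I would argue as follows. By Corollary~\ref{cor-unien} the set $\set{\pair{i,x}}{x \in W^{A}_{i}}$ is c.e.\ (which is itself immediate from Condition~(QGN~I) for $\theta$), so there is a partial computable function $\psi$ with $\psi(i,x)\conv$ exactly when $x \in W^{A}_{i}$. Applying the (classical) \emph{smn}-theorem for the G\"odel numbering $\varphi$ yields $f_{0} \in \RRR^{(1)}$ with $\varphi_{f_{0}(i)} = \lambda x.\ \psi(i,x)$, hence $W_{f_{0}(i)} = \dom(\varphi_{f_{0}(i)}) = W^{A}_{i}$. Composing $f_{0}$ with a one-to-one padding function for $\varphi$ and using $i$ as padding parameter produces a one-to-one $f \in \RRR^{(1)}$ with $W_{f(i)} = W^{A}_{i}$, so $W^{A} \le_{1} W$.

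For the second reduction, recall that $\set{\pair{i,x}}{x \in W_{i}}$ is c.e.; fix $h \in \RRR^{(1)}$ enumerating it and put $Q(i,t,z) \Leftrightarrow \pi^{(2)}_{1}(h(t)) = i$ and $f(i,t,z) \Def \pi^{(2)}_{2}(h(t)) + 1$. Let $k \in \RRR^{(2)}$ be the function furnished by Lemma~\ref{lem-methce} for this $Q$ and $f$. Then $\lambda t.\ k(i,t)$ enumerates the extended graph of a function $r_{i}$ that is nowhere defined if $W_{i} = \emptyset$ and is total with $\range(r_{i}) = W_{i}$ otherwise; in either case $r_{i} \in \SSS^{(1)}_{A}$. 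By Condition~(QGN~II) there is $v \in \RRR^{(1)}$ with $r_{i} = \theta_{v(i)}$, so $W^{A}_{v(i)} = \range(\theta_{v(i)}) = W_{i}$, and by Corollary~\ref{cor-indinj} we may take $v$ one-to-one, giving $W \le_{1} W^{A}$.

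Combining, $W^{A} \equiv_{1} W$, whence Myhill's theorem yields $W^{A} \equiv W$. The step carrying the real content is the second reduction: extracting, uniformly from a $\varphi$-index of a c.e.\ set $C$, a $\theta$-index of a function in $\SSS^{(1)}_{A}$ whose range is $C$. This is exactly where the enumeration scheme of Lemma~\ref{lem-methce} and Condition~(QGN~II) are needed, in the same manner as in the index constructions of Theorem~\ref{thm-indoc}; the first reduction and the final isomorphism step are then routine.
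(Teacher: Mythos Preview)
Your proof is correct and follows essentially the same route as the paper: both establish $W^{A} \le_{1} W$ and $W \le_{1} W^{A}$ and then invoke Myhill's isomorphism theorem. For the second reduction the paper produces the required function in $\SSS^{(1)}_{A}$ by first citing a classical $p \in \RRR^{(1)}$ with $\range(\varphi_{p(i)}) = \dom(\varphi_{i})$ (total or nowhere defined) and then applying (QGN~II), rather than building the enumeration directly via Lemma~\ref{lem-methce} as you do, but the content is the same.
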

\begin{proof}
By the number-theoretic analogue of Myhill's theorem (cf.~\cite{er73}) it suffices to prove that $W^{A} \le_{1} W$ and $W \le_{1} W^{A}$. We first show that $W^{A} \le_{1} W$. Since $\theta$ satisfies Condition~(QGN~I), $\lambda i, x.\,  \theta_{i}(x) \in \PPP^{(2)}$. Therefore, there is a function $g \in \RRR^{(1)}$ with $\theta_{i} = \varphi_{g(i)}$. Moreover, there is a function $f \in \RRR^{(1)}$ so that $\dom(\varphi_{f(i)}) = \range(\varphi_{i})$. Since $\varphi$ has a padding function, there also  such functions that are one-to-one.. Thus, $W^{A}_{i} = W_{f(g(i))}$, that is, $W^{A} \le_{1} W$.

Next, we show that also $W \le_{1} W^{A}$. Let to this end $p \in \RRR^{(1)}$ such that $\range(\varphi_{p(i)}) = \dom(\varphi_{i})$ and $\varphi_{p(i)}$ is total, if $\dom(\varphi_{i})$ is not empty, and $\varphi_{i}$ is nowhere defined, otherwise. Moreover, let $k \in \RRR^{(2)}$ enumerate the extended graph of $\varphi_{p(i)
}$. Since $\theta$ satisfies Condition~(QGN~II) and has a one-to-one padding function, there is a one-to-one function $v \in \RRR^{(1)}$ with $\theta_{v(i)} = \varphi_{p(i)}$. Therefore, $\range(\theta_{v(i)}) = \range(\varphi_{p(i)}) = \dom(\varphi_{i})$, that is, $W \le_{1} W^{A}$.
\end{proof}

As follows from this result, the $A$-productive and $A$-creative sets, respectively, coincide with the productive and the creative ones. Note that the above theorem does not obviate the results on $W^{A}$, such as Theorem \ref{thm-indoc}. Theorem~\ref{thm-W} is a metatheorem derived within classical computability theory, whereas the results in this section are results of the theory presented here, derived in this same theory.

In Section~\ref{sec-comp} we have seen that the sets $\set{i}{\theta_{i} \in \anf^{(1)}_{A}}$ and $\set{i}{\theta_{i} \in \RRR^{(1)}}$ are not computable. At the end of this section we want to determine the exact position of these sets in the arithmetic hierarchy.

\begin{theorem}\label{thm-arith}
\begin{enumerate}

\item \label{thm-arith-1}
$\set{i}{\theta_{i} \in \anf^{(1)}_{A}}$ is $\Sigma_{2}$-complete.

\item \label{thm-arith-2}
$\set{i}{\theta_{i} \in \RRR^{(1)}}$ is $\Pi_{2}$-complete.

\end{enumerate}
\end{theorem}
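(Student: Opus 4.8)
\emph{Overview.} The plan is to read both upper bounds off the uniform computable enumerability of domain membership for $\theta$, and then to establish $\Pi_2$‑hardness of $\set{i}{\theta_i \in \RRR^{(1)}}$ directly. Since $\SSS^{(1)}_{A}$ is the \emph{disjoint} union of $\RRR^{(1)}$ and $\anf^{(1)}_{A}$ — no total function has empty or finite domain — the two index sets in the theorem are complements of one another, so $\Sigma_2$‑completeness in~(\ref{thm-arith-1}) is equivalent to $\Pi_2$‑completeness in~(\ref{thm-arith-2}), and it suffices to settle~(\ref{thm-arith-2}).

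\emph{Upper bound.} By Theorem~\ref{thm-K-cpl} the set $K^{A}_{1} = \set{\pair{i,x}}{x \in \dom(\theta_i)}$ is c.e., so the predicate ``$x \in \dom(\theta_i)$'' is uniformly $\Sigma_1$ in $i$ and $x$. Hence $\theta_i \in \RRR^{(1)} \Leftrightarrow (\forall x)\, x \in \dom(\theta_i)$ is $\Pi_2$, and, dually, $\theta_i \in \anf^{(1)}_{A} \Leftrightarrow (\exists x)\, x \notin \dom(\theta_i)$ is $\Sigma_2$.

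\emph{Hardness.} Let $P$ be an arbitrary $\Pi_2$ set, written $P = \set{e}{(\forall x)(\exists y)\, R(e,x,y)}$ with $R$ computable. Uniformly in $e$ define $p_e \in \PPP^{(1)}$ by
\[
p_e(x) \Def \begin{cases} 0 & \text{if } (\forall x' \le x)(\exists y)\, R(e, x', y), \\ \text{undefined} & \text{otherwise.} \end{cases}
\]
Then $\dom(p_e)$ is always an initial segment of $\omega$, and $p_e$ is total precisely when $e \in P$. Put $r_e \Def \widehat{M}^{(1)}_{A}(p_e)$. By the properties of $\widehat{M}^{(1)}_{A}$ collected just before Theorem~\ref{thm-sub}, $r_e$ is again partial computable and lies in $\widehat{\SSS}^{(1)}_{A}$, hence in $\SSS^{(1)}_{A}$; moreover $r_e$ is total iff $p_e$ is: if $p_e$ is total then, $A$ being infinite, $r_e = p_e$, whereas if $p_e(m)\nconv$ for some $m$ then $\dom(r_e) \subseteq \{0, \ldots, m-1\}$ is finite. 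Hence $r_e$ is total iff $e \in P$. Now $\egraph(r_e)$ is infinite and c.e.\ uniformly in $e$, so — by dovetailing, or via the scheme of Lemma~\ref{lem-meth1} — there is $k \in \RRR^{(2)}$ with $\range(\lambda t.\ k(e,t)) = \egraph(r_e)$ for every $e$. Applying Condition~(QGN~II) to $k$ produces $v \in \RRR^{(1)}$ with $\theta_{v(e)} = r_e$, whence $e \in P \Leftrightarrow \theta_{v(e)} \in \RRR^{(1)}$; thus $v$ is an $m$‑reduction of $P$ to $\set{i}{\theta_i \in \RRR^{(1)}}$. Together with the $\Pi_2$ upper bound this proves~(\ref{thm-arith-2}), and~(\ref{thm-arith-1}) follows by passing to complements.

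\emph{Main obstacle.} The only step where the special shape of $\SSS^{(1)}_{A}$ matters is keeping $r_e$ inside $\SSS^{(1)}_{A}$ when $e \notin P$: the naive function that returns $0$ on argument $x$ as soon as $(\forall x' \le x)(\exists y)\, R(e,x',y)$ has an initial‑segment domain, but of an uncontrolled length, typically not in $A$. Pre‑composing with the normalisation operator $\widehat{M}^{(1)}_{A}$ truncates the domain to an admissible one without touching the total‑versus‑finite alternative; this is exactly the device already used to construct the $\varphi$‑standard numbering of Section~\ref{sec-isf} and in the proofs of Theorems~\ref{thm-sub} and~\ref{thm-anfext}.
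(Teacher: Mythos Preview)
Your proof is correct and follows essentially the same route as the paper's, merely dualised: the paper establishes $\Sigma_2$-completeness of $\set{i}{\theta_i \in \anf^{(1)}_{A}}$ directly and lets (\ref{thm-arith-2}) follow by complementation, while you do the reverse. The hardness constructions are the same idea in both cases — build a function that is total iff the relevant $\Pi_2$ condition holds, and force its domain length into $A$ — with the only cosmetic difference that the paper folds the $A$-truncation into the predicate $Q$ of Lemma~\ref{lem-meth1}, whereas you invoke $\widehat{M}^{(1)}_{A}$ as a black box; the paper also upgrades the reduction to a $1$-reduction via Corollary~\ref{cor-indinj}, which you do not need for mere $m$-completeness.
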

\begin{proof}
It suffices to prove (\ref{thm-arith-1}). Let to this end $h \in \RRR^{(1)}$ enumerate the extended graph of the universal function of numbering $\theta$. Then we have that
\begin{multline*}
\theta_{i} \in \anf^{(1)}_{A} \Leftrightarrow \mbox{} \\
\hspace{2em} (\exists t) (\exists a)\, [a \in A_{t} \wedge (\forall x <a) (\exists c)\, [\pi^{(2)}_{1}(h(c)) = \pair{i, x} \wedge \pi^{(2)}
_{2}(h(c)) > 0] \wedge \mbox{} \\
[(\forall y \ge a) (\forall b)\, [\pi^{(2)}_{1}(h(b)) = \pair{i, y} \Rightarrow \pi^{(2)}_{2}(h(b)) = 0]].
\end{multline*}
It follows that $\set{i}{\theta_{i} \in \anf^{(1)}_{A}} \in \Sigma_{2}$. It remains to show that for $C \in \Sigma_{2}$, $C \le_{1} \set{i}{\theta_{i} \in \anf^{(1)}_{A}}$. Let $C \in \Sigma_{2}$. Then there is a ternary computable predicate $Z$ such that
\[
i \in C \Leftrightarrow (\exists x) (\forall y)\, Z(i, x, y).
\]
Set $f(i, t, z) \Def 1$ and 
\[
Q(i, t, z) \Leftrightarrow (\exists a \in A_{t})\, [\narg(z) < a \wedge (\forall x < a) (\exists y \le t)\, \neg Z(i, x, y)], 
\]
and let $k \in \RRR^{(2)}$ be the function constructed from this according to Lemma~\ref{lem-meth1}. As can be seen from the construction, for the one-to-one function $v \in \RRR^{(1)}$ that exists according to (QGN~II) and Corollary~\ref{cor-indinj} one then has that 
\[
\theta_{v(i)}(x) = \begin{cases}
				0 & \text{if there is some $a \in A$ so that $x < a$ and for all $x' < x$}\\
				   & \text{there is some $y$ so that $\neg Z(i, x', y)$}, \\
				\text{undefined} & \text{otherwise.}
			\end{cases}
\]
If $i \in C$ then $\set{x}{(\forall y)\, Z(i, x, y)}$ is not empty. Let $\hat{x}$ be the smallest element of this set and $\hat{a} = \max \set{a \le \hat{x}}{a \in A \vee a = 0}$. Then $\theta_{v(i)}(x)$ is undefined, for all $x \ge \hat{a}$, and $\theta_{v(i)}(x) = 0$, for all $x < \hat{a}$. Thus, $\theta_{v(i)} \in \anf^{(1)}_{A}$.

If, on the other hand, $i \notin C$, then there is some $y \in \omega$ with $\neg Z(i, x, y)$, for all $x \in \omega$. It follows in this case that for all $x \in \omega$, $\theta_{v(i)} = 0$. That is, $\theta_{v(i)} \in \RRR^{(1)}$. So, we have
\[
i \in C \Leftrightarrow \theta_{v(i)} \in \anf^{(1)}_{A}.
\]
That is, $C \le_{1} \set{i}{\theta_{i }\in \anf^{(1)}_{A}}$.
\end{proof}

\section{Enumerability of subsets of $\SSS_{A}^{(n)}$}\label{sec-compII}

In this and the next section we consider the computability of type-two objects over the function classes $\SSS_{A}^{(n)}$. We start with the enumerability of sets of functions in $\SSS_{A}^{(n)}$. The usual definition in this case is intensional and uses quasi-G\"odel numbers of these functions. As main result a theorem of Rice-Shapiro type is derived, which gives an extensional index-free characterisation of such sets. 

\begin{defin}
Let $\theta^{(n)}$ be a quasi-G\"odel numbering of $\SSS_{A}^{(n)}$. Then a set $X \subseteq \SSS_{A}^{(n)}$ is called  \emph{completely c.e.} if the index set $I_{\theta^{(n)}}(X)$ of $X$ with respect to $\theta^{(n)}$ is c.e.
\end{defin}

For stating the main result of this section, we need a canonical indexing of $\anf_{A}^{(n)}$.
Let to this end $\fun{\kappa}{\omega}{\omega^{n}}$ again be an effective and one-to-one map that enumerates $\omega^{n}$ initial segment by initial segment. Then $\kappa(c) < a^{(n)}$, exactly if $c < a^{n}$. Moreover, let $f_{A} \in \RRR^{(1)}$ enumerate $A$, and define
\[
\widehat{\alpha}_{\pair{a,b}}(\vec x) \Def \begin{cases}
								(a)_{\kappa^{-1}(\vec x)} & \text{if $\kappa^{-1}(\vec x) < \min \{ \lth(a)+1, f_{A}(b)^{n} \}$}, \\
								0 & \text{if $\lth(a) < \kappa^{-1}(\vec x) < f_{A}(b)^{n}$}, \\
								\text{undefinded} & \text{otherwise}, \\
							\end{cases}
\]
and
\begin{align*}
&\alpha^{(n)}_{0} \Def \lambda \vec x.\ \text{undefined}, \\
&\alpha^{(n)}_{i+1} \Def \widehat{\alpha}_{\mu j.\ \widehat{\alpha}^{(n)}_{j} \notin \{ \alpha^{(n)}_{0}, \ldots, \alpha^{(n)}_{i} \}}.
\end{align*}

\begin{lemma}\label{lem-alph}
Let $\theta^{(n)}$ be a quasi-G\"odel numbering of $\SSS^{(n)}_{A}$. Then the following five statements hold:
\begin{enumerate}

\item \label{lem-alph-1}
$\alpha^{(n)}$ is a one-to-one numbering of $\anf^{(n)}_{A}$.

\item \label{lem-alph-2}
$\egraph(\lambda i, \vec x.\ \alpha^{(n)}_{i}(\vec x))$ is computable.

\item \label{lem-alph-3}
Let $\nlg(i)$ be the edge length of $\dom(\alpha^{(n)}_{i})$. Then $\nlg \in \RRR^{(1)}$.

\item \label{lem-alph-4}
$\alpha^{(n)} \le_{1} \theta^{(n)}$.

\item \label{lem-alph-5}
$\set{\pair{i, j}}{\graph(\alpha^{(n)}_{i}) \subseteq \graph(\theta^{(n)}_{j})}$ is c.e.

\end{enumerate}
\end{lemma}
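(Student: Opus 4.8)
The five assertions of Lemma~\ref{lem-alph} are of quite different flavour, so I would treat them in turn, spending most of the effort on (\ref{lem-alph-3})--(\ref{lem-alph-5}); the first two are essentially bookkeeping about the explicitly given enumeration $\widehat\alpha$ and the selection procedure defining $\alpha^{(n)}$.

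For (\ref{lem-alph-1}) I would argue that the auxiliary family $\lambda\pair{a,b}.\ \widehat\alpha_{\pair{a,b}}$ already lists every element of $\anf^{(n)}_{A}$: given $p\in\anf^{(n)}_{A}$ with empty domain take any $\pair{a,b}$ with $f_A(b)=0$ (if $0\in A$) — better, note the nowhere defined function is $\alpha^{(n)}_0$ by fiat — and given $p$ with domain the initial segment of edge length $f_A(b)\in A$, take $a$ to code the sequence $\pair{p(\kappa(0)),\dots,p(\kappa(f_A(b)^n-1))}$. Conversely each $\widehat\alpha_{\pair{a,b}}$ clearly lies in $\anf^{(n)}_{A}$ because its domain is $\{\kappa(c) : c< f_A(b)^n\}$, an initial segment of edge length $f_A(b)\in A$. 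The subsequent "remove duplicates'' definition of $\alpha^{(n)}_{i+1}$ then makes $\alpha^{(n)}$ a one-to-one numbering of the same set, provided the predicate $\widehat\alpha^{(n)}_j \notin\{\alpha^{(n)}_0,\dots,\alpha^{(n)}_i\}$ is decidable uniformly — which follows from (\ref{lem-alph-2}), so I would prove (\ref{lem-alph-2}) first. For (\ref{lem-alph-2}): from the defining case distinction, $\pair{\pair{\vec y},0},\pair{\pair{\vec x},z+1}\in\egraph(\lambda i,\vec x.\ \alpha^{(n)}_i(\vec x))$ iff $\vec x$ lies in the domain determined by $i$ and the stored value matches; since $i\mapsto(a_i,b_i)$ (the pair chosen at stage $i$) is computable by an easy induction using decidability of equality of finite functions, this is a computable relation. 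Equality of two $\widehat\alpha$'s is decidable because one need only compare the coded sequences up to $\min\{f_A(b),f_A(b')\}^n$ and check the lengths agree, so the $\mu$-operator in the definition of $\alpha^{(n)}_{i+1}$ is effective and $i\mapsto(a_i,b_i)$ is total computable.

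For (\ref{lem-alph-3}): once $i\mapsto(a_i,b_i)$ is total computable, $\nlg(i)=f_A(b_i)$, hence $\nlg\in\RRR^{(1)}$. For (\ref{lem-alph-4}): by (\ref{lem-alph-2}) the relation $\egraph(\lambda i,\vec x.\ \alpha^{(n)}_i)$ is computable, hence c.e., hence (being infinite, as there are infinitely many distinct finite functions since $A$ is infinite) the range of some $h\in\RRR^{(1)}$; more to the point, for each fixed $i$ the set $\egraph(\alpha^{(n)}_i)$ is the range of a total computable function uniformly in $i$ — explicitly, enumerate $\pair{\pair{\kappa(c)},0}$ for all $c$, and $\pair{\pair{\kappa(c)},\alpha^{(n)}_i(\kappa(c))+1}$ for $c<\nlg(i)^n$, using (\ref{lem-alph-3}). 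Feeding this uniform enumeration into Condition~(QGN~II) for $\theta^{(n)}$ gives $v\in\RRR^{(1)}$ with $\alpha^{(n)}_i=\theta^{(n)}_{v(i)}$; by the padding lemma (Theorem~\ref{thm-pad}) $v$ may be taken one-to-one, so $\alpha^{(n)}\le_1\theta^{(n)}$.

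For (\ref{lem-alph-5}), which I expect to be the main obstacle: I want to show $\set{\pair{i,j}}{\graph(\alpha^{(n)}_i)\subseteq\graph(\theta^{(n)}_j)}$ is c.e. By (\ref{lem-alph-3}) and (\ref{lem-alph-2}) the graph $\graph(\alpha^{(n)}_i)$ is a finite, uniformly computable set whose size $\nlg(i)^n$ is computable from $i$. By Condition~(QGN~I) for $\theta^{(n)}$ there is $h\in\RRR^{(1)}$ enumerating $\egraph(\lambda j,\vec x.\ \theta^{(n)}_j(\vec x))$. Then
\[
\graph(\alpha^{(n)}_i)\subseteq\graph(\theta^{(n)}_j)
\iff
(\forall c<\nlg(i)^n)\,(\exists t)\ \pi^{(2)}_1(h(t))=\pair{j,\kappa(c)}\wedge \pi^{(2)}_2(h(t))=\alpha^{(n)}_i(\kappa(c))+1,
\]
a bounded conjunction of c.e. conditions, hence c.e.; uniformity in $\pair{i,j}$ is immediate since all the data ($\nlg(i)$, the values $\alpha^{(n)}_i(\kappa(c))$, and $h$) are computable. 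The only delicate point is that the outer quantifier is a \emph{bounded} "for all'' with a computable bound $\nlg(i)^n$, which is what keeps the whole predicate $\Sigma_1$ rather than $\Pi_2$; that is precisely where (\ref{lem-alph-3}) is used, and it is the reason the lemma is stated the way it is.
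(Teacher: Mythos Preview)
Your proposal is correct and follows essentially the same approach as the paper: establish that the selection map $i\mapsto g(i)$ picking out the $\widehat\alpha$-index is total computable (which simultaneously yields (\ref{lem-alph-2}) and (\ref{lem-alph-3}) via $\nlg(i)=f_A(\pi^{(2)}_2(g(i)))$), then feed a uniform enumeration of $\egraph(\alpha^{(n)}_i)$ into (QGN~II) and use padding/Corollary~\ref{cor-indinj} for (\ref{lem-alph-4}), and finally express (\ref{lem-alph-5}) as a bounded universal over $c<\nlg(i)^n$ followed by an existential search in the (QGN~I) enumeration of $\theta^{(n)}$. The only cosmetic differences are that the paper routes the graph enumeration in (\ref{lem-alph-4}) through the general scheme of Lemma~\ref{lem-meth1} rather than writing it out directly, and in (\ref{lem-alph-5}) it also searches for the $\alpha^{(n)}$-values in an enumeration rather than computing them outright as you do; your formulation is marginally cleaner but the argument is the same.
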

\begin{proof}
(\ref{lem-alph-1}) follows by the construction. We next show (\ref{lem-alph-2}) and (\ref{lem-alph-3}). As is easily seen, the sequence number
\[
\val{\widehat{\alpha}^{(n)}_{i}} \Def \val{\pair{\kappa(0), \widehat{\alpha}^{(n)}_{i}(\kappa(0))}, \ldots, \pair{\kappa(m), \widehat{\alpha}^{(n)}_{i}(\kappa(m))}}
\]
where $m \Def (f_{A}(\pi^{(2)}_{2}(i)))^{n}-1$, is computable from $i$.  Thus, if $g \in \RRR^{(1)}$ with
\[
g(i) = \mu j.\ \val{\widehat{\alpha}^{(n)}_{j}} \notin \set{\val{\text{`empty sequence'}}, \ldots, \val{\widehat{\alpha}^{(n)}_{g(\nu)}}}{1 \le \nu \le i},
\]
for $i > 0$, then $\alpha^{(n)}_{i} = \widehat{\alpha}^{(n)}_{g(i)}$, for $i > 0$. As is also readily seen, $\egraph(\lambda i, \vec x.\ \widehat{\alpha}^{(n)}_{i}(\vec x))$ is computable. Thus the same holds for $\egraph(\lambda i, \vec x.\ \alpha^{(n)}_{i}(\vec x))$. Since $\nlg(0) = 0$ and for $i > 0$,  $\nlg(i) = f_{A}(\pi^{(2)}_{2}(g(i)))$ we further obtain that $\nlg \in \RRR^{(1)}$.

(\ref{lem-alph-4}) Let $h \in \RRR^{(1)}$ enumerate $\egraph(\lambda i, \vec x.\ \alpha^{(n)}_{i}(\vec x))$. Moreover, define
\begin{gather*}
\widehat{Q}(i, c, z) \Leftrightarrow \pi^{(2)}_{2}(h(c)) > 0 \wedge \pi^{(2)}_{1}(h(c)) = \pair{i} \ast \arg(z), \\
Q(i, t, z) \Leftrightarrow (\exists c \le t)\, \widehat{Q}(i, c, z),\\
f(i, t, z) \Def \pi^{(2)}_{2}(h(\mu c \le t.\, \widehat{Q}(i, c, z))),
\end{gather*}
and construct $k \in \RRR^{(2)}$ as in Lemma~\ref{lem-meth1}. Then $\lambda t.\ k(i, t)$ enumerates the extended graph of $\alpha^{(n)}_{i}$. According to (QGN~II)  and Corollary~\ref{cor-indinj} there is then a one-to-one function $v \in \RRR^{(1)}$ such that $\alpha^{(n)}_{i} = \theta^{(n)}_{v(i)}$,

(\ref{lem-alph-5}) Let, in addition, $h' \in \RRR^{(1)}$ be an enumeration of $\egraph(\lambda i, \vec x.\ \theta^{(n)}_{i}(\vec x))$. Then we have that
\begin{align*}
\graph(\alpha^{(n)}_{i}) \subseteq \graph(\theta^{(n)}_{j}) \\
&\hspace{-7em} \Leftrightarrow\mbox{}  (\forall \vec x < (\nlg(i))^{(n)})\, \alpha^{(n)}_{i}(\vec x) = \theta^{(n)}_{j}(\vec x) \\
&\hspace{-7em}  \Leftrightarrow\mbox{}  (\forall \vec x < (\nlg(i))^{(n)}) (\exists t) (\exists t')\, \pi^{(2)}_{1}(h(t)) = \pair{i, \vec x} \wedge \pi^{(2)}_{1}(h'(t')) = \pair{j, \vec x} \wedge \mbox{} \\
 \mbox{}&\hspace{-4em}\pi^{(2)}_{2}(h(t)) > 0 \wedge \pi^{(2)}_{2}(h(t)) = \pi^{(2)}_{2}(h'(t')) \\
&\hspace{-7em}  \Leftrightarrow\mbox{} (\exists t) (\exists t') (\forall c < \nlg(i))\, \pi^{(2)}_{1}(h((t)_{c})) = \pair{i, \kappa(c)} \wedge \pi^{(2)}_{1}(h'((t')_{c})) = \pair{j, \kappa(c)} \wedge \mbox{} \\
 & \mbox{}\hspace{-4em} \pi^{(2)}_{2}(h((t)_{c})) > 0 \wedge \pi^{(2)}_{2}(h((t)_{c})) = \pi^{(2)}_{2}(h'((t')_{c})).
\end{align*}
The statement is now a consequence of the projection lemma.
\end{proof}
Note that in the proof of Property~(\ref{lem-alph-5}) only Condition~(QGN~I) was used.

\begin{theorem}[Rice, Shapiro]\label{thm-rs}
Let $\theta^{(n)}$ be a quasi-G\"odel numbering of $\SSS^{(n)}_{A}$. Then, 
$X \subseteq \SSS^{(n)}_{A}$ is completely c.e.\ if, and only if, there is a c.e.\ set $C \subseteq \omega$ so that
\begin{equation}\label{eq-scop}
X = \set{r \in \SSS^{(n)}_{A}}{(\exists i \in C)\, \graph(\alpha^{(n)}_{i}) \subseteq \graph(r)}.
\end{equation}
\end{theorem}
\begin{proof}
Let us assume first that $X$ has the special form. Then
\[
j \in I_{\theta^{(n)}}(X) \Leftrightarrow (\exists i \in C)\, \graph(\alpha^{(n)}_{i}) \subseteq \graph(\theta^{(n)}_{j}).
\]
With Lemma~\ref{lem-alph}(\ref{lem-alph-5}) and the projection lemma it follows that $I_{\theta^{(n)}}(X)$ is c.e.

Conversely, suppose that $I_{\theta^{(n)}}(X)$ is c.e. The proof now proceeds in three steps.

\begin{claim}\label{cl-1}
$(\forall r \in X) (\exists s \in X \cap\anf^{(n)}_{A})\, \graph(s) \subseteq \graph(r)$.
\end{claim}
\noindent
Without restriction assume that $r$ is total. Moreover, suppose that there is no $s \in X \cap \anf^{(n)}_{A}$ with $\graph(s) \subseteq \graph(r)$. To derive a contradiction, it suffices to show that $\omega \setminus K^{A} \le_{m} I_{\theta^{(n)}}(X)$ in this case. By our assumption it would follow that $\omega \setminus K^{A}$ is c.e., which is not the case as we have already seen. 

Let $h \in \RRR^{(1)}$ enumerate $K^{A}$. We will show that there is a function $q \in \RRR^{(1)}$ with
\[
\theta^{(n)}_{q(i)}(\vec x) = \begin{cases}
						r(\vec x) & \text{if there are $a \in A$ and $c \in \omega$ so that $\vec x < a^{(n)}$, $a \le c$ and $h(c') \neq i$,} \\
						& \text{for all $c' \le c$,} \\
						\text{undefined} & \text{otherwise}.
					\end{cases}
\]
Then we have that $i \in \omega \setminus K^{A}$, exactly if $q(i) \in I_{\theta^{(n)}}(X)$. To see this, note that if $i \in \omega \setminus K^{A}$ then $h(c) \neq i$, for all $c \in \omega$. Hence, $\theta^{(n)}_{q(i)} = r$ in this case and thus $q(i) \in I_{\theta^{(n)}}(X)$, as $r \in X$. If $ i \in K^{A}$, there is some $c \in \omega$ with $h(c) = 1$. Let $\hat{c}$ be the smallest such $c$ and $\hat{a} \Def \max\set{a \le \hat{c}}{a \in A \vee a=0}$. Then we have for all $\vec x \le \hat{a}^{(n)}$ that $\theta^{(n)}_{q(i)}(\vec x) = r(\vec x)$. For all other $\vec x \in \omega^{n}$, $\theta^{(n)}(\vec x)$ is undefined. Consequently, $\theta^{(n)}_{q(i)} \in \anf^{(n)}_{A}$ and $\graph(\theta^{(n)}_{q(i)}) \subseteq graph(r)$. By our assumption this means that $\theta^{(n)}_{q(i)} \notin X$, that is $q(i) \notin I_{\theta^{(n)}}(X)$.

For the construction of $q$ we again apply Lemma~\ref{lem-meth1} and Condition~(QGN~II), as we did already many times. We only state the predicate $Q$ and the function $f$ needed in the construction:
\begin{gather*}
Q(i, \pair{b, c}, z) \Leftrightarrow f_{A}(b) \le c \wedge (\forall c' \le c)\, h(c') \neq i \wedge \bigwedge_{\nu=1}^{n} \pi^{(n)}_{\nu}(\arg(z)) < f_{A}(b), \\
f(i, \pair{b, c}, z) \Def r(\arg(z))+1.
\end{gather*}

\begin{claim}\label{cl-2}
$(\forall s, r \in \SSS^{(n)}_{A})\, [[s \in X \wedge \graph(s) \subseteq \graph(r)] \Rightarrow r \in X]$.
\end{claim} 
\noindent
Again we assume to the contrary that there are $s, r \in \SSS^{(n)}_{A}$ so that $s \in X$, $\graph(s) \subseteq \graph(r)$, but $r \notin X$. Let $j$ be a $\theta^{(n)}$-index of $s$. Then we construct a function $p \in \RRR^{(1)}$ so that
\[
\theta^{(n)}_{p(i)}(\vec x) = \begin{cases}
						s(\vec x) & \text{if in a simultaneous search in the extended graph of $\theta^{(n)}$ and $K^{A}$,} \\
							      & \text{respectively, $\pair{\pair{j, \vec x}, s(\vec x) +1}$ will not be found later than $i$}, \\
						r(\vec x) & \text{if $i$ will be found earlier}, \\
						\text{undefined} & \text{otherwise}.
					\end{cases}
\]
Note that since $\graph(s) \subseteq \graph(r)$, in case that $i \in K^{A}$ we always have that $\theta^{(n)}_{p(i)} = r$, independently when $\pair{\pair{j, \vec x}, s(\vec x) +1}$ will be found. By our assumption $r \notin X$. Thus it follows for $i \in K^{A}$ that $p(i) \notin I_{\theta^{(n)}}(X)$. On the other hand, if $i \in \omega \setminus K^{A}$, then $\theta^{(n)}_{p(i)} = s$. Because $s \in X$, we have in this case that  $p(i) \in I_{\theta^{(n)}}(X)$. This shows that $\omega \setminus K^{A}  \le_{m}  I_{\theta^{(n)}}(X)$, which is impossible as we have seen.

It remains to construct the function $p$. Let to this end $j'$ be a $\theta^{(n)}$-index of $r$ and $h' \in \RRR^{(1)}$ be an enumeration of the extended graph of the universal function of numbering $\theta^{(n)}$. The existence of $h'$ follows from Condition~(QGN~I). Define 
\begin{gather*}
\widehat{Q}_{1}(a, b, z) \Leftrightarrow \pi^{(2)}_{2}(h'(b)) > 0 \wedge \pi^{(2)}_{1}(h'(b)) = \pair{a} \ast \arg(z), \\
\widehat{Q}_{2}(i, b, z) \Leftrightarrow [\widehat{Q}_{1}(j, b ,z) \wedge (\forall c < b)\, h(c) \neq i] \vee \mbox{} \\
\hspace{4cm}[\widehat{Q}_{1}(j', b, z) \wedge (\exists b' \le b)\, [h(b') = i \wedge \neg (\exists c \le b')\, \widehat{Q}_{1}(j, c, z)]], \\
Q(i, t, z) \Leftrightarrow (\exists b \le t)\, \widehat{Q}_{2}(i, b, z), \\
f(i, t, z) \Def \pi^{(2)}_{2}(h'(\mu b \le t.\ \widehat{Q}_{2}(i, b, z))),
\end{gather*}
and construct function $k \in \RRR^{(2)}$ as in Lemma~\ref{lem-meth1}. Then $\lambda t.\ k(i, t) $ enumerates the extended graph of function $d$ with
\[
d(\vec x) = \begin{cases}
			s(\vec x) & \text{$(\exists b)\, \pi^{(2)}_{2}(h'(b)) > 0 \wedge \pi^{(2)}_{1}(h'(b)) = \pair{j, \vec x} \wedge (\forall c < b)\, h(c) \neq i]$}, \\
			r(\vec x) & \text{$(\exists b)\, h(b) = i \wedge \neg (\exists c \le b)\, [\pi^{(2)}_{2}(h'(c)) > 0 \wedge \pi^{(2)}_{1}(h'(c)) = \pair{j, \vec x}]]$}, \\
			\text{undefined} & \text{otherwise}.
		\end{cases}
\]
As follows from the definition, $d \in \SSS^{(n)}_{A}$. By Condition~(QGN~II) there is then a $p \in \RRR^{(1)}$ with $\theta^{(n)}_{p(i)} = d$. Consequently, $p$ has the properties mentioned above.

\begin{claim}
There is a c.e.\ set $C \subseteq \omega$ so that
$X = \set{r \in \SSS^{(n)}_{A}}{(\exists i \in C)\, \graph(\alpha^{(n)}_{i}) \subseteq \graph(r)}$.
\end{claim}
\noindent
Let $g \in \RRR^{(1)}$ with $\alpha^{(n)} = \theta^{(n)} \circ g$ and $C \Def g^{-1}[I_{\theta^{(n)}}(X)]$. Then $C$ is c.e. By Claim~\ref{cl-1} we have for $r \in \SSS^{(n)}_{A}$ that
\begin{align*}
r \in X 
&\Rightarrow (\exists s \in X \cap \anf^{(n)}_{A})\, \graph(s) \subseteq \graph(r) \\
&\Rightarrow (\exists i)\, [\alpha^{(n)}_{i} \in X \wedge \graph(\alpha^{(n)}_{i}) \subseteq \graph(r)] \\
&\Rightarrow (\exists i \in C)\, \graph(\alpha^{(n)}_{i}) \subseteq \graph(r).
\end{align*}

Conversely, we have with Claim~\ref{cl-2} that
\begin{align*}
 (\exists i \in C)\, \graph(\alpha^{(n)}_{i}) \subseteq \graph(r) 
 &\Rightarrow (\exists s \in X)\, \graph(s) \subseteq \graph(r) \\
 &\Rightarrow r \in X. \tag*{\qedhere}
 \end{align*}
\end{proof}	

As is well known~\cite{od92}, the completely c.e.\ subsets of a numbered set generate a topology, called the \emph{Ershov topology}. Sets as in Equation~(\ref{eq-scop}), on the other hand, generate a topology on $\SSS_{A}^{(n)}$, called the \emph{Scott topology}. From the above result it follows that both topologies are equivalent on $\SSS_{A}^{(n)}$.

\section{Effective operations on $\SSS_{A}^{(n)}$}\label{sec-effop}	
			
Various kinds of effective operations have been studied on the partial computable functions. Here, we will consider two kinds: computable operators which use the graph of the argument function as oracle and are hence defined for all functions in  $\widehat{\SSS}^{(n)}_{A}$, and Markov computable functions which are only defined for the computable functions in  $\widehat{\SSS}^{(n)}_{A}$. They use quasi-G\"odel numbers of the argument function in their computation, that is, algorithms computing the argument function. The main result of the section will be a theorem of Myhill-Shepherson type relating the two kinds of operations.

\begin{defin}[cf.\ \cite{ro67}]\label{def-compop}
An operator $\fun{\widehat{G}}{\widehat{\SSS}^{(n)}_{A}}{\widehat{\SSS}^{(m)}_{A}}$ is \emph{computable}, if there is a c.e.\ set $C \subseteq \omega$ so that for $r \in \widehat{\SSS}^{(n)}_{A}$,
\[
\graph(\widehat{G}(r)) = \set{\pair{\pair{\vec y}, z}}{(\exists a)\, [\pair{\pair{\pair{\vec y}, z}, a} \in C \wedge (\forall c \le \lth(a))\, (a)_{c} \in \graph(r)]}.
\]
We say that $C$ \emph{defines} the operator $\widehat{G}$.
\end{defin}

As follows from the definition, $\graph(\widehat{G}(r))$ is c.e., if $\graph(r)$ is c.e. Thus, for computable operators $\fun{\widehat{G}}{\widehat{\SSS}^{(n)}_{A}}{\widehat{\SSS}^{(m)}_{A}}$ we have  that $\widehat{G}[\SSS^{(n)}_{A}] \subseteq \SSS^{(m)}_{A}$. If one assigns  a code number $\pair{\hspace{.5em}}$ to the zero-ary tuple, the above definition and the subsequent results for $m=0$ also include the case of the recursive functionals. However, it should be noted here that $\widehat{\SSS}^{(0)}_{A}$ and $\SSS^{(0)}_{A}$ contain the zero-digit constant functions, each of which we identify with the respective constant, as well as the zero-ary nowhere defined function. 	
\begin{theorem}\label{thm-coprop}
Let $\fun{\widehat{G}}{\widehat{\SSS}^{(n)}_{A}}{\widehat{\SSS}^{(m)}_{A}}$ and $G$ be its restriction to $\SSS^{(n)}_{A}$. Then $\widehat{G}$ is computable if, and only if, the following three conditions are satisfied:
\begin{enumerate}

\item \label{thm-coprop-1}
For all $s, s' \in \anf^{(n)}_{A}$, 
\[
\graph(s) \subseteq \graph(s') \Rightarrow \graph(G(s)) \subseteq \graph(G(s')).
\]

\item \label{thm-coprop-2}
For all $r \in \widehat{\SSS}^{(n)}_{A}$,
\[
\graph(\widehat{G}(r)) = \bigcup\set{\graph(G(s))}{s \in \anf^{(n)}_{A} \wedge \graph(s) \subseteq \graph(r)}.
\]

\item \label{thm-coprop-3}
$\set{\pair{i, j}}{\graph(\alpha^{(m)}_{j}) \subseteq \graph(G(\alpha^{(n)}_{i}))}$ is c.e.

\end{enumerate}
\end{theorem}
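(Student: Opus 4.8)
The plan is to prove both directions by translating between the set $C$ defining $\widehat{G}$ and the behaviour of $G$ on the base functions $\alpha^{(n)}_{i}$, using throughout that every $r \in \widehat{\SSS}^{(n)}_{A}$ is the directed union of the initial segment functions below it, and that $\anf^{(n)}_{A}$ is exactly the image of the numbering $\alpha^{(n)}$ (Lemma~\ref{lem-alph}(\ref{lem-alph-1})). For the forward direction, assume $C$ defines $\widehat{G}$. Monotonicity in (\ref{thm-coprop-1}) is immediate from the defining formula: if $\graph(s) \subseteq \graph(s')$, then any $a$ all of whose components lie in $\graph(s)$ also has all components in $\graph(s')$, so every pair in $\graph(\widehat{G}(s))$ lies in $\graph(\widehat{G}(s'))$. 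For (\ref{thm-coprop-2}), the inclusion $\supseteq$ follows from monotonicity applied to $s \subseteq r$; for $\subseteq$, given $\pair{\pair{\vec y}, z} \in \graph(\widehat{G}(r))$ there is a witness $a$ with $\pair{\pair{\pair{\vec y}, z}, a} \in C$ and finitely many pairs $(a)_{0}, \ldots, (a)_{\lth(a)}$ all in $\graph(r)$; since $\graph(r)$ is a union of an increasing chain of graphs of functions in $\anf^{(n)}_{A}$ (here one uses that the domains are initial segments, so the chain is genuinely directed), some single $s \in \anf^{(n)}_{A}$ with $\graph(s) \subseteq \graph(r)$ already contains all these pairs, and then $\pair{\pair{\vec y}, z} \in \graph(G(s))$. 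Finally (\ref{thm-coprop-3}): $\graph(\alpha^{(m)}_{j}) \subseteq \graph(G(\alpha^{(n)}_{i}))$ holds iff for each of the finitely many $\vec y < (\lg(j))^{(m)}$ there is a witness $a$ with $\pair{\pair{\pair{\vec y}, \alpha^{(m)}_{j}(\vec y)}, a} \in C$ all of whose components lie in the (finite, uniformly computable by Lemma~\ref{lem-alph}(\ref{lem-alph-2})) graph of $\alpha^{(n)}_{i}$; this is a bounded quantification over a computable predicate followed by an unbounded existential over $a$ and over the enumeration time of $C$, hence c.e.\ by the projection theorem.

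For the converse, suppose (\ref{thm-coprop-1})--(\ref{thm-coprop-3}) hold. The idea is to read off the defining set $C$ from the c.e.\ set in (\ref{thm-coprop-3}). Using Lemma~\ref{lem-alph}(\ref{lem-alph-2}) and (\ref{lem-alph-3}) we can, for each index $i$, compute the canonical finite list of the pairs in $\graph(\alpha^{(n)}_{i})$, encoded as a sequence number $a_{i}$; the map $i \mapsto a_{i}$ is total computable. Now set
\[
C \Def \set{\pair{\pair{\pair{\vec y}, z}, a_{i}}}{\pair{i, j} \text{ enumerated with } \graph(\alpha^{(m)}_{j}) \subseteq \graph(G(\alpha^{(n)}_{i})) \text{ and } \pair{\pair{\vec y}, z} \in \graph(\alpha^{(m)}_{j})}.
\]
Since the predicate in (\ref{thm-coprop-3}) is c.e.\ and membership of a pair in $\graph(\alpha^{(m)}_{j})$ is computable (again Lemma~\ref{lem-alph}(\ref{lem-alph-2})), $C$ is c.e. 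It remains to check that $C$ defines $\widehat{G}$, i.e.\ that for every $r \in \widehat{\SSS}^{(n)}_{A}$,
\[
\graph(\widehat{G}(r)) = \set{\pair{\pair{\vec y}, z}}{(\exists a)\, [\pair{\pair{\pair{\vec y}, z}, a} \in C \wedge (\forall c \le \lth(a))\, (a)_{c} \in \graph(r)]}.
\]
For $\supseteq$: if $\pair{\pair{\pair{\vec y}, z}, a_{i}} \in C$ with all components of $a_{i}$ in $\graph(r)$, then $\graph(\alpha^{(n)}_{i}) \subseteq \graph(r)$, so by (\ref{thm-coprop-2}) and the fact that $\pair{\pair{\vec y}, z} \in \graph(\alpha^{(m)}_{j}) \subseteq \graph(G(\alpha^{(n)}_{i}))$ we get $\pair{\pair{\vec y}, z} \in \graph(\widehat{G}(r))$. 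For $\subseteq$: by (\ref{thm-coprop-2}), $\pair{\pair{\vec y}, z} \in \graph(\widehat{G}(r))$ means $\pair{\pair{\vec y}, z} \in \graph(G(s))$ for some $s \in \anf^{(n)}_{A}$ with $\graph(s) \subseteq \graph(r)$; write $s = \alpha^{(n)}_{i}$. Since $\pair{\pair{\vec y}, z} \in \graph(G(\alpha^{(n)}_{i}))$, there is an $\alpha^{(m)}$-index $j$ of the one-point function (or a suitable initial segment function—here one invokes (\ref{thm-coprop-1}) to pass to a base function below $G(\alpha^{(n)}_{i})$ containing that pair, which lies in $\anf^{(m)}_{A}$ because $G$ maps base functions to elements of $\widehat{\SSS}^{(m)}_{A}$ and by monotonicity we may shrink) with $\graph(\alpha^{(m)}_{j}) \subseteq \graph(G(\alpha^{(n)}_{i}))$ and $\pair{\pair{\vec y}, z} \in \graph(\alpha^{(m)}_{j})$; then $\pair{\pair{\pair{\vec y}, z}, a_{i}} \in C$, and all components of $a_{i}$, being the pairs of $\graph(\alpha^{(n)}_{i}) = \graph(s)$, lie in $\graph(r)$. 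This gives the desired equality.

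The main obstacle is the $\subseteq$ direction of the last displayed equality, specifically producing, from a pair $\pair{\pair{\vec y}, z} \in \graph(G(\alpha^{(n)}_{i}))$, an honest $\alpha^{(m)}$-index $j$ of a base function below $G(\alpha^{(n)}_{i})$ that contains exactly that pair (or at least contains it): one must be careful that $\anf^{(m)}_{A}$ only contains functions whose domain is an initial segment of edge length in $A$, so a literal "one-point function" need not be available, and one has to argue—via the monotonicity condition (\ref{thm-coprop-1}) and the structure of $\anf^{(m)}_{A}$—that some genuine member of $\anf^{(m)}_{A}$ whose graph is contained in $\graph(G(\alpha^{(n)}_{i}))$ already contains $\pair{\pair{\vec y}, z}$. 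This hinges on the observation that $G(\alpha^{(n)}_{i}) \in \widehat{\SSS}^{(m)}_{A}$, so its graph is itself a directed union of graphs of members of $\anf^{(m)}_{A}$, and any single pair in it is captured by one of them. I expect everything else to be routine bookkeeping with the projection theorem and the computability facts from Lemma~\ref{lem-alph}.
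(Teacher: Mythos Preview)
Your proposal is correct and takes essentially the same route as the paper's proof: both directions go through the sequence-number encoding of $\graph(\alpha^{(n)}_{i})$ and the c.e.\ set from (\ref{thm-coprop-3}), and the obstacle you flag is resolved exactly as you say at the end---$\widehat{G}(\alpha^{(n)}_{i}) \in \widehat{\SSS}^{(m)}_{A}$ by hypothesis, so its graph is already a union of graphs of members of $\anf^{(m)}_{A}$. Your passing appeal to (\ref{thm-coprop-1}) in that step is unnecessary; in fact (\ref{thm-coprop-1}) follows from (\ref{thm-coprop-2}) and plays no independent role in the converse direction.
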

\begin{proof}
Assume that $\widehat{G}$ is computable. Then it follows from the definition that for $s \in \anf^{(n)}_{A}$ and $r \in \widehat{G}^{(n)}_{A}$ with $\graph(s) \subseteq \graph(r)$, $\graph(G(s)) \subseteq \graph(\widehat{G}(r))$. Thus, (\ref{thm-coprop-1}) holds. Moreover, 
\[
\bigcup\set{\graph(G(s))}{s \in \anf^{(n)}_{A} \wedge \graph(s) \subseteq \graph(r)} \subseteq \graph(\widehat{G}(r)).
\]
For the converse inclusion note that $\graph(\widehat{G}(r))$ is the union of all $\graph(q)$, where $q \in \anf^{(m)}_{A}$ with $\graph(q) \subseteq \graph(\widehat{G}(r))$. 
Therefore, let $q$ be such an initial segment function. In the enumeration of $\graph(\widehat{G}(r))$ each element of $\graph(q)$ corresponds to a finite number of questions to $\graph(r)$. Since $\graph(q)$ is finite, there is thus some $s \in \anf^{(n)}_A$ with $\graph(s) \subseteq \graph(r)$ and $\graph(q) \subseteq \graph(G(s))$, which shows that also Condition~\ref{thm-coprop-2} holds.

For Condition~(\ref{thm-coprop-3}) observe that
\begin{multline*}
\graph(\alpha^{(m)}_{j}) \subseteq \graph(G(\alpha^{(n)}_{i})) \Leftrightarrow \mbox{} \\
\hspace{.3em}  (\forall \vec y < (\lg(j))^{(m)}) (\exists z) (\exists a)\, \pair{\pair{j, \vec y}, z+1} \in \egraph(\lambda b, \vec x.\ \alpha^{(m)}_{b}(\vec x)) \wedge \mbox{} \\
 \pair{\pair{\vec y, z}, a} \in C \wedge (\forall c < \lth(a))\, \pair{\pair{i} \ast \pi^{(2)}_{1}((a)_{c}), 1+\pi^{(2)}_{2}((a)_{c})} \in \egraph(\lambda b, \vec x.\ \alpha^{(n)}_{b}(\vec x)). \hspace{-.8em}
\end{multline*}
Recall that the unbounded existential quantifiers in the right-hand side can be brought in front of the expression by using an effective sequence encoding. Moreover, the extended graphs in the expression are c.e. With the projection lemma we hence obtain (\ref{thm-coprop-3}).

Now, conversely, suppose that  $\fun{\widehat{G}}{\widehat{\SSS}^{(n)}_{A}}{\widehat{\SSS}^{(m)}_{A}}$ satisfies Conditions~(\ref{thm-coprop-1})-(\ref{thm-coprop-3}). With (\ref{thm-coprop-2}) we have for $r \in \widehat{\SSS}^{(n)}_{A}$ that
\begin{align*}
&\graph(\widehat{G}(r)) \\
&\qquad= \bigcup\set{\graph(G(s))}{s \in \anf^{(n)}_{A} \wedge \graph(s) \subseteq \graph(r)} \\
&\qquad= \bigcup \{\, \graph(q) \mid q \in \anf^{(m)}_{A} \wedge (\exists s \in \anf^{(n)}_{A})\, \graph(s) \subseteq \graph(r) \wedge \mbox{} \\
& \hspace{9.5cm}  \graph(q) \subseteq \graph(G(s)) \,\} \\
&\qquad= \bigcup\set{\graph(\alpha^{(m)}_{j})}{(\exists i)\, \graph(\alpha^{(n)}_{i}) \subseteq \graph(r) \wedge \graph(\alpha^{(m)}_{j}) \subseteq \graph(G(\alpha^{(n)}_{i}))}.
\end{align*}
Let the graph encdoding $\val{\alpha^{(n)}_{b}}$ be as in the proof of Lemma~\ref{lem-alph}. Then it follows
\begin{align*}
\pair{\pair{\vec y}, \mbox{}&z}  \in \graph(\widehat{G}(r)) \\
\Leftrightarrow \mbox{} &(\exists j) (\exists i)\, \pair{\pair{\vec y}, z} \in \graph(\alpha^{(m)}_{j}) \wedge \graph(\alpha^{(m)}_{j}) \subseteq \graph(G(\alpha^{(n)}_{i})) \wedge   \mbox{} \\
& \hspace{9cm} \graph(\alpha^{(n)}_{i}) \subseteq \graph(r) \\
\Leftrightarrow \mbox{} &(\exists a)\, [(\exists j) (\exists i)\, \pair{\pair{j, \vec y}, z+1} \in \egraph(\lambda b, \vec x.\ \alpha^{(m)}_{b}(\vec x))  
\wedge  a = \val{\alpha^{(n)}_{i}} \wedge \mbox{} \\
& \hspace{3cm}\graph(\alpha^{(m)}_{j}) \subseteq \graph(G(\alpha^{(n)}_{i}))] \wedge (\forall c < \lth(a))\, (a)_{c} \in \graph(r).
\end{align*}
Set
\begin{multline*}
C \Def \{\, \pair{\pair{\pair{\vec y}, z}, a} \mid (\exists j) (\exists i)\, \pair{\pair{j, \vec y}, z+1} \in \egraph(\lambda b, \vec x.\ \alpha^{(m)}_{b}(\vec x)) \wedge \mbox{} \\\graph(\alpha^{(m)}_{j}) \subseteq \graph(G(\alpha^{(n)}_{i})) \wedge  a = \val{\alpha^{(n)}_{i}}\, \}.
\end{multline*}
Then $C$ defines $\widehat{G}$. Since $\val{\alpha^{(n)}_{i}}$ can be computed form $i$ and because of Lemma~\ref{lem-alph}(\ref{lem-alph-2}) and Condition~(\ref{thm-coprop-3}), we further have that $C$ is c.e. Thus, $\widehat{G}$ is computable.
\end{proof}

The second type of operator we are going to consider is at the basis of the Russian school of constructive mathematics. Contrary to computable operators these operators are only defined for functions in $\SSS^{(n)}_{A}$.

\begin{defin}
For $i > 0$, let $\theta^{(i)}$ be a quasi-G\"odel numbering of $\SSS^{(i)}_{A}$.
An operator $\fun{G}{\SSS^{(n)}_{A}}{\SSS^{(m)}_{A}}$ is called \emph{Markov-computable} if there is a function $g \in \RRR^{(1)}$ so that
\[
G(\theta^{(n)}_{i}) = \theta^{(m)}_{g(i)}.
\]
We say that $g$ \emph{realises} the operator $G$.
\end{defin}

Our aim is to derive an analogue of the Myhill-Shepherdson theorem~\cite{ms55} for the function classes considered here. We break the proof down in several steps.

\begin{lemma}\label{lem-ms1}
The restriction of a computable operator $\fun{\widehat{G}}{\widehat{\SSS}^{(n)}_{A}}{\widehat{\SSS}^{(m)}_{A}}$ to $\SSS^{(n)}_{A}$ is Markov-computable.
\end{lemma}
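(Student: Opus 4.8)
The plan is to produce, uniformly in $i$, a total computable enumeration of the extended graph of $\widehat{G}(\theta^{(n)}_i)$ and then to obtain $g$ from Condition~(QGN~II). Fix a c.e.\ set $C$ defining $\widehat{G}$ in the sense of Definition~\ref{def-compop}, and, by Condition~(QGN~I), an enumeration $h \in \RRR^{(1)}$ of $\egraph(\lambda i, \vec x.\ \theta^{(n)}_i(\vec x))$. For a sequence code $a$ of pairs, membership $(a)_c \in \graph(\theta^{(n)}_i)$ (for $1 \le c \le \lth(a)$) is confirmed by locating a suitable entry of $h$, so for each bound $t$ it is decidable whether every $(a)_c$ has been confirmed within $t$ steps; likewise membership in $C$ is checked on the initial segment $C_t$.

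I would then invoke the scheme of Lemma~\ref{lem-meth1}, with $\kappa$, $\nf$, $\narg$ taken for $\omega^m$: define a computable relation $Q$ and a function $f \in \RRR^{(3)}$ so that $Q(i, t, z)$ holds iff within $t$ steps one finds a value $w$ and a code $a$ with $\pair{\pair{\narg(z), w}, a} \in C$ all of whose entries $(a)_c$ have been confirmed in $\graph(\theta^{(n)}_i)$, and in that case set $f(i, t, z) \Def w + 1$ (and $f(i,t,z) \Def 1$ otherwise, so that $f(i,t,z) > 0$ whenever $Q(i,t,z)$). Lemma~\ref{lem-meth1} then yields $k \in \RRR^{(2)}$ such that $\lambda t.\ k(i,t)$ enumerates the extended graph of some $m$-ary partial function $r_i$.

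The heart of the argument is to verify $r_i = \widehat{G}(\theta^{(n)}_i)$. By construction the scheme treats the arguments $\kappa(0), \kappa(1), \ldots$ of $\omega^m$ strictly in this order; an induction on $\ell$ shows that the processing reaches the state whose current argument is $\kappa(\ell)$, and by Definition~\ref{def-compop} it moves past $\kappa(\ell)$ -- emitting the value $\widehat{G}(\theta^{(n)}_i)(\kappa(\ell))$, which is the unique $w$ admitting a valid certificate because $\widehat{G}(\theta^{(n)}_i)$ is single-valued -- exactly when $\kappa(\ell) \in \dom(\widehat{G}(\theta^{(n)}_i))$, and otherwise stays stuck on $\kappa(\ell)$ forever. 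Since $\theta^{(n)}_i \in \SSS^{(n)}_A$ and $\widehat{G}$ is computable we have $\widehat{G}(\theta^{(n)}_i) = G(\theta^{(n)}_i) \in \SSS^{(m)}_A$, so $\dom(\widehat{G}(\theta^{(n)}_i))$ is $\emptyset$, all of $\omega^m$, or an initial segment of $\omega^m$; as $\kappa$ enumerates $\omega^m$ initial segment by initial segment, $\kappa^{-1}$ sends this set to an initial segment of $\omega$, so the scheme processes precisely the arguments lying in $\dom(\widehat{G}(\theta^{(n)}_i))$ and no others. Hence $r_i = \widehat{G}(\theta^{(n)}_i)$, and in particular $\lambda t.\ k(i,t)$ enumerates the extended graph of a function in $\SSS^{(m)}_A$.

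Finally, applying Condition~(QGN~II) to this $k$ gives $v \in \RRR^{(1)}$ with $\theta^{(m)}_{v(i)} = r_i = G(\theta^{(n)}_i)$ for every $i$, and $g \Def v$ is as required. The step I expect to be the main obstacle is the verification in the third paragraph that the scheme neither terminates a value search prematurely nor continues past $\dom(\widehat{G}(\theta^{(n)}_i))$ -- that is exactly the place where the restricted shape of the domains in $\SSS^{(m)}_A$, combined with the initial-segment-by-initial-segment enumeration $\kappa$, is essential; everything else is the routine bookkeeping already encapsulated in Lemma~\ref{lem-meth1} and Condition~(QGN~II).
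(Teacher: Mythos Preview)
Your proposal is correct and follows essentially the same route as the paper: build a uniform enumeration of $\egraph(\widehat{G}(\theta^{(n)}_i))$ via the scheme of Lemma~\ref{lem-meth1}, using the defining c.e.\ set $C$ together with an enumeration of $\egraph(\lambda i,\vec x.\ \theta^{(n)}_i(\vec x))$, and then invoke Condition~(QGN~II). The paper proceeds identically, only phrasing the graph-membership checks through the function $k$ of Theorem~\ref{thm-qgn1} rather than directly through $h$; your more explicit verification that the scheme halts exactly on $\dom(\widehat{G}(\theta^{(n)}_i))$ is a welcome clarification but not a different idea.
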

\begin{proof}
We use Lemma~\ref{lem-meth1} to construct a function $k' \in \RRR^{(2)}$ such that $\lambda t.\ k'(i, t)$ enumerates the extended graph of the function $G(\theta^{(n)}_{i})$, where $G$ is the restriction of $\widehat{G}$ to $\SSS^{(n)}_{A}$. The existence of a function realising $G$ is then a consequence of Condition~(QGN~II). Let  $k \in \RRR^{(2)}$ be as in Theorem~\ref{thm-qgn1}. Then $\lambda t.\ k(i, t)$ enumerates the graph of $\theta^{(n)}_{i}$. Moreover, let the c.e.\ set $C$ define $\widehat{G}$. Set
\begin{gather*}
\widehat{Q}(i, \pair{\pair{\pair{\vec y}, x}, a}, t, z)
\Leftrightarrow \vec{y} = \arg(z) \wedge (\forall c \le \lth(a)) (\exists t' \le t)\, k(i, t') = \\
\hspace{9cm} \pair{\pi^{(2)}_{1}((a)_{c}), 1+\pi^{(2)}_{2}((a)_{c})}, \\
Q(i, t, z) \Leftrightarrow (\exists \pair{\pair{\pair{\vec y}, x},a} \in C_{t})\, \widehat{Q}(i, \pair{\pair{\pair{\vec y}, x}, a}, t, z), \\
f(i, t, z) \Def \pi^{(2)}_{2}(\pi^{(2)}_{1}(\mu \pair{\pair{\pair{\vec y}, x}, a} \in C_{t}.\ Q(i, \pair{\pair{\pair{\vec y}, x}, a},t, z))). 
\end{gather*}
By comparing this definition with the condition that $C$ defines $\widehat{G}$ one sees that the function $k' \in \RRR^{(2)}$ constructed as in Lemma~\ref{lem-meth1} has the required property. 
\end{proof}

\begin{lemma}\label{lem-mon}
Every Markov-computable operator  $\fun{G}{\SSS^{(n)}_{A}}{\SSS^{(m)}_{A}}$ is monotone. That is, for $r, s \in \SSS^{(n)}_{A}$,
\[
\graph(s) \subseteq \graph(r) \Rightarrow \graph(G(s)) \subseteq \graph(G(r)).
\]
\end{lemma}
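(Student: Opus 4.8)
The plan is to run a diagonalisation via the recursion theorem, following the classical argument that every effective operation on the partial computable functions is monotone. Suppose, towards a contradiction, that $r, s \in \SSS^{(n)}_A$ satisfy $\graph(s) \subseteq \graph(r)$ but $\graph(G(s)) \not\subseteq \graph(G(r))$, and fix $\vec y \in \omega^m$ and $u \in \omega$ with $\pair{\pair{\vec y}, u} \in \graph(G(s))$ while $\pair{\pair{\vec y}, u} \notin \graph(G(r))$. Let $g \in \RRR^{(1)}$ realise $G$, so $G(\theta^{(n)}_i) = \theta^{(m)}_{g(i)}$ for all $i$, and let $h \in \RRR^{(1)}$ enumerate $\egraph(\lambda i, \vec x.\ \theta^{(m)}_i(\vec x))$, which exists by Condition~(QGN~I) for $\theta^{(m)}$; thus $\theta^{(m)}_{j}(\vec y) = u$ holds precisely when the entry $\pair{\pair{j, \vec y}, u+1}$ occurs among the values of $h$. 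Finally, using Theorem~\ref{thm-qgn1} on fixed $\theta^{(n)}$-indices of $s$ and of $r$, fix computable enumerations of $\egraph(s)$ and of $\egraph(r)$.

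Next I would build, uniformly in a parameter $e$, a function $d_e \in \SSS^{(n)}_A$ that imitates $s$ until the entry $\pair{\pair{g(e), \vec y}, u+1}$ shows up in $h$, and imitates $r$ from then on. This is done with the scheme of Lemma~\ref{lem-meth1}: the enumeration of the extended graph runs through the arguments $\vec x$ one at a time in the order given by $\kappa$; at each stage it first tests whether $\pair{\pair{g(e), \vec y}, u+1}$ has appeared among the values of $h$ seen so far, committing the current argument with value $s(\vec x)$ (for $\vec x \in \dom(s)$) as long as it has not, and committing every remaining argument $\vec x \in \dom(r)$ with value $r(\vec x)$ once it has. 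Because $\dom(s)$ and $\dom(r)$ are initial segments with $\dom(s) \subseteq \dom(r)$, $\kappa$ lists $\omega^n$ initial segment by initial segment, and $s, r$ agree on $\dom(s)$, the arguments committed early cause no conflict, so the scheme produces the extended graph of a well-defined function which equals $s$ if $\pair{\pair{g(e), \vec y}, u+1}$ never appears in $\range(h)$ (i.e.\ $\theta^{(m)}_{g(e)}(\vec y) \ne u$) and equals $r$ otherwise (i.e.\ $\theta^{(m)}_{g(e)}(\vec y) = u$). In either case $d_e \in \SSS^{(n)}_A$, so Lemma~\ref{lem-meth1} gives $k \in \RRR^{(2)}$ with $\lambda t.\ k(e, t)$ enumerating $\egraph(d_e)$, and Condition~(QGN~II) yields $v \in \RRR^{(1)}$ with $\theta^{(n)}_{v(e)} = d_e$ for all $e$.

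Then I would apply the recursion theorem (Theorem~\ref{thm-fp}) to obtain $\hat e$ with $\theta^{(n)}_{\hat e} = \theta^{(n)}_{v(\hat e)} = d_{\hat e}$ and split into two cases. If $\theta^{(m)}_{g(\hat e)}(\vec y) = u$, then $d_{\hat e} = r$, hence $\theta^{(m)}_{g(\hat e)} = G(\theta^{(n)}_{\hat e}) = G(r)$ and so $\pair{\pair{\vec y}, u} \in \graph(G(r))$, contradicting the choice of $\vec y$ and $u$. If $\theta^{(m)}_{g(\hat e)}(\vec y) \ne u$, then $d_{\hat e} = s$, hence $\theta^{(m)}_{g(\hat e)} = G(\theta^{(n)}_{\hat e}) = G(s)$; but $\pair{\pair{\vec y}, u} \in \graph(G(s))$ forces $\theta^{(m)}_{g(\hat e)}(\vec y) = u$, again a contradiction. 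Hence no such $\vec y, u$ exist and $\graph(G(s)) \subseteq \graph(G(r))$.

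The only genuinely fiddly step is the construction in the second paragraph: one must spell out the computable predicate $Q$ and the function $f$ feeding Lemma~\ref{lem-meth1} in this ``race'' between the two enumerations — the same style of two-case definition already used in the proof of Theorem~\ref{thm-anfext} — and then check, using $\graph(s) \subseteq \graph(r)$ together with the facts that the domains are initial segments and $\kappa$ lists $\omega^n$ initial segment by initial segment, that the function enumerated really is $s$ when the switch never occurs and really is $r$ when it does. This is the exact analogue of the monotonicity verification in the classical Myhill/Shepherdson argument; the diagonalisation via the recursion theorem and the subsequent case split are routine.
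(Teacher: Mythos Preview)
Your argument is correct. Both proofs build, uniformly in a parameter, a function in $\SSS^{(n)}_A$ that starts out following $s$ and switches to $r$ once a certain c.e.\ event fires; the fact $\graph(s)\subseteq\graph(r)$ guarantees the switch produces exactly $r$. The difference lies only in the trigger and in how the contradiction is extracted.

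The paper parameterises by $i$ and lets the trigger be ``$i$ has appeared in a fixed enumeration of $K^{A}$''. It then observes (via Lemma~\ref{lem-alph}(\ref{lem-alph-5})) that the set of $i$ for which a fixed finite piece $p\in\anf^{(m)}_A$ sits below $G(\theta^{(n)}_{v(i)})$ is c.e., and this set coincides with $\omega\setminus K^{A}$, contradicting Theorem~\ref{thm-K}. So the paper's proof is a many--one reduction of $\omega\setminus K^{A}$ to a c.e.\ set and uses the Section~\ref{sec-ce} material rather than the recursion theorem.

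You instead parameterise by $e$, let the trigger be ``$\theta^{(m)}_{g(e)}(\vec y)=u$'', and close the loop with the recursion theorem (Theorem~\ref{thm-fp}) to obtain a self-referential index $\hat e$, then argue by cases. This is the textbook Myhill--Shepherdson diagonalisation. It is slightly more self-contained within the Section~\ref{sec-comp} toolkit (no appeal to $K^{A}$), at the cost of invoking the fixed-point machinery; the paper's route trades the recursion theorem for the undecidability of $K^{A}$ and the c.e.\ test from Lemma~\ref{lem-alph}. A minor cosmetic difference is that the paper picks a finite witness $p\in\anf^{(m)}_A$ whereas you pick a single point $(\vec y,u)$; either choice works.
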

\begin{proof}
Assume to the contrary that there are $r, s \in \SSS^{(n)}_{A}$ so that $\graph(s) \subseteq \graph(r)$, but $\graph(G(s)) \not\subseteq \graph(G(r))$. Since for every $q \in \SSS^{(n)}_{A}$,
\[
\graph(q) = \bigcup\set{\graph(p)}{p \in \anf^{(n)}_{A} \wedge \graph(p) \subseteq \graph(q)},
\]
there is some $p \in \anf^{(n)}_{A}$ with
\[
\graph(p) \subseteq \graph(G(s)) \quad\text{and}\quad  \graph(p) \not\subseteq \graph(G(r)).
\]
Let us suppose for the moment that a function $\hat{k} \in \RRR^{(2)}$ can be constructed so that $\lambda t.\ \hat{k}(i, t)$ enumerates the extended graph of $s$, if $i \in \omega \setminus K_{A}$, and the extended graph of $r$, otherwise. Then it follows for the function $v \in \RRR^{(1)}$ existing for this $k$ by (QGN~II) that
\[
\theta^{(n)}_{v(i)} = \begin{cases}
					s & \text{if $i \in \omega \setminus K^{A}$}, \\
					r & \text{otherwise.}
				\end{cases}
\]
If $G$ is realised by $g \in \RRR^{(1)}$, we obtain
\[
\graph(p) \subseteq \graph(\theta^{(n)}_{g(v(i))}) \Leftrightarrow \theta^{(n)}_{v(i)} = s \Leftrightarrow i \in \omega \setminus K^{A}.
\]
By Lemma~\ref{lem-alph}(\ref{lem-alph-5}) the set $\set{i}{\graph(p) \subseteq \graph(\theta^{(n)}_{g(v(i))})}$ is c.e. and hence $\omega \setminus K^{A}$ is c.e.\ as well, which is false. Thus we have a contradiction.

It remains to show that a function $\hat{k}$ as above can indeed be constructed. Let again $k \in \RRR^{(1)}$ be as in Lemma~\ref{thm-qgn1}. Then $\lambda t.\ k(i, t)$ enumerates the extended graph of $\theta^{(n)}_{i}$. Moreover, let $j, j'$, respectively, be $\theta^{(n)}$-indices of $s$ and $r$, and let $h \in \RRR^{(1)}$ enumerate $K^{A}$. Then define $\hat{k}$ by
\[
\hat{k}(i, t) \Def \begin{cases}
				k(j, t) & \text{if $h(t') \neq i$, for all $t' \le t$}, \\
				k(j', t) & \text{otherwise.}
			\end{cases}
\]
Then $\hat{k}$ is as wanted.
\end{proof}
	
\begin{lemma}\label{lem-msct}	
Let $\fun{G}{\SSS^{(n)}_{A}}{\SSS^{(m)}_{A}}$ be Markov-computable. Then, for every $r \in \SSS^{(n)}_{A}$,
\[
\graph(G(r)) = \bigcup\set{\graph(G(s))}{s \in \anf^{(n)}_{A} \wedge \graph(s) \subseteq \graph(r)}.
\]
\end{lemma}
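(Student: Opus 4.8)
The plan is to prove the two inclusions separately; ``$\supseteq$'' is essentially free and all the content is in ``$\subseteq$''. For ``$\supseteq$'': whenever $s\in\anf^{(n)}_{A}$ and $\graph(s)\subseteq\graph(r)$ we have $s,r\in\SSS^{(n)}_{A}$, so Lemma~\ref{lem-mon} gives $\graph(G(s))\subseteq\graph(G(r))$, and taking the union over all such $s$ yields $\bigcup\set{\graph(G(s))}{s\in\anf^{(n)}_{A}\wedge\graph(s)\subseteq\graph(r)}\subseteq\graph(G(r))$. For ``$\subseteq$'' I would first dispose of the trivial case $r\in\anf^{(n)}_{A}$ by taking $s=r$, so that one may assume $r\in\RRR^{(n)}$ is total. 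Then I would argue by contradiction: suppose some $\pair{\pair{\vec y},z}\in\graph(G(r))$ satisfies $\pair{\pair{\vec y},z}\notin\graph(G(s))$ for every $s\in\anf^{(n)}_{A}$ with $\graph(s)\subseteq\graph(r)$, and aim to show that $\omega\setminus K^{A}$ is c.e., which is impossible by Theorem~\ref{thm-K}.

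The reduction would run as follows. Fix an enumeration $h\in\RRR^{(1)}$ of $K^{A}$ and let $g\in\RRR^{(1)}$ realise $G$. Using Lemma~\ref{lem-meth1} to build a graph enumeration and then Condition~(QGN~II) to turn it into an index --- in exact analogy with the function $q$ constructed in Claim~\ref{cl-1} of the proof of Theorem~\ref{thm-rs}, with $r$ here playing the same role --- I obtain a function $v\in\RRR^{(1)}$ with
\[
\theta^{(n)}_{v(i)}=\begin{cases} r & \text{if } i\notin K^{A},\\ s_{i} & \text{if } i\in K^{A},\end{cases}
\]
where, for $i\in K^{A}$, $s_{i}$ agrees with $r$ on the largest initial segment of $\omega^{n}$ with edge length in $A\cup\{0\}$ that the construction manages to fill before $i$ is enumerated into $K^{A}$, and is undefined beyond it; so $s_{i}\in\anf^{(n)}_{A}$ and $\graph(s_{i})\subseteq\graph(r)$. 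Since $g$ realises $G$ we have $\theta^{(m)}_{g(v(i))}=G(\theta^{(n)}_{v(i)})$ for all $i$, hence $\theta^{(m)}_{g(v(i))}=G(r)$ when $i\notin K^{A}$, so $\theta^{(m)}_{g(v(i))}(\vec y)=z$, whereas when $i\in K^{A}$ we get $\theta^{(m)}_{g(v(i))}=G(s_{i})$ with $s_{i}$ an admissible approximant, so by the contradiction hypothesis $\theta^{(m)}_{g(v(i))}(\vec y)\neq z$. Thus $i\in\omega\setminus K^{A}\iff\theta^{(m)}_{g(v(i))}(\vec y)=z$. The right-hand side is a c.e.\ condition in $i$: by Condition~(QGN~I) for $\theta^{(m)}$ there is $h'\in\RRR^{(1)}$ enumerating $\egraph(\lambda b,\vec x.\ \theta^{(m)}_{b}(\vec x))$, and $\theta^{(m)}_{g(v(i))}(\vec y)=z$ holds iff $(\exists t)\,h'(t)=\pair{\pair{g(v(i)),\vec y},z+1}$. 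So $\omega\setminus K^{A}$ would be c.e., contradicting Theorem~\ref{thm-K}; therefore every element of $\graph(G(r))$ lies in $\graph(G(s))$ for some $s\in\anf^{(n)}_{A}$ with $\graph(s)\subseteq\graph(r)$, which is the missing inclusion.

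I expect the only delicate point to be the construction of $v$: one has to choose the predicate and value function fed into Lemma~\ref{lem-meth1} so that the resulting total computable enumeration really produces the extended graph of the function displayed above, and then verify that for $i\in K^{A}$ the limit function genuinely lands in $\anf^{(n)}_{A}$ (its domain being an initial segment of edge length in $A$, possibly empty) with graph contained in $\graph(r)$. This, however, is a verbatim reuse of the construction already carried out for the function $q$ in Claim~\ref{cl-1} of Theorem~\ref{thm-rs}, so I would cite it rather than redo it; everything else in the argument is routine bookkeeping with the projection lemma and with the enumeration $h'$ supplied by (QGN~I).
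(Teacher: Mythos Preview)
Your proposal is correct and follows essentially the same route as the paper: dispose of the case $r\in\anf^{(n)}_{A}$ trivially, get ``$\supseteq$'' from Lemma~\ref{lem-mon}, and for total $r$ derive a contradiction by building $v\in\RRR^{(1)}$ with $\theta^{(n)}_{v(i)}=r$ for $i\notin K^{A}$ and $\theta^{(n)}_{v(i)}$ an initial-segment approximant of $r$ for $i\in K^{A}$, then use the realiser $g$ to reduce $\omega\setminus K^{A}$ to a c.e.\ condition. The only cosmetic difference is that the paper takes as its witness a finite $q\in\anf^{(m)}_{A}$ with $\graph(q)\subseteq\graph(G(r))$ and $\graph(q)\not\subseteq\graph(G(s))$ for all admissible $s$, then invokes Lemma~\ref{lem-alph}(\ref{lem-alph-5}) to see that $\set{i}{\graph(q)\subseteq\graph(\theta^{(m)}_{g(v(i))})}$ is c.e., whereas you pick a single pair $\pair{\pair{\vec y},z}$ and appeal directly to (QGN~I); both work. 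Your citation of the construction of $q$ in Claim~\ref{cl-1} of Theorem~\ref{thm-rs} for the function $v$ is apt---the paper in fact rebuilds essentially that same function from scratch here.
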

\begin{proof}
By the previous lemma the set $\set{\graph(G(s))}{s \in \anf^{(n)}_{A} \wedge \graph(s) \subseteq \graph(r)}$ is a chain with respect to inclusion. For $r \in \anf^{(n)}_{A}$, $\graph(r)$ is contained in this set. Hence, the statement holds trivially. It remains to consider the case that $r \in \RRR^{(n)}$. Because of Lemma~\ref{lem-mon} it suffices to show that
\[
\graph(G(r)) \subseteq \bigcup\set{\graph(G(s))}{s \in \anf^{(n)}_{A} \wedge \graph(s) \subseteq \graph(r)}.
\]
Assume to the contrary that this inclusion is wrong for some $r \in \RRR^{(n)}$. Then there exists $q \in \anf^{(m)}_{A}$ so that  $\graph(q) \subseteq \graph(G(r))$, but 
\begin{equation}\label{eq-cont}
\graph(q) \not\subseteq \bigcup\set{\graph(G(s))}{s \in \anf^{(n)}_{A} \wedge \graph(s) \subseteq \graph(r)}.
\end{equation}
Statement (\ref{eq-cont}) holds exactly if for all $s \in \anf^{(n)}_{A}$ with $\graph(s) \subseteq \graph(r)$, $\graph(q) \not\subseteq \graph(G(s))$.

Assume for the moment that there is some $v \in \RRR^{(1)}$ with $\theta^{(n)}_{v(i)} \in \anf^{(n)}_{A}$ and $\graph(\theta^{(n)}_{v(i)}) \subseteq \graph(r)$, if $ i \in K^{A}$, and $\theta^{(n)}_{v(i)} = r$, if $i \notin K^{A}$. Let $g \in \RRR^{(1)}$ realise $G$. Then we obtain
\[
\graph(q) \subseteq \graph(\theta^{(m)}_{g(v(i))}) \Leftrightarrow \theta^{(n)}_{v(i)} = r \Leftrightarrow i \in \omega \setminus K^{A}.
\]
As seen in the last proof, this is impossible. So, the assumption made at the beginning is wrong.

We will now consider the construction of function $v$. Let to this end $h, f_{A} \in \RRR^{(1)}$, respectively, be enumerations of $K^{A}$ and $A$. Moreover, let $\hat{k}, k \in \RRR^{(2)}$ be defined by 
\begin{align*}
&\hat{k}(i, t) \Def \begin{cases}
				\hat{k}(i, t-1) & \text{if for $t' < t$ and $t'' \le t$, $h(t') = i$ and $t = f_{A}(t'')^{n}$}, \\
				\pair{\pair{\kappa(t)}, r(\kappa(t))+1}  &  \text{otherwise}, 
			\end{cases}\\[1.5ex]
&k(i, 2t) \Def \pair{t, 0}, \\
&k(i, 2t+1) \Def \hat{k}(i, t).
\end{align*}
If $i \notin K^{A}$, $\lambda t.\ k(i, t)$ enumerates the extended graph of $r$; otherwise,  $\lambda t.\ \hat{k}(i, t)$ step by step lists 
\[
\pair{\pair{\kappa(0)}, r(\kappa(0))+1}, \pair{\pair{\kappa(1)}, r(\kappa(1))+1}, \ldots
\]
until $i$ has been found in $K^{A}$. If so, it continues this enumeration until an initial segment of $\omega^{n}$ with an edge length in $A$ has been enumerated by $\kappa(0), \ldots, \kappa(t)$. Therefore, in this case, $\lambda t.\ k(i, t)$ enumerates the extended graph of a function $s \in \anf^{(n)}_{A}$ with $\graph(s) \subseteq \graph(r)$. It follows that the function $v \in \RRR^{(1)}$ existing for this $k$ by Condition~(QGN~II) has the appropriate properties. 
\end{proof}
	
If $\alpha^{(n)} = \theta^{(n)} \circ f$, for some $f \in \RRR^{(n)}$ and the Markov-computable operator $\fun{G}{\SSS^{(n)}_{A}}{\SSS^{(m)}_{A}}$ is realised by $g \in \RRR^{(n)}$, then $G(\alpha^{(n)}_{i}) = \theta^{(n)}_{g(f(i))}$. With Lemma~{\ref{lem-alph}(\ref{lem-alph-5}) we also obtain that $\{\, \pair{i, j} \mid \graph(\alpha^{(m)}_{i}) \subseteq \graph(G(\alpha^{(n)}_{j})) \,\}$ is c.e. We have now completed all the necessary steps to derive the result we were aiming for.

\begin{theorem}[Myhill, Shepherdson]\label{thm-mysh}
\begin{enumerate}

\item	\label{thm-mysh-1} \sloppy
The restriction of every computable operator $\fun{\widehat{G}}{\widehat{\SSS}^{(n)}_{A}}{\widehat{\SSS}^{(m)}_{A}}$ to $\SSS^{(n)}_{A}$ is Markov-computable.

\item	\label{thm-mysh-2}
Every Markov-computable operator $\fun{G}{\SSS^{(n)}_{A}}{\SSS^{(m)}_{A}}$ is the restriction to $\SSS^{(n)}_{A}$ of a computable operator $\fun{\widehat{G}}{\widehat{\SSS}^{(n)}_{A}}{\widehat{\SSS}^{(m)}_{A}}$.

\end{enumerate}
\end{theorem}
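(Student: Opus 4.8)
The plan is to prove the two directions of the Myhill/Shepherdson theorem separately, noting that part (\ref{thm-mysh-1}) is already established as Lemma~\ref{lem-ms1}, so only part (\ref{thm-mysh-2}) requires real work. Given a Markov-computable operator $\fun{G}{\SSS^{(n)}_{A}}{\SSS^{(m)}_{A}}$ realised by some $g \in \RRR^{(1)}$, the goal is to produce a c.e.\ set $C \subseteq \omega$ defining a computable operator $\fun{\widehat{G}}{\widehat{\SSS}^{(n)}_{A}}{\widehat{\SSS}^{(m)}_{A}}$ whose restriction to $\SSS^{(n)}_{A}$ coincides with $G$. The natural candidate, motivated by the remark immediately preceding the theorem and by the structure of the proof of Theorem~\ref{thm-coprop}, is
\[
C \Def \set{\pair{\pair{\pair{\vec y}, z}, a}}{(\exists i)\, \pair{\pair{i, \vec y}, z+1} \in \egraph(\lambda b, \vec x.\ \alpha^{(m)}_{b}(\vec x)) \wedge \graph(\alpha^{(m)}_{i}) \subseteq \graph(G(\alpha^{(n)}_{i'})) \wedge a = \val{\alpha^{(n)}_{i'}}}
\]
(with the appropriate renaming of bound indices). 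To see that $C$ is c.e.\ I would invoke Lemma~\ref{lem-alph}(\ref{lem-alph-2}) for the extended graph of $\alpha^{(m)}$, the fact from the paragraph before the theorem that $\set{\pair{i,j}}{\graph(\alpha^{(m)}_{i}) \subseteq \graph(G(\alpha^{(n)}_{j}))}$ is c.e.\ (which itself follows from $\alpha^{(n)} = \theta^{(n)} \circ f$ for some $f \in \RRR^{(n)}$, $G(\alpha^{(n)}_{j}) = \theta^{(m)}_{g(f(j))}$, and Lemma~\ref{lem-alph}(\ref{lem-alph-5})), and the computability of $i \mapsto \val{\alpha^{(n)}_{i}}$, together with the projection lemma.

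The next step is to verify that the operator $\widehat{G}$ defined by $C$ via Definition~\ref{def-compop} actually restricts to $G$ on $\SSS^{(n)}_{A}$. Here the key ingredients are Lemma~\ref{lem-mon} (every Markov-computable operator is monotone) and Lemma~\ref{lem-msct} ($\graph(G(r)) = \bigcup\set{\graph(G(s))}{s \in \anf^{(n)}_{A} \wedge \graph(s) \subseteq \graph(r)}$ for all $r \in \SSS^{(n)}_{A}$). Using these I would compute, for $r \in \SSS^{(n)}_{A}$,
\[
\graph(G(r)) = \bigcup\set{\graph(\alpha^{(m)}_{i})}{(\exists i')\, \graph(\alpha^{(n)}_{i'}) \subseteq \graph(r) \wedge \graph(\alpha^{(m)}_{i}) \subseteq \graph(G(\alpha^{(n)}_{i'}))},
\]
exactly as in the ``conversely'' part of the proof of Theorem~\ref{thm-coprop}, by first rewriting $\graph(G(r))$ as a union of graphs of initial segment functions $\graph(q) \subseteq \graph(G(r))$, then for each such $q$ finding $s \in \anf^{(n)}_{A}$ with $\graph(s) \subseteq \graph(r)$ and $\graph(q) \subseteq \graph(G(s))$, and finally passing from $q$ and $s$ to their $\alpha^{(m)}$- and $\alpha^{(n)}$-indices using that $\alpha^{(m)}$ and $\alpha^{(n)}$ are numberings of $\anf^{(m)}_{A}$ and $\anf^{(n)}_{A}$ (Lemma~\ref{lem-alph}(\ref{lem-alph-1})). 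Unwinding the definition of $C$ and of ``$C$ defines $\widehat{G}$'' then shows $\graph(\widehat{G}(r)) = \graph(G(r))$, hence $\widehat{G}\restriction\SSS^{(n)}_{A} = G$; and since $\widehat{G}$ is defined by a c.e.\ set it is a computable operator, so in particular $\widehat{G}[\widehat{\SSS}^{(n)}_{A}] \subseteq \widehat{\SSS}^{(m)}_{A}$ as required.

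I expect the main obstacle to be bookkeeping rather than conceptual: one must be careful that the operator $\widehat{G}$ obtained from $C$ is genuinely \emph{well defined on all of} $\widehat{\SSS}^{(n)}_{A}$ — i.e.\ that $\graph(\widehat{G}(r))$ is always single-valued and that its domain is either empty, all of $\omega^{m}$, or an initial segment of $\omega^{m}$ with edge length in $A$ — and this is where monotonicity of $G$ on initial segment functions (via Lemma~\ref{lem-mon}) does the real work, ensuring that the union defining $\graph(\widehat{G}(r))$ is a directed union of graphs of functions in $\anf^{(m)}_{A}$ and hence again the graph of a function in $\widehat{\SSS}^{(m)}_{A}$. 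A secondary subtlety is making sure the enumeration built into $C$ only references $\alpha$-indices and the c.e.\ set $\set{\pair{i,j}}{\graph(\alpha^{(m)}_{i}) \subseteq \graph(G(\alpha^{(n)}_{j}))}$, never $g$ directly in a way that would require $G$ to be total on $\widehat{\SSS}^{(n)}_{A}$; restricting attention to the $\alpha$-numberings, which live entirely inside $\SSS^{(n)}_{A}$ where $g$ is defined, circumvents exactly the difficulty that $G$ a priori only acts on $\SSS^{(n)}_{A}$, not on all of $\widehat{\SSS}^{(n)}_{A}$.
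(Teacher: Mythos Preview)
Your proposal is correct and follows essentially the same route as the paper: the same three ingredients (Lemma~\ref{lem-mon}, Lemma~\ref{lem-msct}, and the c.e.\ set $\set{\pair{i,j}}{\graph(\alpha^{(m)}_{i}) \subseteq \graph(G(\alpha^{(n)}_{j}))}$) do all the work. The only difference is packaging---the paper defines $\widehat{G}(r)$ abstractly by $\graph(\widehat{G}(r)) = \bigcup\set{\graph(G(s))}{s \in \anf^{(n)}_{A} \wedge \graph(s) \subseteq \graph(r)}$ and then invokes the characterisation Theorem~\ref{thm-coprop} rather than building $C$ by hand, but your explicit $C$ is exactly what that theorem produces. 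One small slip to fix: in your last paragraph the union is over graphs of functions $G(s) \in \SSS^{(m)}_{A}$, not $\anf^{(m)}_{A}$; the paper handles well-definedness by a case split (finite chain $\Rightarrow$ maximal element in $\SSS^{(m)}_{A}$; infinite chain $\Rightarrow$ strictly increasing edge lengths, hence total), and you should do the same.
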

\begin{proof}
(\ref{thm-mysh-1}) has been shown in Lemma~\ref{lem-ms1}. 

(\ref{thm-mysh-2}) By Lemma~\ref{lem-mon} the set $\set{\graph(G(s))}{s \in \anf^{(n)}_{A} \wedge \graph(s) \subseteq \graph(r)}$ with $r \in \SSS^{(n)}_{A}$ is a chain with respect to set inclusion. Thus, the union over this chain is single-valued, that is, the graph of an $m$-ary function. Let $\widehat{G}(r)$ be this function. Then we have
\[
\graph(\widehat{G}(r)) = \bigcup\set{\graph(G(s))}{s \in \anf^{(n)}_{A} \wedge \graph(s) \subseteq \graph(r)}.
\]
The chain is either finite or infinite. In the first case $\graph(\widehat{G}(r))$ is its maximal element. Since for all $s \in \anf^{(n)}_{A}$, $G(s) \in \SSS^{(m)}_{A}$, it follows that also $\widehat{G}(r) \in \SSS^{(m)}_{A}$. In the other case, every element of the chain is the graph of a function in $\SSS^{(m)}_{A}$. The domains of these functions in the chain therefore form an increasing chain of subsets of $\omega^{m}$. it follows that $\widehat{G}(r)$ is total in this case. Hence, in both cases, $\widehat{G}(r) \in \widehat{\SSS}^{(m)}_{A}$. With Lemma~\ref{lem-mon} and the remark preceding this theorem it follows that $\widehat{G}$ satisfies Conditions~(\ref{thm-coprop-1})-(\ref{thm-coprop-3}) in Theorem~\ref{thm-coprop}. Hence, $\widehat{G}$ is computable. In case that $r \in \SSS^{(n)}_{A}$ it moreover follows with Lemma~\ref{lem-msct} that $\widehat{G}(r) = G(r)$. That is, $G$ is the restriction of $\widehat{G}$ to $\SSS^{(n)}_{A}$.
\end{proof}

\section{Quasi-G\"odel numberings and the Rogers semi-lattice of computable numberings} \label{sec-semlat}

The aim of this section is to look at the quasi-G\"odel numbered sets $\SSS^{(1)}_{A}$ from a numbering-theoretic perspective. We limit ourselves to unary functions, since every $n$-ary function $f \in \SSS^{(n)}_A$ converts into the unary function $r \circ \kappa \in \SSS^{(1)}_{A'}$ with $A' = \set{a^n}{a \in A}$ by means of a one-to-one and initial segment-wise enumeration $\kappa$ of $\omega^n$. 

As will be seen, any two quasi-G\"odel numberings of $\SSS^{(1)}_{A}$ are computably isomorphic. Furthermore, $(\SSS_{A}^{(1)}, \theta^{A})$ is a retract of $(\PPP^{(1)}, \varphi)$, where $\theta^{A}$ and $\varphi$, respectively, are a quasi-G\"odel and a G\"odel numbering. If $A, B$ are infinite c.e.\ subsets of $\omega$ with $A \subseteq B$, then $(\SSS_{A}^{(1)}, \theta^{A})$ is a retract from $(\SSS_{B}^{(1)}, \theta^{B})$. Thus,  there are infinitely descending chains $((\SSS_{A_{\nu}}^{(1)}, \theta^{A_{\nu}}))_{\nu \in \omega}$ with  $(\SSS_{A_{\nu+1}}^{(1)}, \theta^{A_{\nu+1}})$ being a retract of $(\SSS_{A_{\nu}}^{(1)}, \theta^{A_{\nu}})$ and $(\SSS_{A_{0}}^{(1)}, \theta^{A_{0}})$ being a retract of $(\PPP^{(1)}, \varphi)$. It could still be that there are subsets $T \subseteq \anf_{\omega}^{(1)}$, which are not necessarily of the form $\anf_{A}^{(1)}$,
so that $\RRR^{(1)} \cup T$ is quasi-G\"odel numberable. Another central result is that the set of all such extensions of $\RRR^{(1)}$ has no $\subset^{*}$-minimal elements.

We will also study the Rogers semi-lattice of computable numberings of $\SSS^{(1)}_{A}$. A central result is every countable partially ordered set can isomorphically be embedded into the Rogers semi-lattice of computable numberings of $\SSS^{(1)}_{A}$. In addition, the semi-lattice contains infinitely many Friedberg numberings and positive numberings to which no Friedberg numbering can be reduced.

\begin{proposition}\label{pn-qgnchar}
Let $\eta, \gamma$ be numberings of $\SSS^{(1)}_{A}$. Then the following three statements hold:
\begin{enumerate}

\item \label{pn-qgnchar-1}
If $\gamma$ is a quasi-G\"odel numbering and $\eta \le_{m} \gamma$, then $\eta$ satisfies Condition~(QGN~I).

\item \label{pn-qgnchar-2}
If $\eta$ is a quasi-G\"odel numbering and $\eta \le_{m} \gamma$, then $\gamma$ satisfies Condition~(QGN~II).

\item \label{pn-qgnchar-3}
If $\eta, \gamma$, respectively, satisfy (QGN~I) and (QGN~II), then $\eta \le_{m} \gamma$.

\end{enumerate}
\end{proposition}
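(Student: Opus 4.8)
The plan is to prove the three implications in turn, in each case reducing the statement to a construction of an enumeration of an extended graph together with an application of the relevant (QGN) condition.

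\textbf{Part (\ref{pn-qgnchar-1}).} Suppose $\gamma$ is a quasi-G\"odel numbering and $\eta \le_m \gamma$ via some $t \in \RRR^{(1)}$, so $\eta_i = \gamma_{t(i)}$ for all $i$. By (QGN~I) for $\gamma$, the extended graph of $\lambda i, x.\ \gamma_i(x)$ is enumerable by some $h \in \RRR^{(1)}$; in particular $\lambda i, x.\ \gamma_i(x)$ is partial computable, hence so is $\lambda i, x.\ \eta_i(x) = \lambda i, x.\ \gamma_{t(i)}(x)$, and its extended graph is therefore c.e. Since the extended graph of $\lambda i, x.\ \eta_i(x)$ is infinite (it contains all pairs $\pair{\pair{\pair{y}, 0}}$ arising from the $0$-components of every function's extended graph), it is the range of a total computable function. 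Thus $\eta$ satisfies (QGN~I). (Concretely one can also read off an enumeration: feed $h$ through the substitution $i \mapsto t(i)$, which requires inverting $t$ on its range; alternatively just invoke the standard fact that an infinite c.e.\ set is a recursive range.)

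\textbf{Part (\ref{pn-qgnchar-2}).} Suppose $\eta$ is a quasi-G\"odel numbering and $\eta \le_m \gamma$ via $t \in \RRR^{(1)}$, so $\eta_i = \gamma_{t(i)}$. We must produce the function $v$ of (QGN~II) for $\gamma$. Let $k \in \RRR^{(2)}$ be given, and suppose $\lambda t'.\ k(i, t')$ enumerates the extended graph of some $r \in \SSS^{(1)}_{A}$. Apply (QGN~II) for $\eta$ to $k$: there is $v' \in \RRR^{(1)}$ with $r = \eta_{v'(i)}$. But $\eta_{v'(i)} = \gamma_{t(v'(i))}$, so setting $v \Def t \circ v'$ gives $r = \gamma_{v(i)}$, which is exactly (QGN~II) for $\gamma$.

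\textbf{Part (\ref{pn-qgnchar-3}).} Suppose $\eta$ satisfies (QGN~I) and $\gamma$ satisfies (QGN~II). By Theorem~\ref{thm-qgn1} applied to $\eta$ (which only needs (QGN~I)), there is some $k \in \RRR^{(2)}$ with $\range(\lambda t.\ k(i, t)) = \egraph(\eta_i)$. Since $\eta_i \in \SSS^{(1)}_{A}$ for every $i$, and $\lambda t.\ k(i, t)$ enumerates its extended graph, (QGN~II) for $\gamma$ yields a $v \in \RRR^{(1)}$ with $\eta_i = \gamma_{v(i)}$ for all $i$. Hence $\eta \le_m \gamma$ via $v$.

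\textbf{Main obstacle.} The only place requiring care is part (\ref{pn-qgnchar-1}): (QGN~I) is not literally ``the universal function is partial computable'' but ``the extended graph is the range of a \emph{total} computable function'', so one must argue that an infinite c.e.\ set is the range of a total computable function — routine, but worth stating, and one must check the extended graph is genuinely infinite (which it is, because of the zero-components $\pair{\pair{\pair{y}, 0}}$ present for all $y$). Parts (\ref{pn-qgnchar-2}) and (\ref{pn-qgnchar-3}) are immediate once Theorem~\ref{thm-qgn1} is invoked; the composition-of-reductions trick in (\ref{pn-qgnchar-2}) is the expected device.
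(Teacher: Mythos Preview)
Your proof is correct and follows essentially the same route as the paper: composition with the reducing function for parts (\ref{pn-qgnchar-1}) and (\ref{pn-qgnchar-2}), and an appeal to Theorem~\ref{thm-qgn1} followed by (QGN~II) for part (\ref{pn-qgnchar-3}). Your explicit remark that the extended graph is infinite (so that c.e.\ implies range of a total computable function) is a point the paper leaves implicit in this proof, though it makes the same observation earlier in Theorem~\ref{thm-qgn}.
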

\begin{proof}
(\ref{pn-qgnchar-1}) Since $\eta \le_{m}\gamma$, there is some $g \in \RRR^{(1)}$ with $\eta = \gamma \circ g$. By assumption, 
\[
\egraph(\lambda i, x.\ \gamma_{i}(x))
\]
 is c.e. Hence this is also true for $\egraph(\lambda i, x.\ \gamma_{g(i)}(x))$.

(\ref{pn-qgnchar-2}) Let  $g \in \RRR^{(1)}$ with $\eta = \gamma \circ g$. Moreover, let $k \in \RRR^{(2)}$ and $r \in \SSS^{(1)}_{A}$ so that $\lambda t.\ k(i, t)$ enumerates the extended graph of $r$. Since $\eta$ satisfies (QGN~II), there is some $v \in \RRR^{(1)}$ with $r = \eta_{v(i)}$. Set $v' \Def g \circ v$. Then $r = \gamma_{v'(i)}$. Whence also $\gamma$ satisfies (QGN~II).

(\ref{pn-qgnchar-3}) Let $k \in \RRR^{(2)}$ be as in Theorem~\ref{thm-qgn1}. Then $\lambda t.\ k(i, t)$ enumerates the extended graph of function $\eta_{i}$. Since $\gamma$ satisfies (QGN~II) there is some $v \in \RRR^{(1)}$ with $\eta_{i} = \gamma_{v(i)}$. Thus, $\eta \le_{m} \gamma$.
\end{proof}

It follows that, if $\gamma$ is a quasi G\"odel numbering of $\SSS^{(1)}_{A}$, then $\eta$ is also a quasi-G\"odel numbering of $\SSS^{(1)}_{A}$, exactly if $\eta \equiv_{m} \gamma$. In particular, all quasi-G\"odel numberings of $\SSS^{(1)}_{A}$ are $m$-equivalent. As we will see next, they are maximal among all numberings of $\SSS^{(1)}_{A}$ satisfying (QGN~I), and minimal among those satisfying (QGN~II).

\begin{theorem}\label{thm-maxmin}
Let $\eta$ be a numbering of $\SSS^{(1)}_{A}$. Then the following three statements are equivalent:
\begin{enumerate}

\item\label{thm-maxmin-1}
$\eta$ is a quasi-G\"odel numbering.

\item\label{thm-maxmin-2}
$\eta$ satifies (QGN~I) and for all numberings $\gamma$ of $\SSS^{(1)}_{A}$, if $\gamma$ satisfies (QGN~I) then $\gamma \le_{m} \eta$.

\item\label{thm-maxmin-3}
$\eta$ satifies (QGN~II) and for all numberings $\gamma$ of $\SSS^{(1)}_{A}$, if $\gamma$ satisfies (QGN~II) then $\eta \le_{m} \gamma$.

\end{enumerate}
\end{theorem}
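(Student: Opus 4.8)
The plan is to prove the chain of equivalences $(\ref{thm-maxmin-1}) \Rightarrow (\ref{thm-maxmin-2}) \Rightarrow (\ref{thm-maxmin-1})$ and $(\ref{thm-maxmin-1}) \Rightarrow (\ref{thm-maxmin-3}) \Rightarrow (\ref{thm-maxmin-1})$, relying almost entirely on Proposition~\ref{pn-qgnchar}. Each individual implication is then short: the real content has already been packaged into that proposition, and the only work here is to assemble the pieces and to invoke the self-improvement step that lets us go from $\le_m$ to a one-to-one reduction, hence (with the padding lemma) to $m$-equivalence implying the quasi-G\"odel property.

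First I would record the easy direction of each equivalence. For $(\ref{thm-maxmin-1}) \Rightarrow (\ref{thm-maxmin-2})$: assume $\eta$ is a quasi-G\"odel numbering. Then $\eta$ satisfies (QGN~I) by definition. If $\gamma$ is any numbering of $\SSS^{(1)}_{A}$ satisfying (QGN~I), then, since $\eta$ satisfies (QGN~II), Proposition~\ref{pn-qgnchar}(\ref{pn-qgnchar-3}) (with the roles ``$\eta$ satisfies (QGN~I)'' played by $\gamma$ and ``satisfies (QGN~II)'' played by $\eta$) gives $\gamma \le_m \eta$. Dually, for $(\ref{thm-maxmin-1}) \Rightarrow (\ref{thm-maxmin-3})$: $\eta$ satisfies (QGN~II) by definition, and if $\gamma$ satisfies (QGN~II) then, since $\eta$ satisfies (QGN~I), Proposition~\ref{pn-qgnchar}(\ref{pn-qgnchar-3}) applied with $\eta$ in the (QGN~I)-slot and $\gamma$ in the (QGN~II)-slot yields $\eta \le_m \gamma$.

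For the converse directions I would argue as follows. Suppose (\ref{thm-maxmin-2}) holds. We already know $\eta$ satisfies (QGN~I); we must derive (QGN~II). Fix a reference quasi-G\"odel numbering $\gamma$ of $\SSS^{(1)}_{A}$ (for instance a $\varphi$-standard numbering $\leftidx{^{A}}\varphi^{(1)}$, which exists by the discussion in Section~\ref{sec-isf}). Since $\gamma$ satisfies (QGN~I), the maximality hypothesis in (\ref{thm-maxmin-2}) gives $\gamma \le_m \eta$. Now $\gamma$ is a quasi-G\"odel numbering and $\gamma \le_m \eta$, so Proposition~\ref{pn-qgnchar}(\ref{pn-qgnchar-2}) yields that $\eta$ satisfies (QGN~II). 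Hence $\eta$ is a quasi-G\"odel numbering, i.e.\ (\ref{thm-maxmin-1}). Symmetrically, suppose (\ref{thm-maxmin-3}) holds. Then $\eta$ satisfies (QGN~II), and we must derive (QGN~I). Take again a reference quasi-G\"odel numbering $\gamma$. It satisfies (QGN~II), so by the minimality hypothesis in (\ref{thm-maxmin-3}) we get $\eta \le_m \gamma$. Since $\gamma$ is a quasi-G\"odel numbering and $\eta \le_m \gamma$, Proposition~\ref{pn-qgnchar}(\ref{pn-qgnchar-1}) gives that $\eta$ satisfies (QGN~I). Thus $\eta$ is a quasi-G\"odel numbering.

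The one point that needs a little care — and the only place I expect any friction — is making sure a reference quasi-G\"odel numbering of $\SSS^{(1)}_{A}$ is genuinely available to be quoted here without circularity: this is supplied by the construction of $\leftidx{^{A}}\varphi^{(n)}$ in Section~\ref{sec-isf} together with Theorem~\ref{thm-qgn}, which show that $\leftidx{^{A}}\varphi^{(1)}$ is a $\varphi$-standard numbering and hence a quasi-G\"odel numbering of $\SSS^{(1)}_{A}$. Everything else is bookkeeping: each implication is a single application of one clause of Proposition~\ref{pn-qgnchar}, and no new constructions are required. I would close by remarking that combining the theorem with Proposition~\ref{pn-qgnchar} gives the sharper statement that the quasi-G\"odel numberings of $\SSS^{(1)}_{A}$ form exactly one $m$-equivalence (indeed, by the padding lemma, one $1$-equivalence and so one recursive-isomorphism) class, sitting simultaneously at the top of the (QGN~I)-numberings and at the bottom of the (QGN~II)-numberings under $\le_m$.
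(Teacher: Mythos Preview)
Your proof is correct and follows essentially the same route as the paper's: each implication is a single application of one clause of Proposition~\ref{pn-qgnchar}, with the converse directions $(\ref{thm-maxmin-2})\Rightarrow(\ref{thm-maxmin-1})$ and $(\ref{thm-maxmin-3})\Rightarrow(\ref{thm-maxmin-1})$ handled by fixing a reference quasi-G\"odel numbering and using the extremality hypothesis to transfer the missing (QGN) condition via Proposition~\ref{pn-qgnchar}(\ref{pn-qgnchar-2}) or (\ref{pn-qgnchar-1}). The paper happens to use the machine-model numbering $\leftidx{^{A}}\psi$ of Theorem~\ref{thm-qgn2} as its reference rather than $\leftidx{^{A}}\varphi^{(1)}$, but either choice works; your remark about the padding lemma and $1$-equivalence is extra commentary not needed for the theorem itself.
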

\begin{proof}
Assume (\ref{thm-maxmin-1}). Then (\ref{thm-maxmin-2}) follows with Proposition~\ref{pn-qgnchar}(\ref{pn-qgnchar-3}). Conversely, suppose (\ref{thm-maxmin-2}) and let $\leftidx{^{A}}\psi$ be a quasi-G\"odel numbering as in Theorem~\ref{thm-qgn2}. Then $\leftidx{^{A}}\psi \le_{m}\eta$, by our assumption. With Proposition~\ref{pn-qgnchar}(\ref{pn-qgnchar-2}) it follows that $\eta$ satisfies (QGN~II). Since it also satisfies (QGN~I), $\eta$ is a quasi-G\"odel numbering.

In a similar way it follows that (\ref{thm-maxmin-1}) and (\ref{thm-maxmin-3}) are equivalent.
 \end{proof}

At the end of Section~\ref{sec-comp} we pointed out that quasi-G\"odel numberings are pre-complete. This will be strengthened next.

\begin{theorem}\label{thm-compl}
Let $\theta$ be a quasi-G\"odel numbering of $\SSS^{(1)}_{A}$. Then the following two statements hold:
\begin{enumerate}

\item \label{thm-compl-1}
$\theta$ is complete with the nowhere defined function as distinguished element. This means that for any $p \in \PPP^{(1)}$ there exists $g \in \RRR^{(1)}$ such that
\[
\theta_{g(i)}(x) = \begin{cases}
				\theta_{p(i)}(x) & \text{if $i \in \dom(p)$}, \\
				\text{undefined} & \text{otherwise}.
			\end{cases}
\]

\item \label{thm-compl-2}
$\theta$ is cylindrical, that is, $\theta \equiv \theta^{c}$, where $\theta^{c}_{\pair{i, j}} = \theta_{j}$.

\end{enumerate}
\end{theorem}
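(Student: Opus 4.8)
The plan is to prove the two parts separately, in both cases reducing the claim to the defining properties (QGN~I) and (QGN~II) of a quasi-G\"odel numbering together with the enumeration machinery (Lemma~\ref{lem-meth1}, Theorem~\ref{thm-qgn1}) already established.

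For part~(\ref{thm-compl-1}), completeness with the nowhere-defined function as distinguished element, I would proceed as follows. Given $p \in \PPP^{(1)}$, fix by (QGN~I) an enumeration $h \in \RRR^{(1)}$ of $\egraph(\lambda i, x.\ \theta_i(x))$, and recall that $\dom(p)$ is c.e., so fix $h' \in \RRR^{(1)}$ enumerating (the graph of) $p$; that is, $p(i) = z$ iff $\langle i,z\rangle$ appears in the list $h'$. I would then use Lemma~\ref{lem-meth1} to build $k \in \RRR^{(2)}$ so that $\lambda t.\ k(i,t)$ behaves as follows: it waits until $i$ is found in $\dom(p)$ along with the value $p(i)$; once $p(i)$ is known, it simulates, via $h$, the enumeration of $\egraph(\theta_{p(i)})$ and relays those entries. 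If $i \notin \dom(p)$ the search never terminates, so only pairs $\langle \langle \vec y\rangle, 0\rangle$ are emitted, i.e. the extended graph of the nowhere-defined function is enumerated. Concretely one sets a predicate $\widehat Q(i,\langle a,b\rangle,z)$ expressing ``$\pi^{(2)}_2(h'(a)) = p(i)$-datum found, and $h(b)$ codes the entry of $\egraph(\theta_{p(i)})$ at argument $\narg(z)$'', and the corresponding $f$ extracts the value; then $Q(i,t,z) \Leftrightarrow (\exists c \le t)\,\widehat Q(i,c,z)$. By Lemma~\ref{lem-meth1}, $k \in \RRR^{(2)}$ and $\lambda t.\ k(i,t)$ enumerates the extended graph of a function in $\SSS^{(1)}_{A}$ which equals $\theta_{p(i)}$ when $i \in \dom(p)$ and is nowhere defined otherwise. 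Applying (QGN~II) yields $g \in \RRR^{(1)}$ with $\theta_{g(i)}$ equal to this function, which is exactly the required property.

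For part~(\ref{thm-compl-2}), cylindricity, I must show $\theta \equiv \theta^{c}$ where $\theta^{c}_{\langle i,j\rangle} \Def \theta_j$. The direction $\theta \le \theta^{c}$ (in fact $\le_1$) is immediate: $\theta_j = \theta^{c}_{\langle 0,j\rangle}$ and $\lambda j.\ \langle 0,j\rangle$ is one-to-one. For $\theta^{c} \le \theta$ I first observe that $\theta^{c}$ is itself a quasi-G\"odel numbering of $\SSS^{(1)}_{A}$: it satisfies (QGN~I) because $\egraph(\lambda \langle i,j\rangle, x.\ \theta^{c}_{\langle i,j\rangle}(x))$ is obtained from $\egraph(\lambda j,x.\ \theta_j(x))$ by a computable reindexing (replace the index component $j$ by $\langle i,j\rangle$ via $\pi^{(2)}_2$), and it satisfies (QGN~II) since $\theta \le_m \theta^{c}$ and $\theta$ satisfies (QGN~II), so Proposition~\ref{pn-qgnchar}(\ref{pn-qgnchar-2}) applies. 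Hence by Proposition~\ref{pn-qgnchar}(\ref{pn-qgnchar-3}) (or by the remark that all quasi-G\"odel numberings of $\SSS^{(1)}_{A}$ are $m$-equivalent) we get $\theta^{c} \le_m \theta$ and therefore $\theta \equiv_m \theta^{c}$. To upgrade $\equiv_m$ to computable isomorphism $\equiv$, I would invoke Myhill's theorem (whose number-theoretic analogue, via cf.~\cite{er73}, is available here as used in Theorem~\ref{thm-W}): $\theta^{c}$ is a cylinder by construction in the numbering-theoretic sense (one can pad the first component freely), so $\theta \le_1 \theta^{c}$; conversely, since $\theta$ has a one-to-one padding function (Theorem~\ref{thm-pad}), $\theta$ is also a cylinder, giving $\theta^{c} \le_1 \theta$; then $1$-equivalence of two cylinders yields recursive isomorphism by Myhill.

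The main obstacle I anticipate is the bookkeeping in part~(\ref{thm-compl-1}): one must arrange the Lemma~\ref{lem-meth1}-style construction so that nothing from $\egraph(\theta_{p(i)})$ is relayed \emph{before} $p(i)$ has been located (otherwise the index $p(i)$ is not yet available), while still guaranteeing that \emph{every} entry of $\egraph(\theta_{p(i)})$ eventually appears once the search has succeeded — this is exactly the dovetailing pattern already used, e.g., in the proof of Theorem~\ref{thm-anfext}, so it is routine but needs care in stating $\widehat Q$ and $f$. The cylindricity argument is essentially formal once one notes $\theta^{c}$ is again quasi-G\"odel; the only subtlety is citing the correct form of Myhill's theorem to pass from $\equiv_1$ to $\equiv$.
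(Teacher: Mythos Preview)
Your proof of part~(\ref{thm-compl-1}) is correct and essentially matches the paper's: the paper invokes Theorem~\ref{thm-qgn1} to obtain a uniform enumeration $k$ of $\egraph(\theta_i)$ and then defines $k'(i,t)$ directly to wait for $p(i)$ to converge before relaying $k(p(i),\cdot)$, which is exactly your dovetailing idea expressed without the Lemma~\ref{lem-meth1} scaffolding.

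For part~(\ref{thm-compl-2}) your argument is correct but takes an unnecessary detour. The paper bypasses the verification that $\theta^{c}$ is quasi-G\"odel and the appeal to Proposition~\ref{pn-qgnchar} entirely: it simply observes that the one-to-one padding function $p$ from Theorem~\ref{thm-pad} already gives the $1$-reduction $\theta^{c}\le_{1}\theta$ directly, via $g(\langle i,j\rangle)\Def p(j,i)$, since $\theta^{c}_{\langle i,j\rangle}=\theta_{j}=\theta_{p(j,i)}$ and $p$ is injective. Combined with the trivial $\theta\le_{1}\theta^{c}$ and the Ershov--Myhill isomorphism theorem, this finishes the proof in one line. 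Your route---first establishing $\theta^{c}\le_{m}\theta$ via the quasi-G\"odel property, then upgrading to $\le_{1}$ through the cylinder characterisation---arrives at the same place, and the cylinder step you describe is in fact precisely the padding argument the paper uses; you have just wrapped an extra (harmless) $m$-reducibility step around it.
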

\begin{proof}
(\ref{thm-compl-1})
Let $k \in \RRR^{(2)}$ be as in Theorem~\ref{thm-qgn1}. Then $\lambda t.\ k(i, t)$ enumerates the extended graph of $\theta_{i}$. Moreover, let $k' \in \RRR^{(2)}$ be defined by
\begin{align*}
&k'(i, 2t) \Def \pair{t, 0}, \\
&k'(i, 2t+1) \Def \begin{cases}
				k(i, t \prc \mu t'.\ p(i)\conv_{t'}) & \text{if for some $t' < t$, $p(i)\conv_{t'}$}, \\
				\pair{t, 0} & \text{otherwise}.
			\end{cases}
\end{align*}
Then $\lambda t.\ k'(i, t)$ enumerates the extended graph of $\theta_{i}$, if $i \in \dom(p)$. Otherwise, $\lambda t.\ k'(i, t)$ enumerates the extended graph of the nowhere defined function. Let $g \in \RRR^{(1)}$ be the function existing by (QGN~II). Then we have for $i \in \dom(p)$ that $\theta_{g(i)} = \theta_{p(i)}$. In the other case, $\theta_{g(i)}$ is nowhere defined.

(\ref{thm-compl-2}) Let $\theta^{c}$ with $\theta^{c}_{\pair{i, j}} = \theta_{j}$ be the cylinder of $\theta$. Then $\theta \le_{1}\theta^{c}$. Since by the Ershov-Myhill isomorphism theorem~\cite[p.~295]{er73} 1-equivalence coincides with computable isomorphism, it remains to show that also $\theta^{c} \le_{1} \theta$. Let to this end $p \in \RRR^{(2)}$ be the padding function existing by Theorem~\ref{thm-pad}. Then
\[
\theta^{c}_{\pair{i, j}} = \theta_{j} = \theta_{p(i, j)}.
\]
Thus, $\theta^{c}= \theta \circ g$ with $g(\pair{i, j}) = p(j, i)$. That is, $\theta^{c} \le_{1} \theta$.
\end{proof}

As we have seen above, all quasi-G\"odel numberings are $m$-equivalent. Since such numberings possess padding functions, they are even 1-equivalent. With the Ershov-Myhill isomorphism theorem we thus obtain an extension of Rogers' Isomorphism Theorem~\cite{ro58} for G\"odel numberings to quasi-G\"odel numberings.

\begin{theorem}[Isomorphism Theorem]\label{thm-iso}
All quasi-G\"odel numberings are computably isomorphic. 
\end{theorem}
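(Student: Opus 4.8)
The plan is to combine the $m$-equivalence of all quasi-G\"odel numberings, already established, with the existence of padding functions and the Ershov--Myhill isomorphism theorem. First I would recall from the discussion following Proposition~\ref{pn-qgnchar} that any two quasi-G\"odel numberings $\theta$ and $\theta'$ of $\SSS^{(1)}_{A}$ satisfy $\theta \equiv_{m} \theta'$; this is immediate from parts~(\ref{pn-qgnchar-2}) and (\ref{pn-qgnchar-3}) of that proposition (each satisfies (QGN~I) and (QGN~II), so each reduces to the other). Next I would invoke the Padding Lemma (Theorem~\ref{thm-pad}), which gives every quasi-G\"odel numbering a one-to-one padding function.

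The key step is the standard cylinder argument: a numbering possessing a one-to-one padding function is a cylinder, and any $m$-reduction to a cylinder can be upgraded to a $1$-reduction. Concretely, given $g \in \RRR^{(1)}$ with $\theta = \theta' \circ g$ and a one-to-one padding function $p' \in \RRR^{(2)}$ for $\theta'$, the map $\lambda i.\ p'(g(i), i)$ is a one-to-one function realising $\theta \le_{1} \theta'$, since $\theta'_{p'(g(i), i)} = \theta'_{g(i)} = \theta_{i}$ and distinctness of the second argument forces injectivity. Applying this symmetrically yields $\theta \equiv_{1} \theta'$. (Alternatively, one can cite Theorem~\ref{thm-compl}(\ref{thm-compl-2}) that every quasi-G\"odel numbering is cylindrical, together with the elementary fact that $m$-equivalence of cylinders implies $1$-equivalence.)

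Finally I would apply the Ershov--Myhill isomorphism theorem (the number-theoretic analogue of Myhill's theorem, cited as \cite{er73} in the excerpt, already used in the proof of Theorem~\ref{thm-compl}), which states that $1$-equivalence of numberings coincides with computable isomorphism. Hence $\theta \equiv \theta'$, i.e.\ any two quasi-G\"odel numberings of $\SSS^{(1)}_{A}$ are computably isomorphic. The main obstacle is essentially bookkeeping rather than conceptual: one must be careful that the padding function is genuinely one-to-one in both arguments so the composed reduction is injective, and one must make sure the chain of cited results (Proposition~\ref{pn-qgnchar} for $m$-equivalence, Theorem~\ref{thm-pad} for padding, \cite{er73} for the isomorphism theorem) is applied in a logically non-circular order. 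Since all of these are already available, the proof is short.
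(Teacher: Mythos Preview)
Your proposal is correct and follows essentially the same route as the paper: establish $m$-equivalence via Proposition~\ref{pn-qgnchar}, upgrade to $1$-equivalence using the one-to-one padding function from Theorem~\ref{thm-pad}, and conclude via the Ershov--Myhill isomorphism theorem. In fact you spell out the padding-to-injectivity step (the map $\lambda i.\ p'(g(i),i)$) more explicitly than the paper does, which merely asserts that possession of padding functions yields $1$-equivalence.
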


It follows that a numbering of $\SSS^{(1)}_{A}$ is a quasi-G\"odel numbering, if it is computably isomorphic to a quasi-G\"odel numbering.

In Section~\ref{sec-isf} we fixed a G\"odel numbering $\varphi$ to construct a special $\varphi$-standard numbering of $\SSS^{(1)}_{A}$ which then turned out to be a quasi-G\"odel numbering. Thus, every quasi-G\"odel numbering of $\SSS^{(1)}_{A}$ is computable isomorphic to a special $\varphi$-standard numbering of $\SSS^{(1)}_{A}$. This result can be strengthened again.

\begin{theorem}\label{thm-qgnstand}
For every quasi-G\"odel numbering $\theta$ there is a G\"odel numbering $\xi$ so that $\theta$ is a special $\xi$-standard numbering.
\end{theorem}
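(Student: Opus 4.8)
The goal is to manufacture, from a given quasi-G\"odel numbering $\theta$ of $\SSS^{(1)}_{A}$, a G\"odel numbering $\xi$ of $\PPP^{(1)}$ together with a total computable $f$ such that $\theta = \xi \circ f$ and, moreover, conditions~(\ref{std}) and (\ref{stdspec}) hold with $\varphi$ replaced by $\xi$; that is, $\xi_{i}\in\SSS^{(1)}_{A}\Rightarrow\xi_{f(i)}=\xi_{i}$ and $\graph(\xi_{f(i)})\subseteq\graph(\xi_{i})$. The natural strategy is to \emph{interleave} $\theta$ with an auxiliary G\"odel numbering of $\PPP^{(1)}$ so that the $\theta$-functions occupy a recognizable, effectively separated block of indices, and then check that the resulting numbering is a G\"odel numbering via the characterization by Enumeration plus $smn$ (equivalently, its maximality among numberings with c.e.\ graph).

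\textbf{Key steps.} First, fix an auxiliary G\"odel numbering $\varphi$ of $\PPP^{(1)}$; by Theorem~\ref{thm-qgn1} and Condition~(QGN~I) the function $\lambda i,x.\ \theta_{i}(x)$ is partial computable, and its extended graph is enumerable by a total computable $h$, from which one reads off a total computable $e$ with $\theta_{i}=\varphi_{e(i)}$ (using that $\theta$ is, up to recursive isomorphism, a $\varphi$-standard numbering, as announced after Theorem~\ref{thm-qgn}; or directly since $\theta$ has c.e.\ graph). Second, define $\xi$ on even and odd indices by
\[
\xi_{2i}\Def\theta_{i},\qquad \xi_{2i+1}\Def\varphi_{i}.
\]
Third, verify $\xi$ is a G\"odel numbering: it has a computable universal function (from $h$ for the even part, from the universal function of $\varphi$ for the odd part, selected by parity), and it admits an $smn$-function because $\varphi$ does and one can reduce $\xi$ to $\varphi$ (send $2i\mapsto e(i)$, $2i+1\mapsto i$) and then pad back into the odd block; by the standard characterization (Enumeration Theorem plus $smn$, cf.~\cite[Theorem IV]{ro67}) this makes $\xi$ a G\"odel numbering. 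Fourth, define the candidate standard-numbering map: set $f(i)\Def 2\,v(i)$, where $v\in\RRR^{(1)}$ is obtained by applying Condition~(QGN~II) of $\theta$ to a $k\in\RRR^{(2)}$ that, on input $i$, enumerates the extended graph of $\widehat{M}^{(1)}_{A}(\xi_{i})$ when the latter lies in $\SSS^{(1)}_{A}$, i.e.\ of the maximal initial-segment-or-total restriction of $\xi_{i}$ to an edge length in $A$; such a $k$ is built exactly as in the proof of Theorem~\ref{thm-qgn} (QGN~II), simulating $\xi_{i}$ and testing membership of edge lengths in $A$, and the resulting function is always in $\SSS^{(1)}_{A}$. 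Then $\xi_{f(i)}=\theta_{v(i)}=\widehat{M}^{(1)}_{A}(\xi_{i})$, which by the properties of $\widehat{M}^{(1)}_{A}$ stated in Section~\ref{sec-comp} satisfies both $\xi_{i}\in\SSS^{(1)}_{A}\Rightarrow\xi_{f(i)}=\xi_{i}$ and $\graph(\xi_{f(i)})\subseteq\graph(\xi_{i})$. Since $\range(\widehat{M}^{(1)}_{A})=\widehat{\SSS}^{(1)}_{A}$ and on even indices $\xi$ already hits all of $\SSS^{(1)}_{A}$, one also needs $\theta$ to be a numbering \emph{of} $\SSS^{(1)}_{A}$ with image exactly $\SSS^{(1)}_{A}$, which it is, so $f$ witnesses that $\theta=\xi\circ f$ is a special $\xi$-standard numbering.

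\textbf{Main obstacle.} Two points need care. The mild one is keeping the two blocks genuinely disjoint and the parity bookkeeping clean so that the universal function and the $smn$-function of $\xi$ are honestly total computable; this is routine. The real obstacle is the $smn$/enumeration check making $\xi$ a \emph{G\"odel} numbering rather than merely a numbering with computable universal function: one must exhibit the $smn$-function for $\xi$ itself (not just reduce $\xi$ to $\varphi$), and, conversely, argue $\varphi\le\xi$ so that $\xi$ inherits maximality; the cleanest route is to reduce both $\varphi\le\xi$ (via $i\mapsto 2i+1$) and $\xi\le\varphi$ (via the map above), conclude $\xi\equiv\varphi$, hence $\xi$ is a G\"odel numbering since $\varphi$ is and the class of G\"odel numberings is closed under $\equiv$. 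Alternatively, invoke the characterization that G\"odel numberings are precisely the maximal numberings with c.e.\ graph, which $\xi\equiv\varphi$ delivers at once. Once that is settled, the verification that $f$ produces a \emph{special} standard numbering is immediate from the already-established algebra of $\widehat{M}^{(1)}_{A}$.
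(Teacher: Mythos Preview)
Your construction has a genuine gap: the function $f(i)=2v(i)$ you build does \emph{not} satisfy $\theta=\xi\circ f$. Unwinding your definitions, $\xi_{f(i)}=\xi_{2v(i)}=\theta_{v(i)}=\widehat{M}^{(1)}_{A}(\xi_{i})$, so $\xi\circ f$ is the numbering $i\mapsto\widehat{M}^{(1)}_{A}(\xi_{i})$, not $\theta$. (Concretely, $(\xi\circ f)_{2j}=\widehat{M}^{(1)}_{A}(\theta_{j})=\theta_{j}$ while $\theta_{2j}$ is in general something else entirely.) You have produced \emph{some} special $\xi$-standard numbering of $\SSS^{(1)}_{A}$, but not $\theta$ itself.

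Worse, the interleaving $\xi_{2i}=\theta_{i}$, $\xi_{2i+1}=\varphi_{i}$ cannot be repaired by a different choice of $f$. Suppose $f$ witnesses that $\theta$ is a special $\xi$-standard numbering. Then for every $i$ one has $\xi_{f(i)}=\theta_{i}$, and condition~(\ref{std}) says $\xi_{i}\in\SSS^{(1)}_{A}\Rightarrow\xi_{f(i)}=\xi_{i}$, i.e.\ $\theta_{i}=\xi_{i}$. But with your $\xi$ every even index satisfies $\xi_{2j}=\theta_{j}\in\SSS^{(1)}_{A}$, forcing $\theta_{2j}=\xi_{2j}=\theta_{j}$ for all $j$, a constraint an arbitrary quasi-G\"odel numbering certainly need not meet.

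The paper avoids this by invoking the Isomorphism Theorem~\ref{thm-iso}: since $\theta$ and the concrete special $\varphi$-standard numbering $\leftidx{^{A}}\varphi=\varphi\circ f$ are computably isomorphic, there is a computable \emph{permutation} $g$ with $\theta=\leftidx{^{A}}\varphi\circ g$. One then sets $\xi\Def\varphi\circ g$; because $g$ is a bijection, $\xi$ is again a G\"odel numbering, and the conjugate $g^{-1}\circ f\circ g$ witnesses that $\theta=\xi\circ(g^{-1}\circ f\circ g)$ is a special $\xi$-standard numbering (the two conditions transfer directly from those for $f$ and $\varphi$). The missing idea in your attempt is precisely this use of a bijective reduction to align the index structure of $\theta$ with that of the ambient G\"odel numbering.
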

\begin{proof}
Let $\leftidx{^{A}}\varphi$ be the special standard numbering von $\SSS^{(1)}_{A}$ used in Section~\ref{sec-isf}. Then $\leftidx{^{A}}\varphi = \varphi \circ f$, for some $f \in \RRR^{(1)}$. Moreover, there is a one-to-one and onto function $g \in \RRR^{(1)}$ with $\theta = \leftidx{^{A}}\varphi \circ g$. Set $\xi \Def \varphi \circ g$. Then $\xi$ is a G\"odel numbering, and as is readily verified, $\xi \circ (g^{-1} \circ f \circ g)$ is a special $\xi$-standard numbering of $\SSS^{(1)}_{A}$. Moreover, $\theta = \xi \circ (g^{-1} \circ f \circ g)$.
\end{proof}

It follows that the construction of a quasi-G\"odel numbering of $\SSS^{(1)}_{A}$ in Section~\ref{sec-isf} has universal character: every quasi-G\"odel numbering is obtained in this way.

In a next step we will examine which other properties the quasi-G\"odel numbered set $(\SSS^{(1)}_{A}, \theta)$ has as a subobject of $(\PPP^{(1)}, \varphi)$. Here, a subset $Z \subseteq \PPP^{(1)}$ with a numbering $\zeta$ is called \emph{subobject} of $(\PPP^{(1)}, \varphi)$ if $\zeta \le_{m} \varphi$ (cf.\ \cite{er73,er75}).

$(Z, \zeta)$ is an \emph{sn-subobject} of $(\PPP^{(1)}, \varphi)$ if, in addition, there is some $g \in \RRR^{(1)}$ so that $\zeta \circ g$ is a special $\varphi$-standard n numbering of $Z$.

$(Z, \zeta)$ is called \emph{r-subobject} or \emph{retract} of $(\PPP^{(1)}, \varphi)$ if, in addition, there is an idempotent onto function $\fun{H}{\PPP^{(1)}}{Z}$ and some $h \in \RRR^{(1)}$ such that $H(\varphi_{i}) = \zeta_{h(i)}$.

$(Z, \zeta)$ is an \emph{n-subobject} of $(\PPP^{(1)}, \varphi)$ if, in addition, there exist $f \in \RRR^{(1)}$ such that for all $i \in I_{\varphi}(Z)$, $\varphi_{i} = \zeta_{f(i)}$. In this case the numbering $zeta$ is called \emph{$\varphi$-normal} (cf.\ \cite{ma71}).

$(Z, \zeta)$ is called \emph{principal subobject} of $(\PPP^{(1)}, \varphi)$ and $\zeta$ \emph{$\varphi$-principal numbering}, if for every numbering $\eta$ of $Z$ with $\eta \le_{m} \varphi$ one has that $\eta \le_{m} \zeta$.

As seen above, the quasi-G\"odel numbering $\theta$ is computably isomorphic to a special $\varphi$-standard numbering of $\SSS^{(1)}_{A}$. This shows

\begin{corollary}\label{cor-snsub}
$(\SSS^{(1)}_{A}, \theta)$ is an sn-subobject of $(\PPP^{(1)}, \varphi)$,
\end{corollary}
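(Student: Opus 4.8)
The plan is to read off the statement directly from the definitions and the results already established in this section. Recall that $(Z,\zeta)$ is an \emph{sn-subobject} of $(\PPP^{(1)},\varphi)$ if $\zeta \le_m \varphi$ and there is some $g \in \RRR^{(1)}$ such that $\zeta \circ g$ is a special $\varphi$-standard numbering of $Z$. So I need two things for $(\SSS^{(1)}_{A},\theta)$: first, that $\theta \le_m \varphi$, and second, that $\theta$ becomes a special $\varphi$-standard numbering after composition with a suitable total computable function.

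First I would note that $\theta$ satisfies Condition~(QGN~I) (it is a quasi-G\"odel numbering), so $\lambda i, x.\ \theta_i(x) \in \PPP^{(2)}$, whence there is $d \in \RRR^{(1)}$ with $\theta_i = \varphi_{d(i)}$; this gives $\theta \le_m \varphi$. (Alternatively, this is already noted in the proof of Theorem~\ref{thm-W}.) For the second part, recall that $\leftidx{^{A}}\varphi$ from Section~\ref{sec-isf} is a special $\varphi$-standard numbering of $\SSS^{(1)}_{A}$, and by Theorem~\ref{thm-iso} all quasi-G\"odel numberings are computably isomorphic, so in particular there is a computable isomorphism $g \in \RRR^{(1)}$ (one-to-one, onto, with computable inverse) such that $\theta = \leftidx{^{A}}\varphi \circ g$. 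But being a special $\varphi$-standard numbering is a property of the underlying family together with a mild uniformity, and it is preserved under precomposition with a computable bijection: indeed $\theta \circ g^{-1} = \leftidx{^{A}}\varphi$ is already a special $\varphi$-standard numbering, so taking the function in the definition to be $g^{-1}$ does the job. Thus both clauses of the definition of sn-subobject hold.

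So the proof is essentially a one-liner once Theorems~\ref{thm-iso} and the construction of $\leftidx{^{A}}\varphi$ are invoked. I do not anticipate any genuine obstacle; the only point requiring a word of care is to make sure the function witnessing the sn-subobject condition is total computable, which it is because a computable isomorphism has a total computable inverse. I would phrase the proof as: since $\theta$ is a quasi-G\"odel numbering, so is $\leftidx{^{A}}\varphi$ (Theorem~\ref{thm-qgn}), hence by the Isomorphism Theorem~\ref{thm-iso} there is a computable isomorphism $g$ with $\theta \circ g^{-1} = \leftidx{^{A}}\varphi$; the latter is a special $\varphi$-standard numbering of $\SSS^{(1)}_{A}$ by construction in Section~\ref{sec-isf}, and $\theta \le_m \varphi$ because $\theta$ satisfies (QGN~I); therefore $(\SSS^{(1)}_{A},\theta)$ is an sn-subobject of $(\PPP^{(1)},\varphi)$.

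\begin{proof}
Since $\theta$ is a quasi-G\"odel numbering of $\SSS^{(1)}_{A}$, so is the special $\varphi$-standard numbering $\leftidx{^{A}}\varphi$ constructed in Section~\ref{sec-isf} (cf.\ Theorem~\ref{thm-qgn}). By the Isomorphism Theorem~\ref{thm-iso} there is a computably isomorphism $g \in \RRR^{(1)}$, that is, a one-to-one and onto function with $g^{-1} \in \RRR^{(1)}$, such that $\theta = \leftidx{^{A}}\varphi \circ g$. Hence $\theta \circ g^{-1} = \leftidx{^{A}}\varphi$ is a special $\varphi$-standard numbering of $\SSS^{(1)}_{A}$, which is the second requirement in the definition of an sn-subobject (with $g^{-1}$ in the role of the function $g$ there). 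Moreover, since $\theta$ satisfies Condition~(QGN~I), the function $\lambda i, x.\ \theta_{i}(x)$ is in $\PPP^{(2)}$, so there is some $d \in \RRR^{(1)}$ with $\theta_{i} = \varphi_{d(i)}$, whence $\theta \le_{m} \varphi$. Therefore $(\SSS^{(1)}_{A}, \theta)$ is an sn-subobject of $(\PPP^{(1)}, \varphi)$.
\end{proof}
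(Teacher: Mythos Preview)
Your proof is correct and follows essentially the same route as the paper: the paper observes (just before the corollary) that $\theta$ is computably isomorphic to the special $\varphi$-standard numbering $\leftidx{^{A}}\varphi$, which immediately yields both the subobject condition $\theta \le_m \varphi$ and the required $g$ with $\theta \circ g$ a special $\varphi$-standard numbering. Your version is slightly more explicit (deriving $\theta \le_m \varphi$ separately from (QGN~I)), but the argument is the same.
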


In Section~\ref{sec-syn} we have considered a Turing program $M^{(1)}_{A}$ that calls arbitrary Turing programs $P$ so that $M^{(1)}_{A}(P)$ is a program that computes the functions in $\SSS^{(1)}_{A}$. By considering the semantics of the programs we obtain an idempotent onto map $\fun{M_{A}}{\PPP^{(1)}}{\SSS^{(1)}_{A}}$ such that $M_{A}(\varphi_{i}) = \leftidx{^{A}}\varphi_{i}$. As we have seen above, $\theta$ is computably isomorphic to $\leftidx{^{A}}\varphi_{i}$.

\begin{theorem}\label{thm-ret}
$(\SSS^{(1)}_{A}, \theta)$ is a retract of $(\PPP^{(1)}, \varphi)$.
\end{theorem}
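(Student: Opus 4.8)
The plan is to assemble the statement from three ingredients already in hand: (i) the map $M_A\colon \PPP^{(1)}\to\SSS^{(1)}_A$ arising from the machine model of Section~\ref{sec-syn}, which is idempotent, onto, and satisfies $M_A(\varphi_i)=\leftidx{^A}\varphi_i$ with $h$ the identity (or a suitable index transformer), so that $(\SSS^{(1)}_A,\leftidx{^A}\varphi)$ is literally a retract of $(\PPP^{(1)},\varphi)$; (ii) the Isomorphism Theorem~\ref{thm-iso}, which gives a computable isomorphism between $\theta$ and $\leftidx{^A}\varphi$; and (iii) the observation that the property of being a retract is preserved under computable isomorphism of the numbering of the subobject. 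So first I would recall from Section~\ref{sec-syn} that semantics of the programs $M^{(1)}_A(P)$ yields the idempotent surjection $\fun{M_A}{\PPP^{(1)}}{\SSS^{(1)}_A}$ with $M_A\circ \varphi = \leftidx{^A}\varphi$; idempotence follows from the idempotence of $\widehat M^{(1)}_A$ established after the definition of $\widehat M^{(n)}_A$ in Section~\ref{sec-comp}, together with the fact that $M_A$ restricted to $\SSS^{(1)}_A$ is the identity, and surjectivity follows from $\leftidx{^A}\varphi$ being a numbering of $\SSS^{(1)}_A$.

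Next I would invoke Theorem~\ref{thm-iso}: there is a one-to-one and onto computable $g\in\RRR^{(1)}$ with $\theta = \leftidx{^A}\varphi\circ g$, and hence $\leftidx{^A}\varphi = \theta\circ g^{-1}$ with $g^{-1}\in\RRR^{(1)}$ as well. (Equivalently one can extract this from the last two theorems of the section: by Theorem~\ref{thm-qgnstand} there is a G\"odel numbering $\xi$ making $\theta$ a special $\xi$-standard numbering, and then the machine-model retraction transports along $\xi$; but it is cleanest to go directly through $\leftidx{^A}\varphi$.) Define $H\colon\PPP^{(1)}\to\SSS^{(1)}_A$ to be exactly $M_A$ — the target set $\SSS^{(1)}_A$ is unchanged, so $H$ is still an idempotent onto map. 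For the index function, set $h(i)\Def g^{-1}(i)$; I would then verify
\[
H(\varphi_i) = M_A(\varphi_i) = \leftidx{^A}\varphi_i = \theta_{g^{-1}(i)} = \theta_{h(i)},
\]
using $\leftidx{^A}\varphi = \theta\circ g^{-1}$ in the third equality. Since $\theta\le_m\varphi$ already (it is a quasi-G\"odel numbering of a subobject, cf.\ Corollary~\ref{cor-snsub}, or directly: $\theta$ satisfies (QGN~I) so $\lambda i,x.\ \theta_i(x)\in\PPP^{(2)}$, whence $\theta\le_m\varphi$), all three defining conditions of "r-subobject" are met, so $(\SSS^{(1)}_A,\theta)$ is a retract of $(\PPP^{(1)},\varphi)$.

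The only genuine points that need care — and the likeliest obstacle — are the properties of $M_A$ as a set-theoretic map rather than as a program transformation: one must check that $M_A$ is well defined on the semantic level (i.e.\ depends only on $\varphi_i$, not on the index $i$), that it is idempotent, and that it is onto $\SSS^{(1)}_A$. Well-definedness and surjectivity follow from the characterisation after the statement of the machine model — $\sem^{(n)}(M^{(n)}_A(P))$ is the restriction of $\sem^{(n)}(P)$ to the maximal initial segment with edge length in $A$ contained in $\dom(\sem^{(n)}(P))$, a purely semantic description — and idempotence follows because this restriction operation is itself idempotent (it is $\widehat M^{(1)}_A$, shown idempotent in Section~\ref{sec-comp}) and fixes every element of $\SSS^{(1)}_A$. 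Everything else is a formal transport along the computable isomorphism $g$ supplied by Theorem~\ref{thm-iso}, which is routine.
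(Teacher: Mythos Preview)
Your proposal is correct and follows essentially the same route as the paper: the paper's proof is the paragraph immediately preceding the theorem, which invokes the semantic map $M_A$ coming from the machine model (with $M_A(\varphi_i)=\leftidx{^{A}}\varphi_i$, so $(\SSS^{(1)}_A,\leftidx{^{A}}\varphi)$ is a retract with $h$ the identity) and then uses the computable isomorphism between $\theta$ and $\leftidx{^{A}}\varphi$ established just above. Your write-up merely makes the transport along the isomorphism explicit by setting $h=g^{-1}$, and your remarks on well-definedness and idempotence of $M_A$ (via $\widehat M^{(1)}_A$) fill in exactly the semantic check the paper leaves implicit.
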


As is shown by Ershov~\cite[$\S 4$]{er73}, every retract of $(\PPP^{(1)}, \varphi)$ is an n-subobject and each n-subobject is a principal subobject of $(\PPP^{(1)}, \varphi)$. Thus, every quasi-G\"odel numbering $\theta$ of $\SSS^{(1)}_{A}$  is $\varphi$-normal and a $\varphi$-principal numbering. Since  every numbering of $\SSS^{(1)}_{A}$ that satisfies (QGN~I) is $m$-reducible to $\varphi$, it follows with Theorem~\ref{thm-maxmin} that each $\varphi$-principal numbering of $\SSS^{(1)}_{A}$ is a quasi-G\"odel numbering is. As a consequence, every $\varphi$-normal numbering of $\SSS^{(1)}_{A}$ is a quasi-G\"odel numbering.

\begin{theorem}\label{thm-coinc}
Let $\hpt^{\varphi}_{A}$, $\nor^{\varphi}_{A}$ and $\qgn^{\varphi}_{A}$, respectively, be the set of $\varphi$-principal, $\varphi$-normal and quasi-G\"odel numberings of $\SSS^{(1)}_{A}$. Then
\[
\hpt^{\varphi}_{A} = \nor^{\varphi}_{A} = \qgn^{\varphi}_{A}.
\]
\end{theorem}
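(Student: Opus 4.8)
The plan is to establish the three-way equality as a cycle of inclusions
\[
\qgn^{\varphi}_{A} \subseteq \nor^{\varphi}_{A} \subseteq \hpt^{\varphi}_{A} \subseteq \qgn^{\varphi}_{A},
\]
piggybacking on Ershov's hierarchy of subobjects together with Theorem~\ref{thm-maxmin}. The first two inclusions are already contained in the discussion preceding the theorem, and only the last one calls for an extra line.

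For $\qgn^{\varphi}_{A} \subseteq \nor^{\varphi}_{A}$: if $\theta$ is a quasi-G\"odel numbering of $\SSS^{(1)}_{A}$, then $(\SSS^{(1)}_{A}, \theta)$ is a retract of $(\PPP^{(1)}, \varphi)$ by Theorem~\ref{thm-ret}, hence an n-subobject by Ershov's result that every retract is an n-subobject, which means precisely that $\theta$ is $\varphi$-normal. For $\nor^{\varphi}_{A} \subseteq \hpt^{\varphi}_{A}$: an n-subobject is a principal subobject, again by Ershov, so a $\varphi$-normal numbering of $\SSS^{(1)}_{A}$ is $\varphi$-principal.

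For the remaining inclusion $\hpt^{\varphi}_{A} \subseteq \qgn^{\varphi}_{A}$, let $\zeta$ be a $\varphi$-principal numbering of $\SSS^{(1)}_{A}$. Being a subobject, $\zeta$ satisfies $\zeta \le_{m}\varphi$; thus $\lambda i, x.\ \zeta_{i}(x) \in \PPP^{(2)}$, so $\egraph(\lambda i, x.\ \zeta_{i}(x))$ is c.e., and, being infinite, it is the range of some function in $\RRR^{(1)}$. Hence $\zeta$ satisfies (QGN~I). Moreover, every numbering $\gamma$ of $\SSS^{(1)}_{A}$ satisfying (QGN~I) fulfils $\gamma \le_{m}\varphi$ (apply the same observation to $\gamma$ and use that $\varphi$ admits an \emph{smn}-function), and therefore $\gamma \le_{m}\zeta$ by $\varphi$-principality of $\zeta$. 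So $\zeta$ is maximal with respect to $\le_{m}$ among all numberings of $\SSS^{(1)}_{A}$ satisfying (QGN~I), and implication (\ref{thm-maxmin-2})$\Rightarrow$(\ref{thm-maxmin-1}) of Theorem~\ref{thm-maxmin} shows that $\zeta$ is a quasi-G\"odel numbering.

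The one point that needs care is the step $\hpt^{\varphi}_{A} \subseteq \qgn^{\varphi}_{A}$: one has to notice that a $\varphi$-principal numbering is in particular a subobject, so that $\zeta \le_{m}\varphi$ and hence (QGN~I) come for free, before the maximality clause together with Theorem~\ref{thm-maxmin} can be brought to bear. Apart from that there is no real obstacle, since the inclusions between retracts, n-subobjects and principal subobjects are quoted directly from \cite{er73}.
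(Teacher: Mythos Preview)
Your proof is correct and follows essentially the same route as the paper: the inclusions $\qgn^{\varphi}_{A} \subseteq \nor^{\varphi}_{A} \subseteq \hpt^{\varphi}_{A}$ come from Theorem~\ref{thm-ret} combined with Ershov's hierarchy retract $\Rightarrow$ n-subobject $\Rightarrow$ principal subobject, and the closing inclusion $\hpt^{\varphi}_{A} \subseteq \qgn^{\varphi}_{A}$ is obtained exactly as in the paper's discussion preceding the theorem, by observing that (QGN~I) forces $m$-reducibility to $\varphi$ and then invoking Theorem~\ref{thm-maxmin}. Your write-up even spells out one detail the paper leaves implicit, namely that a $\varphi$-principal numbering, being by definition a subobject, automatically satisfies (QGN~I).
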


If $B$ is a c.e.\ superset of $A$ and $\gamma$ is a quasi-G\"odel numbering of $\SSS^{(1)}_{B}$, then the above results remain true if $\varphi$ is replaced by $\gamma$ and $(\PPP^{(1)}, \varphi)$ by $(\SSS^{(1)}_{B}, \gamma)$. In particular we obtain an analogue of Theorem~\ref{thm-ret}.

\begin{theorem}\label{thm-ret2}
$(\SSS^{(1)}_{A}, \theta)$ is a retract of $(\SSS^{(1)}_{B}, \gamma)$.
\end{theorem}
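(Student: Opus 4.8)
The plan is to take as retraction the restriction operator $\widehat{M}^{(1)}_{A}$. Let $H$ be the restriction of $\widehat{M}^{(1)}_{A}$ to $\SSS^{(1)}_{B}$. First I would check that $H$ is an idempotent map of $\SSS^{(1)}_{B}$ \emph{onto} $\SSS^{(1)}_{A}$. If $r\in\SSS^{(1)}_{B}$ is total, then since $A$ is infinite, for every $x$ there is $a\in A$ with $x<a$ and $y\in\dom(r)=\omega$ for all $y<a$; hence $\widehat{M}^{(1)}_{A}(r)=r$. If $r\in\anf^{(1)}_{B}$, then $\dom(r)$ is empty or an initial segment of length $b\in B$, and $\widehat{M}^{(1)}_{A}(r)$ has as domain the largest initial segment of length in $A$ contained in $\dom(r)$, which is again empty or an initial segment of length in $A$; so $\widehat{M}^{(1)}_{A}(r)\in\anf^{(1)}_{A}$. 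In all cases $H(r)\in\SSS^{(1)}_{A}$. The same case distinction shows that $\widehat{M}^{(1)}_{A}$ acts as the identity on $\SSS^{(1)}_{A}$ (a total function is unchanged; an initial segment function of length $a\in A$ is its own maximal initial sub-segment of length in $A$; the nowhere defined function stays nowhere defined), so $H$ is onto $\SSS^{(1)}_{A}$, and it is idempotent because $\widehat{M}^{(1)}_{A}$ is (as noted in Section~\ref{sec-comp}) and $H(r)\in\SSS^{(1)}_{B}$ for every $r\in\SSS^{(1)}_{B}$.

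Next I would verify the subobject condition $\theta\le_{m}\gamma$. As $\theta$ is a quasi-G\"odel numbering it satisfies (QGN~I), so by Theorem~\ref{thm-qgn1} there is $k\in\RRR^{(2)}$ with $\range(\lambda t.\ k(i,t))=\egraph(\theta_{i})$. Each $\theta_{i}$ lies in $\SSS^{(1)}_{A}\subseteq\SSS^{(1)}_{B}$, and $\gamma$, being a quasi-G\"odel numbering of $\SSS^{(1)}_{B}$, satisfies (QGN~II) for $\SSS^{(1)}_{B}$; applying it to this $k$ yields $v\in\RRR^{(1)}$ with $\theta_{i}=\gamma_{v(i)}$, i.e.\ $\theta\le_{m}\gamma$.

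It remains to produce $h\in\RRR^{(1)}$ with $H(\gamma_{i})=\theta_{h(i)}$. Applying Theorem~\ref{thm-qgn1} to $\gamma$ gives $k\in\RRR^{(2)}$ with $\range(\lambda t.\ k(i,t))=\egraph(\gamma_{i})$. From this I would build, by the scheme of Lemma~\ref{lem-meth1}, a function $k'\in\RRR^{(2)}$ such that $\lambda t.\ k'(i,t)$ enumerates $\egraph(\widehat{M}^{(1)}_{A}(\gamma_{i}))$: the predicate $Q$ and the function $f$ are chosen so that the argument $x$ is emitted, with value $\gamma_{i}(x)$ read off from $k$, precisely once the part of the enumeration of $\egraph(\gamma_{i})$ produced so far exhibits some $a\in A$ with $x<a$ and $y\in\dom(\gamma_{i})$ for all $y<a$ --- this is the same manipulation of enumerations used to define $\leftidx{^{A}}\varphi^{(n)}$ in Section~\ref{sec-isf} and to construct the auxiliary function $v$ in the proof of Theorem~\ref{thm-anfext}. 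Since $\widehat{M}^{(1)}_{A}(\gamma_{i})\in\SSS^{(1)}_{A}$, Condition~(QGN~II) for $\theta$ applied to $k'$ delivers $h\in\RRR^{(1)}$ with $\widehat{M}^{(1)}_{A}(\gamma_{i})=\theta_{h(i)}$, that is $H(\gamma_{i})=\theta_{h(i)}$. Together with the previous paragraphs this shows that $(\SSS^{(1)}_{A},\theta)$ is a retract of $(\SSS^{(1)}_{B},\gamma)$. (Alternatively one could first note that $\lambda i.\ \widehat{M}^{(1)}_{A}(\gamma_{i})$ is itself a quasi-G\"odel numbering of $\SSS^{(1)}_{A}$ and invoke Theorem~\ref{thm-iso}, but the direct route above is shorter.)

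The main obstacle is the construction of $k'$ in the last step --- equivalently, checking that $\lambda i.\ \widehat{M}^{(1)}_{A}(\gamma_{i})$ satisfies (QGN~I): one must recognise, monotonically and effectively from the enumeration of $\egraph(\gamma_{i})$, which arguments belong to the maximal initial segment of length in $A$ contained in $\dom(\gamma_{i})$, and one must check that the three possible shapes of $\dom(\gamma_{i})$ (all of $\omega$, a proper initial segment of length in $B$, or empty) are handled correctly and uniformly and are compatible with the sequential walk through $\omega$ built into Lemma~\ref{lem-meth1}. This is, however, a routine variant of a construction already carried out several times in the paper.
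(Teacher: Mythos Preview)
Your proposal is correct and follows precisely the approach the paper intends: the paper does not give a separate proof of this theorem but simply remarks that the argument for Theorem~\ref{thm-ret} goes through with $(\PPP^{(1)},\varphi)$ replaced by $(\SSS^{(1)}_{B},\gamma)$, i.e.\ one uses the idempotent operator $\widehat{M}^{(1)}_{A}$ (equivalently the machine $M^{(1)}_{A}$) as the retraction and realises it on indices via the quasi-G\"odel properties. Your write-up spells out exactly these details --- the idempotence and surjectivity of $H=\widehat{M}^{(1)}_{A}\restriction\SSS^{(1)}_{B}$, the reduction $\theta\le_{m}\gamma$, and the construction of $h$ via Lemma~\ref{lem-meth1} and (QGN~II) --- so it is a faithful and complete elaboration of the paper's sketch.
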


So far in this work it was shown that the function classes $\SSS^{(n)}_A$ have a quasi-G"odel numbering $\theta^{(n)}$ for every infinite c.e.\ set $A$ and that for this a sufficiently rich computability theory can be developed in which the same results apply as in the computability theory for all partial computable functions. If there are classes of functions that include the total computable functions but do not contain all partial computable functions and for which a satisfactory computability theory can be developed, one naturally wonders whether there are $\subset^{*}$-minimal classes of this kind. Here, $X \subset^{*} Y$, if $X \subset Y$ and $Y \setminus X$ is infinite. We do not want to treat this problem in full generality, but want to examine whether the family $\bb$ of sets contains $\subset^{*}$-minimal elements, where
\[
\bb \Def \set{\RRR^{(1)} \cup T}{\text{$T \subseteq \anf^{(1)}_{\omega}$ and $\RRR^{(1)} \cup T$ has a quasi-G\"odel numbering}}
\]
Set $h(0, j) \Def j$ and $h(i+1, j) \Def 2^{h(i, j)}$. Then each set $A_{i} \Def \set{h(i, j)}{j \in \omega}$ with $i \in \omega$ is computable and infinite. Since $A_{i+1} \subset^{*} A_{i}$, we obtain that $\bb$ contains infinite descending chains with respect to $\subset^{*}$.

If every function class in $\bb$ were of the form $\RRR^{(1)} \cup \anf^{(1)}_A$ with an infinite c.e.\ set $A$, we would of course know that $\bb$ contains no $\subset^{*}$-minimal elements, since from any infinite c.e.\ set one can remove infinitely many elements without destroying the property of being both infinite and c.e. However, we do not know whether there are other sets $T \subseteq \anf^{(1)}_\omega$ for which $\RRR^{(1)} \cup T$ has a quasi-G\"odel numbering.  Menzel and
Sperschneider~\cite[p.~76]{mes84} show that the family of sets 
\[
\set{\RRR^{(1)} \cup T}{\text{$T \subseteq \anf^{(1)}_{\omega}$ and $\RRR^{(1)} \cup T$ enumerable in $\varphi$}}
\]
does not contain $\subset^{*}$-minimal elements. Here, for a set $X$ with numbering $\delta$, $Y \subseteq X$ is \emph{enumerable in $\delta$} if $Y = \delta[E]$, for some c.e.\ set $E \subseteq \omega$.
If $\RRR^{(1)} \cup T$ now has a quasi-G\"odel numbering, then this is $m$-reducible to $\varphi$. So $\RRR^{(1)} \cup T$ is also enumerable in $\varphi$. This leads us to the following result:

\begin{theorem}\label{thm-noex}
$\bb$ has no $\subset^{*}$-minimal elements.
\end{theorem}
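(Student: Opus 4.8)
The plan is to reduce the statement to the result of Menzel and Sperschneider cited just above, which says that the family
\[
\mathcal{E} \Def \set{\RRR^{(1)} \cup T}{\text{$T \subseteq \anf^{(1)}_{\omega}$ and $\RRR^{(1)} \cup T$ enumerable in $\varphi$}}
\]
has no $\subset^{*}$-minimal elements. The key observation, which has essentially already been recorded in the discussion preceding the theorem, is that $\bb \subseteq \mathcal{E}$: if $X = \RRR^{(1)} \cup T \in \bb$ then by definition $X$ carries a quasi-G\"odel numbering $\theta$, and by (QGN~I) the extended graph of $\lambda i, x.\ \theta_i(x)$ is c.e., so $\lambda i, x.\ \theta_i(x) \in \PPP^{(2)}$ and hence $\theta \le_m \varphi$; therefore $X = \theta[\omega]$ is enumerable in $\varphi$, i.e.\ $X \in \mathcal{E}$.

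First I would state and justify this inclusion $\bb \subseteq \mathcal{E}$ explicitly, citing Proposition~\ref{pn-qgnchar} (or just (QGN~I) together with the fact that $\varphi$ is a G\"odel numbering) for the step $\theta \le_m \varphi$. Next I would assume, for contradiction, that some $X \in \bb$ is $\subset^{*}$-minimal in $\bb$. The danger is that $X$ could fail to be $\subset^{*}$-minimal in the larger family $\mathcal{E}$ yet still be minimal in the smaller $\bb$; so minimality does not transfer downward along an inclusion of families for free. To handle this I would argue as follows: by the Menzel--Sperschneider theorem $X$ is not $\subset^{*}$-minimal in $\mathcal{E}$, so there is some $Y \in \mathcal{E}$ with $Y \subset^{*} X$; but then $Y$ is again of the form $\RRR^{(1)} \cup T'$ with $T' \subseteq \anf^{(1)}_{\omega}$ and $Y$ is enumerable in $\varphi$. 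The remaining gap is to show $Y \in \bb$, i.e.\ that $Y$ carries a quasi-G\"odel numbering.

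The main obstacle is therefore exactly this last point: closing $\bb$ under ``$\subset^{*}$-smaller members of $\mathcal{E}$'', or more modestly, producing from $X \in \bb$ a strictly $\subset^{*}$-smaller member of $\bb$ directly. I expect the cleanest route is not to transport the Menzel--Sperschneider witness $Y$ into $\bb$ at all, but rather to re-examine their construction: their proof presumably removes infinitely many initial segment functions from $\RRR^{(1)} \cup T$ while keeping the result enumerable in $\varphi$, and one must check that the removed class still admits a quasi-G\"odel numbering. Here Theorem~\ref{thm-maxmin} and Proposition~\ref{pn-qgnchar} are the tools: a numbering of $\RRR^{(1)} \cup T'$ is a quasi-G\"odel numbering iff it is $m$-equivalent to one, and by Theorem~\ref{thm-maxmin} it suffices to exhibit a single numbering of $\RRR^{(1)} \cup T'$ that satisfies both (QGN~I) and (QGN~II). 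Condition (QGN~I) is immediate once $\RRR^{(1)} \cup T' \subseteq \PPP^{(1)}$ is enumerable in $\varphi$ via a numbering with c.e.\ extended graph; the work is (QGN~II), for which one wants the removal of the initial segment functions to be done ``effectively in the indices'' so that an $\leftidx{^{A}}\varphi$-style standard construction still applies. I would spell this out by taking the standard numbering $\leftidx{^{A}}\varphi$ relative to the underlying set of edge lengths (after reducing to the unary case as in the opening of Section~\ref{sec-semlat}), verifying that passing from $A$ to a suitable infinite c.e.\ subset, or more generally restricting $T$ to the effectively thinned-out family, preserves Conditions (QGN~I) and (QGN~II), and concluding $Y \in \bb$, contradicting the assumed $\subset^{*}$-minimality of $X$ in $\bb$. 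Hence $\bb$ has no $\subset^{*}$-minimal elements.
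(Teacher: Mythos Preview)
Your plan coincides with the paper's argument: the paper's entire proof is the observation $\bb \subseteq \mathcal{E}$ (via (QGN~I) and $m$-reducibility to $\varphi$) together with the citation of Menzel--Sperschneider that $\mathcal{E}$ has no $\subset^{*}$-minimal elements. The paper does not address the logical point you raise---that non-existence of $\subset^{*}$-minimal elements in a larger family need not restrict to a subfamily---so in that respect you are being more careful than the text itself, which simply asserts the conclusion.

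However, your proposed way of closing that gap does not work as written. In the final paragraph you suggest writing $X \in \bb$ as $\SSS^{(1)}_{A}$ for some infinite c.e.\ $A$ and then thinning $A$. But the paragraph immediately preceding the theorem states explicitly that it is \emph{not known} whether every member of $\bb$ has the form $\SSS^{(1)}_{A}$; this open question is precisely why Menzel--Sperschneider are invoked instead of the trivial ``remove infinitely many elements from $A$'' argument sketched two paragraphs earlier. So the ``$\leftidx{^{A}}\varphi$ relative to the underlying set of edge lengths'' route is unavailable for an arbitrary $X \in \bb$. Your fallback phrase ``or more generally restricting $T$ to the effectively thinned-out family'' gestures at the right thing but is not carried out: what is actually needed is to show that, for arbitrary $X = \RRR^{(1)} \cup T \in \bb$, the Menzel--Sperschneider construction produces a $Y \subset^{*} X$ that still admits a numbering satisfying both (QGN~I) and (QGN~II). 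Neither you nor the paper spells this out; the paper appears to take it as implicit in the cited result.
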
.

Let $\bn_{A}$ be the set of numberings of $\SSS^{(1)}_A$ that satisfy Condition~(QGN~I). 
As was shown at the beginning of this section, the quasi-G\"odel numberings of $\SSS^{(1)}_A$ are just those numberings in $\bn_{A}$ that are maximal with respect to $\le_m$. The numberings in $\bn_{A}$ are exactly those numberings of $\SSS^{(1)}_A$ that have a computable universal function (which is however in $\PPP^{(2)} \setminus \SSS^{(2)}_{A}$). Therefore, these numberings are called \emph{computable}.

For two numberings $\eta$ and $\gamma$ of $\SSS^{(1)}_A$, $\eta \oplus \gamma$ defined by
\[
(\eta \oplus \gamma)(2i) \Def \eta(i) \quad\text{and}\quad (\eta \oplus \gamma)(2i+1) \Def \gamma(i)
\]
is the direct sum of $\eta$ and $\gamma$. Let $[\eta]$ be the equivalence class of all numberings of $\SSS^{(1)}_A$ that are $m$-equivalent with $\eta$. As is well known, the $m$-reducibility relation can be lifted to the equivalence classes yielding a partial order on the set of these classes denoted by $\le$. As shown in~\cite{er73}, the collection of all equivalence classes is an upper semi-lattice, usually called \emph{Rogers semi-lattice}, with respect to this order with $[\eta\oplus\gamma]$ as least upper bound of $[\eta]$ and $[\gamma]$. Because the direct sum of two computable numberings is computable again, we obtain for the set $[\bn_{A}] \Def \bn_{A}/{\le_{m}}$:

\begin{proposition}\label{pn-cnum}
\begin{enumerate}

\item \label{pn-cnum-1}
$([\bn_{A}], \le)$ is a Rogers sub-semi-lattice of the Rogers semi-lattice of all numberings of $\SSS^{(1)}_{A}$.

\item \label{pn-cnum-2}
$\qgn_{A}$ is the greatest element in $([\bn_{A}], \le)$.

\end{enumerate}
\end{proposition}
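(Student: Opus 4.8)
The plan is to treat the two assertions separately, in each case reducing to results already proved in this section.

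For~(\ref{pn-cnum-1}), recall that in the Rogers semi-lattice of \emph{all} numberings of $\SSS^{(1)}_A$ the join of $[\eta]$ and $[\gamma]$ is $[\eta\oplus\gamma]$; hence, to see that $([\bn_A],\le)$ is a sub-semi-lattice it is enough to show that $\bn_A$ is nonempty and closed under $\oplus$. First I would record the auxiliary fact that $\bn_A$ is downward closed under $\le_m$ --- this is exactly the argument used for Proposition~\ref{pn-qgnchar}(\ref{pn-qgnchar-1}): if $\gamma$ satisfies (QGN~I) and $\eta\le_m\gamma$, say $\eta=\gamma\circ g$ with $g\in\RRR^{(1)}$, then relabelling the first component of a total computable enumeration of $\egraph(\lambda i,\vec x.\gamma_i(\vec x))$ via $g$ yields a c.e.\ description of $\egraph(\lambda i,\vec x.\eta_i(\vec x))$, which, being infinite, is the range of a total computable function, so $\eta\in\bn_A$. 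In particular $[\bn_A]$ consists of full $\equiv_m$-classes, and it is nonempty because it contains every quasi-G\"odel numbering (these exist by Theorem~\ref{thm-qgn2} and satisfy (QGN~I)).

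Next I would verify closure under $\oplus$. Given $\eta,\gamma\in\bn_A$, fix $h,h'\in\RRR^{(1)}$ enumerating $\egraph(\lambda i,\vec x.\eta_i(\vec x))$ and $\egraph(\lambda i,\vec x.\gamma_i(\vec x))$, respectively. Since $(\eta\oplus\gamma)_{2a}=\eta_a$ and $(\eta\oplus\gamma)_{2a+1}=\gamma_a$, the function that on even steps takes a value $\pair{\pair{a,\vec x},w}$ of $h$ and outputs $\pair{\pair{2a,\vec x},w}$, and on odd steps takes a value $\pair{\pair{a,\vec x},w}$ of $h'$ and outputs $\pair{\pair{2a+1,\vec x},w}$, is total computable and has range $\egraph(\lambda i,\vec x.(\eta\oplus\gamma)_i(\vec x))$; the tuples $\pair{\pair{i,\vec y},0}$ for even $i$ come from $h$ and for odd $i$ from $h'$. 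Hence $\eta\oplus\gamma$ satisfies (QGN~I), so $[\eta\oplus\gamma]\in[\bn_A]$, and $([\bn_A],\le)$ is a Rogers sub-semi-lattice of the Rogers semi-lattice of all numberings of $\SSS^{(1)}_A$, which is~(\ref{pn-cnum-1}).

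For~(\ref{pn-cnum-2}) I would use that a numbering of $\SSS^{(1)}_A$ is a quasi-G\"odel numbering exactly when it satisfies both (QGN~I) and (QGN~II). Fix a quasi-G\"odel numbering $\theta$ of $\SSS^{(1)}_A$ (for instance the one from Theorem~\ref{thm-qgn2}); it satisfies (QGN~I), so $\theta\in\bn_A$, and by Theorem~\ref{thm-iso} (or already by Proposition~\ref{pn-qgnchar}) all quasi-G\"odel numberings are $m$-equivalent, so $\qgn_A=[\theta]$ is a single, well-defined element of $[\bn_A]$. For maximality, let $\eta\in\bn_A$ be arbitrary; then $\eta$ satisfies (QGN~I) and $\theta$ satisfies (QGN~II), so Proposition~\ref{pn-qgnchar}(\ref{pn-qgnchar-3}) gives $\eta\le_m\theta$, i.e.\ $[\eta]\le\qgn_A$. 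Thus $\qgn_A$ is the greatest element of $([\bn_A],\le)$. I do not expect a genuine obstacle here, since everything is a corollary of the structural results established earlier in the section; the only point needing a little care is the bookkeeping in the $\oplus$-step --- verifying that the interleaved enumeration really exhausts the \emph{whole} extended graph (including all the auxiliary $\pair{\pair{i,\vec y},0}$-tuples) and that one can remain with total rather than merely partial computable enumerations, which is unproblematic precisely because all the extended graphs involved are infinite.
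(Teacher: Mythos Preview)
Your proposal is correct and follows exactly the approach the paper takes: the paper states the proposition without a formal proof, noting just before it that ``the direct sum of two computable numberings is computable again'' and having established earlier (Theorem~\ref{thm-maxmin}) that the quasi-G\"odel numberings are precisely the $\le_m$-maximal elements of $\bn_A$. Your argument simply spells out these two observations in detail, including the bookkeeping for the interleaved enumeration in the $\oplus$-step and the downward closure of $\bn_A$ under $\le_m$ that guarantees $[\bn_A]$ consists of full $\equiv_m$-classes.
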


As we will see later in this section, $([\bn_{A}], \le)$ is not a lattice. The next result is an analogue of a theorem by Goetze~\cite{goe74,goe76} for the Rogers semi-lattice of the computable numberings of $\PPP^{(1)}$.

\begin{theorem}\label{thm-goe}
Every countable partially ordered set can be isomorphically embedded into the Rogers semi-lattice of computable numberings of $\SSS^{(1)}_{A}$.
\end{theorem}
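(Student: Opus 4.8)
The plan is to follow Goetze's original strategy, but re-building every ingredient out of the tools developed in this paper, so that no appeal to classical results about $\PPP^{(1)}$ is needed. First I would recall how Goetze's theorem is proved for $\PPP^{(1)}$: one uses the fact that the computable numberings of a given c.e.-indexed family form a rich semi-lattice because one can ``code'' an arbitrary countable poset $(\mathbb{P}, \preceq)$ by a carefully chosen family of computable numberings, using a fixed computable enumeration $p_0, p_1, \dots$ of $\mathbb{P}$ (with repetitions) and the down-sets $D_n = \{\, m \mid p_m \preceq p_n\,\}$, which are uniformly c.e. in $n$ relative to a presentation of $\preceq$. For each $n$ one builds a numbering $\nu_n$ of $\SSS^{(1)}_A$ that ``hides'' the join-pattern of $D_n$, arranged so that $\nu_m \le_m \nu_n$ holds exactly when $p_m \preceq p_n$. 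The key technical device in Goetze's construction is a ``tagging'' trick: one reserves certain indices to encode, for each function $r$ in the class, in which $\nu_n$'s that particular index of $r$ is available, and the reducibility $\nu_m \le_m \nu_n$ forces a translation that respects the tags, which by a diagonalisation (via the recursion theorem) is possible only when $D_m \subseteq D_n$, i.e.\ $p_m \preceq p_n$.

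Concretely, I would carry out the following steps. (1) Fix a presentation of the countable poset: a computable function giving an enumeration $\langle p_n\rangle_{n\in\omega}$ of $\mathbb{P}$ and a c.e.\ set $R \subseteq \omega^2$ with $\langle m,n\rangle \in R \Leftrightarrow p_m \preceq p_n$; from $R$ one extracts uniformly c.e.\ down-sets $D_n$. (2) Start from the fixed quasi-G\"odel numbering $\theta = \leftidx{^{A}}\varphi$ of $\SSS^{(1)}_A$ and, using the padding lemma (Theorem~\ref{thm-pad}) and the effective operations of Section~\ref{sec-comp}, carve up $\omega$ into infinitely many pairwise disjoint ``channels'' on which $\theta$ realises every function of $\SSS^{(1)}_A$ with infinitely many indices. (3) For each $n$ define a computable numbering $\nu_n$ of $\SSS^{(1)}_A$ that is a rearrangement of $\theta$: on a ``public'' channel it lists $\SSS^{(1)}_A$ straightforwardly (so $\nu_n$ still satisfies (QGN~I), hence lies in $\bn_A$), and on ``private'' channels indexed by $m$ it behaves like $\theta$ only on the sub-channel for $m$ if $m \in D_n$, and like some pathological (e.g.\ nowhere-defined, or constant) substitute otherwise. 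Here I would use Lemma~\ref{lem-meth1} and Corollary~\ref{cor-unien}-style enumeration constructions to ensure $\egraph(\lambda i,x.\ \nu_n{}_{,i}(x))$ is still c.e., so $\nu_n \in \bn_A$. (4) Prove $p_m \preceq p_n \Rightarrow \nu_m \le_m \nu_n$ by exhibiting the obvious reduction: when $D_m \subseteq D_n$ every channel of $\nu_m$ is visibly present in $\nu_n$, giving a computable translation. (5) Prove the converse, $\nu_m \le_m \nu_n \Rightarrow p_m \preceq p_n$, by contradiction: if $p_m \not\preceq p_n$ there is some $\ell \in D_m \setminus D_n$; feed a putative reduction $g$ with $\nu_m = \nu_n \circ g$ into the recursion theorem (Theorem~\ref{thm-fp}) to build a function $r \in \SSS^{(1)}_A$ whose $\nu_m$-index on channel $\ell$ behaves in a way that $\nu_n$ on channel $\ell$ (being the pathological substitute) cannot match, using that (QGN~I) lets us enumerate the relevant extended graphs and Rice-style diagonalisation to defeat $g$. (6) Conclude that $n \mapsto [\nu_n]$ descends to a well-defined order-embedding of $(\mathbb{P},\preceq)$ into $([\bn_A], \le)$; well-definedness on equivalence classes of $\mathbb{P}$-elements (i.e.\ when $p_m = p_n$ as elements) follows because then $D_m = D_n$ so $\nu_m \equiv_m \nu_n$.

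I expect the main obstacle to be step (5): making the diagonalisation work inside $\SSS^{(1)}_A$ rather than $\PPP^{(1)}$. In the classical proof one freely uses partial functions with arbitrary c.e.\ domains to separate channels, but here every function outside $\RRR^{(1)}$ must have domain an initial segment of $\omega$ with edge length in $A$, which severely constrains the ``pathological substitutes'' one may use and the diagonal functions one may build. The remedy, following the spirit of Section~\ref{sec-comp} (cf.\ the proofs of Theorem~\ref{thm-anfext} and Corollary~\ref{cor-edgel}), is to do all the separating work on \emph{total} functions: use distinct constant functions $\lambda x.\,c$, or total functions with a prescribed value at a single controlled argument, as the channel markers, and exploit that $\RRR^{(1)}$ is closed under the modified substitution (Theorem~\ref{thm-sub}) and that the effective fixed-point theorem (Theorem~\ref{thm-effp}) is available. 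A secondary subtlety is ensuring each $\nu_n$ genuinely \emph{numbers all of} $\SSS^{(1)}_A$ (surjectivity) and not a proper subset, which is handled by keeping the ``public'' channel a faithful copy of $\theta$; and one must check that the family $\langle \nu_n\rangle$ is set up so that incomparable $p_m, p_n$ give $\le_m$-incomparable $[\nu_m], [\nu_n]$, which falls out of step (5) applied in both directions. Apart from these points the construction is a routine but lengthy bookkeeping exercise of the kind already carried out repeatedly in Sections~\ref{sec-comp}-\ref{sec-ce}.
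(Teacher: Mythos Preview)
Your proposal takes a genuinely different route from the paper, and there is a real gap in step~(1). You assume that the countable poset $(\mathbb{P},\preceq)$ admits a presentation in which the order relation is c.e., so that the down-sets $D_n$ are uniformly c.e.\ in $n$. But the theorem is about \emph{every} countable partially ordered set, and an arbitrary countable poset need not have any such presentation. If the $D_n$ are not c.e., then your numberings $\nu_n$ are not computable (their definition branches on membership in $D_n$), so they do not lie in $\bn_A$ and the construction collapses. Patching this would require first embedding $(\mathbb{P},\preceq)$ into some poset that \emph{does} have a nice effective presentation --- which is exactly what the paper does by invoking Goetze's separate result.

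The paper's approach is both simpler and structurally different. Instead of coding poset elements directly, it parameterises numberings by the lattice $\cs$ of \emph{computable} (not merely c.e.) subsets of $\omega$: for each $C\in\cs$ it defines a numbering $\theta^C$ of $\SSS^{(1)}_A$ by overriding the first $a$ values of $\theta_j$ with prescribed constants when a tag component of the index lies in $C+1$, and passing through $\theta_j$ otherwise (where $a$ is the least positive element of $A$). One then shows (Lemma~\ref{lem-goe2}) that $C\mapsto[\theta^C]$ is a \emph{dual} isomorphism from $(\cs,\subseteq)$ onto a sublattice of $([\bn_A],\le)$. The non-reducibility direction is not done by a recursion-theorem diagonalisation but by a one-line application of Rice's theorem: the set $\{i\mid\theta^B_i(0)=y\}$ is computable if and only if $y\in B$, because for $y\notin B$ this set reduces to the $\theta$-index set $\{j\mid\theta_j(0)=y\}$, which is undecidable by Theorem~\ref{thm-rice}. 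From this, $\theta^E\le_m\theta^C$ immediately forces $C\subseteq E$. Finally the paper quotes Goetze's theorem that every countable poset embeds in $(\cs,\subseteq)$; composing with the dual isomorphism (and dualising the poset if necessary) finishes the proof. Thus the paper offloads the hard combinatorics to Goetze's known result about computable sets and keeps the $\SSS^{(1)}_A$-specific work to a short Rice-style argument, whereas your plan tries to reprove the entire embedding from scratch inside $\bn_A$ --- more ambitious, but both longer and, as it stands, incomplete at the very first step.
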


We derive this result in several steps. Let to this end $\theta$ be a quasi-G\"odel numbering of $\SSS^{(1)}_{A}$, $a$ be the least positive element of $A$ and $\cs$ be the set of all computable subsets of $\omega$. Then $(\cs, \subseteq)$ is a lattice with the empty set as least element. For every $C \in \cs$ let the numbering $\theta^{C}$ be defined by
\[
\theta^{C}_{i}(x) \Def \begin{cases}
					\pi^{(a+1)}_{1}(i) \prc 1  & \text{if $x=0$ and $\pi^{(a+1)}_{1}(i) \in C+1$}, \\
					\vdots & \vdots\\
					\pi^{(a+1)}_{a}(i) \prc 1  & \text{if $x=a-1$ and $\pi^{(a+1)}_{1}(i) \in C+1$}, \\
					\theta_{\pi^{(a+1)}_{a+1}(i)}(x)  & \text{if $x \ge a$ and $\pi^{(a+1)}_{1}(i) \in C+1$, or $\pi^{(a+1)}_{1}(i) \notin C+1$ and} \\
					& \text{$\theta_{\pi^{(a+1)}_{a+1}(i)}(0) \notin C$,} \\
					\text{undefined}             & \text{otherwise}.
				\end{cases}
\]
Here, $C+1 \Def \set{b+1}{b \in C}$. Note that by the choice of $a$, for all $j$ with $0 \in \dom(\theta_{j})$, $\theta_{j}$ is defined at least on an initial segment of $\omega$ of length $a$. Therefore, $\theta^{C}_{i} \in \SSS^{(1)}_{A}$. 

\begin{lemma}\label{lem-goe1}
For all $C \in \cs$, $\theta^{C} \in \bn_{A}$.
\end{lemma}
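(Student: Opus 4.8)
The plan is to verify directly that $\theta^{C}$ satisfies Condition~(QGN~I), i.e.\ that the extended graph of $\lambda i, x.\ \theta^{C}_{i}(x)$ is enumerable by a total computable function. First I would observe that $\lambda i, x.\ \theta^{C}_{i}(x)$ is a partial computable function: the defining case distinction is driven by the computable predicate $\pi^{(a+1)}_{1}(i) \in C+1$ (which is decidable since $C \in \cs$), by decoding the components $\pi^{(a+1)}_{1}(i), \ldots, \pi^{(a+1)}_{a+1}(i)$, and by evaluating $\theta_{\pi^{(a+1)}_{a+1}(i)}$, which is itself partial computable because $\theta$, being a quasi-G\"odel numbering, satisfies (QGN~I) and hence $\lambda j, x.\ \theta_{j}(x) \in \PPP^{(2)}$. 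The only subtlety is the clause ``$\pi^{(a+1)}_{1}(i) \notin C+1$ and $\theta_{\pi^{(a+1)}_{a+1}(i)}(0) \notin C$'': to decide whether this clause applies one must first wait for $\theta_{\pi^{(a+1)}_{a+1}(i)}(0)$ to converge and then test membership in the computable set $C$; this is a partial-computable, not total-computable, test, but that is harmless for merely establishing that $\lambda i, x.\ \theta^{C}_{i}$ lies in $\PPP^{(2)}$.

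Next I would argue that the extended graph $\egraph(\lambda i, x.\ \theta^{C}_{i}(x))$ is c.e.: it is the extended graph of a partial computable binary function, and extended graphs of partial computable functions are always c.e.\ (one dovetails computations of $\theta^{C}_{i}(x)$ over all pairs $(i,x)$, emitting $\pair{\pair{i,x},z+1}$ when $\theta^{C}_{i}(x)$ halts with value $z$, and emitting the ``$0$-tagged'' pairs $\pair{\pair{i,x},0}$ unconditionally). By the construction this extended graph is infinite — indeed the $0$-tagged pairs $\pair{\pair{i,x},0}$ for all $i,x \in \omega$ already form an infinite c.e.\ subset of it. An infinite c.e.\ set can be enumerated by a total computable function; this is exactly the argument used in the proof of Theorem~\ref{thm-qgn}. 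Therefore $\egraph(\lambda i, x.\ \theta^{C}_{i}(x))$ is enumerated by some function in $\RRR^{(1)}$, which is precisely Condition~(QGN~I). Since $\bn_{A}$ is by definition the set of numberings of $\SSS^{(1)}_{A}$ satisfying (QGN~I), and it has already been noted (in the paragraph defining $\theta^{C}$) that each $\theta^{C}_{i} \in \SSS^{(1)}_{A}$, this gives $\theta^{C} \in \bn_{A}$.

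I expect the only genuine point requiring care — and thus the ``main obstacle'', though it is minor — to be the bookkeeping in the case analysis defining $\theta^{C}_{i}$: one must check that the four cases are mutually exclusive and exhaustive in the relevant sense (in particular that the last two alternatives inside the third case do not overlap with the ``otherwise'' clause in a way that makes $\theta^{C}_{i}$ ill-defined), and that the resulting function really does take values on an initial segment of $\omega$ with length in $A$ or is total or nowhere defined, so that $\theta^{C}_{i} \in \SSS^{(1)}_{A}$ as claimed. Once that is settled, the (QGN~I) verification is the routine ``extended graph of a p.c.\ function is c.e., and an infinite c.e.\ set is the range of a total computable function'' argument. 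I would not belabour the membership $\theta^{C}_{i} \in \SSS^{(1)}_{A}$ beyond citing the remark preceding the lemma, which records that the choice of $a$ as the least positive element of $A$ guarantees $\theta_{j}$ is defined on an initial segment of length at least $a$ whenever $0 \in \dom(\theta_{j})$.
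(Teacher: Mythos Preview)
Your argument for Condition~(QGN~I) is correct and essentially matches the paper's: the paper writes out an explicit $\Sigma_{1}$ description of $\egraph(\lambda i, x.\ \theta^{C}_{i}(x))$, but the content is exactly what you say---the case split is driven by the decidable test $\pi^{(a+1)}_{1}(i) \in C+1$, and the remaining clauses call the partial computable universal function of $\theta$.

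However, you have omitted half of what must be checked. To have $\theta^{C} \in \bn_{A}$ it is not enough that every $\theta^{C}_{i}$ lies in $\SSS^{(1)}_{A}$; you also need $\theta^{C}$ to be a \emph{numbering} of $\SSS^{(1)}_{A}$, i.e.\ $\range(\theta^{C}) = \SSS^{(1)}_{A}$. The remark before the lemma only gives the inclusion $\range(\theta^{C}) \subseteq \SSS^{(1)}_{A}$. The paper devotes the second half of its proof to surjectivity: given $r = \theta_{j}$, one exhibits an explicit $\theta^{C}$-index, namely $\pair{r(0)+1,\ldots,r(a-1)+1,j}$ when $\dom(r) \neq \emptyset$ and $\pair{0,\ldots,0,j}$ otherwise. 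The point is that if $r(0) \in C$ then the first component $r(0)+1$ lands in $C+1$ and the ``explicit prefix'' branch of the definition reproduces $r$, while if $r(0) \notin C$ then the first component is outside $C+1$ and the ``fall through to $\theta_{j}$'' branch applies because $\theta_{j}(0) = r(0) \notin C$. Without this verification your argument establishes only that $\theta^{C}$ is a computable numbering of some subclass of $\SSS^{(1)}_{A}$, which does not suffice for membership in $\bn_{A}$.
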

\begin{proof}
Since $\theta$ satisfies (QGN~I) there is an enumeration $h \in \RRR^{(1)}$ of the extended graph of the universal function of $\theta$. Then we have
\begin{align*}
&\pair{\pair{i, x}, z} \in \egraph(\lambda i, x.\ \theta^{C}_{i}(x)) \Leftrightarrow \mbox{} \\
 &\hspace{2em}z = 0 \vee \bigvee_{\nu = 1}^{n} [x = \nu-1 \wedge \pi^{(a+1)}_{1}(i) \in C+1 \wedge z = \pi^{(a+1)}_{\nu}(i) ]\vee [(\exists t)\, h(t) = \mbox{} \\
&\hspace{2em} \pair{\pair{\pi^{(a+1)}_{a+1}(i), x}, z} \wedge [[x \ge a \wedge \pi^{(a+1)}_{1}(i) \in C+1] \vee [\pi^{(a+1)}_{1}(i) \notin C+1 \wedge  \mbox{} \\  
&\hspace{2em} (\exists t')\, [\pi^{(2)}_{1}(h(t')) = \pair{\pi^{(a+1)}_{a+1}(i), 0} \wedge \pi^{(2)}_{2}(h(t')) > 0 \wedge \pi^{(2)}_{2}(h(t')) \notin C+1]]]].
\end{align*}
It follows that $\egraph(\lambda i, x.\ \theta^{C}_{i}(x))$ is c.e. So, it remains to show that $\range(\theta^{C}) = \SSS^{(1)}_{A}$. Let to this end $r \in \SSS^{(1)}_{A}$ with $r = \theta_{j}$. Then
\[
i \Def \begin{cases}
		\pair{r(0)+1, \ldots, r(a)+1, j} & \text{if $\dom(r)$ is not empty}, \\
		\pair{0, \ldots, 0, j}		   & \text{otherwise}
	\end{cases}	
\]
is a $\theta^{C}$-index of $r$.
\end{proof}

Note that by Theorem~\ref{thm-maxmin}(\ref{thm-maxmin-3}), $\theta^{\emptyset} \le_{m} \theta$. On the other hand, $\theta_{i}  = \theta^{\emptyset}_{\pair{0, \ldots, 0, i}}$. Thus, $\theta \equiv_{m} \theta^{\emptyset}$.

\begin{lemma}\label{lem-goe2}
$(\set{[\theta^{C}]}{C \in \cs}, \le)$ is a lattice that is dually isomorphic zu $(\cs, \subseteq)$. The mapping $J(C) \Def [\theta^{C}]$ is a dual isomorphism.
\end{lemma}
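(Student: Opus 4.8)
The plan is to show that $J(C) \Def [\theta^C]$ is a well-defined order-reversing bijection between $(\cs, \subseteq)$ and $(\set{[\theta^C]}{C \in \cs}, \le)$, from which the dual isomorphism follows; then the fact that $(\cs, \subseteq)$ is a lattice transports (dually) to the image. The two directions of the order-reversal are the core of the argument.

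First I would prove that $C \subseteq D$ implies $\theta^D \le_m \theta^C$. The intuition behind the construction is that an index $i$ of $\theta^C$ ``blocks'' exactly those underlying functions $\theta_{\pi^{(a+1)}_{a+1}(i)}$ whose value at $0$ lies in $C$, unless this is overridden by prescribing the first $a$ values via a code from $C+1$ in the first coordinate. When $C \subseteq D$, anything blocked by $C$ is also blocked by $D$; so given a $\theta^D$-index $i$ we should be able to compute a $\theta^C$-index of the same function. The reduction $g$ is built by cases: if $\pi^{(a+1)}_1(i) \in D+1$, the function has its first $a$ values prescribed and otherwise follows $\theta_{\pi^{(a+1)}_{a+1}(i)}$; we then need a $\theta^C$-index realising the same. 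If moreover $\pi^{(a+1)}_1(i) \in C+1$, we can keep $i$ essentially unchanged; if $\pi^{(a+1)}_1(i) \in (D+1) \setminus (C+1)$ we must instead hand-code the prescribed prefix into a genuinely $C$-compatible index, which is possible because $D$ is computable so membership in $D+1$ is decidable, and because (by the padding lemma, Theorem~\ref{thm-pad}, and the $smn$-type manipulations available via Theorem~\ref{thm-sub}) we can produce a $\theta$-index of the tail function and then wrap it. If $\pi^{(a+1)}_1(i) \notin D+1$ at all, then $i$ is already ``$C$-neutral'' since $C \subseteq D$ and the condition $\theta_{\pi^{(a+1)}_{a+1}(i)}(0) \notin D$ forces $\theta_{\pi^{(a+1)}_{a+1}(i)}(0) \notin C$; here $i$ can be reused directly. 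All case distinctions are decidable, so $g \in \RRR^{(1)}$ and $\theta^D = \theta^C \circ g$.

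Next I would prove the converse: $\theta^D \le_m \theta^C$ implies $C \subseteq D$. The cleanest route is the contrapositive. Suppose $b \in C \setminus D$. Consider, for an arbitrary $\theta$-index $j$ with $\theta_j(0) = b$ (such $j$ exist in abundance — e.g. by taking the constant function $b$, which lies in $\RRR^{(1)} \subseteq \SSS^{(1)}_A$, and padding via Theorem~\ref{thm-pad}), the behaviour of the two numberings. In $\theta^D$, an index $i$ with $\pi^{(a+1)}_1(i) \notin D+1$ and $\pi^{(a+1)}_{a+1}(i) = j$ yields the nowhere-defined function (since $\theta_j(0) = b \in D$ would be needed to be $\notin D$, but here we are in the $\notin D$-complement case — wait, $b \notin D$, so actually this needs care). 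I would instead argue via a uniformity/c.e.-set obstruction: if $g$ reduced $\theta^D$ to $\theta^C$, one could uniformly decide, from the $\theta^C$-side, membership questions about $C$ versus $D$ that differ on $b$, producing a splitting of a known non-computable set (the natural candidate is $K^A$ or an index set shown non-computable in Section~\ref{sec-comp}), contradiction. Concretely: using the constructions in the style of Lemma~\ref{lem-meth1} together with (QGN~II), one encodes into the $(a+1)$-st component a function whose value at $0$ is forced to witness the symmetric difference, and the existence of $g$ would make the set $\set{b}{b \in C \setminus D}$ (hence a non-trivial computable distinction) interact with a non-c.e. set, which is impossible unless $C \setminus D = \emptyset$.

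Finally, having both implications, $J$ is an order-reversing injection: $C \subseteq D \iff \theta^D \le_m \theta^C \iff [\theta^D] \le [\theta^C]$, and injectivity is immediate since $[\theta^C] = [\theta^D]$ forces $C \subseteq D$ and $D \subseteq C$. Surjectivity onto the image is by definition. A bijective order-reversing map between posets automatically makes the image a poset dually isomorphic to $(\cs,\subseteq)$, and since $(\cs,\subseteq)$ is a lattice so is its dual, hence so is the image; the lattice operations on the image are $[\theta^C] \vee [\theta^D] = [\theta^{C \cap D}]$ and $[\theta^C] \wedge [\theta^D] = [\theta^{C \cup D}]$. I expect the main obstacle to be the forward reduction $\theta^D \le_m \theta^C$ in the case $\pi^{(a+1)}_1(i) \in (D+1)\setminus(C+1)$: there one genuinely has to rebuild a fresh index whose hand-coded prefix is compatible with the $C$-blocking discipline, and verifying that the resulting $g$ is total computable and actually names the same function requires carefully chasing the four-way case split in the definition of $\theta^{(\cdot)}$ against the decidability of $C$ and $D$. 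The converse direction's non-computability argument is conceptually routine given the machinery of Sections~\ref{sec-comp}--\ref{sec-ce} but needs the right non-c.e. set plugged in.
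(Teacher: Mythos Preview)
Your overall architecture is right, and the case split in the forward direction is exactly the one the paper uses. But the third case of your reduction is wrong. When $\pi^{(a+1)}_{1}(i) \notin D+1$ (hence also $\notin C+1$), you claim $i$ can be reused because $\theta_{\pi^{(a+1)}_{a+1}(i)}(0) \notin D$ forces $\theta_{\pi^{(a+1)}_{a+1}(i)}(0) \notin C$. That implication is correct, but it only covers one sub-case. If $\theta_{\pi^{(a+1)}_{a+1}(i)}(0)$ happens to lie in $D\setminus C$, then $\theta^{D}_{i}$ is the nowhere defined function while $\theta^{C}_{i} = \theta_{\pi^{(a+1)}_{a+1}(i)}$, so reusing $i$ fails. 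And you cannot decide which sub-case you are in, since this asks about a value of $\theta$. The paper's fix is to replace the last coordinate by a fresh $\theta$-index $j'$ (obtained via (QGN~II)) with
\[
\theta_{j'}(x) = \begin{cases} \theta_{\pi^{(a+1)}_{a+1}(i)}(x) & \text{if } \theta_{\pi^{(a+1)}_{a+1}(i)}(0)\notin D,\\ \text{undefined} & \text{otherwise},\end{cases}
\]
so that $\theta_{j'}(0)$, if defined at all, lies outside $D\supseteq C$; then $\theta^{C}_{\hat{\imath}} = \theta_{j'} = \theta^{D}_{i}$ regardless of the sub-case. Your second case is handled by the same trick in the other direction (build a $\theta$-index for the whole prefixed function), which you gesture at correctly.

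For the converse, your sketch never lands on a workable obstruction, and the parenthetical ``wait, $b\notin D$, so actually this needs care'' is the tell. The paper's argument is short and avoids all of this: it proves the auxiliary fact that for any computable $B$ and any $y$, the index set $\{\,i : \theta^{B}_{i}(0)=y\,\}$ is computable if and only if $y\in B$ (the ``only if'' direction is Rice's theorem applied to $\{\,j : \theta_{j}(0)=y\,\}$). Then if $\theta^{D}\le_{m}\theta^{C}$ and $y\in C$, the $C$-side index set is computable, the reduction pulls this back to make the $D$-side index set computable, hence $y\in D$; so $C\subseteq D$. This is the concrete non-computability input you were reaching for.
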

\begin{proof}
Let $C, E \in \cs$. First, we show that $\theta^{E} \le_{m} \theta^{C}$, if $E \subseteq C$. Consider the following algorithm:\\

\noindent
\emph{Input:} $i$ 
\begin{enumerate}

\item\label{newstep1}
If $\pi^{(a+1)}_{1}(i) \in C+1$ then set $\hat{i} := i$ and go to Step~(\ref{newstep4})

\item\label{newstep2}
If $\pi^{(a+1)}_{1}(i) \in E+1 \setminus  C+1$ then find $j$ so that
\[
\theta_{j}(x) = \begin{cases}
			\pi^{(a+1)}_{1}(i) \prc 1 & \text{if $x = 0$}, \\
			\vdots & \vdots \\
			\pi^{(a+1)}_{a}(i) \prc 1 & \text{if $x = a-1$}, \\
			\theta_{\pi^{(a+1)}_{a+1}(i)}(x) & \text{if $x \ge a$,}
		\end{cases}
\]
set $\hat{i} := \pair{\pi^{(a+1)}_{1}(i), \ldots, \pi^{(a+1)}_{a}(i), j}$ and go to Step~(\ref{newstep4}).

\item\label{newstep3}
(After the procedure has finished with this step, we know that $\pi^{(a+1)}_{1}(i) \notin E+1$.)\\
Find $j'$ with
\[
\theta_{j'}(x) = \begin{cases}
				\theta_{\pi^{(a+1)}_{a+1}(i)}(x)  & \text{if $\theta_{\pi^{(a+1}_{a+1}(i)}(0) \notin E$}, \\
				\text{undefined}   & \text{otherwise}
			\end{cases}
\]
and set $\hat{i} := \pair{\pi^{(a+1)}_{1}(i), \ldots, \pi^{(a+1)}_{a}(i), j'}$.

\item\label{newstep4}
STOP

\end{enumerate}
\emph{Output:} $\hat{i}$.\\

\noindent
As we will see in the next step, the indices $j$ and $j'$ in Steps~(\ref{newstep2}) and (\ref{newstep3}), respectively, can be computed from $i$. Therefore, the above is really an algorithm. Let $q \in \RRR^{(1)}$ be the function it computes. Then $\theta^{E} = \theta^{C} \circ q$, that is $\theta^{E} \le_{m} \theta^{C}$.

By Theorem~\ref{thm-qgn1} there is function $k \in \RRR^{(2)}$ such that for every $i \in \omega$, $\lambda i, t.\ k(i, t)$ enumerates the extended graph of $\theta_{i}$. Let 
\begin{gather*}
k'(i, t) \Def \begin{cases}
 			\pair{t, \pi^{(a+1)}_{1}(i)}	& \text{if $t=0$}, \\
			\vdots & \vdots \\
			\pair{t, \pi^{(a+1)}_{a}(i)}	& \text{if $t = a-1$}, \\
			k(\pi^{(a+1)}_{a+1}(i), t \prc a) & \text{if $t \ge a$, and $\pi^{(2)}_{2}(k(\pi^{(a+1)}_{a+1}(i), t \prc a)) = 0$ or} \\
			& \text{$\pi^{(2)}_{1}(k(\pi^{(a+1)}_{a+1}(i), t \prc a)) \ge a$},\\
			\pair{0, 0} & \text{otherwise}, 
		\end{cases}\\
Q(i, t) \Leftrightarrow \pi^{(2)}_{1}(k(\pi^{(a+1)}_{a+1}(i), t)) = 0 \wedge \pi^{(2)}_{2}(k(\pi^{(a+1}_{a+1}(i), t)) \notin E+1, \\
k''(i, t) \Def \begin{cases}
			k(\pi^{(a+1)}_{a+1}(i), t \prc \mu t' \le t.\ Q(i, t')) & \text{if for some $t' \le t$, $Q(i, t')$}, \\
			\pair{t, 0} 	& \text{otherwise}.
		\end{cases}
\end{gather*}
Then $\lambda t.\ k'(i, t)$ enumerates the extended graph of the function $\theta_{j}$ defined in Step~(\ref{newstep2}) of the above algorithm and $\lambda i, t.\ k''(i, t)$ the extended graph of the function $\theta_{j'}$ defined in Step~(\ref{newstep3}). Let $v', v'' \in \RRR^{(1)}$ be the functions now existing by Condition~(QGN~II). Then $\theta_{j} = \theta_{v'(i)}$ and $\theta_{j'} = \theta_{v''(i)}$. Thus, we can effectively find indices $j$ and $j'$ from given $i$ with the required properties. 

Next, we show that also conversely, $\theta^{E} \le_{m} \theta^{C}$ implies $C \subseteq E$. To this end we need the following result:

\begin{claim}\label{eq-zr}
Let $B$ be a computable set. Then the set $\set{i}{\theta^{B}(0) = y}$ is computable, exactly if $y \in B$.
\end{claim}
which we will prove in a later step. 

Assume that $\theta^{E} \le_{m} \theta^{C}$ and suppose that $y \in C$. Then we need to show that $y \in E$. Since $y \in C$, it follows with Claim~\ref{eq-zr} that the set $\set{i}{\theta^{C}(0) = y}$ is computable. Because $\theta^{E} \le_{m} \theta^{C}$, it follows that $\set{i}{\theta^{E}(0) = y}$ is computable as well. Hence $y \in E$, again by Claim~\ref{eq-zr}.

Thus, we have shown that $J$ is a dual isomorphism. Since $(\cs, \subseteq)$ is a lattice, the same holds for $(\set{[\theta^{C}]}{C \in \cs}, \le)$.

Finally, we prove Claim~\ref{eq-zr}. If $y \in B$ then $\theta^{B}_{i}(0) = y$, exactly if $\pi^{(a+1)}_{1}(i) = y+1$. Therefore, the set $\set{i}{\theta^{B}_{i}(0) = y}$ is computable. If $y \notin B$, then we have that
\[
\theta^{B}_{i}(0) = y \Leftrightarrow \pi^{(a+1)}_{1}(i) \notin B+1 \wedge \theta_{\pi^{(a+1)}_{a+1}(i)}(0) = y.
\]
By Rice's theorem the set $\set{j}{\theta_{j}(0) = y}$ is not computable, for every $y \in \omega$. As $B$ is computable it follows that the set $\set{i}{\theta^{B}_{i}(0) = y}$ cannot be computable.
\end{proof}

Goetze~\cite{goe76} has shown that every countable partially ordered set can be isomorphically embedded in the lattice of all computable set. By Lemma~\ref{lem-goe2} it therefore follows that every such set can be isomorphically embedded in the Rogers semi-lattice of computable numberings of $\SSS^{(1)}_{A}$. This concludes the proof of Theorem~\ref{thm-goe}

In the next section we will show that $\SSS^{(1)}$ is an effectively given domain with properties as considered in \cite[Theorem~3.1]{sp90}. As a consequence we obtain that the Rogers semi-lattice of computable numberings of $\SSS^{(1)}_{A}$ contains a Friedberg numbering, that is a one-to-one computable numbering. As in Khutoretski\u{\i}~\cite[Corollary~2]{kh69} one even obtains that there are infinitely many Friedberg numberings of $\SSS^{(1)}_{A}$ which are pairwise incomparable with respect to $m$-reducibiility. Pour-El~\cite{pe64} shows that the equivalence class $[\gamma]$ generated by a Friedberg numbering of $\SSS^{(1)}_{A}$ is minimal in $(\bn_{A}, \le)$.

\begin{theorem}\label{thm-nonlat}
$(\bn_{A}, \le)$ is not a lattice.
\end{theorem}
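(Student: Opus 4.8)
The plan is to mimic the classical argument showing that the Rogers semi-lattice of computable numberings of $\PPP^{(1)}$ is not a lattice (cf.\ the work of Khutoretski\u{\i} and others), adapted to the setting of $\SSS^{(1)}_{A}$ and the sub-semi-lattice $([\bn_{A}], \le)$. The idea is to exhibit two computable numberings $\eta_{1}, \eta_{2} \in \bn_{A}$ of $\SSS^{(1)}_{A}$ whose equivalence classes $[\eta_{1}], [\eta_{2}]$ have no greatest lower bound in $([\bn_{A}], \le)$. A convenient source of such numberings is the dually embedded lattice $(\set{[\theta^{C}]}{C \in \cs}, \le)$ from Lemma~\ref{lem-goe2}: one chooses two computable sets $C_{1}, C_{2}$ whose meet in $(\cs, \subseteq)$ is, say, $C_{1} \cup C_{2}$, and shows that while $[\theta^{C_{1}}]$ and $[\theta^{C_{2}}]$ have a lower bound in the full semi-lattice of computable numberings (the numbering built from $C_{1} \cup C_{2}$ dualises the join), there is no \emph{greatest} such lower bound. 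To make this work one needs infinitely many computable sets sitting below $C_{1}$ and $C_{2}$ but with no common upper bound among the sets below both, i.e.\ a pair with no $\subseteq$-meet inside a suitable \emph{proper} sub-poset — which is why $(\cs, \subseteq)$ being a lattice is not in itself an obstruction; the obstruction must be produced at the level of $\le_{m}$-classes of numberings.

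Concretely, I would proceed as follows. First, fix a quasi-G\"odel numbering $\theta$ of $\SSS^{(1)}_{A}$ and recall from Lemma~\ref{lem-goe2} that $J(C) = [\theta^{C}]$ is a dual embedding of $(\cs, \subseteq)$ into $([\bn_{A}], \le)$, with $J(\emptyset) = [\theta]$ the top element (Proposition~\ref{pn-cnum}(\ref{pn-cnum-2})). Second, I would build two numberings $\mu_{1}, \mu_{2} \in \bn_{A}$ by a padding/stage construction so that: (i) each is computable (this is routine using Condition~(QGN~I) and the standard verification that the extended graph of the relevant universal function is c.e.); (ii) there are infinitely many computable numberings $\nu_{0}, \nu_{1}, \dots \in \bn_{A}$ with $\nu_{k} \le_{m} \mu_{1}$ and $\nu_{k} \le_{m} \mu_{2}$ for all $k$; and (iii) whenever a computable numbering $\rho$ satisfies $\rho \le_{m} \mu_{1}$ and $\rho \le_{m} \mu_{2}$, one can effectively pass from $\rho$ to some $\nu_{k}$ with $\rho \le_{m} \nu_{k}$ but $\nu_{k} \not\le_{m} \rho$ — so the family $\{[\nu_{k}]\}$ is cofinal in, but strictly above, the set of lower bounds of $[\mu_{1}]$ and $[\mu_{2}]$, ruling out a greatest lower bound. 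Third, I would verify that the classes $[\mu_{1}], [\mu_{2}]$ indeed have \emph{some} common lower bound (so the failure is genuinely the absence of a \emph{greatest} one, not of any), which follows by exhibiting a single computable numbering reducible to both, e.g.\ a numbering listing a common refinement of the two constructions.

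The engine behind steps (ii)--(iii) is the combinatorics of computable numberings modulo $\le_{m}$: the construction must encode, into the index structure of $\mu_{1}$ and $\mu_{2}$, two "independent" descending chains of computable numberings that interleave without a common lower bound of maximal complexity. For this I would use Corollary~\ref{cor-indinj} (one-to-one index functions), the padding lemma Theorem~\ref{thm-pad}, and the effective version of (QGN~II), namely (QGN~E) from Theorem~\ref{thm-qgnequiv}, to manufacture the required reductions uniformly; precompleteness and the Isomorphism Theorem~\ref{thm-iso} guarantee enough rigidity that the non-reductions $\nu_{k} \not\le_{m} \rho$ can be forced by a diagonalisation against a list of all total computable candidate reduction functions, in the spirit of Khutoretski\u{\i}'s incomparability argument (already invoked in the excerpt for the existence of infinitely many pairwise incomparable Friedberg numberings of $\SSS^{(1)}_{A}$).

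\textbf{Main obstacle.} The hard part will be step (iii): setting up the finite-injury-style diagonalisation that simultaneously (a) keeps each $\nu_{k}$ a genuine numbering of all of $\SSS^{(1)}_{A}$ — one must never "lose" a function, which is delicate because initial-segment functions can be confused with their total extensions and, by Theorem~\ref{thm-anfext} and Corollary~\ref{cor-edgel}, one cannot decide whether an index names a total or a merely initial-segment function — and (b) defeats every potential greatest-lower-bound candidate. Ensuring surjectivity onto $\SSS^{(1)}_{A}$ while diagonalising is exactly the point where the partiality structure of $\SSS^{(1)}_{A}$ makes the argument more subtle than in the $\PPP^{(1)}$ case, and it is where most of the technical care will have to go; I expect to handle it by reserving, in each $\nu_{k}$, a cofinal block of "backup" indices (via the one-to-one padding function) into which every function of $\SSS^{(1)}_{A}$ is guaranteed to be placed regardless of the injuries inflicted on the other indices.
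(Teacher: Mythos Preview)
Your proposal takes a long and technically heavy route while overlooking the one-line argument that the paper actually uses, and that is already set up in the paragraph immediately preceding the theorem. The paper cites two facts: (a) there are infinitely many pairwise $m$-incomparable Friedberg numberings of $\SSS^{(1)}_{A}$ (Khutoretski\u{\i}), and (b) the equivalence class of any Friedberg numbering is minimal in $([\bn_{A}], \le)$ (Pour-El). From these the theorem is immediate: take two incomparable Friedberg numberings $\gamma_{1}, \gamma_{2}$; since $[\gamma_{1}]$ is minimal, the only element $\le [\gamma_{1}]$ is $[\gamma_{1}]$ itself, and similarly for $[\gamma_{2}]$; hence $[\gamma_{1}]$ and $[\gamma_{2}]$ have no common lower bound at all, so certainly no greatest one, and $([\bn_{A}], \le)$ is not a lattice.

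Ironically, you even invoke Khutoretski\u{\i}'s incomparability result in your last paragraph as a tool for the diagonalisation, without noticing that together with Pour-El's minimality it already finishes the job. Your plan to build $\mu_{1}, \mu_{2}$ with common lower bounds but no greatest one is not wrong in spirit, but it is strictly harder than necessary (you are looking for a failure of infima among elements that \emph{do} have some lower bound, whereas minimal incomparable elements fail to have \emph{any} lower bound), and you leave the core finite-injury construction as an acknowledged obstacle rather than a proof. The detour through the Goetze embedding of Lemma~\ref{lem-goe2} cannot help either, as you yourself note: the image $\set{[\theta^{C}]}{C \in \cs}$ is a lattice, so no two of its elements witness the failure.
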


$(\bn_{A}, \le)$ also contains minimal elements that are not generated by Friedberg numberings, A numbering $\eta$ of $\SSS^{(1)}_{A}$ is called \emph{positive}, if $\set{\pair{i, j}}{\eta_{i} = \eta_{j}}$ is c.e. Ershov~\cite[p.~303]{er73} shows that the equivalence class  generated by a positive numbering is minimal in the Rogers semi-lattice of numberings of a given set.

\begin{theorem}\label{lem-minnum}
$\SSS^{(1)}_{A}$ has a positive computable numbering to which no Friedberg numbering can be reduced.
\end{theorem}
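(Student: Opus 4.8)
The plan is to construct a positive computable numbering of $\SSS^{(1)}_A$ directly and then show that no Friedberg numbering reduces to it. For the first part, I would start from a quasi-G\"odel numbering $\theta$ (which is in particular computable and precomplete) and mimic the classical Ershov-style construction of a positive but non-Friedberg numbering of $\PPP^{(1)}$. The idea is to build a numbering $\eta$ by a stagewise merging procedure: enumerate indices and, whenever at some stage we discover (via the enumeration $h$ of $\egraph(\lambda i,\vec x.\theta_i(\vec x))$ guaranteed by (QGN~I)) a finite piece of evidence that two indices $\theta_i$ and $\theta_j$ agree so far, we identify them in $\eta$ — but we do this in a controlled way so that the equivalence relation $\{\pair{i,j} \mid \eta_i = \eta_j\}$ is exactly the c.e.\ set generated by these merges. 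One must check that $\eta$ is onto $\SSS^{(1)}_A$ (every function still gets an index) and that $\eta$ is computable, i.e.\ satisfies (QGN~I): the extended graph of $\lambda i,\vec x.\eta_i(\vec x)$ is c.e.\ because from an $\eta$-index one can effectively track back to a $\theta$-index and use $h$. Here I would lean on Lemma~\ref{lem-meth1} and Condition~(QGN~II) in the now-familiar pattern: for each $i$ construct a function $k \in \RRR^{(2)}$ enumerating the appropriate extended graph and apply (QGN~II) to get a $\theta$-index, ensuring $\eta \le_m \theta$, hence $\eta$ is computable.

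For the non-reducibility part, the standard argument is: suppose $\beta$ is a Friedberg numbering of $\SSS^{(1)}_A$ with $\beta \le_m \eta$, say $\beta = \eta \circ g$ for total computable $g$. Positivity of $\eta$ means $\{\pair{a,b} \mid \eta_a = \eta_b\}$ is c.e.; pulling back along $g$, the relation $\{\pair{i,j} \mid \beta_i = \beta_j\}$ is c.e.\ as well. But $\beta$ is one-to-one, so this relation is just the diagonal $\{\pair{i,i} \mid i \in \omega\}$, which is computable. That alone is not yet a contradiction, so the real content is to arrange in the construction of $\eta$ that $\eta$ has a pair of indices that get identified only after an unbounded search — more precisely, that $\eta$ contains an infinite computable set of indices all naming the same function (say the nowhere defined function, which lies in $\SSS^{(1)}_A$), together with the property that one cannot effectively thin this out to a single representative per equivalence class in the way a reduction from a Friedberg numbering would require. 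The cleanest route is to follow Ershov's original proof verbatim, checking at each step that the only tools used — the recursion theorem (Theorem~\ref{thm-fp}), the padding lemma (Theorem~\ref{thm-pad}), precompleteness/completeness (Theorem~\ref{thm-compl}), and the c.e.-ness manipulations of Section~\ref{sec-ce} — are all available in the present setting.

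Concretely I would proceed as follows. First, recall that by Theorem~\ref{thm-compl} the numbering $\theta$ is complete with the nowhere defined function as distinguished element and cylindrical; by the padding lemma $\set{i}{\theta_i \text{ nowhere defined}}$ is infinite. Second, cite Ershov~\cite[p.~303]{er73}: a positive numbering generates a minimal element of the Rogers semi-lattice, and for $\PPP^{(1)}$ with a G\"odel numbering there is a positive numbering to which no Friedberg numbering is reducible — the construction uses only the recursion theorem, completeness and basic c.e.\ bookkeeping. Third, replay that construction with $\theta$ in place of the G\"odel numbering and $\SSS^{(1)}_A$ in place of $\PPP^{(1)}$, invoking Theorem~\ref{thm-fp}, Theorem~\ref{thm-compl}, Theorem~\ref{thm-pad}, and Lemma~\ref{lem-meth1}/(QGN~II) at the points where Ershov uses the corresponding classical facts. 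Finally, verify that the resulting numbering is computable (i.e.\ in $\bn_A$) by exhibiting the $m$-reduction to $\theta$ as sketched above.

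The main obstacle I expect is the verification that Ershov's construction is insensitive to the peculiar domain structure of $\SSS^{(1)}_A$: the classical proof manipulates partial computable functions rather freely, and here every "partial" function in play must have domain an initial segment of $\omega$ of length in $A$ (or be total or empty). In particular, at the stages where Ershov modifies a function on a single argument to diagonalize against a putative reduction, one must instead modify it consistently on an initial segment, and one must make sure the edge-length stays in $A$ — exactly the kind of adjustment that was needed repeatedly in Sections~\ref{sec-comp}--\ref{sec-compII} (e.g.\ in the proofs of Theorem~\ref{thm-anfext} and the Rice/Shapiro theorem). I would handle this by diagonalizing with the nowhere-defined function and with total functions only, which live comfortably in $\SSS^{(1)}_A$ regardless of $A$, so that no initial-segment bookkeeping beyond what Lemma~\ref{lem-meth1} already provides is needed. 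A secondary concern is checking that $\eta$ is genuinely positive and not accidentally decidable or not onto; this is a routine but careful bookkeeping argument of the same flavour as the proof of Theorem~\ref{thm-godrea}.
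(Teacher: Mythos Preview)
Your proposal has a genuine gap and also diverges from the paper's approach. The merging rule you sketch---identify $i$ and $j$ in $\eta$ whenever $\theta_i$ and $\theta_j$ ``agree so far''---cannot work as stated: two distinct total functions can agree on arbitrarily long initial segments, so such a rule would collapse distinct elements of $\SSS^{(1)}_A$ and $\eta$ would not be a numbering of the right set. You also correctly note that the pulled-back equivalence argument is not yet a contradiction, but the proposed fix (arrange an infinite computable set of indices for the nowhere defined function) is neither clearly sufficient for non-reducibility nor obviously compatible with positivity and surjectivity of $\eta$. Finally, the attribution is off: Ershov's result cited in the paper is that positive numberings generate minimal elements of the Rogers semi-lattice; the construction of a positive numbering to which no Friedberg numbering reduces is Khutoretski\u{\i}'s, not Ershov's.

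The paper's route is quite different and much shorter. It invokes Khutoretski\u{\i}'s construction \cite[Example~1]{kh69} as a black box and only verifies the hypotheses that construction requires in the present setting. Concretely, one must show that for each $f \in \RRR^{(1)}$ the two classes $\{g \in \SSS^{(1)}_A \mid \graph(g) \subseteq \graph(f)\}$ and $\{g \in \SSS^{(1)}_A \mid \graph(g) \not\subseteq \graph(f)\}$ each admit a one-to-one numbering with c.e.\ universal function. This the paper obtains from Mal'cev's theorem \cite[Theorem~5]{ma70} by checking that both classes are enumerable in $\theta$ and that the $\alpha^{(1)}$-index sets $\{i \mid \graph(\alpha^{(1)}_i) \subset \graph(f)\}$ and its complement are c.e.\ (indeed computable, since $f$ is total and $\alpha^{(1)}_i$ is finite with computable edge length). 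No new stagewise construction is carried out; the work reduces to a few lines of index-set bookkeeping using Lemma~\ref{lem-alph}.
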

\begin{proof}
The general construction is given by Khutoretski\u{\i}~\cite[Example~1]{kh69}. Here we verify the assumptions made. Let $f \in \RRR^{(1)}$. The we have to show that the sets
\[
\set{g \in \SSS^{(1)}_{A}}{\graph(g) \subseteq \graph(f)} \quad\text{and}\quad \set{g \in \SSS^{(1)}_{A}}{\graph(g) \not\subseteq \graph(f)}
\]
can be numbered in a one-to-one way so that the universal functions of these numberings have a c.e.\ graph. This, however, is a consequence of Mal'cev~\cite[Theorem~5]{ma70}, if the classes can be enumerated in a quasi-G\"odel numbering $\theta$ of $\SSS^{(1)}_{A}$ and the sets
\[
\set{i}{\graph(\alpha^{(1)}_{i}) \subset \graph(f)} \quad\text{and}\quad \set{i}{\graph(\alpha^{(1)}_{i}) \not\subset \graph(f)}
\]
are c.e. Note that
\[
\graph(\alpha^{(1)}_{i}) \subset \graph(f) \Leftrightarrow (\forall x < \lg(i))\, \alpha^{(1)}_{i}(x) = f(x).
\]
Since $f \in \RRR^{(1)}$ it follows that both sets are even computable.  

Let $C \Def \set{i}{\graph(\alpha^{(1)}_{i}) \subset \graph(f)}$ and $g \in \RRR^{(1)}$ with $\alpha^{(1)} =  \theta \circ g$. Moreover, let $j$ be a $\theta$-index of $f$. Then
\[
\set{g \in \SSS^{(1)}_{A}}{\graph(g) \subseteq \graph(f)} = \theta[\{ j \} \cup g[C]].
\]
Thus, $\set{g \in \SSS^{(1)}_{A}}{\graph(g) \subseteq \graph(f)}$ is enumerable in $\theta$. 

Because of Condition~(QGN~I) there is some $h \in \RRR^{(1)}$ that enumerates the extended graph of the universal function of $\theta$. Then we have that
\[
\graph(\theta_{i}) \not\subseteq \graph(f) \Leftrightarrow  (\exists x) (\exists t)\, \pi^{(2)}_{1}(h(t)) = \pair{i, x} \wedge \pi^{(2)}_{2}(h(t)) > 0 \wedge \pi^{(2)}_{2}(h(t)) \neq f(x) +1.
\]
Thus, $\set{i}{\graph(\theta_{i}) \not\subseteq \graph(f)}$ is c.e. and hence $\set{g \in \SSS^{(1)}_{A}}{\graph(g) \not\subseteq \graph(f)}$ enumerable in $\theta$.
\end{proof}

\section{$\widehat{\SSS}^{(1)}_{A}$ as effectively given domain}\label{sec-dom}

In this section the connection with domain theory is investigated. As will be seen, $\widehat{\SSS}^{(1)}_{A}$ is an effectively given algebraic domain with the finite functions as its compact elements. Furthermore, the domain-theoretic computability notions coincide with those developed for $\SSS^{(1)}_{A}$. The main result says that $\widehat{\SSS}^{(1)}_{A}$  can be mapped onto every other effectively given algebraic domain $D$ by an effectively continuous operator. Via this operator each quasi-G\"odel numbering of $\SSS^{(1)}_{A}$ defines an admissible numbering of the computable elements of $D$. Moreover, every such numbering can be obtained in this way. More generally, the operator induces a homomorphism of the Rogers semi-lattice of computable numberings of $\SSS^{(1)}_{A}$ onto the Rogers semi-lattice of computable numberings of the computable elements of $D$.

Let $(D, \sqsubseteq)$ be a poset. $D$ is \emph{pointed} if it contains a least element $\bot$. A subset $L$ of $D$ is \emph{directed}, if it is non-empty and every pair of elements in $L$ has an upper bound in $L$. $D$ is a \emph{directed-complete partial order (dcpo)}, if every directed subset $L$ of $D$ has a least upper bound  $\bigsqcup L$ in $D$. 

Let $(D, \sqsubseteq)$ and $(D', \sqsubseteq')$ be posets. Then a map $\fun{G}{D}{D'}$ is \emph{Scott-continuous}, if it is monotone and for any directed subset $L$ of $D$ with existing least upper bound, $G(\bigsqcup L) = \bigsqcup' G[L]$.

Assume that $x, y$ are elements of a poset $D$. Then $x$ is said to \emph{approximate} $y$, written $x \ll y$, if for any directed subset $L$ of $D$ the least upper bound of which exists in $D$, the relation $y \sqsubseteq \bigsqcup L$ always implies the existence of some  $u \in L$ with $x \sqsubseteq u$. Moreover, $x$ is \emph{compact} if $x \ll x$. A subset $B$ of $D$ is a \emph{basis} of $D$, if for each $x \in D$ the set $B_x = \set{u\in B}{u \ll x}$ contains a directed subset with least upper bound $x$. Note that the set of all compact elements of $D$ is included in every basis of $D$. A directed-complete pointed partial order $D$ is said to be \emph{continuous} (or a \emph{domain}) if it has a basis and it is called \emph{algebraic} (or an \emph{algebraic domain}) if its compact elements form a basis. Note we here assume that a domain is always pointed which is not usually the case in the literature. Standard references for domain theory and its applications are~\cite{we87,ds90,aj94,slg94,ac98,gh03}.

\begin{lemma} \label{lem-propd}
Let $D$ and $D'$ be domains. Then the following statements hold:
\begin{enumerate}

\item\label{lem-propd-1}
The approximation relation $\ll$ is transitive.

\item\label{lem-propd-2}
$x \ll y \Rightarrow x \sqsubseteq y$.

\item\label{lem-propd-3}
$u \sqsubseteq x \ll y \sqsubseteq z \Rightarrow x \ll z$.

\item\label{lem-propd-4}
$\bot \ll x$.

\item\label{lem-propd-5}
$B_{x}$ is directed with respect to $\ll$.


\item\label{lem-propd-7}
$\fun{G}{D}{D'}$ is Scott-continuous, exactly if for all $x \in D$, $G(x) = \bigsqcup' G[B_{x}]$.

\end{enumerate}
\end{lemma}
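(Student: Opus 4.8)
The plan is to prove the seven items of Lemma~\ref{lem-propd} essentially in the order stated, since each relies on the previous ones. For (\ref{lem-propd-1}), suppose $x\ll y$ and $y\ll z$, and let $L$ be directed with $z\sqsubseteq\bigsqcup L$. From $y\ll z$ I get some $u\in L$ with $y\sqsubseteq u$, hence $y\sqsubseteq\bigsqcup\{u\}$; applying $x\ll y$ to the directed set $\{u\}$ gives $x\sqsubseteq u$, so $x\ll z$. For (\ref{lem-propd-2}), apply the definition of $x\ll y$ to the directed set $L=\{y\}$ (which is directed as a singleton and has least upper bound $y$): $y\sqsubseteq\bigsqcup L$ trivially, so there is $u\in\{y\}$ with $x\sqsubseteq u$, i.e.\ $x\sqsubseteq y$. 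Part (\ref{lem-propd-3}) is the standard ``sandwich'' argument: given $u\sqsubseteq x\ll y\sqsubseteq z$ and a directed $L$ with $z\sqsubseteq\bigsqcup L$, transitivity of $\sqsubseteq$ gives $y\sqsubseteq\bigsqcup L$, so $x\ll y$ yields some $w\in L$ with $x\sqsubseteq w$, and then $u\sqsubseteq x\sqsubseteq w$ shows $u\ll z$. (The statement as displayed asserts $x\ll z$, which is the special case $u=x$; I would prove the slightly stronger ``$u\ll z$'' form and read off what is needed.)

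For (\ref{lem-propd-4}), $\bot\ll x$: let $L$ be directed with $x\sqsubseteq\bigsqcup L$. Since $L$ is directed it is non-empty, so pick any $u\in L$; then $\bot\sqsubseteq u$ because $\bot$ is least. Hence $\bot\ll x$. Part (\ref{lem-propd-5}), that $B_x=\set{u\in B}{u\ll x}$ is directed with respect to $\ll$ (equivalently, with respect to $\sqsubseteq$ by (\ref{lem-propd-2})), is the one place where the hypothesis that $D$ is a domain with basis $B$ does real work. By definition of a basis, $B_x$ contains a directed subset $M$ whose least upper bound is $x$. I would show $B_x$ itself is directed by taking $u_1,u_2\in B_x$; since $u_i\ll x$ and $x=\bigsqcup M$ with $M$ directed, there are $w_i\in M$ with $u_i\sqsubseteq w_i$, and directedness of $M$ gives $w\in M$ above both $w_1,w_2$, hence above $u_1$ and $u_2$; finally $w\in M\subseteq B_x$, and $w$ may be taken in $B_x$ directly. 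Non-emptiness of $B_x$ follows from $\bot\in B$ (as $\bot$ is compact, so lies in every basis) together with (\ref{lem-propd-4}). I should also record that $\bigsqcup B_x=x$: one inequality is $\bigsqcup B_x\sqsubseteq x$ from (\ref{lem-propd-2}), the other from $M\subseteq B_x$ and $\bigsqcup M=x$.

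The substantive item is (\ref{lem-propd-7}): $\fun{G}{D}{D'}$ is Scott-continuous iff $G(x)=\bigsqcup' G[B_x]$ for all $x$. For the forward direction, assume $G$ is Scott-continuous. By (\ref{lem-propd-5}) and the remark just made, $B_x$ is directed with $\bigsqcup B_x=x$, so monotonicity makes $G[B_x]$ directed and continuity gives $G(x)=G(\bigsqcup B_x)=\bigsqcup' G[B_x]$. For the converse, assume $G(x)=\bigsqcup' G[B_x]$ for every $x$. Monotonicity: if $x\sqsubseteq y$ then by (\ref{lem-propd-3}) every $u\ll x$ satisfies $u\ll y$, so $B_x\subseteq B_y$, hence $G[B_x]\subseteq G[B_y]$ and $G(x)=\bigsqcup'G[B_x]\sqsubseteq'\bigsqcup'G[B_y]=G(y)$. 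For continuity, let $L$ be directed with least upper bound $\bigsqcup L$ in $D$. Monotonicity already gives $\bigsqcup' G[L]\sqsubseteq' G(\bigsqcup L)$. For the reverse inequality it suffices, since $G(\bigsqcup L)=\bigsqcup' G[B_{\bigsqcup L}]$, to show each $G(u)$ with $u\ll\bigsqcup L$ is $\sqsubseteq'\bigsqcup'G[L]$; but $u\ll\bigsqcup L$ and $L$ directed give some $w\in L$ with $u\sqsubseteq w$, whence $G(u)\sqsubseteq' G(w)\sqsubseteq'\bigsqcup'G[L]$ by monotonicity. Combining the two inequalities yields $G(\bigsqcup L)=\bigsqcup'G[L]$, so $G$ is Scott-continuous. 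The main obstacle throughout is purely bookkeeping: making sure that wherever I write $\bigsqcup$ or $\bigsqcup'$ the relevant set is genuinely directed and its least upper bound exists (which it does, as $D,D'$ are dcpos), and that the basis axiom is invoked precisely once, to supply the directed family converging to $x$ inside $B_x$.
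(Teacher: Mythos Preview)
The paper states Lemma~\ref{lem-propd} without proof (it is a collection of standard domain-theoretic facts), so there is no argument in the paper to compare yours against. Your proofs of items (\ref{lem-propd-1})--(\ref{lem-propd-4}) and (\ref{lem-propd-7}) are correct and standard.

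There is, however, a real gap in your treatment of (\ref{lem-propd-5}). You assert that $B_x$ being directed with respect to $\ll$ is ``equivalently, with respect to $\sqsubseteq$ by (\ref{lem-propd-2})''. Item (\ref{lem-propd-2}) gives only the implication $u \ll v \Rightarrow u \sqsubseteq v$, so $\ll$-directedness implies $\sqsubseteq$-directedness, not the converse. Your argument produces, for given $u_1,u_2 \in B_x$, an upper bound $w \in B_x$ with $u_1,u_2 \sqsubseteq w$; this is only $\sqsubseteq$-directedness. The paper genuinely needs the stronger $\ll$-directedness later (see the proof of Lemma~\ref{lem-effapp}, where one must find, for $u \in B_{\eta(j)}$, some $u' \in B_{\eta(j)}$ with $u \ll u'$).

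What is missing is the interpolation property: in a continuous domain, whenever $u \ll x$ there is some $v \in B$ with $u \ll v \ll x$. One proves this by a two-level basis argument: take a directed $M \subseteq B_x$ with $\bigsqcup M = x$, and for each $m \in M$ a directed $M_m \subseteq B_m$ with $\bigsqcup M_m = m$; then $N = \bigcup_m M_m$ is directed with $\bigsqcup N = x$, so $u \ll x$ yields $u \sqsubseteq n$ for some $n \in M_m \subseteq B_m$, whence $u \sqsubseteq n \ll m$ and (\ref{lem-propd-3}) gives $u \ll m \ll x$. With interpolation in hand, your $w \in B_x$ can be lifted to some $v \in B_x$ with $w \ll v$, and then $u_i \sqsubseteq w \ll v$ together with (\ref{lem-propd-3}) gives $u_i \ll v$, which is the required $\ll$-directedness.
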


Note that by Properties~(\ref{lem-propd-2}), (\ref{lem-propd-3}) we have that $x \ll y$ exactly if $x \sqsubseteq y$, in case that $x$ or $y$ is compact.

\begin{defin}\label{def-dombb}
Let $D$ be a continuous domain with countable basis $B$ and a numbering $\fun{\beta}{\omega}{B}$. Then $D$ is said to be \emph{effectively given} if the set $\set{\pair{i, j}}{\beta_{i} \ll \beta_{j}}$ is c.e. 
\end{defin}

If $D$ is effectively given, then an element $x$ is \emph{computable} if $\beta^{-1}[B_{x}]$ is c.e. Let $D_{c}$ be the set of all computable elements of $D$. If $D'$ is another domain, say with basis $B'$ and numbering $\beta'$ so that $D'$ is effectively given, then a map $\fun{G}{D}{D'}$ is \emph{computable} if $G$ is Scott-continuous and $\set{\pair{i, j}}{\beta'_{j} \ll G(\beta_{i})}$ is c.e. Note that for computable maps $G$, $G[D_{c}] \subseteq D'_{c}$.

\begin{defin}\label{def-adm}
Let $D$ be an effectively given domain. Then a numbering $\fun{\eta}{\omega}{D_{c}}$ of the computable elements of $D$ is called \emph{admissible} if it satisfies the following two requirements:
\begin{description}

\item[\rm(A~I)] $\set{\pair{i, j}}{\beta_{i} \ll \eta_{j}}$ is c.e.

\item[\rm(A~II)] There is a function $d \in \RRR^{(1)}$ such that for all $ i \in \omega$, if $\beta[W_{i}]$ is directed then $\eta_{d(i)} = \bigsqcup \beta[W_{i}]$.

\end{description}
\end{defin}
Weihrauch and Deil~\cite{wd80} have shown that for every effectively given domain an admissible numbering can be constructed. 

\begin{proposition}[Weihrauch, Deil, 1980]\label{pn-advol}
Let $D$ be an effectively given domain and  let $\eta$ be an admissible  and $\gamma$  an arbitrary numbering of its computable elements. Then the following statements hold:
\begin{enumerate}


\item\label{pn-advol-2}
$\eta$ is complete with special element $\bot$.

\item\label{pn-advol-3}
$\gamma \text{ satisfies (A~I)} \Leftrightarrow \gamma \le_{m} \eta$.

 \item\label{pn-advol-4}
$\gamma \text{ satisfies (A~II)} \Leftrightarrow \eta \le_{m} \gamma$.

\item\label{pn-advol-5}
$\gamma \text{ is admissible} \Leftrightarrow \eta \equiv_{m} \gamma \Leftrightarrow \eta \equiv \gamma$. 

\end{enumerate}
\end{proposition}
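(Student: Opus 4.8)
The plan is to prove parts~(\ref{pn-advol-3}) and~(\ref{pn-advol-4}) first, as they carry the real content, then to obtain the completeness claim~(\ref{pn-advol-2}) as a small variant of the same idea, and finally to read off~(\ref{pn-advol-5}) from these together with the Ershov--Myhill isomorphism theorem. Two elementary facts will be used repeatedly: for every $x\in D$ the set $B_{x}=\set{u\in B}{u\ll x}$ is directed (Lemma~\ref{lem-propd}(\ref{lem-propd-5})) with $\bigsqcup B_{x}=x$; and $\bot\in B$ (a basis must contain a directed subset of $B_{\bot}\subseteq\{\bot\}$), so a fixed $\beta$-index $m_{\bot}$ of $\bot$ is at hand, and $m_{\bot}\in\set{m}{\beta_{m}\ll x}$ for every $x$ by Lemma~\ref{lem-propd}(\ref{lem-propd-4}).

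For~(\ref{pn-advol-3}): if $\gamma=\eta\circ g$ with $g\in\RRR^{(1)}$, then $\set{\pair{i,j}}{\beta_{i}\ll\gamma_{j}}$ is the preimage of the c.e.\ set witnessing (A~I) for $\eta$ under the computable map $\pair{i,j}\mapsto\pair{i,g(j)}$, hence c.e., so $\gamma\le_{m}\eta$ gives (A~I); conversely, (A~I) for $\gamma$ yields $s\in\RRR^{(1)}$ with $W_{s(j)}=\set{i}{\beta_{i}\ll\gamma_{j}}$, so $\beta[W_{s(j)}]=B_{\gamma_{j}}$ is directed with supremum $\gamma_{j}$, and feeding $s(j)$ into the function $d$ from (A~II) for $\eta$ gives $\eta_{d(s(j))}=\bigsqcup\beta[W_{s(j)}]=\gamma_{j}$, i.e.\ $d\circ s$ reduces $\gamma$ to $\eta$. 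Part~(\ref{pn-advol-4}) is dual: if $\eta=\gamma\circ g$ and $d$ witnesses (A~II) for $\eta$, then $g\circ d$ witnesses (A~II) for $\gamma$; conversely, if $\gamma$ satisfies (A~II) with function $d'$, then (A~I) for $\eta$ provides $s\in\RRR^{(1)}$ with $W_{s(j)}=\set{i}{\beta_{i}\ll\eta_{j}}$, so $\beta[W_{s(j)}]=B_{\eta_{j}}$ is directed with supremum $\eta_{j}$ and $\gamma_{d'(s(j))}=\eta_{j}$, whence $d'\circ s$ reduces $\eta$ to $\gamma$.

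For~(\ref{pn-advol-2}): given $p\in\PPP^{(1)}$, build $u\in\RRR^{(1)}$ so that $W_{u(i)}$ contains $m_{\bot}$ from the start and, once (and if) $p(i)$ halts, additionally enumerates $\set{m}{\beta_{m}\ll\eta_{p(i)}}$, which is uniformly c.e.\ by (A~I) for $\eta$. Then $\beta[W_{u(i)}]$ equals $\{\bot\}$ when $p(i)\nconv$ and $B_{\eta_{p(i)}}$ when $p(i)\conv$, and it is directed in either case; so $g\Def d\circ u$, with $d$ the (A~II)-function of $\eta$, is total computable and satisfies $\eta_{g(i)}=\eta_{p(i)}$ for $i\in\dom(p)$ and $\eta_{g(i)}=\bot$ otherwise, which is precisely completeness with distinguished element $\bot$.

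Finally,~(\ref{pn-advol-5}): ``$\gamma$ admissible $\Leftrightarrow\eta\equiv_{m}\gamma$'' is just the conjunction of~(\ref{pn-advol-3}) and~(\ref{pn-advol-4}), and for ``$\eta\equiv_{m}\gamma\Leftrightarrow\eta\equiv\gamma$'' only the forward direction is non-trivial: $\eta\equiv_{m}\gamma$ forces $\gamma$ to be admissible, hence, by~(\ref{pn-advol-2}), both $\eta$ and $\gamma$ are complete, so each is cylindrical (a result of Ershov~\cite{er73}; equivalently each has a one-to-one padding function, obtained from the recursion theorem available for complete numberings as in the proof of Theorem~\ref{thm-pad}), whence $\eta\equiv_{m}\gamma$ sharpens to $\eta\equiv_{1}\gamma$ and the number-theoretic Ershov--Myhill isomorphism theorem yields $\eta\equiv\gamma$. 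The main obstacle I anticipate is not any single hard step but the care required in~(\ref{pn-advol-2}): the construction of $W_{u(i)}$ must be uniform in $i$ \emph{and} guarantee a directed $\beta$-image regardless of whether $p(i)$ converges --- the sole purpose of seeding $m_{\bot}$ --- since otherwise, with $\beta[W_{u(i)}]=\emptyset$, (A~II) would give no control over $\eta_{d(u(i))}$; a secondary point to double-check is that completeness here really supplies a one-to-one padding function and not merely precompleteness.
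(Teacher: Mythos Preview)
The paper does not prove this proposition: it is stated with attribution to Weihrauch and Deil~\cite{wd80} and used as a black box, with only the remark immediately following it that completeness implies cylindricality by a result of Ershov. So there is no ``paper's own proof'' to compare against.

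Your argument is correct and is essentially the standard one. The reductions in~(\ref{pn-advol-3}) and~(\ref{pn-advol-4}) are exactly what the definitions force: (A~I) for $\gamma$ packages each $B_{\gamma_j}$ as a uniformly c.e.\ set to feed into the (A~II)-function of $\eta$, and dually. The completeness proof via seeding $m_{\bot}$ into $W_{u(i)}$ is the right construction; your worry about directedness in the divergent case is well placed and correctly handled, and in the convergent case $m_{\bot}$ is already in $\set{m}{\beta_m\ll\eta_{p(i)}}$ so no harm is done. For~(\ref{pn-advol-5}) you invoke precisely the fact the paper itself cites after the proposition (complete $\Rightarrow$ cylindrical, Ershov~\cite{er73}), so the passage from $\equiv_m$ to $\equiv$ is on solid ground. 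The parenthetical alternative route through a padding function is also fine---complete numberings satisfy the recursion theorem, and the padding-lemma argument of Theorem~\ref{thm-pad} goes through verbatim---but you do not need it, since the Ershov reference suffices.
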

Since $\eta$ is complete, it ensues with a result of Ershov~\cite[p.~332]{er73} that $\eta$ is cylindrical. As follows from the definition of an effectively given domain, all basic elements are computable. Moreover, if $\gamma$ is any numbering of $D_{c}$ satisfying (A~II) then $\beta \le_{m} \gamma$.

\begin{lemma}\label{lem-effapp}
Let $D$ be an effectively given domain with basis $B$ and numbering $\beta$ of the basis elements. Moreover, let $\eta$ be a numbering of $D_{c}$ that satisfies (A~I). Then there is a function $r \in \RRR^{(2)}$ such that for all $i, j \in \omega$ with $\beta_{i} \ll \eta_{j}$ the following statements hold:
\begin{enumerate}

\item\label{lem-effapp-1}
$\varphi_{r(i, j)} \in \RRR^{(1)}$,

\item\label{lem-effapp-2}
$\beta(i) \ll \beta(\varphi_{r(i, j)}(0))$,

\item\label{lem-effapp-3}
$\beta(\varphi_{r(i, j)}(a)) \ll \beta(\varphi_{r(i, j)}(a+1))$ \quad $(a \in \omega)$,

\item\label{lem-effapp-4}
$\eta(j) = \bigsqcup_{a} \beta(\varphi_{r(i, j)}(a))$.

\end{enumerate}
\end{lemma}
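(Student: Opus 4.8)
The idea is to enumerate the basic approximations of $\eta_j$ that dominate $\beta_i$ and string them into an increasing chain with least upper bound $\eta_j$. Since $\eta$ satisfies (A~I), the set $E \Def \set{\pair{a, j}}{\beta_a \ll \eta_j}$ is c.e.; fix a c.e.\ enumeration of it. Given a pair $\pair{i, j}$ with $\beta_i \ll \eta_j$, by Lemma~\ref{lem-propd}(\ref{lem-propd-5}) the set $B_{\eta_j}$ is directed under $\ll$; moreover every finite subset of $\set{a}{\beta_a \ll \eta_j}$ has an upper bound in that set whose $\beta$-image lies above all members. The plan is to build, uniformly in $\pair{i, j}$, a total computable function $\fun{c}{\omega}{\omega}$ with $c(0)$ an index of a basic element above $\beta_i$ and below $\eta_j$, and with $\beta_{c(a)} \ll \beta_{c(a+1)}$, $\beta_{c(a+1)} \ll \eta_j$, so that $\bigsqcup_a \beta_{c(a)} = \eta_j$. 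Then $r$ is obtained via the $s$-$m$-$n$ theorem for the G\"odel numbering $\varphi$ applied to the function $\lambda i, j, a.\ c(a)$, which is partial computable in $i,j,a$ because $E$ and $\set{\pair{a, b}}{\beta_a \ll \beta_b}$ are c.e.

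The construction of $c$ proceeds by a standard stage-by-stage bookkeeping. At stage $0$ search the enumeration of $E$ and of $\set{\pair{a,b}}{\beta_a\ll\beta_b}$ until some index $a_0$ appears with $\beta_i \ll \beta_{a_0}$ and $\beta_{a_0} \ll \eta_j$; such an $a_0$ exists because $B_{\eta_j}$ is directed and $\beta_i \ll \eta_j$, so $\beta_i$ and the members of $B_{\eta_j}$ interpolate. Set $c(0) \Def a_0$. At stage $a+1$, run the enumeration of $B_{\eta_j}$ long enough to see the $(a{+}1)$-st basic element $\beta_{b}$ with $\beta_b \ll \eta_j$, then, using directedness of $B_{\eta_j}$ again, search for an index $a_{a+1}$ with $\beta_{c(a)} \ll \beta_{a_{a+1}}$, $\beta_b \ll \beta_{a_{a+1}}$ and $\beta_{a_{a+1}} \ll \eta_j$; set $c(a+1) \Def a_{a+1}$. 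The searches always terminate because $B_{\eta_j}$ is directed under $\ll$ and $\ll$ is transitive (Lemma~\ref{lem-propd}(\ref{lem-propd-1})), so a common $\ll$-upper bound of any two of its members that is itself $\ll \eta_j$ can be found, and by interpolation in a continuous domain it can be taken basic. This makes $c$ total computable, giving~(\ref{lem-effapp-1})--(\ref{lem-effapp-3}). For~(\ref{lem-effapp-4}) note that $(\beta_{c(a)})_a$ is an increasing chain below $\eta_j$, so $\bigsqcup_a \beta_{c(a)} \sqsubseteq \eta_j$; conversely, by the stage-$a{+}1$ clause every basic element of $B_{\eta_j}$ is eventually dominated by some $\beta_{c(a)}$, hence $\bigsqcup B_{\eta_j} \sqsubseteq \bigsqcup_a \beta_{c(a)}$, and since $\eta_j = \bigsqcup B_{\eta_j}$ (because $B_{\eta_j}$ is directed with supremum $\eta_j$ and $\eta_j$ is computable) we get equality.

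The main obstacle is the interpolation step: to keep $c$ total and computable we must be sure that whenever $\beta_u, \beta_v \ll \eta_j$ there is a \emph{basic} $\beta_w$ with $\beta_u, \beta_v \sqsubseteq \beta_w \ll \eta_j$, and that such a $w$ can be found effectively. Directedness of $B_{\eta_j}$ (Lemma~\ref{lem-propd}(\ref{lem-propd-5})) gives a $\ll$-upper bound $x \in B_{\eta_j}$ of $\beta_u,\beta_v$ with $x = \beta_w$ for some $w$ with $\beta_w \ll \eta_j$, and $\beta_u \sqsubseteq \beta_w$ follows from $\beta_u \ll \beta_w$ via Lemma~\ref{lem-propd}(\ref{lem-propd-2}); effectiveness then comes from c.e.-ness of $E$ together with c.e.-ness of $\set{\pair{a,b}}{\beta_a\ll\beta_b}$, which holds because $D$ is effectively given. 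Once this is in place the rest is routine bookkeeping, and packaging $c$ through the $s$-$m$-$n$ theorem yields the desired $r \in \RRR^{(2)}$.
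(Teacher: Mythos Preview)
Your proposal is correct and follows essentially the same approach as the paper: both build a $\ll$-increasing chain inside $B_{\eta_j}$ starting strictly above $\beta_i$ by repeatedly taking, at stage $a{+}1$, a common $\ll$-upper bound (still $\ll \eta_j$) of the current chain element and the next enumerated member of $B_{\eta_j}$, using directedness of $B_{\eta_j}$ under $\ll$ (Lemma~\ref{lem-propd}(\ref{lem-propd-5})) together with the c.e.-ness of $\set{\pair{a,b}}{\beta_a\ll\beta_b}$ and of $\set{\pair{a,j}}{\beta_a\ll\eta_j}$. The paper packages the same search via the index sets $E_{ij}=\set{a}{\beta_i\ll\beta_a\ll\eta_j}$ and their pairwise intersections, but the construction and verification of (\ref{lem-effapp-1})--(\ref{lem-effapp-4}) are the same.
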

\begin{proof}
Let $E_{i j} \Def \set{a}{\beta(i) \ll \beta(a) \ll \eta(j)}$. Then $E_{i j}$ is c.e. Thus, there is some function $q \in \RRR^{(2)}$ so that $E_{i j} = W_{q(i, j)}$. Since $B_{\eta(j)}$ is directed with respect to $\ll$, the same holds for $\beta[E_{i j}]$. It follows that $\bigsqcup \beta[E_{i j}]$ exists. Moreover, $\bigsqcup \beta[E_{i j}] \sqsubseteq \eta(j)$. Let $i, j$ be such that $\beta(i) \ll \eta(j)$. Since $B_{\eta(j)}$ is directed with respect to $\ll$, we have that for any $u \in B_{\eta(j)}$ there is some $u' \in \beta[E_{i j}]$ with $u \ll u'$. Thus $E_{i j}$ is non-empty and in addition, $\eta(j) \sqsubseteq \bigsqcup \beta[E_{i j}]$. Hence, $\eta(j) = \bigsqcup \beta[E_{i j}]$.

In the sequel let $s \in \RRR^{(1)}$ such that $\varphi_{s(a)}$ is a total function enumerating $W_{a}$, if $W_{a}$ is non-empty.  Moreover, let $k \in \RRR^{(2)}$ with $W_{k(a, c)} = W_{a} \cap W_{c}$, Then define $g \in \RRR^{(3)}$ by
\begin{align*}
&g(i, j, 0) \Def \varphi_{s(q(i, j))}(0), \\
&g(i, j, a+1) \Def \varphi_{s(k(q(g(i, j, a), j), q(\varphi_{s(q(i, j))}(a+1), j)))}(0).
\end{align*}
Then $g(i, j, a+1) \in E_{g(i, j ,a) j} \cap E_{\bar{\imath} j}$, where $\bar{\imath}$ is the $(a+1)$-st element of $E_{i j}$ in the enumeration $\varphi_{s(q(i, j))}$. Because $\beta[E_{i j}]$ is directed with respect to $\ll$, we have that $E_{g(i, j, a) j} \cap E_{\bar{\imath} j}$ is non-empty. Therefore, $g(i, j, a+1)$ is defined. Furthermore, for all $a$, $\beta(i) \ll \beta(g(i, j, a+1))$, from which we obtain that $\bigsqcup \beta[E_{i j}] \sqsubseteq \bigsqcup_{a} \beta(g(i, j, a))$. Conversely, since for all $a$, $g(i, j, a) \in E_{i j}$, we also have that $\bigsqcup_{a} \beta(g(i, j, a)) \sqsubseteq \bigsqcup \beta[E_{i j}]$. Thus, $\bigsqcup_{a} \beta(g(i, j, a))  = \bigsqcup \beta[E_{i j}] = \eta(j)$. Now, let $r \in \RRR^{(2)}$ with $\varphi_{r(i, j)}(a) = g(i, j, a)$. Then $r$ is as required.
\end{proof}

For the next consequence choose $i$ such that $\beta(i) = \bot$.

\begin{corollary}\label{cor-ibot}
For every $x \in D_{c}$, there is a function $p \in \RRR^{(1)}$ so that
\begin{enumerate}

\item \label{cor-ibot-1}
$\beta(p(a)) \ll \beta(p(a+1))  \quad (a \in \omega$),

\item \label{cor-ibot-2}
$x = \bigsqcup_{a} \beta(p(a))$.

\end{enumerate}
\end{corollary}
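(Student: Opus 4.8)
The plan is to specialise Lemma~\ref{lem-effapp} to the case of a basic index for the least element. Concretely, fix some $i_{0} \in \omega$ with $\beta(i_{0}) = \bot$; such an index exists since $\bot \in B$ (the least element is compact, hence lies in every basis). By Lemma~\ref{lem-propd}(\ref{lem-propd-4}) we have $\bot \ll x$ for every $x \in D$, so in particular $\beta(i_{0}) = \bot \ll \eta_{j}$ for every $j \in \omega$. Now let $x \in D_{c}$ be arbitrary. Since $\eta$ is a numbering of $D_{c}$ (and it does satisfy (A~I), being admissible — or we simply invoke the hypothesis of Lemma~\ref{lem-effapp} that $\eta$ satisfies (A~I)), there is some $j \in \omega$ with $\eta_{j} = x$. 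Apply Lemma~\ref{lem-effapp} with this pair $(i_{0}, j)$: it yields a function $r \in \RRR^{(2)}$, and in particular $\varphi_{r(i_{0}, j)} \in \RRR^{(1)}$ with the three listed properties.

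Then I would set $p \Def \varphi_{r(i_{0}, j)}$. Property~(\ref{cor-ibot-1}) of the corollary, namely $\beta(p(a)) \ll \beta(p(a+1))$, is precisely Lemma~\ref{lem-effapp}(\ref{lem-effapp-3}) with the chosen indices. Property~(\ref{cor-ibot-2}), namely $x = \bigsqcup_{a} \beta(p(a))$, is exactly Lemma~\ref{lem-effapp}(\ref{lem-effapp-4}), using $\eta_{j} = x$. The remaining output clauses of Lemma~\ref{lem-effapp} (items (\ref{lem-effapp-1}) and (\ref{lem-effapp-2})) are not needed for the corollary, except insofar as (\ref{lem-effapp-1}) guarantees that $p$ really is a total recursive function, which is what ``$p \in \RRR^{(1)}$'' in the corollary asserts.

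There is essentially no obstacle here: the corollary is a direct instantiation. The only point requiring a sentence of justification is the existence of a $\beta$-index for $\bot$, which follows because every basis of a domain contains all compact elements and $\bot$ is compact (it approximates itself trivially, since any directed set has $\bot$ below all of its members). One could also note that, should one want $p$ to depend effectively on an $\eta$-index $j$ of $x$ rather than just on $x$, the function $\lambda j.\ \varphi_{r(i_{0}, j)}$ — more precisely an index for it obtained from an $s$-$m$-$n$ style manipulation — does the job; but as stated the corollary only asks for existence of $p$ for each fixed $x$, so this refinement is optional. I would write the proof in two or three sentences.

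\begin{proof}
Since $\bot$ is a compact element of $D$ it lies in every basis, so there is $i_{0} \in \omega$ with $\beta(i_{0}) = \bot$. By Lemma~\ref{lem-propd}(\ref{lem-propd-4}), $\beta(i_{0}) \ll \eta_{j}$ for all $j$. Given $x \in D_{c}$, choose $j$ with $\eta_{j} = x$ and let $r \in \RRR^{(2)}$ be as in Lemma~\ref{lem-effapp}. Putting $p \Def \varphi_{r(i_{0}, j)}$, Lemma~\ref{lem-effapp}(\ref{lem-effapp-1}) gives $p \in \RRR^{(1)}$, Lemma~\ref{lem-effapp}(\ref{lem-effapp-3}) gives $\beta(p(a)) \ll \beta(p(a+1))$ for all $a \in \omega$, and Lemma~\ref{lem-effapp}(\ref{lem-effapp-4}) gives $x = \eta_{j} = \bigsqcup_{a} \beta(p(a))$.
\end{proof}
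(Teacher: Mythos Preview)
Your proposal is correct and follows exactly the approach indicated in the paper, which simply says ``choose $i$ such that $\beta(i) = \bot$'' before stating the corollary. You have spelled out the instantiation of Lemma~\ref{lem-effapp} in more detail than the paper does, but the idea is identical.
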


After these more technical results which we will need later, we now start investigating the relationship of the function classes $\widehat{\SSS}^{(n)}_{A}$ with domains. Again we will restrict ourselves
to considering only the classes $\widehat{\SSS}^{(1)}_{A}$.

\begin{theorem}
Let $A \subseteq \omega$ be a c.e.\ infinite set and for $f, g \in \widehat{\SSS}^{(1)}_{A}$ set 
\[
f \sqsubseteq g \Leftrightarrow \graph(f) \subseteq \graph(g).
\]
Then $(\widehat{\SSS}^{(1)}_{A}, \sqsubseteq)$ is an effectively given algebraic domain such that:
\begin{enumerate} 

\item \label{thm-fctdom-1}
The nowhere defined function is the least element.

\item \label{thm-fctdom-2}
The initial segment functions in $\anf^{(1)}_{A}$ are exactly the compact elements.

\item \label{thm-fctdom-3}
The functions in $\SSS^{(1)}_{A}$ are the computable elements.

\item \label{thm-fctdom-4}
For numberings of $\SSS^{(1)}_{A}$,  Conditions~(QGN~I) and (A~I) as well as (QGN~II) and (A~II) are equivalent. In particular,  the quasi-G\"odel numberings are exactly the admissible numberings.

\item \label{thm-fctdom-5}
The computability notions for operators $\fun{G}{\widehat{\SSS}^{(1)}_{A}}{\widehat{\SSS}^{(1)}_{A}}$ in Definition~\ref{def-compop} and in this section coincide.

\end{enumerate}
\end{theorem}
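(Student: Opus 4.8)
The plan is to verify the domain axioms directly from the order-theoretic structure and then match the effectivity notions to the ones already developed. First I would check that $(\widehat{\SSS}^{(1)}_{A},\sqsubseteq)$ is a pointed dcpo: the nowhere defined function is clearly the least element (its graph is $\emptyset$), proving (\ref{thm-fctdom-1}). For directed completeness, let $L\subseteq\widehat{\SSS}^{(1)}_{A}$ be directed; the union $\bigcup\set{\graph(f)}{f\in L}$ is single-valued because any two members have an upper bound in $L$, hence it is the graph of some partial function $f_{0}$. The key point is that $f_{0}\in\widehat{\SSS}^{(1)}_{A}$: either all $f\in L$ lie in $\anf^{(1)}_{A}$, in which case their domains form a $\subseteq$-chain of initial segments of $\omega$ with lengths in $A$ and $\dom(f_{0})$ is their union—which is again either such an initial segment or all of $\omega$ (so $f_{0}$ is total, hence in $\FFF^{(1)}$)—or some $f\in L$ is already total and then $f_{0}=f$. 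Either way $f_{0}\in\widehat{\SSS}^{(1)}_{A}$ and is the least upper bound.

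Next I would identify the compact elements. If $p\in\anf^{(1)}_{A}$, then $\graph(p)$ is finite, so whenever $p\sqsubseteq\bigsqcup L$ with $L$ directed, each of the finitely many pairs of $\graph(p)$ appears in some member of $L$, and by directedness a single member $f\in L$ already satisfies $p\sqsubseteq f$; hence $p\ll p$. Conversely, a total function $g\in\FFF^{(1)}$ is the least upper bound of the directed family of its restrictions to initial segments of $\omega$ with lengths in $A$ (this family is directed and cofinal since $A$ is infinite), and no single such restriction dominates $g$, so $g$ is not compact. This proves (\ref{thm-fctdom-2}); since every $f\in\widehat{\SSS}^{(1)}_{A}$ is the directed supremum of the compact elements below it, the domain is algebraic. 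For (\ref{thm-fctdom-3}), I would take as basis numbering the numbering $\alpha^{(1)}$ of $\anf^{(1)}_{A}$ from Lemma~\ref{lem-alph}; by Lemma~\ref{lem-alph}(\ref{lem-alph-2}),(\ref{lem-alph-3}) the relation $\alpha^{(1)}_{i}\ll\alpha^{(1)}_{j}$—which by compactness is just $\graph(\alpha^{(1)}_{i})\subseteq\graph(\alpha^{(1)}_{j})$, a condition decidable from $\nlg(i)$, $\nlg(j)$ and the computable extended graph—is computable, so $\widehat{\SSS}^{(1)}_{A}$ is effectively given. A function $f$ is computable iff $\set{i}{\graph(\alpha^{(1)}_{i})\subseteq\graph(f)}$ is c.e.; for $f\in\SSS^{(1)}_{A}$ this set is c.e.\ because $\graph(f)$ is c.e.\ by Proposition~\ref{pn-compce}, while conversely if this set is c.e.\ one recovers $\graph(f)$ as a c.e.\ union of finite graphs, forcing $f\in\SSS^{(1)}_{A}$ again by Proposition~\ref{pn-compce}.

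For (\ref{thm-fctdom-4}) I would unwind the definitions: $\beta_{i}\ll\eta_{j}$ with $\beta=\alpha^{(1)}$ means $\graph(\alpha^{(1)}_{i})\subseteq\graph(\theta_{j})$, so (A~I) for a numbering $\theta$ of $\SSS^{(1)}_{A}$ is exactly the statement that $\set{\pair{i,j}}{\graph(\alpha^{(1)}_{i})\subseteq\graph(\theta_{j})}$ is c.e., which by Lemma~\ref{lem-alph}(\ref{lem-alph-5}) follows from (QGN~I), and conversely (A~I) together with (A~II) yields an admissible hence complete numbering from which (QGN~I)—computability of the extended graph of the universal function—is recovered; similarly (A~II) says $\theta_{d(i)}=\bigsqcup\alpha^{(1)}[W_{i}]$ when the latter is directed, i.e.\ $\theta_{d(i)}$ is the function whose graph is that directed union, which is precisely a reformulation of (QGN~II) via Theorem~\ref{thm-qgn1} and the fact that every c.e.\ directed family of finite graphs is the extended-graph enumeration of a function in $\SSS^{(1)}_{A}$; so (A~I)$\Leftrightarrow$(QGN~I) and (A~II)$\Leftrightarrow$(QGN~II), whence the admissible numberings are exactly the quasi-G\"odel numberings. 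Finally (\ref{thm-fctdom-5}): a computable operator in the sense of Definition~\ref{def-compop} is monotone and continuous by Theorem~\ref{thm-coprop}(\ref{thm-coprop-1}),(\ref{thm-coprop-2}), and the c.e.\ set $C$ defining it is exactly what makes $\set{\pair{i,j}}{\alpha^{(1)}_{j}\ll G(\alpha^{(1)}_{i})}$ c.e., which is Theorem~\ref{thm-coprop}(\ref{thm-coprop-3}) and the domain-theoretic computability condition; conversely a domain-computable $G$ satisfies all three conditions of Theorem~\ref{thm-coprop}, so the notions coincide.

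The main obstacle I expect is (\ref{thm-fctdom-4}): the two effectivity frameworks speak about different objects—domain-theoretic admissibility quantifies over indices $i$ with $\beta[W_{i}]$ directed, whereas (QGN~II) quantifies over total computable functions $k$ enumerating extended graphs—so the bulk of the work is the bookkeeping translation between "c.e.\ directed set of basic elements" and "computable enumeration of an extended graph", carried out uniformly in both directions using Theorem~\ref{thm-qgn1} and Lemma~\ref{lem-alph}. The other parts are essentially routine once the basis numbering $\alpha^{(1)}$ and Proposition~\ref{pn-compce} are in hand.
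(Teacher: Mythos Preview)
Your plan is sound and largely matches the paper's proof, but there is one genuine logical slip in part~(\ref{thm-fctdom-4}). The theorem asserts the two equivalences \emph{separately}: (QGN~I)$\Leftrightarrow$(A~I) and (QGN~II)$\Leftrightarrow$(A~II). For the implication (A~I)$\Rightarrow$(QGN~I) you write that ``(A~I) together with (A~II) yields an admissible hence complete numbering from which (QGN~I)\dots is recovered.'' But you are not entitled to assume (A~II) here; you must derive (QGN~I) from (A~I) alone. The fix is the same argument you already gave for part~(\ref{thm-fctdom-3}): if $\set{\pair{i,j}}{\graph(\alpha^{(1)}_{i})\subseteq\graph(\theta_{j})}$ is c.e., then $\pair{x,y}\in\graph(\theta_{j})$ iff there exists $i$ with $\alpha^{(1)}_{i}(x)=y$ and $\graph(\alpha^{(1)}_{i})\subseteq\graph(\theta_{j})$; since $\egraph(\lambda i,x.\ \alpha^{(1)}_{i}(x))$ is computable (Lemma~\ref{lem-alph}(\ref{lem-alph-2})), this makes $\egraph(\lambda j,x.\ \theta_{j}(x))$ c.e., which is (QGN~I). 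The paper does exactly this (``as in the proof of Statement~(\ref{thm-fctdom-3})'').

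A smaller point: your invocation of Theorem~\ref{thm-qgn1} in the (QGN~II)$\Leftrightarrow$(A~II) argument is misplaced, since that theorem assumes (QGN~I). Neither direction needs it. For (QGN~II)$\Rightarrow$(A~II), given $W_{i}$ with $\alpha^{(1)}[W_{i}]$ directed, enumerate $\egraph(\bigsqcup\alpha^{(1)}[W_{i}])$ directly from $W_{i}$ and the computable $\egraph$ of $\alpha^{(1)}$, uniformly in $i$, then apply (QGN~II). For (A~II)$\Rightarrow$(QGN~II), given $k\in\RRR^{(2)}$ with $\lambda t.\ k(i,t)$ enumerating $\egraph(r)$, the set $\set{j}{\graph(\alpha^{(1)}_{j})\subseteq\graph(r)}$ is c.e.\ uniformly in $i$, say equal to $W_{p(i)}$; then $\alpha^{(1)}[W_{p(i)}]$ is directed with supremum $r$, and (A~II) gives $\theta_{d(p(i))}=r$. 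Incidentally, your direct decidability argument for $\alpha^{(1)}_{i}\sqsubseteq\alpha^{(1)}_{j}$ via $\nlg$ and Lemma~\ref{lem-alph}(\ref{lem-alph-2}) is cleaner than the paper's route through Lemma~\ref{lem-alph}(\ref{lem-alph-4}) and a forward reference to Statement~(\ref{thm-fctdom-4}).
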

\begin{proof}
If $L \subseteq \widehat{\SSS}^{(1)}_{A}$ is directed with respect to $\sqsubseteq$, then $L$ has to be a chain. Thus the union of the graphs of the functions in $L$ is again the graph of a function. This function must be in $\widehat{\SSS}^{(1)}_{A}$. It is the least upper bound of $L$ with respect to  $\sqsubseteq$. Thus, $\widehat{\SSS}^{(1)}_{A}$ is directed-complete. Obviously the nowhere defined function is the least element. Every function in $\widehat{\SSS}^{(1)}_{A}$ is the least upper bound of its restrictions to initial segments of $\omega$ with length in $A$. So, the functions in $\anf^{(1)}_{A}$ form a basis. 

(\ref{thm-fctdom-2}) All functions in  $\anf^{(1)}_{A}$ are compact. To see this let $p \in \anf^{(1)}_{A}$ and $L$ be a directed subset of $\widehat{\SSS}^{(1)}_{A}$ with $p \sqsubseteq \bigsqcup L$. Then $\graph(p) \subseteq \bigcup \set{\graph(q)}{q \in L}$. Since $\graph(p)$ is finite, it is covered by the union of the graphs of finitely many functions in $L$. Because this union is again contained in the graph of some function $r \in L$, as $L$ is directed, we have that $p \sqsubseteq r$. 
Conversely, if $f \in \widehat{\SSS}^{(1)}_{A}$ is compact, then consider the directed set $L$ of all functions $s \in \anf^{(1)}_{A}$ with $s \sqsubseteq f$. Then $f \sqsubseteq \bigsqcup L$. By compactness there is some $s \in L$ with $f \sqsubseteq s$. Since $s \in \anf^{(1)}_{A}$, the same must hold for $f$. Thus, the domain $\widehat{\SSS}^{(1)}_{A}$ is algebraic.

With Lemma~\ref{lem-alph}(\ref{lem-alph-4}) and Statement~(\ref{thm-fctdom-4}) we obtain that $\set{\pair{i, j}}{\alpha^{(1)}_{i} \sqsubseteq \alpha^{(1)}_{j}}$ is c.e. Thus, the domain $\widehat{\SSS}^{(1)}_{A}$ is effectively given with respect to the numbering $\alpha^{(1)}$ of the basis.

(\ref{thm-fctdom-3}) Let $f \in \SSS^{(1)}_{A}$. Then it follows with Lemma~\ref{lem-alph}(\ref{lem-alph-5}) that $\set{i}{\alpha^{(1)}_{i} \sqsubseteq f}$ is c.e. Hence, $f \in (\widehat{\SSS}^{(1)}_{A})_{c}$. Conversely, if $f \in (\widehat{\SSS}^{(1)}_{A})_{c}$ then $\set{i}{\alpha^{(1)}_{i} \sqsubseteq f}$ is c.e. Since $\graph(f) = \bigcup\{\, \graph(s) \mid s \in \anf^{(1)}_{A} \wedge s \sqsubseteq f \,\}$ we have,
\[
\pair{x, y} \in \graph(f) \Leftrightarrow (\exists i)\, \alpha^{(1)}_{i} \sqsubseteq f \wedge \lg(i) > x \wedge \pair{\pair{i, x}, y+1} \in \egraph(\lambda a, z.\ \alpha^{(1)}_{a}(z)).
\]
Because the extended graph of the universal function of $\alpha^{(1)}$ is computable, by  Lemma~\ref{lem-alph}(\ref{lem-alph-2}), it follows that $\graph(f)$ is c.e. With Lemma~\ref{pn-compce}(\ref{pn-compce-2}) we therefore have that $f \in \SSS^{(1)}_{A}$.

(\ref{thm-fctdom-4}) As in the proof of Lemma~\ref{lem-alph}(\ref{lem-alph-5}) it is only required that $\theta$ satisfies Condition~(QGN~I), it follows that every numbering of $\SSS^{(1)}_{A}$ with Property~(QGN~I) satisfies Condition~(A~I). Conversely, assume that $\theta$ is a numbering of $\SSS^{(1)}_{A}$ satisfying Condition~(A~I). Then it follows as in the proof of Statement~(\ref{thm-fctdom-3}) that $\egraph(\lambda j, x.\  \theta_{j}(x))$ is c.e. Thus, $\theta$ meets Requirement~(QGN~I).

Next assume that $\theta$ has Property~(QGN~II) and let $\alpha^{(1)}[W_{i}]$ be directed. Then there exists some $r \in \widehat{\SSS}^{(1)}_{A}$ with $r = \bigsqcup \alpha^{(1)}[W_{i}]$. Hence,
\[
\graph(r) = \set{\pair{x, y}}{(\exists j)\, j \in W_{i} \wedge \pair{\pair{j, x}, y+1} \in \egraph(\lambda a, z.\ \alpha^{(1)}_{a}(z))}, 
\]
which implies that $r \in \SSS^{(1)}_{A}$. As we moreover see, $\egraph(r)$ can be uniformly enumerated in $i$. Therefore, by Condition~(QGN~II), there is a function $v \in \RRR^{(1)}$ so that $\theta_{v(i)} = r = \bigsqcup \alpha^{(1)}[W_{i}]$. Thus, $\theta$ fulfils Requirement~(A~II). Since, for any $k \in \RRR^{(2)}$, the set $\set{j}{\egraph(\alpha^{(1)}_{j}) \subseteq \range(\lambda t.\ k(i, t))}$ is c.e., uniformly in $i$, it conversely follows that $\theta$ has Property~(QGN~II), once it satisfies Requirement~(A~II).

(\ref{thm-fctdom-5}) Let $C \subseteq \omega$ be a c.e.\ set that defines $G$. Then
\begin{align*}
\graph(\alpha^{(1)}_{i}) \subseteq \mbox{} &\graph(G(\alpha^{(1)}_{j}))  \\ \Leftrightarrow \mbox{} 
 &\hspace{-3em}(\forall x < \lg(i)) (\exists a)\, \pair{\pair{\pair{x}, \alpha^{(1)}_{i}(x)}, a} \in C \wedge (\forall c \le \lth(a))\, (a)_{c} \in \graph(\alpha^{(1)}_{j}) \\ \Leftrightarrow \mbox{} 
 &\hspace{-3em}(\forall x < \lg(i)) (\exists y) (\exists a)\, \pair{\pair{i, x}, y+1} \in \egraph(\lambda b, z.\ \alpha^{(1)}_{b}(z)) \wedge \pair{\pair{\pair{x}, y}, a} \in C \wedge \mbox{} \\
&\hspace{-3em}(\forall c \le \lth(a))\, \pair{\pair{j, \pi^{(2)}_{1}((a)_{c})}, \pi^{(2)}_{2}((a)_{c})+1} \in \egraph(\lambda b, z.\ \alpha^{(1)}_{b}(z)).
\end{align*}
By Lemma~\ref{lem-alph}(\ref{lem-alph-2}) $\egraph(\lambda a, z.\ \alpha^{(1)}_{a}(z))$ is c.e. With the Tarski-Kuratowski algorithm~\cite{ro67} we can now bring this expression in a $\Sigma_{1}$-form from which we see that $\set{\pair{i, j}}{\graph(\alpha^{(1)}_{i}) \subseteq \graph(G(\alpha^{(1)}_{j}))}$ is c.e.

Conversely assume that $V \Def \set{\pair{i, j}}{\graph(\alpha^{(1)}_{i}) \subseteq \graph(G(\alpha^{(1)}_{j}))}$ is c.e. Moreover, Let $r \in \widehat{\SSS}^{(1)}_{A}$.  By the continuity of $G$  have that
\begin{align*}
\pair{\pair{x}, z} \mbox{} & \in \graph(G(r)) \\
\Leftrightarrow \mbox{}  &(\exists i) (\exists a)\, \alpha^{(1)}_{i} \sqsubseteq G(\alpha^{(1)}_{a}) \wedge \pair{x, z} \in \graph(\alpha^{(1)}_{i}) \wedge \alpha^{(1)}_{a} \sqsubseteq r \\
\Leftrightarrow \mbox{}  &(\exists \bar{a}) (\exists a) (\exists i) \pair{\pair{i, x}, z+1} \in \egraph(\lambda b, u.\ \alpha^{(1)}_{b}(u)) \wedge \pair{i, a} \in V \wedge \lth(\bar{a}) = \lg(a) \wedge \mbox{} \\
& (\forall c \le \lg(a))\, \pi^{(2)}_{1}((\bar{a})_{c}) = c \wedge \pair{\pair{a, c}, \pi^{(2)}_{2}((\bar{a})_{c})+1} \in  \egraph(\lambda b, u.\ \alpha^{(1)}_{b}(u)) \wedge \mbox{} \\
& (\bar{a})_{c} \in \graph(r).
\end{align*}
Therefore by setting
\begin{multline*}
C \Def \set{\pair{\pair{\pair{x}, z}, \bar{a}}}{(\exists a) (\exists i)\, \pair{\pair{i, x}, z} \in  \egraph(\lambda b, u.\ \alpha^{(1)}_{b}(u)) \wedge \pair{i, a} \in V \wedge \mbox{}\\
 \lth(\bar{a}) = \lg(a) \wedge 
\pi^{(2)}_{1}((\bar{a})_{c}) = c \wedge \pair{\pair{a, c}, \pi^{(2)}_{2}((\bar{a})_{c})+1} \in  \egraph(\lambda b, u.\ \alpha^{(1)}_{b}(u))},
\end{multline*}
we obtain that $C$ is c.e. and defines $G$.
\end{proof}

The next result shows that each of the  algebraic domains $\widehat{\SSS}^{(1)}_{A}$ can be computably mapped onto any other effectively given domain. For the proof we need an extension of a result by Weihrauch and Sch\"afer~\cite{ws83}.

\begin{proposition}\label{pn-wstech}
Let $D$ be an effectively given domain with basis $B$ and numbering $\beta$ of the base elements. Then there is a computable operator $\fun{G}{\widehat{\SSS}^{(1)}_{\omega}}{\widehat{\SSS}^{(1)}_{\omega}}$ such that the following statements hold for $f, g \in \widehat{\SSS}^{(1)}_{\omega}$,

\begin{enumerate}

\item\label{pn-wstech-1}
If $f \in \RRR^{(1)}$ then also $G(f) \in \RRR^{(1)}$.

\item\label{pn-wstech-2}
If $f \in \anf^{(1)}_{\omega}$ then also $G(f) \in \anf^{(1)}_{\omega}$.

\item\label{pn-wstech-3}
For all $a \in \dom(G(f))$, $\beta(G(f)(a)) \ll \beta(G(f)(a+1))$.

\item\label{pn-wstech-4}
If $\beta[\range(f)]$ is directed then $\bigsqcup_{a}\beta(G(f)(a)) \sqsubseteq \bigsqcup \beta[\range(f)]$.

\item\label{pn-wstech-5}
If $f \in \RRR^{(1)}$ and $\beta[\range(f)]$ is directed then  $\bigsqcup_{a}\beta(G(f)(a)) = \bigsqcup \beta[\range(f)]$.

\item\label{pn-wstech-6}
If $f \sqsubseteq g$ then $\bigsqcup_{a} \beta(G(f)(a)) \sqsubseteq  \bigsqcup_{a} \beta(G(g)(a))$.

\item\label{pn-wstech-7}
For $f \in \anf^{(1)}_{\omega}$, $\bigsqcup_{a} \beta(G(f)(a)) \in B$.

\end{enumerate}
\end{proposition}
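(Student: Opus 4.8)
The plan is to produce $G$ by writing down an explicit computably enumerable set $C \subseteq \omega$ that \emph{defines} it in the sense of Definition~\ref{def-compop}; by the results of this section this is the same as exhibiting a computable operator in the domain-theoretic sense. Throughout, a function in $\widehat{\SSS}^{(1)}_{\omega}$ is read as a (possibly finite) list of $\beta$-indices of basis elements of $D$, and $G(f)$ is to enumerate a $\ll$-ascending chain of basis elements that ``cleans up'' this list. Fixing $i_{\bot}$ with $\beta_{i_{\bot}} = \bot$ and an effective enumeration of the c.e.\ relation $\set{\pair{i,j}}{\beta_{i} \ll \beta_{j}}$, I would define $G(f)$ by recursion on its argument, with a dovetailed search at each step; the clauses of $C$, together with the finite lists of $\graph(f)$-elements that justify them, are then read off in the routine manner used throughout the paper. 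This is essentially the construction of Weihrauch and Sch\"afer~\cite{ws83}, the extension being that $G$ is also made to act on the finite functions of $\anf^{(1)}_{\omega}$ — this is what clauses~(\ref{pn-wstech-2}) and~(\ref{pn-wstech-7}) and the inequality in~(\ref{pn-wstech-4}) are about — which is arranged simply by making the definition of $G(f)(a)$ require a lookup of $f(a)$, so that $\dom(G(f))$ is always an initial segment of $\dom(f)$.

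Concretely, I would put $G(f)(0) \Def i_{\bot}$ whenever $0 \in \dom(f)$, and, having obtained $G(f)(a) = b_{a}$ together with a finite set $T_{a}$ of $\beta$-indices satisfying the invariants that $\beta_{c} \ll \beta_{b_{a}}$ for all $c \in T_{a}$ and that $\beta_{b_{a}} \ll \beta(f(m_{a}))$ for some $m_{a} \in \dom(f)$, define $G(f)(a+1)$ (only when $a+1 \in \dom(f)$) by running in parallel a \emph{fallback} search for a $\beta$-index $b$ with $\beta_{b_{a}} \ll \beta_{b}$ and $\beta_{b} \ll \beta(f(n))$ for some $n \in \dom(f)$, and, for each of the finitely many pairs $\pair{c,k}$ so far enumerated into $\set{\pair{c,k}}{\beta_{c} \ll \beta(f(k))}$ (with re-trying, so that each such $c$ is reconsidered at every later stage), a \emph{progress} search that tries to verify that $\{b_{a}\} \cup T_{a} \cup \{c\}$ has a common $\ll$-upper bound $b'$ with $\beta_{b'} \ll \beta(f(n))$ for some $n \in \dom(f)$. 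If a progress search succeeds first, I take $G(f)(a+1) \Def b'$ and $T_{a+1} \Def T_{a} \cup \{c\}$; otherwise $G(f)(a+1) \Def b$ and $T_{a+1} \Def T_{a}$. Either way both invariants persist, and $\beta_{b_{a}} \ll \beta_{b_{a+1}}$, which is clause~(\ref{pn-wstech-3}).

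First I would extract from Lemma~\ref{lem-propd} the interpolation property ``$u \ll y$ implies $u \ll b \ll y$ for some basis element $b$'' (interpolate through $B_{y}$, which is directed with respect to $\ll$ by Lemma~\ref{lem-propd}(\ref{lem-propd-5})). This makes the fallback search succeed as soon as $f(0),\dots,f(a+1)$ are all defined — take $b$ with $\beta_{b_{a}} \ll \beta_{b} \ll \beta(f(m_{a}))$, so that $\beta_{c} \ll \beta_{b_{a}} \ll \beta_{b}$ for $c \in T_{a}$ — whence $\dom(G(f)) = \dom(f)$ when $f \in \RRR^{(1)}$ and $G(f)$ has finite domain when $f$ does. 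This yields~(\ref{pn-wstech-1}) and~(\ref{pn-wstech-2}), and~(\ref{pn-wstech-7}) follows because for $f \in \anf^{(1)}_{\omega}$ the supremum $\bigsqcup_{a}\beta(G(f)(a))$ is then the top element of a finite $\ll$-chain and hence a basis element ($\bot \in B$ taking care of $\dom(f) = \emptyset$). Clause~(\ref{pn-wstech-6}) is immediate from $G$ being a monotone operator, since $f \sqsubseteq g$ then gives $\graph(G(f)) \subseteq \graph(G(g))$, so the chain enumerated by $G(f)$ is an initial part of the one enumerated by $G(g)$. For the suprema, the second invariant gives $\beta(G(f)(a)) \ll \beta(f(m_{a})) \sqsubseteq \bigsqcup\beta[\range(f)]$, hence $\bigsqcup_{a}\beta(G(f)(a)) \sqsubseteq \bigsqcup\beta[\range(f)]$, i.e.\ clause~(\ref{pn-wstech-4}); and for~(\ref{pn-wstech-5}), with $f \in \RRR^{(1)}$, $\beta[\range(f)]$ directed and $x \Def \bigsqcup\beta[\range(f)]$, I would show $u \sqsubseteq \bigsqcup_{a}\beta(G(f)(a))$ for every $u \in B_{x}$ (which suffices, as $x = \bigsqcup B_{x}$ by Lemma~\ref{lem-propd}(\ref{lem-propd-5})): interpolation and directedness of $\beta[\range(f)]$ give some $k$ with $u \ll \beta(f(k))$, so the $\beta$-index of $u$ occurs in $\set{\pair{c,k}}{\beta_{c} \ll \beta(f(k))}$ and is reconsidered cofinally often; at such a stage $b_{a}$, $T_{a}$ and that index are all $\ll x$ (each $\ll$ some $\beta(f(n)) \sqsubseteq x$), so they have a common $\ll$-upper bound $w \in B_{x}$ with $w \ll \beta(f(n))$ for some $n$, so the progress search for $u$ eventually succeeds and $u$ is absorbed into some $T_{a+1}$, giving $u \sqsubseteq \beta_{b_{a+1}} \sqsubseteq \bigsqcup_{a}\beta(G(f)(a))$.

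The hard part will be making clauses~(\ref{pn-wstech-1}) and~(\ref{pn-wstech-5}) hold simultaneously: the step producing $G(f)(a+1)$ must terminate for \emph{every} total $f$, including those whose range is not $\beta$-directed, yet when the range is directed the enumerated chain must climb exactly to $\bigsqcup\beta[\range(f)]$. This is precisely why the progress search is used only opportunistically and is always backed by the fallback: when $\beta[\range(f)]$ is not directed, the incompatible basis approximations of the $\beta(f(k))$ have no common upper bound lying below any $\beta(f(n))$, so the progress search for them never terminates — they are then simply never absorbed — while the fallback guarantees that $G(f)(a+1)$ is nonetheless defined. The remaining work is bookkeeping: verifying that the re-trying dovetailing really does reconsider every relevant index cofinally often and absorbs it (a standard argument of the kind used for Lemma~\ref{lem-effapp}), and that the finite ``$\beta$-lookup'' lists attached to the clauses of $C$ can be chosen so that $C$ genuinely defines an operator, which is routine given the c.e.-ness of $\set{\pair{i,j}}{\beta_{i} \ll \beta_{j}}$.
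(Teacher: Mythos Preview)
Your approach is essentially the paper's: a stagewise construction with a ``progress'' step (absorb a new approximant) backed by a ``fallback'' step (interpolate to keep the output chain strictly $\ll$-increasing), and you correctly identify that interpolation $u \ll v \Rightarrow u \ll b \ll v$ is what makes the fallback always terminate and hence gives clauses~(\ref{pn-wstech-1})--(\ref{pn-wstech-4}), (\ref{pn-wstech-6}), (\ref{pn-wstech-7}).

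There is, however, a genuine gap in your argument for clause~(\ref{pn-wstech-5}). Your progress search at stage $a+1$ looks for a common $\ll$-upper bound of $\{b_{a}\}\cup T_{a}\cup\{c\}$, and you race this against the fallback. The difficulty is that $b_{a}$ \emph{changes at every stage}, even on fallback, so the witness you are searching for is a moving target: the facts ``$\beta_{b_{a}} \ll \beta_{b'}$'' needed at stage $a+1$ are different from those needed at stage $a+2$, and there is no a priori bound on where they sit in the fixed enumeration of $\ll$. ``Reconsidered cofinally often'' is therefore not enough by itself --- nothing rules out the fallback winning at every stage because each stage's progress witness lies further out in the enumeration than that stage's fallback witness. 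The reference to Lemma~\ref{lem-effapp} is not quite apt: that construction has no fallback, so absorption always succeeds and there is no race.

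The paper's construction avoids this by carrying, in addition to the output value $q(a')$, a \emph{stable} upper bound $p(a')$ (with $\beta_{q(a')} \ll \beta_{p(a')}$) that is \emph{held fixed} across fallback steps and updated only on progress. The progress test is then a \emph{bounded} check involving only the fixed quantities $p(a')$ and the current target $k(c,a_{t})$ (tracked by a counter $r(a')=\pair{t,c}$), against the first $m{+}1$ enumerated $\ll$-facts; since the target is frozen and the bound grows, the check must eventually succeed when $\beta[\range(f)]$ is directed. Your sketch can be repaired in exactly this way: freeze $m_{a}$ on fallback (so $\beta(f(m_{a}))$ plays the role of $p(a')$) and make the progress test ask for $b'$ with $\beta(f(m_{a}))$, $c \ll \beta_{b'} \ll \beta(f(n))$ rather than with $b_{a}$; then $b_{a} \ll \beta(f(m_{a})) \ll \beta_{b'}$ gives $b_{a} \ll \beta_{b'}$ automatically, and the ``eventually absorbed'' argument goes through.
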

\begin{proof}
Since $\set{\pair{i, j}}{\beta_{i} \ll \beta_{j}}$ is c.e., there are functions $k \in \RRR^{(2)}$ and $h \in \RRR^{(1)}$ so that $\lambda t.\ k(t, j)$ enumerates $\set{i}{\beta_{i} \ll \beta_{j}}$ and $h$ enumerates $\set{\pair{i, j}}{\beta_{i} \ll \beta_{j}}$. Now, we define functions $q, p, r \in \RRR^{(1)}$:

If $\lth(a) \le 1$, we set
\[
p(a) \Def q(a) \Def i_{\bot} \quad\text{and}\quad r(a) \Def 1,
\]
where, $i_{\bot}$ is a $\beta$-index of $\bot$.

In case that $a = \val{a_{0}, \ldots, a_{m+1}}$ and $q(a')$ as well as $p(a')$ and $r(a')$ are already defined for $a' = \val{a_{0}, \ldots, a_{m}}$, say $r(a') = \pair{t, c}$, then we define $q(a)$, $p(a)$ and $r(a)$ as follows: if there is $\pair{i, j} \le m+1$ with
\[
\{ \pair{k(c, a_{t}), k(j, a_{i})}, \pair{p(a'), k(j, a_{i})} \} \subset \{ h(0), \ldots, h(m+1) \},
\]
then set
\begin{align*}
&q(a) \Def p(a'), \\
&\text{$p(a) \Def k(j, a_{i})$, for the smallest number $\pair{i, j}$ with this property,} \\
&r(a) \Def r(a') + 1.
\end{align*}

If there is no such number $\pair{i, j}$, then set
\begin{align*}
&q(a) \Def \pi^{(2)}_{1}(\mu \pair{b, a}.\ \{ \pair{q(a'), b}, \pair{b, p(a')} \} \subseteq \{ h(0), \ldots, h(n) \}), \\
&p(a) \Def p(a'), \\
&r(a) \Def r(a').
\end{align*}

As follows form the definition, $\beta_{q(a')} \ll \beta_{p(a')}$. By Lemma~\ref{lem-propd}(\ref{lem-propd-5}) there is thus a number $\pair{b, n}$ with the above property. Hence, $q(a)$ is defined also in this case.

In addition to the above let $v \in \RRR^{(1)}$ with
\begin{align*}
&v(\val{\text{`empty sequence'}}) \Def \val{\text{`empty sequence'}},\\
&v(\val{a_{0}, \ldots, a_{m}}) \Def \val{\pair{0, a_{0}}, \ldots, \pair{m, a_{m}}}.
\end{align*}
Moreover, let
\[
C \Def \set{\pair{\pair{\lth(a), q(a)}, v(a)}}{a \in \omega}.
\]
Then $C$ is c.e. As is readily seen, $C$ defines a computable operator $G$ on $\widehat{\SSS}^{(1)}_{\omega}$. For $f \in \widehat{\SSS}^{(1)}_{\omega}$ with non-empty domain it follows that $\dom(f) = \dom(G(f))$. $G$ maps the nowhere defined function onto the function that for $0$ has value $i_{\bot}$, and is undefined, otherwise. This implies  Statements~(\ref{pn-wstech-1}) and (\ref{pn-wstech-2}).

Let $\overline{f}(m) \Def \val{f(0), \ldots, f(m)}$. Then we obtain  from the definition of $G$ that for $a' = \overline{f}(m)$ and $a = \overline{f}(m+1)$, $G(f)(m) = q(a')$ and $G(f)(m+1) = q(a)$. Since $\beta_{q(a')} \ll \beta_{q(a)}$, this proves Statement~(\ref{pn-wstech-3}). In particular, we have that $\bigsqcup_{a} \beta(G(f)(a))$ exists. Moreover, Statements~(\ref{pn-wstech-6}) and (\ref{pn-wstech-7}) follow, because $G$ is monotone by Theorem~\ref{thm-coprop} and by Statement~(\ref{pn-wstech-2}), $\dom(G(f))$ is finite, for functions $f \in \anf^{(1)}_{\omega}$.

For Statements~(\ref{pn-wstech-4}) and (\ref{pn-wstech-5}), finally, let $f \in \RRR^{(1)}$ so that $\beta[\range(f)]$ is directed. First, we show that the sequence $(r(\overline{f}(m)))_{m \in \omega}$ is unbounded. Assume that $r(\overline{f}(m))= \pair{t, c}$. Since the set $\set{u \in B}{u \ll \bigsqcup \beta[\range(f)]}$ is directed with respect to $\ll$ by Lemma~\ref{lem-propd}(\ref{lem-propd-5}), there is some $m' \ge m$ and a number $\pair{i, j} \le m+1$ so that
\begin{equation}\label{eq-incl}
\{ \pair{k(c, f(t)), k(j, f(i))}, \pair{p(\overline{f}(m)), k(j, f(i))} \} \subseteq \{ h(0), \ldots, h(m'+1) \}.
\end{equation}
Then $r(\overline{f}(m'+1)) > r(\overline{f}(m))$.

Suppose that $u \in B$ with $u \ll \beta(f(t))$. The there is some $c$ with $\beta(k(c, f(t)) = u$. Moreover, because the sequence $(r(\overline{f}(m)))_{m \in \omega}$ is unbounded, there is some $m'$ with $r(\overline{f}(m')) = \pair{t, c}$ as well as a smallest number $\pair{i, j} \le m'+1$ so that the inclusion in (\ref{eq-incl}) holds. Thus, $u \ll \beta(k(j , f(i)))$. Consequently, since $k(j, f(i)) = G(f)(\widehat{m})$, for some $\widehat{m} \ge m'+1$,  it follows that $\bigsqcup \beta[\range(f)] \sqsubseteq \bigsqcup_{m} \beta(G(f)(m))$.

As follows from the above definitions, we have for $g \in \widehat{\SSS}^{(1)}_{\omega}$ that for every $m$ there is some $i$ with $\beta(G(g)(m)) \ll \beta(g(i))$. Therefore, $\bigsqcup_{m} \beta(G(g)(m)) \sqsubseteq \bigsqcup \beta[\range(g)]$, if $\beta[\range(g)]$ is directed. This shows Statement~(\ref{pn-wstech-4}) and, with what we derived before, also Statement~(\ref{pn-wstech-5}). 
\end{proof}

\begin{theorem}\label{thm-surj}
Let $A \subseteq \omega$ be a c.e.\ infinite set and $D$ be an effectively given domain with basis $B$ and numbering $\beta$ of the base elements. Then there is a computable onto map $\fun{\Gamma}{\widehat{\SSS}^{(1)}_{A}}{D}$ so that for $f, g \in \widehat{\SSS}^{(1)}_{A}$,
\begin{enumerate}

\item \label{thm-surj-1}
If $f \in \anf^{(1)}_{A}$ then $\Gamma(f) \in B$.

\item \label{thm-surj-2}
If $\beta[\range(f)]$ is directed then $\Gamma(f) \sqsubseteq \bigsqcup \beta[\range(f)]$.

\item \label{thm-surj-3}
If $f \in \RRR^{(1)}$ and $\beta[\range(f)]$ is directed then $\Gamma(f) = \bigsqcup \beta[\range(f)]$.

\end{enumerate}
\end{theorem}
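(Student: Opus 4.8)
The plan is to build $\Gamma$ by composing three things already available to us: the computable operator $G\colon \widehat{\SSS}^{(1)}_{\omega}\to\widehat{\SSS}^{(1)}_{\omega}$ from Proposition~\ref{pn-wstech}, a computable way of reading the output sequence of $G$ into $D$, and the identification of $\SSS^{(1)}_{A}$ with the computable elements of the domain $\widehat{\SSS}^{(1)}_{A}$. First I would note that since $A$ is infinite and c.e., every $f\in\widehat{\SSS}^{(1)}_{A}$ is already in $\widehat{\SSS}^{(1)}_{\omega}$, so $G$ applies to it. By Proposition~\ref{pn-wstech}(\ref{pn-wstech-3}) the sequence $(\beta(G(f)(a)))_{a\in\dom(G(f))}$ is a $\ll$-increasing chain of basis elements, hence a directed set whose least upper bound exists in $D$; so I would \emph{define}
\[
\Gamma(f) \Def \bigsqcup\nolimits_{a}\beta(G(f)(a)).
\]
Statements (\ref{thm-surj-1})--(\ref{thm-surj-3}) of the theorem are then almost immediate: (\ref{thm-surj-1}) is Proposition~\ref{pn-wstech}(\ref{pn-wstech-7}) together with (\ref{pn-wstech-2}) (for $f\in\anf^{(1)}_{A}$, $G(f)\in\anf^{(1)}_{\omega}$ has finite domain, so the supremum is the last basis element, which lies in $B$); (\ref{thm-surj-2}) is Proposition~\ref{pn-wstech}(\ref{pn-wstech-4}) (noting $\range(f)$ for $f\in\anf^{(1)}_{A}$ or $f\in\RRR^{(1)}$ plays the role of $\range(f)$ there, and that $\beta[\range(f)]$ directed is exactly the hypothesis); (\ref{thm-surj-3}) is Proposition~\ref{pn-wstech}(\ref{pn-wstech-5}) since then $f\in\RRR^{(1)}$ and $G(f)\in\RRR^{(1)}$ by (\ref{pn-wstech-1}).

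The substantive work is showing that $\Gamma$ is well defined on $\widehat{\SSS}^{(1)}_{A}$ (i.e.\ lands in $D$, which is the remark above), that it is Scott-continuous, that it is computable in the sense of Definition~\ref{def-adm}'s ambient notion, and that it is onto. For monotonicity and continuity I would use Proposition~\ref{pn-wstech}(\ref{pn-wstech-6}): $f\sqsubseteq g$ gives $\Gamma(f)\sqsubseteq\Gamma(g)$ directly. For continuity, if $L\subseteq\widehat{\SSS}^{(1)}_{A}$ is directed — hence (as the preceding theorem observes) a chain — with $\bigsqcup L = h$, then $\graph(h)=\bigcup_{f\in L}\graph(f)$; one shows $\Gamma(h)=\bigsqcup_{f\in L}\Gamma(f)$ by a cofinality argument on the finite-approximation level: every basis element $\beta(G(h)(a))$ in the output chain for $h$ arises from a finite piece of $\graph(h)$ (since $G$ is computable, its defining c.e.\ set $C$ makes each output value depend on only finitely many input pairs), which is already contained in $\graph(f)$ for some $f\in L$, so $\beta(G(h)(a))\sqsubseteq\Gamma(f)$; conversely monotonicity gives the reverse inequality. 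For computability of $\Gamma$ one uses the numbering $\alpha^{(1)}$ of $\anf^{(1)}_{A}$: by Lemma~\ref{lem-alph}(\ref{lem-alph-2}) $\egraph(\lambda i,\vec x.\ \alpha^{(1)}_{i}(\vec x))$ is computable, and $G$ is a computable operator, so $\set{\pair{i,j}}{\beta_{j}\ll\Gamma(\alpha^{(1)}_{i})}$ is c.e.\ — unwind $\beta_{j}\ll\bigsqcup_{a}\beta(G(\alpha^{(1)}_{i})(a))$ using directedness of the output chain and $\ll$-interpolation to the statement ``$\exists a\colon \beta_{j}\ll\beta(G(\alpha^{(1)}_{i})(a))$'', which is c.e.\ because $C$ (defining $G$) is c.e.\ and $\set{\pair{i,j}}{\beta_i\ll\beta_j}$ is c.e.

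The main obstacle is surjectivity. Given $x\in D$, I would first handle $x$ a basis element $\beta(j)$ and then arbitrary $x$. For $x=\beta(j)$: the constant-ish initial segment function $f$ with small domain whose single relevant value is $j$ should satisfy $\Gamma(f)=\beta(j)$ — but $G$ does not simply copy its input, it runs the $\ll$-chain-building machinery, so one must check that feeding in a basis index produces an output chain whose supremum is that very basis element. This should follow from Proposition~\ref{pn-wstech}(\ref{pn-wstech-5}) applied to a total $f\in\RRR^{(1)}$ that is eventually constant with value $j$, using that $\beta[\{j\}]=\{\beta(j)\}$ is trivially directed with supremum $\beta(j)$; but one must ensure such an $f$ lies in $\SSS^{(1)}_{A}$, which it does since it is total. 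For general $x\in D_{c}$ one invokes Corollary~\ref{cor-ibot}: there is $p\in\RRR^{(1)}$ with $\beta(p(a))\ll\beta(p(a+1))$ and $x=\bigsqcup_{a}\beta(p(a))$; then $p\in\RRR^{(1)}\subseteq\SSS^{(1)}_{A}$ and $\beta[\range(p)]$ is directed (being a $\ll$-chain), so $\Gamma(p)=\bigsqcup\beta[\range(p)]=x$ by (\ref{thm-surj-3}). The harder half is hitting \emph{non}-computable elements $x\in D\setminus D_{c}$: here I would enumerate a directed $\ll$-chain $u_0\ll u_1\ll\cdots$ of basis elements with supremum $x$ (possible because $B_x$ is directed with a sup $x$, Lemma~\ref{lem-propd}(\ref{lem-propd-5}) and the basis property — no effectiveness needed) and take $f$ to be the total function $a\mapsto$ ($\beta$-index of $u_a$); $f\in\FFF^{(1)}\subseteq\widehat{\SSS}^{(1)}_{A}$, $\beta[\range(f)]$ is directed with supremum $x$, and Proposition~\ref{pn-wstech}(\ref{pn-wstech-4}) gives $\Gamma(f)\sqsubseteq x$ while the cofinality argument (each $u_a$ eventually appears among the $k(j,f(i))$ produced by $G$, as in the proof of (\ref{pn-wstech-5})) gives $x\sqsubseteq\Gamma(f)$; so $\Gamma(f)=x$. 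Thus $\Gamma$ is onto all of $D$. I expect the bookkeeping in this last cofinality estimate — tracking how the indices $q(a),p(a),r(a)$ in Proposition~\ref{pn-wstech} eventually enumerate every $\ll$-predecessor of every value of $f$ — to be the one genuinely delicate point, but it is essentially a re-run of the argument already given for Proposition~\ref{pn-wstech}(\ref{pn-wstech-5}) with an arbitrary (possibly non-computable) total $f$ in place of a computable one.
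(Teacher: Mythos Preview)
Your proposal is correct and follows essentially the same route as the paper: define $\Gamma(f)=\bigsqcup_{a}\beta(G(f)(a))$ via Proposition~\ref{pn-wstech}, read off properties (\ref{thm-surj-1})--(\ref{thm-surj-3}) from the corresponding items there, and verify continuity and computability by unwinding $\beta_j\ll\Gamma(\cdot)$ to an existential over the output chain of $G$.

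Two minor differences are worth noting. For continuity, the paper argues via the equivalence $u\ll\Gamma(f)\Leftrightarrow(\exists s\in\anf^{(1)}_{A})\,s\sqsubseteq f\wedge u\ll\Gamma(s)$, obtained by pushing the continuity of $G$ through the supremum; your cofinality argument using the finite dependence encoded in the defining c.e.\ set $C$ of $G$ amounts to the same thing. For surjectivity, the paper is terser: it simply observes that every $x\in D$ is the supremum of a $\ll$-increasing sequence of basis elements (countability of $B$) and invokes Proposition~\ref{pn-wstech}(\ref{pn-wstech-5}). You are right to flag that (\ref{pn-wstech-5}) is stated for $f\in\RRR^{(1)}$ while here one needs it for arbitrary total $f\in\FFF^{(1)}$; the proof of (\ref{pn-wstech-5}) uses only totality of $f$, so this is harmless, but your explicit re-run of that argument for non-computable $f$ is a genuine clarification. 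Your separate treatment of basis elements and computable elements is unnecessary once the general case is handled, but does no harm.
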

\begin{proof}
Let $\fun{G}{\widehat{\SSS}^{(1)}_{\omega}}{\widehat{\SSS}^{(1)}_{\omega}}$ be the computable operator constructed in the preceding Lemma. Then we define for $f \in \widehat{\SSS}^{(1)}_{A}$,
\[
\Gamma(f) \Def \bigsqcup_{a} \beta(G(f)(a)).
\]
Since the basis $B$ is countable, every domain element is the least upper bound of a sequence that is monotonically increasing with respect to $\ll$. Because of Property~(\ref{pn-wstech-5}) of the preceding lemma it therefore follows that $\Gamma$ is onto. We will show that $\Gamma$ is continuous. Let to this end $u \in B$. Since $G$ is continuous by Theorem~\ref{thm-coprop}, we have
\begin{align*}
u \ll \Gamma(f) 
&\Leftrightarrow u \ll \bigsqcup\nolimits_{a} \beta(G(f)(a))  \\
&\Leftrightarrow  (\exists a)\, u \ll \beta(G(f)(a)) \\
&\Leftrightarrow (\exists a)\, u \ll \beta((\bigsqcup \set{G(s)}{s \in \anf^{(1)}_{A} \wedge s \sqsubseteq f})(a)) \\
&\Leftrightarrow (\exists a) (\exists s \in \anf^{(1)}_{A})\, s \sqsubseteq f \wedge a \in \dom(G(s)) \wedge u \ll \beta(G(s)(a)).
\end{align*}

In the same way we obtain that
\begin{multline*}
u \ll \bigsqcup \set{\Gamma(s)}{s \in \anf^{(1)}_{A} \wedge s \sqsubseteq f}  \Leftrightarrow \\
 (\exists a) (\exists s \in \anf^{(1)}_{A})\, s \sqsubseteq f \wedge a \in \dom(G(s)) \wedge u \ll \beta(G(s)(a)).
\end{multline*}

Since every domain element $x$ is uniquely determined by the set $B_{x}$, it follows that $\Gamma$ is continuous. It remains to show that $\set{\pair{i, j}}{\beta_{j} \ll \Gamma(\alpha^{(1)}_{i})}$ is c.e. We have that
\begin{align*}
\beta_{j}& \ll \Gamma(\alpha^{(1)}_{i}) \\
&\Leftrightarrow (\exists a)\, \beta_{j} \ll \beta(G(\alpha^{(1)}_{i})(a)) \\
&\Leftrightarrow (\exists a) (\exists b)\, \beta_{j} \ll \beta_{b} \wedge b = G(\alpha^{(1)}_{i})(a) \\
&\Leftrightarrow (\exists a) (\exists b) (\exists c) (\exists m)\, \beta_{j} \ll \beta_{b} \wedge b = \alpha^{(1)}_{m}(c) \wedge \graph(\alpha^{(1)}_{m}) \subseteq \graph(G(\alpha^{(1)}_{i})) \\
&\Leftrightarrow   (\exists a) (\exists b) (\exists c) (\exists m)\, \beta_{j} \ll \beta_{b} \wedge \pair{\pair{m, c}, b+1} \in \egraph(\lambda \hat{\imath}, n.\ \alpha^{(1)}_{\hat{\imath}}(n)) \wedge  \mbox{}\\
& \hspace{1.8em} \graph(\alpha^{(1)}_{m}) \subseteq \graph(G(\alpha^{(1)}_{i})).
\end{align*}
Since the set of all $\pair{i, m}$ with $\graph(\alpha^{(1)}_{m}) \subseteq \graph(G(\alpha^{(1)}_{i}))$ is c.e.\ by Theorem~\ref{thm-coprop} and $\egraph(\lambda \hat{\imath}, n.\ \alpha^{(1)}_{\hat{\imath}}(n))$ is c.e.\ by Lemma~\ref{lem-alph}, the set of all $\pair{i, j}$ with $\beta_{j} \ll \Gamma(\alpha^{(1)}_{i})$ is c.e. as well. Therefore, $\Gamma$ is computable. Properties~(\ref{thm-surj-1})-(\ref{thm-surj-3}) are a consequence of the corresponding properties of $G$,
\end{proof}

Assume that the domain $D$ in the previous result is also algebraic. Then  we have for $u \in B$ and $x \in D$ that $u \ll x$ exactly if $u \sqsubseteq x$. As is readily seen, in this case for every base element $u$ of the domain one can find an initial segment function in $\anf^{(1)}_{A}$ that is mapped onto $u$ under $\Gamma$. This initial segment map can be chosen in such a way that it has only $\beta$-indices of $\bot$ and $u$ as values. This makes the idea of extending representations as they are used in computable analysis e.g. (cf.\ \cite{we00}) to maps from $\widehat{\SSS}^{(1)}_{A}$ to domains precise that was mentioned at the beginning of this paper: initial segment functions are used as names for approximating base elements, or the Scott-open sets they determine, and total functions are names for the limit elements they approximate. By the monotoniciy of $\Gamma$ the relation of approximation between the functions in $\widehat{\SSS}^{(1)}_{A}$, that is the names, is transferred to the domain.

Next, we will show that via the mapping $\Gamma$ numberings of $D_{c}$ can be generated from numberings of $\SSS^{(1)}_{A}$.

\begin{theorem} \label{thm-numgen}
Let $A \subseteq \omega$ be a c.e.\ infinite set and $D$ be an effectively given domain. Moreover, let $\theta$ be a numbering of $\SSS^{(1)}_{A}$. Define
\[
\eta_{i} \Def \Gamma(\theta_{i}) \quad (i \in \omega).
\]
Then $\eta$ is a numbering of the computable elements of $D$ such that:
\begin{enumerate}

\item  \label{thm-numgen-1}
If $\theta$ satisfies (QGN~I) then $\eta$ meets Condition~(A~I).

\item  \label{thm-numgen-2}
If $\theta$ satisfies (QGN~II) then $\eta$ meets Condition~(A~II).

\item  \label{thm-numgen-3}
If $\theta$ is quasi-G\"odel numbering then $\eta$ is admissible.

\end{enumerate}
\end{theorem}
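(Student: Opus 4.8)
The plan is to show first that $\eta$ is well defined as a numbering of $D_c$, and then to establish each of the three implications by reducing the relevant condition for $\eta$ to the corresponding condition for $\theta$, using the computability of $\Gamma$ (Theorem~\ref{thm-surj}) and the facts already assembled about $\Gamma$ and $G$. That $\eta$ indeed enumerates computable elements is immediate: every $\theta_i \in \SSS^{(1)}_{A}$ is a computable element of the function domain by part~(\ref{thm-fctdom-3}) of the domain theorem, and a computable map sends computable elements to computable elements, so $\Gamma(\theta_i) \in D_c$. (That $\eta$ is onto $D_c$ is not asserted and need not be proved, since admissibility only requires $\eta$ to enumerate $D_c$, which here means $\range(\eta) \subseteq D_c$ together with the two structural conditions — and in fact $\Gamma$ surjective plus $\theta$ surjective onto $\SSS^{(1)}_A$ gives surjectivity onto $D_c$ anyway once (QGN~I) holds.)

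For statement~(\ref{thm-numgen-1}): assume $\theta$ satisfies (QGN~I). I want $\set{\pair{i,j}}{\beta_i \ll \eta_j}$ to be c.e. Unwinding $\eta_j = \Gamma(\theta_j) = \bigsqcup_a \beta(G(\theta_j)(a))$ and running the same chain of equivalences as in the proof of Theorem~\ref{thm-surj} (the part establishing that $\set{\pair{i,j}}{\beta_j \ll \Gamma(\alpha^{(1)}_i)}$ is c.e.), one gets
\[
\beta_i \ll \eta_j \Leftrightarrow (\exists a)(\exists b)(\exists c)(\exists m)\; \beta_i \ll \beta_b \wedge \pair{\pair{m,c},b+1} \in \egraph(\lambda \hat\imath, n.\ \alpha^{(1)}_{\hat\imath}(n)) \wedge \graph(\alpha^{(1)}_m) \subseteq \graph(G(\theta_j)).
\]
The only new ingredient compared with Theorem~\ref{thm-surj} is that $\theta_j$ replaces $\alpha^{(1)}_i$, so I need $\set{\pair{m,j}}{\graph(\alpha^{(1)}_m) \subseteq \graph(G(\theta_j))}$ to be c.e.; this follows from Theorem~\ref{thm-coprop}(\ref{thm-coprop-3}) applied to the computable operator $G$ together with Lemma~\ref{lem-alph}(\ref{lem-alph-5}) and the fact that $\graph(\theta_j)$ is uniformly c.e. in $j$ (which is exactly (QGN~I) via part~(\ref{thm-fctdom-4}) of the domain theorem, equivalently (A~I)). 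The projection lemma then closes the argument.

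For statement~(\ref{thm-numgen-2}): assume $\theta$ satisfies (QGN~II); I must produce $d \in \RRR^{(1)}$ with $\eta_{d(i)} = \bigsqcup \beta[W_i]$ whenever $\beta[W_i]$ is directed. The strategy is: given $i$ with $\beta[W_i]$ directed, use Lemma~\ref{lem-effapp} (or rather its proof technique) / Corollary~\ref{cor-ibot} to build, effectively in $i$, a function that enumerates the extended graph of a function $r_i \in \SSS^{(1)}_{A}$ whose range under $\beta$ is cofinal (with respect to $\ll$) in $\beta[W_i]$ — concretely, one enumerates the $\ll$-increasing chain $E$ extracted from $W_i$ as in Lemma~\ref{lem-effapp}, which is a total computable function, hence in $\SSS^{(1)}_A$ with $\beta[\range(r_i)]$ directed and $\bigsqcup \beta[\range(r_i)] = \bigsqcup \beta[W_i]$. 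This extended graph is enumerable uniformly in $i$, so by (QGN~II) there is $v \in \RRR^{(1)}$ with $\theta_{v(i)} = r_i$. Then $\eta_{v(i)} = \Gamma(\theta_{v(i)}) = \Gamma(r_i)$, and since $r_i \in \RRR^{(1)}$ with $\beta[\range(r_i)]$ directed, Theorem~\ref{thm-surj}(\ref{thm-surj-3}) gives $\Gamma(r_i) = \bigsqcup \beta[\range(r_i)] = \bigsqcup \beta[W_i]$. Setting $d \Def v$ finishes it. Statement~(\ref{thm-numgen-3}) is then just the conjunction of~(\ref{thm-numgen-1}) and~(\ref{thm-numgen-2}): a quasi-G\"odel numbering satisfies both (QGN~I) and (QGN~II), so $\eta$ satisfies (A~I) and (A~II), i.e. is admissible.

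\textbf{Main obstacle.} The routine projection-lemma bookkeeping in~(\ref{thm-numgen-1}) is harmless; the real work is~(\ref{thm-numgen-2}), specifically producing — uniformly and effectively in $i$ — an element of $\SSS^{(1)}_{A}$ (not just of $\widehat{\SSS}^{(1)}_{\omega}$) whose $\beta$-image has the same supremum as $\beta[W_i]$, and whose extended graph is uniformly enumerable so that (QGN~II) can be invoked. One must be careful that the chain-extraction from $W_i$ yields a \emph{total} computable function (so $\Gamma$-image equals the supremum, via Theorem~\ref{thm-surj}(\ref{thm-surj-3}), rather than merely $\sqsubseteq$ it via~(\ref{thm-surj-2})); directedness of $\beta[W_i]$ is what makes the greedy extraction always progress, exactly as in the proof of Lemma~\ref{lem-effapp}. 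Assembling the uniform enumeration of that extended graph and feeding it to (QGN~II) is the step I expect to need the most care, though it is entirely parallel to constructions already carried out (e.g. Lemma~\ref{lem-effapp}, Corollary~\ref{cor-ibot}).
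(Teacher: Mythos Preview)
Your overall approach is correct, but it differs structurally from the paper's in an interesting way. The paper does not prove (\ref{thm-numgen-2}) directly: it first proves (\ref{thm-numgen-3}) by a direct construction (assuming $\theta$ is quasi-G\"odel, so both conditions are available), and then derives (\ref{thm-numgen-2}) indirectly by a reducibility argument---take any quasi-G\"odel $\psi$, so $\gamma \Def \widehat{\Gamma}(\psi)$ is admissible by the just-proved (\ref{thm-numgen-3}); since $\theta$ satisfies (QGN~II), Theorem~\ref{thm-maxmin} gives $\psi \le_m \theta$, hence $\gamma \le_m \eta$, and Proposition~\ref{pn-advol} yields (A~II) for $\eta$. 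Your route is the reverse: prove (\ref{thm-numgen-2}) directly from (QGN~II) alone, and then (\ref{thm-numgen-3}) is the trivial conjunction. Your decomposition is arguably cleaner, since it separates the roles of (QGN~I) and (QGN~II) more sharply; the paper's route is slicker once (\ref{thm-numgen-3}) is in hand, since (\ref{thm-numgen-2}) becomes a two-line corollary with no new construction.

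One technical caution on your (\ref{thm-numgen-2}): the $\ll$-chain extraction ``as in Lemma~\ref{lem-effapp}'' is more than you need and does not quite work as stated. Lemma~\ref{lem-effapp} succeeds because $B_{\eta(j)}$ is $\ll$-directed (Lemma~\ref{lem-propd}(\ref{lem-propd-5})), but the hypothesis in (A~II) only gives that $\beta[W_i]$ is $\sqsubseteq$-directed, which does not imply $\ll$-directedness of $\beta[W_i]$ itself, so a greedy $\ll$-extraction \emph{within} $W_i$ need not progress. The fix is to drop the chain extraction entirely: simply let $r_i$ be any total enumeration of $W_i$ (nowhere defined if $W_i = \emptyset$), built by the scheme of Lemma~\ref{lem-methce}. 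Then $\beta[\range(r_i)] = \beta[W_i]$ is directed and $r_i \in \RRR^{(1)}$, so Theorem~\ref{thm-surj}(\ref{thm-surj-3}) applies directly. This is exactly the construction the paper carries out in its proof of (\ref{thm-numgen-3})---the operator $G$ inside $\Gamma$ already performs the chain extraction for you. With that simplification your argument for (\ref{thm-numgen-2}) goes through using (QGN~II) alone. For (\ref{thm-numgen-1}), your route through $G(\theta_j)$ is correct; the paper argues a touch more directly via the Scott-continuity of $\Gamma$, reducing to $(\exists n)\,[\beta_i \ll \Gamma(\alpha^{(1)}_n) \wedge \alpha^{(1)}_n \sqsubseteq \theta_j]$ and then invoking computability of $\Gamma$ and Lemma~\ref{lem-alph}(\ref{lem-alph-5}).
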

\begin{proof}
By Corollary~\ref{cor-ibot}, $\Gamma$ maps $\RRR^{(1)}$ onto $D_{c}$. Therefore, the mapping $\eta$ defined above is a numbering of $D_{c}$. 

(\ref{thm-numgen-1})  Because of the continuity of $\Gamma$ we have that
\begin{align*}
\beta_{i} \ll \eta_{j}
&\Leftrightarrow \beta_{i} \ll \Gamma(\theta_{j}) \\
&\Leftrightarrow \beta_{i} \ll \Gamma(\bigsqcup \set{\alpha^{(1)}_{n}}{\alpha^{(1)}_{n} \sqsubseteq \theta_{j}}) \\
&\Leftrightarrow \beta_{i} \ll \bigsqcup \set{\Gamma(\alpha^{(1)}_{n})}{\alpha^{(1)}_{n} \sqsubseteq \theta_{j}}) \\
&\Leftrightarrow (\exists n)\, \beta_{i} \ll \Gamma(\alpha^{(1)}_{n}) \wedge \alpha^{(1)}_{n} \sqsubseteq \theta_{j}. \\
\end{align*}
Since $\Gamma$ is computable, the set $\set{\pair{i, j}}{\beta_{i} \ll \Gamma(\alpha^{(1)}_{n})}$ is c.e. By Lemma~\ref{lem-alph} the same is true for $\set{\pair{n, j}}{\alpha^{(1)}_{n} \sqsubseteq \theta_{j}}$, if $\theta$ satisfies (QGN~I). Thus, also $\set{\pair{i, j}}{\beta_{i} \ll \eta_{j}}$ is c.e.\ in this case. That is, $\eta$ meets the (A~I) requirement.

(\ref{thm-numgen-3})
Let $\theta$ be a quasi-G\"odel numbering. We need to construct a function $v \in \RRR^{(1)}$ so that $\theta_{v(i)}$ enumerates $W^{A}_{i}$ in such a way that $\theta_{v(i)} \in \RRR^{(1)}$, if $W^{A}_{i}$ is not empty. By (QGN~I) there is a function $h \in \RRR^{(1)}$ enumerating the universal function of $\theta$. Define
\begin{gather*}
\widehat{Q}(i, a, z) \Leftrightarrow \pi^{(2)}_{1}(h(a)) = \pair{i, \pi^{(2)}_{1}((z)_{\lth(z)})} \wedge \pi^{(2)}_{2}(h(a)) > 0, \\
Q(i, t, z) \Leftrightarrow (\exists a \le t)\, \widehat{Q}(i, a, z), \\
f(i, t, z) \Def \pi^{(2)}_{2}(h(\mu a \le t.\ \widehat{Q}(i, a, z))).
\end{gather*}
Now, construct $k \in \RRR^{(2)}$ according to Lemma~\ref{lem-meth1} so that $\lambda t.\ k(i, t)$ enumerates the extended graph of a function $g \in \RRR^{(1)}$ that enumerates $W^{A}_{i}$, if $W^{A}_{i}$ is not empty, and the extended graph of the nowhere defined function, otherwise. By Condition~(QGN~II) there is then a function $v \in \RRR^{(1)}$ that has the properties we were looking for.

Assume that $\beta[W_{i}]$ is directed. Then $W_{i}$ is not empty. As we have seen, $W$ and $W^{A}$ are computably isomorphic. Thus, there is some $s \in \RRR^{(1)}$ so that $W_{i} = W^{A}_{s(i)}$. Set $d \Def v \circ s$. Then it follows wth Theorem~\ref{thm-surj}(\ref{thm-surj-3}) that
\[
\eta_{d(i)} = \Gamma(\theta_{d(i)}) = \bigsqcup \beta[\range(\theta_{d(i)})] = \bigsqcup \beta[W^{A}_{s(i)}] = \bigsqcup \beta[W_{i}].
\]
Hence, $\eta$ satisfies Requirement~(A~II). By Statement~(\ref{thm-numgen-1}), $\eta$ also satisfies~(A~I). Therefore, $\eta$ is admissible.

(\ref{thm-numgen-2}) Let $\psi$ be a quasi-G\"odel numbering of $\SSS^{(1)}_{A}$, and let $\theta$ satisfy (QGN~II). By Theorem~\ref{thm-maxmin} there is some $g \in \RRR^{(1)}$ with $\psi = \theta \circ g$. Let $\gamma$ be the numbering of $D_{c}$ defined by $\gamma_{i} \Def \Gamma(\psi_{i})$. Then we have that $\gamma \le_{m} \eta$. Since $\gamma$ is admissible by what was shown in the previous step, it follows with Proposition~\ref{pn-advol} that $\eta$ satisfies (A~II).
\end{proof}

As we see, via the mapping $\Gamma$ numberings of $D_{c}$ with Properties~(A~I) and{\textbackslash}or~(A~II) can be obtained from numberings of $\SSS^{(1)}_{A}$. A natural question now is whether all numberings of $D_{c}$ with these properties can be obtained in this way. 

Set
\[
\widehat{\Gamma}(\theta)_{i} \Def \Gamma(\theta_{i}).
\]
Then $\widehat{\Gamma}$ is a mapping from the set of all numberings of $\SSS^{(1)}_{A}$ into the set of all numberings of $D_{c}$. As follows from the definition, $\widehat{\Gamma}$ is monotone with respect to $\le_{m}$. 

\begin{theorem}\label{thm-suff}
Let $A \subseteq \omega$ be a c.e.\ infinite set and $D$ be an effectively given domain. Then the following statements hold:
\begin{enumerate}

\item \label{thm-suff-1}
For every admissible numbering $\eta$ of $D_{c}$ there is a quasi-G\"odel numbering $\theta$ of $\SSS^{(1)}_{A}$ with $\eta = \widehat{\Gamma}(\theta)$.

\item \label{thm-suff-2}
For every numbering $\eta$ of $D_{c}$ satisfying (A~I) there is a numbering $\theta$ of $\SSS^{(1)}_{A}$ satisfying (QGN~I) with $\eta = \widehat{\Gamma}(\theta)$.

\end{enumerate}
\end{theorem}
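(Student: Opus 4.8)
The plan is to build, for a given numbering $\eta$ of $D_c$, a numbering $\theta$ of $\SSS^{(1)}_A$ whose values map onto $\eta$ under $\Gamma$, in such a way that the effectivity hypothesis on $\eta$ transfers back to $\theta$. The natural idea is to represent each computable element $x = \eta_i$ by a total function that enumerates a monotone-with-respect-to-$\ll$ sequence of basis indices converging to $x$; Lemma~\ref{lem-effapp} and Corollary~\ref{cor-ibot} already produce such sequences effectively, and Theorem~\ref{thm-surj}(\ref{thm-surj-3}) guarantees that $\Gamma$ of such a function is exactly $\bigsqcup_a \beta(p(a)) = x$. So the first step is: given $\eta$ satisfying (A~I), use Lemma~\ref{lem-effapp} with $i = i_\bot$ (as in Corollary~\ref{cor-ibot}) to get $r \in \RRR^{(2)}$, hence a uniform family $\lambda a.\ \varphi_{r(i_\bot, j)}(a)$ of total functions with $\bigsqcup_a \beta(\varphi_{r(i_\bot,j)}(a)) = \eta_j$ and the monotonicity property (\ref{lem-effapp-3}).

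The difficulty is that $\widehat\Gamma(\theta) = \eta$ requires exact equality of numberings, i.e.\ $\Gamma(\theta_i) = \eta_i$ for every $i$, not just surjectivity onto $D_c$. A single ``forward image'' construction $\theta_i \Def \varphi_{r(i_\bot, i)}$ gives only a numbering of the total functions, not of all of $\SSS^{(1)}_A$, and in particular is not surjective onto $\SSS^{(1)}_A$ (it misses the initial-segment functions and has the wrong domain behaviour). The fix is to let $\theta$ depend on a \emph{pair} of indices, one controlling the target domain element via $\eta$ and one ranging over $\anf^{(1)}_A$, roughly in the style of the numberings $\theta^C$ of Section~\ref{sec-semlat}: on input $i = \pair{j, m}$ put $\theta_i \Def$ the function that on an initial segment of $\omega$ of edge length coded by $m$ behaves like a fixed enumeration of the element $\eta_j$, and is total (equal to $\varphi_{r(i_\bot,j)}$) when $m$ codes ``no truncation''. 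Here one has to be a little careful: $\Gamma$ of a truncated enumeration need only satisfy $\Gamma(f) \sqsubseteq \bigsqcup\beta[\range f]$ (Theorem~\ref{thm-surj}(\ref{thm-surj-2})), so to force $\Gamma(\theta_i) = \eta_j$ on \emph{every} index, not just the total ones, one should instead encode into $\theta_i$ the index $j$ directly, e.g.\ let $\theta_{\pair{j,m}}$ be a function whose range is $\{\,$index coding $j\,\} \cup$ the initial segment of $W^A_{v(j)}$ of length $m$, from which $\Gamma$ recovers $\eta_j$; or, more simply, arrange that $\theta_{\pair{j,m}} \sqsubseteq \theta_{\pair{j,m+1}}$ and that the union over $m$ is $\varphi_{r(i_\bot,j)}$, and let $\Gamma$ of each piece already equal $\eta_j$ because the first basis index listed is a $\beta$-index whose join with everything later is $\eta_j$. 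Getting this bookkeeping right — so that $\theta$ is surjective onto $\SSS^{(1)}_A$, is a genuine numbering, and satisfies $\Gamma(\theta_i) = \eta_i$ on the nose — is the main obstacle.

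Once $\theta$ is defined, part (\ref{thm-suff-2}) is the easier effectivity check: if $\eta$ satisfies (A~I), then $\set{\pair{i,j}}{\beta_i \ll \eta_j}$ is c.e., and combining this with Lemma~\ref{lem-effapp} (whose hypothesis is exactly (A~I)) and the computability of $\kappa$, $\val{\cdot}$, etc., one shows $\egraph(\lambda i, x.\ \theta_i(x))$ is c.e., hence $\theta$ satisfies (QGN~I) by the argument already used in Theorem~\ref{thm-qgn} and in Theorem~3.x, Statement~(\ref{thm-fctdom-4}). For part (\ref{thm-suff-1}), given an \emph{admissible} $\eta$, first apply part (\ref{thm-suff-2}) to get $\theta$ satisfying (QGN~I) with $\widehat\Gamma(\theta) = \eta$; it remains to upgrade $\theta$ to a quasi-G\"odel numbering, i.e.\ to also verify (QGN~II). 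Here I would use that (A~II) holds for $\eta$ together with Theorem~\ref{thm-numgen}: pick any quasi-G\"odel numbering $\psi$ of $\SSS^{(1)}_A$, so $\widehat\Gamma(\psi)$ is admissible, hence $\widehat\Gamma(\psi) \equiv \eta = \widehat\Gamma(\theta)$; then replace $\theta$ by $\theta \oplus \psi$ (direct sum), which is still a numbering of $\SSS^{(1)}_A$, still satisfies (QGN~I) since both summands do, and satisfies (QGN~II) because $\psi$ already does and the resulting graph-enumerations can be relativised to the even sublist — and finally check $\widehat\Gamma(\theta \oplus \psi)_i$ still equals $\eta_i$ after composing with the isomorphism between $\widehat\Gamma(\psi)$ and $\eta$; alternatively, directly verify (QGN~II) for $\theta$ by the scheme of Lemma~\ref{lem-meth1} plus the fact that from a $k$-enumeration of $\egraph(r)$ with $r \in \SSS^{(1)}_A$ one reconstructs $r$ and hence, via the explicit form of $\theta$, a $\theta$-index of $r$. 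Modulo the surjectivity/exactness bookkeeping in the definition of $\theta$, the rest is routine application of the machinery already developed.
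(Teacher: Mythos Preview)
Your proposal tries to construct $\theta$ from scratch by encoding, for each $i$, a monotone approximating sequence for $\eta_i$. You correctly identify the main obstacle --- making $\theta$ surjective onto $\SSS^{(1)}_A$ while keeping $\Gamma(\theta_i) = \eta_i$ \emph{exactly} --- but you do not resolve it: the pair-encoding ideas remain sketches, and your direct-sum fix $\theta \oplus \psi$ for part~(\ref{thm-suff-1}) fails outright, since $\widehat\Gamma(\theta \oplus \psi)_{2i+1} = \Gamma(\psi_i)$ has no reason to equal $\eta_{2i+1}$. ``Composing with the isomorphism between $\widehat\Gamma(\psi)$ and $\eta$'' afterwards does not repair this; it changes the numbering you have produced, not the equation you need.

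The paper's argument sidesteps all of this by pulling back along $\widehat\Gamma$ rather than building $\theta$ forward. For~(\ref{thm-suff-1}): fix any quasi-G\"odel numbering $\psi$ of $\SSS^{(1)}_A$; by Theorem~\ref{thm-numgen}, $\widehat\Gamma(\psi)$ is admissible, so by Proposition~\ref{pn-advol}(\ref{pn-advol-5}) there is a computable \emph{permutation} $g$ with $\eta = \widehat\Gamma(\psi) \circ g$. Set $\theta \Def \psi \circ g$. Then $\theta$ is again a quasi-G\"odel numbering (computable permutations preserve this), and $\widehat\Gamma(\theta)_i = \Gamma(\psi_{g(i)}) = \eta_i$ on the nose. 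For~(\ref{thm-suff-2}): use~(\ref{thm-suff-1}) to obtain $\psi$ with $\widehat\Gamma(\psi) = \gamma$ for some fixed admissible $\gamma$; since $\eta$ satisfies (A~I), Proposition~\ref{pn-advol}(\ref{pn-advol-3}) gives $g \in \RRR^{(1)}$ with $\eta = \gamma \circ g$, and $\theta \Def \psi \circ g$ satisfies (QGN~I) by Proposition~\ref{pn-qgnchar}(\ref{pn-qgnchar-1}) with $\widehat\Gamma(\theta) = \gamma \circ g = \eta$.

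The point you missed is that the isomorphism/reducibility characterisations of admissible numberings already give you the reduction $g$ relating $\eta$ to $\widehat\Gamma(\psi)$; precomposing $\psi$ with that $g$ is all that is needed. Note also that the paper proves~(\ref{thm-suff-1}) first and derives~(\ref{thm-suff-2}) from it, the reverse of your order.
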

\begin{proof}
(\ref{thm-suff-1})
Let $\psi$ be a quasi-G\"odel numbering of $\SSS^{(1)}_{A}$ and $\eta$ an admissible numbering of $D_{c}$. By the previous theorem, $\widehat{\Gamma}(\psi)$ is an admissible numbering as well. Hence, $\eta$ and $\widehat{\Gamma}(\psi)$ are computably isomorphic, by Proposition~\ref{pn-advol}. Thus, there is a computable permutation $g \in \RRR^{(1)}$ with $\eta = \widehat{\Gamma}(\psi) \circ g$, that is, $\eta_{i} = \Gamma(\psi_{g(i)})$. Let $\theta \Def \psi \circ g$. Then also $\theta$ is a quasi-G\"odel numbering. Moreover, $\eta =\widehat{\Gamma}(\theta)$.

(\ref{thm-suff-2}) As has already been mentioned, $D_{c}$ has an admissible numbering, say $\gamma$. Let $\psi$ be the quasi-G\"odel numbering with $\gamma = \widehat{\Gamma}(\psi)$, existing by Statement~(\ref{thm-suff-1}).  If $\eta$ is a numbering of $D_{c}$ satisfying~(A~I), then $\eta \le_{m} \gamma$, by Proposition~\ref{pn-advol}. Therefore, there is some $g \in \RRR^{(1)}$ with $\eta = \gamma \circ g$. Define $\theta \Def \psi \circ g$. Then it follows with Proposition~\ref{pn-qgnchar} that $\theta$ satisfies Condition~(QGN~I). Moreover, $\widehat{\Gamma}(\theta) = \eta$.
\end{proof}

The question whether an analogue of Statement~(\ref{thm-suff-2}) is true for numberings of $D_{c}$ that satisfy Requirement~(A~II) remains open.

We call numberings of $D_{c}$ that satisfy (A~I) \emph{computable}. As in the case of the numberings of $\SSS^{(1)}_{A}$ the $m$-equivalence classes of the numberings of $D_{c}$ form a Rogers semi-lattice in which the $m$-equivalence classes of the computable numberings form a Rogers sub-semi-lattice with the class of admissible numberings as greatest element. Let again $\eta \oplus \gamma$ be the direct sum of the numberings $\eta$ and $\gamma$ of $D_{c}$. Then it follows for numberings $\psi$ and $\theta$ of $\SSS^{(1)}_{A}$ that
\[
\widehat{\Gamma}(\psi \oplus \theta) = \widehat{\Gamma}(\psi) \oplus \widehat{\Gamma}(\theta).
\]
Moreover, let $\widehat{\Gamma}_{\equiv}$ be the quotient map, that is, $\widehat{\Gamma}_{\equiv}([\psi]) = [\widehat{\Gamma}(\psi)]$.

\begin{corollary}
$\widehat{\Gamma}_{\equiv}$ is a homomorphism of the Rogers semi-lattice of the numberings of $\SSS^{(1)}_{A}$ to the Rogers semi-lattice of the numberings of $D_{c}$ which maps the Rogers sub-semi-lattice of computable numberings of $\SSS^{(1)}_{A}$ onto the Rogers sub-semi-lattice of computable numberings of $D_{c}$.
\end{corollary}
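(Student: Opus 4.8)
The plan is to assemble the corollary from four facts already in place: the $\le_{m}$-monotonicity of $\widehat{\Gamma}$ observed just before the statement, the identity $\widehat{\Gamma}(\psi \oplus \theta) = \widehat{\Gamma}(\psi) \oplus \widehat{\Gamma}(\theta)$, Theorem~\ref{thm-numgen}(\ref{thm-numgen-1}), and Theorem~\ref{thm-suff}(\ref{thm-suff-2}). First I would check that $\widehat{\Gamma}_{\equiv}$ is well defined: if $[\psi] = [\psi']$ then $\psi \equiv_{m} \psi'$, and since $\widehat{\Gamma}$ is monotone with respect to $\le_{m}$ we get $\widehat{\Gamma}(\psi) \equiv_{m} \widehat{\Gamma}(\psi')$, i.e.\ $[\widehat{\Gamma}(\psi)] = [\widehat{\Gamma}(\psi')]$. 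The same monotonicity gives that $[\psi] \le [\psi']$ implies $\widehat{\Gamma}_{\equiv}([\psi]) \le \widehat{\Gamma}_{\equiv}([\psi'])$, so $\widehat{\Gamma}_{\equiv}$ is an order-preserving map between the two posets of $m$-equivalence classes.

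Next comes the homomorphism property. Recall that in the Rogers semi-lattice of numberings of $\SSS^{(1)}_{A}$ the least upper bound of $[\psi]$ and $[\theta]$ is $[\psi \oplus \theta]$, and likewise $[\eta \oplus \gamma]$ is the join of $[\eta]$ and $[\gamma]$ in the Rogers semi-lattice of numberings of $D_{c}$. Using the identity $\widehat{\Gamma}(\psi \oplus \theta) = \widehat{\Gamma}(\psi) \oplus \widehat{\Gamma}(\theta)$ noted above, I then compute
\[
\widehat{\Gamma}_{\equiv}([\psi] \vee [\theta]) = [\widehat{\Gamma}(\psi \oplus \theta)] = [\widehat{\Gamma}(\psi) \oplus \widehat{\Gamma}(\theta)] = \widehat{\Gamma}_{\equiv}([\psi]) \vee \widehat{\Gamma}_{\equiv}([\theta]),
\]
which is exactly the preservation of binary joins required for a homomorphism of (upper) semi-lattices; no bottom element is involved, so nothing further has to be checked.

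Finally, the statement about the computable sub-semi-lattices. If $\theta$ satisfies (QGN~I), then by Theorem~\ref{thm-numgen}(\ref{thm-numgen-1}) the numbering $\widehat{\Gamma}(\theta)$ satisfies (A~I), so $\widehat{\Gamma}_{\equiv}$ carries $[\bn_{A}]$ into the Rogers sub-semi-lattice of computable numberings of $D_{c}$; combined with the previous paragraph, this restriction is itself a semi-lattice homomorphism. For surjectivity onto that sub-semi-lattice, let $\eta$ be an arbitrary computable numbering of $D_{c}$, i.e.\ one satisfying (A~I). By Theorem~\ref{thm-suff}(\ref{thm-suff-2}) there is a numbering $\theta$ of $\SSS^{(1)}_{A}$ satisfying (QGN~I) with $\eta = \widehat{\Gamma}(\theta)$, hence $[\eta] = \widehat{\Gamma}_{\equiv}([\theta])$ with $[\theta] \in [\bn_{A}]$. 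Thus the restriction of $\widehat{\Gamma}_{\equiv}$ to the computable sub-semi-lattice of $\SSS^{(1)}_{A}$ maps onto the computable sub-semi-lattice of $D_{c}$. I do not anticipate a genuine obstacle: the only points needing an argument are well-definedness (handled by monotonicity) and the observation that for these join-semi-lattices "homomorphism" means precisely preservation of binary joins, which the displayed identity supplies; the substantive content has already been carried out in Theorems~\ref{thm-numgen} and~\ref{thm-suff}.
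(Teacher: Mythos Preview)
Your proposal is correct and matches the paper's own approach: the corollary is stated there without a separate proof, precisely because it is meant to follow at once from the monotonicity of $\widehat{\Gamma}$, the identity $\widehat{\Gamma}(\psi \oplus \theta) = \widehat{\Gamma}(\psi) \oplus \widehat{\Gamma}(\theta)$, and Theorems~\ref{thm-numgen}(\ref{thm-numgen-1}) and~\ref{thm-suff}(\ref{thm-suff-2}). You have simply made explicit the assembly the paper leaves to the reader.
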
.

\section{Conclusion} \label{sec-concl}

Non-termination is a typical phenomenon of algorithms. It cannot be read of the program text whether and in which case it will happen. Approaches to avoid it have been studied and require advanced methods.  The question we dealt with in this paper was whether there is a class of algorithms that compute the total (computable) functions---in which one is actually only interested in---, and if non-termination occurs then the area of such inputs has a well-defined structure.

We presented such a class. The typical algorithms we had in mind when starting this research was list searching or the computation of approximations of infinite objects like the real numbers. The algorithms in this class are such that their domain of definition is either the set of all natural numbers, or a finite initial segment of this set of a length in a given set of possible lengths. It is shown that even though besides the total functions there are now only finite partial functions, a rich computability theory can be developed in which the same important results hold as in the classical theory dealing with all possible algorithms. In particular, the theory of computably enumerable sets remains unchanged, except that the domain characterisation of these set is no longer useful. What is presented is a development of computability theory based on the notion of enumeration.
The main ingredient in the new approach is the notion of a quasi-G\"odel numbering which takes on the role of G\"odel numberings in the classical approach. Every G\"odel numbering is also a quasi-G\"odel numbering.

Besides developing computability theory on the basis of quasi-G\"odel numberings, meta-investigations have been carried out: each of the new quasi-G\"odel numbered function classes is a retract of the G\"odel numbered class of all partial computable functions. Moreover, it is the class of computable elements of an effectively given algebraic domain (in the sense of Scott-Ershov) that can be computably mapped onto every other effectively given domain so that the finite functions in the function classes considered here are mapped onto base elements. This extends the use of representations in computable analysis in such a way that now also the basic open sets used for approximations obtain a finite function as name. Via the mapping every quasi-G\"odel numbering induces an admissible numbering of the computable domain elements. It was shown that every admissible numbering can even be obtained in this way.

The class of all partial computable functions, and several of its subclasses, can be characterised in a machine-independent way as smallest classes containing certain basic functions and being closed under operations like composition and primitive recursion. It would be interesting to know whether the function classes considered in this paper can be characterised in a similar way.  Among others this could be used to introduce a notion of relativised computability, in particular, a notion of Turing reducibility,  and compare it with the classical ones. Another way of doing so, is via the modified Turing machine model presented in Section~\ref{sec-syn}.

\end{document}